\begin{document}

\title{Twisted eleven-dimensional supergravity}

\author{Surya Raghavendran}
\address{Perimeter Institute for Theoretical Physics \\ 31 Caroline Street North \\ 
Waterloo, Ontario N2L 2Y5\\ Canada}
\email{sraghavendran@perimeterinstitute.ca}

\author{Ingmar Saberi}
\address{Ludwig-Maximilians-Universit\"at M\"unchen \\ Fakult\"at f\"ur Physik \\ Theresienstra\ss{}e 37 \\ 80333 M\"unchen \\ Deutschland}
\email{i.saberi@physik.uni-muenchen.de}

\author{Brian R. Williams}
\address{School of Mathematics \\ University of Edinburgh \\ Edinburgh EH9 3FD \\ Scotland}
\email{brian.williams@ed.ac.uk}

\begin{abstract}
We construct a fully interacting holomorphic/topological theory in eleven dimensions that is defined on products of Calabi--Yau fivefolds with real one-manifolds. 
The theory describes a particular deformation of the cotangent bundle to the moduli space of Calabi--Yau structures on the fivefold.
Its field content matches the holomorphic (or minimal) twist of the 
eleven-dimensional supergravity multiplet recently computed by the second two authors, and we offer numerous consistency checks showing that the interactions correctly describe interacting twisted eleven-dimensional supergravity at the perturbative level. 
We prove that the global symmetry algebra of our model on flat space is an $L_\infty$ central extension of the infinite-dimensional simple exceptional super Lie algebra $E(5,10)$, following a recent suggestion of Cederwall in the context of the relevant pure spinor model. 
Twists of superconformal algebras map to the fields of our model on the complement of a stack of M2 or M5 branes, laying the groundwork for a fully holomorphic version of twisted holography in this context.
\end{abstract}
\maketitle

\vfill\eject

\setcounter{tocdepth}{1}
\tableofcontents

%
%
%
%
%

Twists of supersymmetric field theories have been a subject of active and fruitful study for many years. 
The notion of twisting was introduced by Witten and dates back to the first example~\cite{WittenTwist}, in which a topological field theory related to Donaldson invariants of four-manifolds was constructed as a twist of four-dimensional $\N=2$ supersymmetric gauge theory.
Shortly afterwards (thirty years ago), the field had expanded to include a diverse set of examples related to topological field theory~\cite{BlauThompson}. 

For many years, twisting constructions were motivated by the goal of producing topological field theories, and correspondingly emphasized the role of particular modifications of  Lorentz symmetry via ``twisting homomorphisms'' designed to produce scalar supercharges.
From a more modern perspective, though, twisting a supersymmetric field theory is just a systematic way of taking invariants with respect to a fixed supercharge. This requires that the supercharge square to zero, so that the supercharge defines an odd abelian subalgebra of the supersymmetry algebra.\footnote{More generally, one could consider ``omega backgrounds'' by taking the derived invariants of a more general subalgebra in which the odd part is not necessarily abelian.}
The variety of appropriate square-zero elements was studied systematically in~\cite{NV}, and has deep relations not only to the  classification of  twists, but to the construction  of supersymmetric field theories via the pure spinor formalism~\cite{Cederwall,EHSW}. The name ``pure spinor'' originates because the  variety of  square-zero elements always contains a minimal stratum that is related to the space of Cartan pure spinors; in turn, these correspond to choices of complex structure on the  spacetime, at least when the dimension is even. 
As such, any $d$-dimensional  supersymmetric theory that admits a twist admits a \emph{minimal} or~\emph{holomorphic} twist, in which $n=\lfloor d/2\rfloor$ of the translations act nontrivially. Other twists can be understood as further twists of this holomorphic theory.
The importance and interest of holomorphic twists of supersymmetric gauge theories was emphasized in the work of Costello~\cite{CostelloHol}, who developed the notion mathematically, building on earlier work on holomorphic field theories in the physics literature~\cite[for example]{NekThesis,BCOV}.
Si Li pushed applications of holomorphic QFT further in his work on the higher genus $B$-model (see~\cite{SiThesis}, and later \cite{CLbcov1} together with Costello). 
In the physics literature, further constructions related to holomorphic twists had appeared in the context of supersymmetric indices; see~\cite{Romelsberger}, for example.

In a supergravity theory, local supersymmetry is already gauged, so that the standard notion of twisting is {\em a priori} not a sensible operation. Twisted supergravity thus needs to be  understood differently.
In their seminal paper~\cite{CLsugra}, Costello and Li gave a definition of twisted supergravity: it is supergravity in a background where the bosonic ghost field representing a twisting supercharge takes a nonzero value.
Twisting a supersymmetric field theory, in the classical sense, can then be  understood as coupling the theory to supergravity and placing it in such a background.

Using ideas related to string field theory and the topological B-model, Costello and Li provided conjectural descriptions of many examples of ten-dimensional theories of twisted supergravity. They used these descriptions to rigorously construct quantizations of open-closed string field theories~\cite{CLbcov1,CLtypeI}. 
A very active and promising direction of current research uses these twisted ten-dimensional models to formulate twisted versions of holography; see~\cite{CostelloM2, Costello_2021, costello2021twisted, Ishtiaque_2020, budzik2021giant, Gaiotto:2020vqj}, for example.
Supergravity in the $\Omega$-background was introduced in~\cite{CostelloM5}, and has been further studied in~\cite{Gaiotto:2019wcc, Oh:2020hph, Oh:2021bwi}. 

In this paper, we provide perturbative descriptions of all twists of eleven-dimensional supergravity (before turning on any $\Omega$-background).
Thanks to a straightforward classification, which we recall below, there are essentially two types of twists which deform the flat background $\RR^{11}$: 
\begin{itemize}[leftmargin=\parindent,itemsep=\parskip]
\item The minimal (or holomorphic) twist. 
The resulting twisted theory is defined on $\CC^5 \times \RR$ and is holomorphic along $\CC^5$ and topological along the `time' direction~$\RR$. 
The twist is $SU(5)$ invariant and involves a fixed choice of a Calabi--Yau form on $\CC^5$. 
\item The non-minimal twist. 
The resulting twisted theory is defined on $\CC^2 \times \RR^7$ and is holomorphic along $\CC^2$ and topological along $\RR^7$.
The twist is $SU(2)$ invariant and involves a fixed choice of a hyperk\"ahler structure on $\CC^2$.  
\end{itemize}

In~\cite{SWspinor}, the last two authors used the pure spinor formalism to describe the {\em free field limit} of the minimal twist of eleven-dimensional supergravity on flat space, building on work of Cederwall~\cite{Ced-towards,Ced-11d} that constructed the eleven-dimensional theory in the pure spinor formalism. In fact, the result of~\cite{SWspinor} shows that the pure spinor formalism is compatible with twisting, so that the twist of any multiplet can be recovered from the algebraic geometry of a neighborhood of the corresponding point in the scheme of square-zero elements.
(In separate work~\cite{EagerHahner}, the free-field limit of the non-minimal twist was derived from the component-field multiplet as used in the physics literature.)
The resulting theory is $\ZZ/2$ graded; the grading is inherited from the totalization of ghost number and intrinsic parity in the original, untwisted, eleven-dimensional theory.
In \S\ref{s:dfn} we introduce a full interacting theory in the Batalin--Vilkovisky (BV) formalism, the free limit of which is the twist computed in \cite{SWspinor}.\footnote{One subtlety is that, unlike the usual BV formalism which involves a $\ZZ$-grading by ghost number, the eleven-dimensional theory is only $\ZZ/2$ graded.
As such, a division of the odd fields in the twisted theory into ``ghosts'' and ``antifields'' is not really meaningful.}
Our main conjecture is that this interacting BV theory is equivalent to the minimal twist of eleven-dimensional supergravity on flat space.

\begin{conj}
The eleven-dimensional  holomorphic-topological theory on $\CC^5 \times \RR$ that we will define in \S \ref{s:dfn} is equivalent to twisted supergravity on $\RR^{11}$, where the bosonic ghost takes a value corresponding to a holomorphic supercharge. 
\end{conj}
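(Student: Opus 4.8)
By the definition of twisted supergravity given in~\cite{CLsugra}, proving the conjecture amounts to showing that eleven-dimensional supergravity, placed in the background in which the bosonic ghost takes the value of a square-zero \emph{holomorphic} supercharge $Q$, is equivalent --- as a perturbative, $\ZZ/2$-graded BV theory --- to the model constructed in \S\ref{s:dfn}. At the level of free fields this is already established: the cohomology of the linearized BV complex of eleven-dimensional supergravity, deformed by $Q$, is exactly the multiplet underlying our theory, by~\cite{SWspinor}. The entire remaining content of the conjecture is therefore about the interactions, i.e.\ about identifying an $L_\infty$ structure (equivalently, a solution of the classical master equation) on this cohomology.

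The plan is to proceed through the pure spinor formalism. One starts from Cederwall's pure spinor BV action for eleven-dimensional supergravity~\cite{Ced-11d}: a functional of a superfield on the pure spinor cone with a Chern--Simons-type cubic term together with its known higher-order corrections. Passing to the twist amounts to restricting the superfield to a formal (punctured) neighborhood of the point $[Q]$ in the scheme of square-zero elements and deforming the differential by $Q$; one first re-derives, following~\cite{SWspinor}, the quasi-isomorphism from the resulting twisted complex onto the fields of \S\ref{s:dfn}, \emph{together with an explicit choice of contracting homotopy}. The second step is to transport the interacting $L_\infty$ structure along this quasi-isomorphism by homotopy transfer and to check, order by order, that the transferred brackets agree up to $L_\infty$ equivalence with the interactions we have written: the cubic vertex should descend directly from the Chern--Simons-type term, the remaining vertices from the pure spinor corrections together with the transfer tree sum. (One could alternatively hope to characterize the interacting theory intrinsically, as the particular deformation of the cotangent theory to the moduli of Calabi--Yau structures described in the introduction, in the spirit of BCOV theory; but pinning down \emph{which} deformation is produced by supergravity seems to route back through the same pure spinor computation.)

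A complementary strategy is via rigidity, and is the mathematical content of Cederwall's suggestion that $E(5,10)$ governs the theory. The main theorem of this paper identifies the global symmetry algebra of our model on flat space with an $L_\infty$ central extension of $E(5,10)$. One would then show that the free twisted multiplet admits, up to equivalence and a suitable normalization, a \emph{unique} local interacting deformation compatible with this symmetry: concretely, by computing the $E(5,10)$-equivariant local deformation complex of the free theory and showing that its relevant cohomology is spanned by our interaction. Combined with the matching of free theories, such a uniqueness statement promotes the identification of symmetry algebras to an identification of the full interacting theories.

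The principal obstacle is twofold. First, the interacting pure spinor action of~\cite{Ced-11d} is itself only conjecturally equivalent to physical eleven-dimensional supergravity, and --- since this multiplet has no off-shell BV description with finitely many auxiliary fields --- there is no independent formulation against which to check; any proof along the first route is thus conditional on that identification. Second, and this is where the real labor lies, carrying out the homotopy transfer to all orders while controlling the locality (and convergence) of the transferred vertices on the infinite-dimensional pure spinor complex is genuinely difficult, as is the equivariant cohomology computation demanded by the rigidity argument. For these reasons the statement is recorded here as a conjecture, and the paper instead assembles a battery of consistency checks in its favor: the matching of twists of the M2- and M5-brane superconformal algebras with the fields of the model on the corresponding backgrounds, dimensional reduction along $\RR$ to the known twist of type IIA supergravity, and the recovery of lower-dimensional twisted supergravity theories upon further reduction.
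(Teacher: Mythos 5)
The statement you are addressing is recorded in the paper as a conjecture, and the paper offers no proof of it; your proposal correctly recognizes this and is, in substance, aligned with the paper's stance. Your two suggested routes --- twisting Cederwall's interacting pure spinor action and transporting the $L_\infty$ structure by homotopy transfer (conditional on the pure spinor model being equivalent to physical eleven-dimensional supergravity, and building on the free-field identification of \cite{SWspinor}), or a rigidity argument pinning down the interaction as the essentially unique $E(5,10)$-compatible deformation of the free twisted multiplet --- are exactly the kinds of arguments the paper gestures at via its citations of \cite{Ced-11d}, \cite{SWspinor}, and Cederwall's $E(5,10)$ suggestion, and your honest assessment of the obstacles matches why the statement remains conjectural. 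The list of consistency checks you cite in place of a proof is essentially the paper's own list (local character matching, residual supersymmetry and the M2-extended algebra, the non-minimal twist, dimensional reductions, and the AdS/superconformal embeddings), with one correction: the reduction that lands on the previously conjectured (Costello--Li) twist of type IIA is the $SU(4)$ twist, obtained by reducing along a real line inside one of the complex directions, i.e.\ along $\{0\}\times\RR\times\{0\}\subset\CC^4\times\CC\times\RR$; reducing along the topological $\RR$ instead produces the $SU(5)$ (holomorphic) twist of type IIA, for which no prior description existed and which the paper states as a new conjecture rather than a check against known results.
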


In this paper we will provide evidence for this conjecture on a variety of fronts:
\begin{itemize}[leftmargin=\parindent,itemsep=\parskip]
\item 
In \S\ref{sec:locchar}, we show that the character of the local operators in our 11-dimensional theory agrees with the index of 11-dimensional supergravity. 
\item
In \S\ref{sec:susy}, we will show that this theory has all of the expected residual supersymmetries present after performing the holomorphic twist.

\item
In \S\ref{s:nonmin}, we describe the non-minimal twist of eleven-dimensional supergravity as a background of our theory on $\CC^5 \times \RR$. 
This non-minimal twist is invariant for the group $G_2 \times SU(2)$. 
We find a match with a conjectural description of this $G_2 \times SU(2)$ invariant twist formulated by Costello in \cite{CostelloM5} and further developed in~\cite{RY}.  
\item In \S\ref{sec:dimred}, we compute dimensional reductions and show that our model is compatible with descriptions of twists of lower dimensional twists of supergravity. 
For instance, reducing along a circle in a complex plane agrees with the conjectural description of the $SU(4)$ twist of type IIA supergravity in~\cite{CLsugra}.
\item
Finally, in \S\ref{sec:ads}, we propose a description of supergravity in twisted versions of both the ${\rm AdS}_7$ and ${\rm AdS}_4$ backgrounds. 
We show that the residual symmetries of supergravity in these backgrounds are present in our twisted background.
\end{itemize}

\subsection*{A geometric description of the model} 

Our eleven-dimensional theory is defined more generally on the product of manifolds 
\[
X \times S
\]
where $X$ is a Calabi--Yau five-fold and $S$ is a smooth oriented real one-dimensional manifold. 
The theory is of a holomorphic gravitational flavor as it describes ``partial'' deformations of complex structures along $X$. 
We will explain what we mean by this momentarily. 

In our theory there is a field which encodes this partial deformation of complex structure on the Calabi--Yau manifold $X$.
It is an even field 
\[
\mu^{1,1} \in \Omega^{0,1} (X , \T_X) \otimes C^\infty(S) 
\]
where $\T_X$ denotes the holomorphic tangent bundle on $X$.
Locally, $\mu^{1,1}$ can be decomposed as Beltrami-like differential
\[
\mu^{1,1} = \mu^i_j (z,\zbar, t) \d \zbar_i \frac{\partial}{\partial z_j} .
\]
Physically speaking, $\mu^{1,1}$ is a component of the metric which survives in the twisted theory; see~\S\ref{s:components}. 

Next, there are two fields 
\[
\gamma^{1,0} \in \Omega^{1,0} (X) \otimes C^\infty(S), \quad \gamma^{1,2} \in \Omega^{1,2}(X) \otimes C^\infty(S) .
\]
The field $\gamma^{1,2}$ is a component of the supergravity $3$-form $C$-field that survives in the twisted theory. 
The field $\gamma^{1,0}$ can be interpreted as a component of the one-form which is a ghost-for-a-ghost of the $C$-field.\footnote{The $C$-field has a gauge symmetry of the form $\delta C = \d B$ where $B$ is a two-form.
This ghost $B$ field has an additional gauge symmetry $\delta B = \d A$ for $A$ a one-form.
The field $\gamma^{1,0}$ is a component of $A$.}

There is a background where the equation of motion involving the fields $\mu^{1,1}$, $\gamma^{1,0}$,  and~$\gamma^{1,2}$ reads
\begin{align*}
\dbar \mu^{1,1} + \frac12 [\mu^{1,1}, \mu^{1,1}] + \Omega^{-1} \vee \left(\del \gamma^{1,0} \wedge \del \gamma^{1,2} \right) & = 0 \\
\end{align*}
Additionally, there are the conditions that $\mu^{1,1}$ preserve the holomorphic volume form on $X$ and that all fields are locally constant along the topological direction $S$.  
Notice that this would be precisely the integrability equation for $\mu^{1,1}$ to determine a complex structure on $X$, were it not for the presence of the last term in the first equation.

If we work in a background where one of $\gamma^{1,0}$ or $\gamma^{1,2}$ is zero, then we see that $\mu^{1,1}$ is exactly a deformation of complex structure along $X$. 
In terms of the eleven-dimensional geometry, these field configurations describe deformations of the natural transverse holomorphic foliation (THF structure) on $X \times S$ which further preserve the holomorphic volume form along the leaves.
We further unpack the equations of motion in more general backgrounds in \S \ref{s:components}, but leave a complete study for future work.

\subsection*{Appearance of exceptional Lie superalgebras}

The gauge symmetries of a field configuration in any theory form a Lie algebra. 
In the Batalin--Vilkovisky formalism, one combines fields of all ghost number into a single object which, together with the linear BRST operator, has the structure of a dg Lie algebra.\footnote{In some models one actually obtains an $L_\infty$ algebra.} 
From the point of view of deformation theory, this dg Lie algebra describes the formal moduli space of deformations of the particular field configuration. 

The simplest field configuration in the twisted theory is the flat background; this corresponds to considering to our eleven-dimensional theory on $\CC^5 \times \RR$, where we equip $\CC^5$ with the flat holomorphic volume form. 
In this case, we find a striking relationship to a certain infinite-dimensional simple super Lie algebra studied by Kac and collaborators \cite{KacBible,KacE510}.\footnote{The twist of the eleven-dimensional gravity multiplet was computed in~\cite{SWspinor} by reducing it to a particular pure spinor model based on $\Gr(2,5)$; a possible relationship between this pure spinor model and $E(5,10)$ was first pointed out by Martin Cederwall in~\cite{martinSL5}.}

\begin{thm}
The global symmetry algebra of the eleven-dimensional theory on $\CC^5 \times \RR$ is equivalent to a central extension of the exceptional simple super Lie algebra $E(5,10)$. 
\end{thm}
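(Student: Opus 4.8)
\subsection*{Proof proposal}

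The plan is to extract the global symmetry algebra directly from the free and interacting BV data of \S\ref{s:dfn} and then match it, sector by sector, against the standard presentation of $E(5,10)$ in terms of divergence-free vector fields and closed two-forms. Recall that the global symmetries of a classical BV theory are encoded in the $L_\infty$ algebra of fields (shifted so that ghosts sit in degree zero), equipped with the BRST differential linearized at the chosen background together with the higher brackets read off from the interaction vertices; here the background is the flat one on $\CC^5 \times \RR$ with its standard holomorphic volume form $\Omega$. Since the theory is holomorphic along $\CC^5$ and topological along $\RR$, the appropriate model for the symmetry algebra is obtained by passing to (formal, or polynomial) holomorphic coefficients along $\CC^5$ --- equivalently, by taking $\dbar$-cohomology --- and imposing local constancy along $\RR$, which collapses the de Rham complex of the topological direction to its degree-zero part (this is why the resulting algebra is a ``five-dimensional'' object, matching the appearance of $E(5,10)$). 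So the first step is to take the free BV complex of \S\ref{s:dfn}, whose cohomology computes the twist of \cite{SWspinor}, and compute the cohomology of each of its sectors with these coefficients.

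I expect the following dictionary to emerge. The $\mu^{1,1}$ sector, together with the constraint that $\mu^{1,1}$ preserve $\Omega$, contributes exactly the (formal) divergence-free holomorphic vector fields on $\CC^5$, i.e.\ Kac's $S_5$; the constraint is handled cleanly by resolving it via the isomorphism $\xi \mapsto \iota_\xi \Omega$ between volume-preserving vector fields and closed top-minus-one forms. The $\gamma^{1,0}$ and $\gamma^{1,2}$ sectors, together with their ghosts-for-ghosts and antifields, contribute the (formal) closed holomorphic two-forms $\Omega^2_{\mathrm{cl}}$ on $\CC^5$, with $\gamma^{1,0}$ acting as a potential and $\del\gamma^{1,0}$ the closed two-form it represents. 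Everything else should be acyclic except for a single one-dimensional class, which will become the center. The bookkeeping $5 = \dim_\CC \CC^5$ and $10 = \dim_\CC \wedge^2(\CC^5)^*$ matches the notation $E(5,10)$ and is a useful consistency check. Getting the $\ZZ/2$-grading and parities right --- so that $S_5$ lands in the even part and $\Omega^2_{\mathrm{cl}}$ in the odd part of the resulting super $L_\infty$ algebra --- and carefully treating the constraint and the ghost-for-ghost tower is the first place where one must be attentive.

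Next I would read off the brackets. The Lie bracket of vector fields, and the Lie-derivative action of vector fields on two-forms, are visible already from the linearized differential and the cubic vertices controlling $\dbar\mu^{1,1} + \tfrac12[\mu^{1,1},\mu^{1,1}]$ and the gauge transformations of the $\gamma$-fields. The only genuinely new structure is the bracket of two odd elements, and this is precisely what the cubic term $\Omega^{-1}\vee(\del\gamma^{1,0}\wedge\del\gamma^{1,2})$ in the equation of motion produces: two closed two-forms $\alpha,\beta$ wedge to a closed four-form $\alpha\wedge\beta$, which $\Omega^{-1}\vee(-)$ converts into a divergence-free vector field. This reproduces exactly the bracket $\Omega^2_{\mathrm{cl}} \otimes \Omega^2_{\mathrm{cl}} \to S_5$ of $E(5,10)$. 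Running homological perturbation (Feynman-tree transfer) on the full interacting theory then produces the minimal $L_\infty$ model on the cohomology, and the assertion of the theorem is that this model is $E(5,10)$ up to a central extension.

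The main obstacle, and the part requiring real work, is pinning down the central extension: showing that the transferred brackets do not close strictly on $E(5,10)$ but define a nontrivial $L_\infty$ central extension, and identifying the cocycle. Concretely, one must (i) exhibit the extra one-dimensional class in the cohomology, (ii) compute the bracket(s) landing in it --- most plausibly a two-bracket $\Omega^2_{\mathrm{cl}} \otimes \Omega^2_{\mathrm{cl}} \to \CC$, or possibly a higher bracket, given by an explicit local functional built from the vertices --- and (iii) prove that the resulting class in the relevant $L_\infty$ (Chevalley--Eilenberg) cohomology of $E(5,10)$ is nonzero, so the extension does not split. This last point is where one leans on the structure theory of $E(5,10)$ developed by Kac and collaborators \cite{KacE510} and on the suggestion of Cederwall \cite{martinSL5}: one identifies the cocycle with a known generator of the appropriate cohomology group, or rules out that it is a coboundary by a weight/degree argument using the $\ZZ$-grading on $E(5,10)$. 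Once the cocycle is identified and shown to be nontrivial, the theorem follows by assembling the dictionary above into an explicit $L_\infty$ quasi-isomorphism between the global symmetry algebra and the $L_\infty$ central extension of $E(5,10)$ so determined.
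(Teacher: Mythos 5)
Your skeleton is the same as the paper's proof of this theorem: compute the cohomology of the linearized BRST operator $\delta^{(1)} = \dbar + \d_{\RR} + \div|_{\mu\to\nu} + \del|_{\beta\to\gamma}$, identify it with $\Vect_0(\CC^5)\oplus \Pi\left(\Omega^{1}(\CC^5)/\del\cO(\CC^5)\right)\oplus\CC$ (with $\Omega^1/\del\cO\cong\Omega^{2}_{cl}$ by the holomorphic Poincar\'e lemma), read the even--even and even--odd brackets off the cubic part of $S_{BF,\infty}$ and the odd--odd bracket $\big[[\gamma],[\gamma']\big]=\Omega^{-1}\vee(\del\gamma\wedge\del\gamma')$ off $J$, and obtain the higher structure by homotopy transfer. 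Up to this point your plan and the paper's argument coincide.

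The genuine gap is that the step you yourself flag as ``the part requiring real work''---identifying which bracket lands in the center---is left uncomputed, and your leading guess for its form is wrong. There is no two-bracket $\Omega^{2}_{cl}\otimes\Omega^{2}_{cl}\to\CC$: the pairing of two odd elements is exactly the $E(5,10)$ bracket into $\Vect_0(\CC^5)$ and produces no central term. The cocycle the paper finds is the transferred \emph{three}-bracket
\[
[\mu,\mu',[\gamma]]_3=\varphi(\mu,\mu',\del\gamma),\qquad \varphi(\mu,\mu',\alpha)=\langle\mu\wedge\mu',\alpha\rangle\big|_{z=0}\in\CC_b,
\]
coming from a single homotopy-transfer tree: the two-bracket $[\mu,\gamma]$ (Lie derivative) is pushed through the contracting homotopy $K\colon\Omega^{1}\to\Omega^{0}$ into the $\beta$-sector, bracketed with $\mu'$, and projected by $q$ onto the constant $\beta$-mode---which is precisely the one-dimensional class you predicted but did not locate (it is the kernel of $\del\colon\cO\to\Omega^1$, i.e.\ the constant $\beta$). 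Without exhibiting this diagram, or an equivalent computation, all one sees at the level of binary brackets is the \emph{trivial} one-dimensional central extension, so the $L_\infty$ content of the theorem is missing. Conversely, your step (iii)---nontriviality of the cocycle class in Chevalley--Eilenberg cohomology of $E(5,10)$---is not required for the statement as phrased, and the paper does not prove it within this argument; the essential input is the explicit transferred bracket above, not structure theory of $E(5,10)$.
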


In particular, correlation functions involving observables of the eleven-dimensional theory will be constrained by the infinite-dimensional symmetry algebra $E(5,10)$. 
Given our main conjecture that the interacting eleven-dimensional BV theory on $\CC^5 \times \RR$ is the twist of supergravity, we obtain the following.

\begin{conj} 
A central extension of the super Lie algebra $E(5,10)$ is a symmetry of supergravity on $\RR^{11}$ which preserves the background where the bosonic ghost takes value equal to a holomorphic supercharge $Q$.
\end{conj}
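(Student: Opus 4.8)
The plan is to realize the global symmetry algebra as the cohomology of the $L_\infty$ algebra underlying the interacting BV theory of \S\ref{s:dfn} on $\CC^5\times\RR$, taken with polynomial (equivalently, formal) coefficients along $\CC^5$ and locally constant coefficients along $\RR$. This is the algebra of formal symmetries at a point, and it is the natural setting in which to compare with Kac's linearly compact Lie superalgebras, in particular $E(5,10)$. Concretely, this complex is assembled from the Dolbeault complexes $\Omega^{0,\bullet}(\CC^5,-)$ along the holomorphic directions and from $\Omega^\bullet(\RR)$ along the topological direction; its differential is the sum of $\dbar$, the de Rham differential along $\RR$, and the holomorphic operator $\del$ visible in the linearized equations of motion, while the $L_\infty$ brackets come from the cubic and higher vertices of the action — the Kodaira--Spencer-type bracket on Beltrami differentials $\mu^{1,1}$ and the vertex $\Omega^{-1}\vee(\del\gamma^{1,0}\wedge\del\gamma^{1,2})$.

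The first step is to compute the underlying graded vector space of the cohomology. Because the theory is holomorphic along $\CC^5$ and topological along $\RR$, the formal Dolbeault lemma collapses each $\Omega^{0,\bullet}(\CC^5,E)$ onto the holomorphic sections of $E$ on the formal five-disk, concentrated in degree zero, and the Poincar\'e lemma collapses $\Omega^\bullet(\RR)$ onto $\CC$ in degree zero; only the holomorphic differential $\del$ and the $L_\infty$ structure survive. After using the Calabi--Yau form to trivialize the canonical bundle (so that $\T_X\cong\Omega^{4,0}_X$, and similarly for the other sheaves appearing), one reads off three families of surviving classes: holomorphic vector fields on the formal disk, closed holomorphic two-forms, and a single additional one-dimensional class. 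The first family should recover the full Lie algebra $W_5$ of formal vector fields on $\CC^5$ — not merely the volume-preserving subalgebra $S_5$ — once the contribution of the volume-form constraint on $\mu^{1,1}$ is reconciled with that of the $\gamma$-sector; the second, $\Omega^2_{\mathrm{cl}}(\CC^5)=\del\,\Omega^{1,0}(\CC^5)$, arises from $\gamma^{1,0}$ and $\gamma^{1,2}$ through the operator $\del$.

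With the graded vector space in hand, I would match the brackets against Kac's explicit model of $E(5,10)$: even part $W_5$, odd part $\Omega^2_{\mathrm{cl}}$, with $W_5$ acting on both by Lie derivative and with the odd--odd bracket $[\omega_1,\omega_2]\in W_5$ defined by $\iota_{[\omega_1,\omega_2]}\Omega=\omega_1\wedge\omega_2$ (automatically volume-preserving, since the $\omega_i$ are closed). The $W_5$--$W_5$ and $W_5$--$\Omega^2_{\mathrm{cl}}$ brackets should fall out directly from the cubic vertex involving $\mu^{1,1}$; the load-bearing computation is that the vertex $\Omega^{-1}\vee(\del\gamma^{1,0}\wedge\del\gamma^{1,2})$ reproduces precisely the ``wedge, then contract against $\Omega$'' recipe for the odd--odd bracket. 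Finally, one identifies the $L_\infty$ central extension carried by the extra one-dimensional class: since $E(5,10)$ is simple it admits no ordinary central extension, so this class must couple to the rest of the algebra only through a higher ($n\geq 3$-ary) bracket valued in the centre, to be extracted from the quartic and higher structure maps of the theory by homotopy transfer; following Cederwall's suggestion, this is expected to agree with the extension appearing in the pure spinor model.

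The main obstacle, I expect, lies in exactly these last two points rather than in the ``obvious'' brackets: first, showing that the divergence directions promoting the manifestly volume-preserving symmetries to all of $W_5$ are genuinely present as cohomology classes — the step where the BV/pure-spinor description is essential and a naive Kodaira--Spencer theory would not suffice — and second, computing the $L_\infty$ cocycle representing the central extension and verifying it is nontrivial, so that the extension does not split. By contrast, the $\dbar$- and de Rham-collapse and the matching of the diffeomorphism-type brackets should be a routine, if bookkeeping-heavy, exercise.
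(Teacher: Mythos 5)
Your overall strategy coincides with the paper's: the conjecture itself is not proved there, but is inferred from the main conjecture together with a theorem computing the global symmetry algebra of the interacting BV theory on $\CC^5\times\RR$, and that theorem is proved exactly as you outline in broad strokes --- collapse the $\dbar$ and de Rham directions, identify the linearized BRST cohomology with $E(5,10)$ plus one extra class, and extract the central extension as a higher bracket by homotopy transfer. However, there is a genuine error at the step you yourself flag as load-bearing. You assert that the even part of $E(5,10)$ is the full algebra $W_5$ of formal vector fields, and that a key computation is to ``promote'' the volume-preserving subalgebra to all of $W_5$ using the $\gamma$-sector. This is backwards: Kac's $E(5,10)$ has even part precisely the divergence-free vector fields and odd part the closed two-forms. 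Correspondingly, in the theory at hand the $\mu$--$\nu$ sector collapses to the two-term complex $\Vect(\CC^5)\xrightarrow{\div}\cO(\CC^5)$, whose cohomology is exactly $\Vect_0(\CC^5)$ (the divergence map is surjective on functions), while the $\beta$--$\gamma$ sector $\cO\xrightarrow{\del}\Omega^1$ contributes the odd classes $\Omega^1/\del\cO\cong\Omega^2_{cl}$ together with the single even constant $\CC$ coming from constant $\beta$, which is the central direction. Nothing in the $\gamma$-sector supplies additional even vector fields; the computation you propose would fail, and if it somehow produced $W_5$ the answer would not be $E(5,10)$. Your identification of the odd--odd bracket with $\Omega^{-1}\vee(\del\gamma\wedge\del\gamma')$, induced by the vertex $\frac16\int\gamma\,\del\gamma\,\del\gamma$, is correct and is exactly the paper's matching.

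Two smaller points. First, your argument that the central class can only enter through a bracket of arity at least three ``because $E(5,10)$ is simple'' is not valid reasoning: simplicity does not preclude ordinary central extensions (the Witt algebra is simple yet carries the Virasoro $2$-cocycle). In the paper the $3$-ary nature of the extension is not argued a priori; it is the output of an explicit homotopy-transfer computation, using a contracting homotopy $K$ furnished by the holomorphic Poincar\'e lemma (splitting $\del|_{\beta\to\gamma}$ and $\div|_{\mu\to\nu}$) and the interaction vertices of $S_{BF,\infty}$, and it yields the concrete even cocycle $\varphi(\mu,\mu',\alpha)=\langle\mu\wedge\mu',\alpha\rangle|_{z=0}$; your plan to run homotopy transfer against the quartic data is in the right spirit, but you would need to exhibit this cocycle explicitly rather than argue abstractly for nontriviality. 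Second, the paper works with holomorphic (rather than formal/polynomial) coefficients and notes that Kac's algebra sits inside as a dense subalgebra; this is harmless, but worth stating if you insist on the formal-disk model. With the even part corrected to $\Vect_0(\CC^5)$, your outline reproduces the supporting theorem; the conjecture about untwisted supergravity on $\RR^{11}$ then still rests on the unproven identification of the BV theory with the holomorphic twist.
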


\subsection*{Relationship to other twists of supergravity} 

Motivated by the topological string, Costello and Li formulated conjectural descriptions of certain twists of 10-dimensional theories of supergravity. 
All of the descriptions center around the theory of Kodaira--Spencer gravity, otherwise known as BCOV theory. 
This theory was introduced in \cite{BCOV} as the closed-string field theory of the topological $B$-model on Calabi--Yau three-folds.
It was further extended to all Calabi--Yau manifolds in \cite{CLbcov1}. 
It is a sort of holomorphic version of gravity which at the genus zero  describes fluctuations of the complex structure of the Calabi--Yau manifold. 
We recall relevant aspects of Kodaira--Spencer gravity in \S\ref{s:BCOV}. 

Through the dimensional reduction of our eleven-dimensional theory we will find a match with Costello and Li's descriptions of twists of 10-dimensional supergravity in terms of BCOV theory.
In Table \ref{diagram}, we provide a summary of the comparisons with twists of supergravity in lower dimensions we perform. The squiggly arrows denote further twists, and the solid arrows denote various kinds of reductions. 

\begin{table}
\begin{center}
\[
\begin{tikzcd}[row sep=large,column sep = 3ex]
                                                                         &  \parbox{3cm}{\centering minimal twist of 11d} \arrow[ld, "\text{slab }" '] \arrow[d, "\text{topological }S^1" description] \arrow[rdd, "\text{CY3}" description, pos=.7, bend left = 5] \arrow[rr, rightsquigarrow] \arrow[rd, "\text{holomorphic }S^1"] &                                                             &  \parbox{3cm}{\centering non-minimal twist of 11d} \arrow[d, "\text{topological }S^1" description] \\
 \parbox{3cm}{\centering minimal twist of type I}  &  \parbox{3cm}{\centering minimal twist of type IIA}\arrow[l, "\text{orbifold}", start anchor={[xshift=1.5ex]west}, end anchor={[xshift=-1.5ex]east}]\arrow[r, rightsquigarrow, crossing over, end anchor = {[xshift=2ex]west}]                                                                                                                                                                                                &  \parbox{3cm}{\centering SU(4) twist of type IIA} \arrow[r, rightsquigarrow, start anchor={[xshift=-2ex]east}, end anchor={[xshift=2ex]west}] &  \parbox{3cm}{\centering SU(2) twist of type IIA}                                                 \\
                                                                         &                                                                                                                                                                                                                                                          & \parbox{3cm}{\centering minimal twist of 5d $\cN=1$}                   &                                                                                                
\end{tikzcd}
\]
\end{center}
\caption{Relations between various twists}
\label{diagram}
\end{table}

In \S\ref{s:su4red} we will show that the reduction along $\{0\} \times \RR \times \{0\} \subset \CC^4 \times \CC \times \RR$ is equivalent to the $SU(4)$ twist of type IIA supergravity. 
The topological string approach does not lead to a description of the minimal, or holomorphic, twist of type IIA supergravity. 
In \S \ref{s:su5IIA}, we describe the reduction along the line $\{0\} \times \RR \subset \CC^5 \times \RR$ to obtain a conjectural description of the holomorphic $SU(5)$ twist of type IIA supergravity. 
In \S \ref{s:Ired}, we show that the reduction along the interval $\{0\}\times [0,1]\subset \CC^5\times [0,1]$ with certain boundary conditions on the endpoints of the interval is equivalent to the twist of type I supergravity. We further observe in \S \ref{s:orbifold} that the twist of type I arises as the fixed points of a natural $\ZZ/2$ action on the minimal twist of IIA. Next, in \S \ref{s:CY3} we study the reduction along a Calabi-Yau 3 fold to obtain a conjectural description of the minimal twist of 5d $\cN=1$ supergravity. Finally, we show in \S \ref{s:nonmin} how the non-minimal twist of eleven-dimensional supergravity arises as a further twist of our holomorphic twist.


\subsection*{Acknowledgements}
We warmly thank I.~Brunner, M.~Cederwall, K.~Costello, R.~Eager, C.~Elliott, D.~Gaiotto, O.~Gwil\-liam, F.~Hahner, J.~Huerta, H.~Loosen, S.~Li, N.~Paquette, S.~R\"osch, J.~Walcher, P.~Yoo for conversation and inspiration of all kinds. I.S. and B.W. thank the University of Heidelberg, B.W. thanks the University of Edinburgh and the Mathematisches Forschungsinstitut Oberwolfach, and I.S. thanks the Krameterhof for hospitality while portions of this work were being completed. The work of S.R. is supported by the Perimeter Institute for Theoretical Physics. Research at Perimeter Institute is supported in part by the Government of Canada, through the Department of Innovation, Science and Economic Development Canada, and by the Province of Ontario, through the Ministry of Colleges and Universities.
The work of I.S. is supported by the Free State of Bavaria. The work of B.W. is supported by the University of Edinburgh.


\section{The minimal twist of eleven-dimensional supergravity} 
\label{s:dfn}

In this section we define the central theory of study within the Batalin--Vilkovisky (BV) formalism. 
The theory will be defined on any eleven-dimensional manifold of the form $X \times L$, where $X$ is a Calabi--Yau five-fold and $L$ is a smooth oriented one-manifold.

In \cite{SWspinor}, we showed that the underlying free limit of the theory we consider here is the free limit of the minimal twist of 11-dimensional supergravity on $\CC^5 \times \RR$. 
The goal of this section is to introduce interactions at the level of the minimal twist. 
The main result is Theorem \ref{thm:dfn} where we show that the theory is consistent within the BV formalism. 

In the remainder of the paper we discuss further consistency with supersymmetry and string theory, but in this section we focus mostly on the theory as a partially holomorphic, partially topological, theory of gravity. 
Nevertheless, we do provide some preliminary justification towards the relationship to physical supergravity including a matching of indices in \S \ref{sec:locchar}. 

\subsection{Divergence-free vector fields} 

\subsubsection{}
\label{sec:divfree}
We set up some notations and conventions in the context of complex geometry. 
Let $V$ be a holomorphic vector bundle on a $d$ dimensional complex manifold $X$. 
If $j$ is an integer, we let $\Omega^{0,j}(X, V)$ denote the space of anti-holomorphic Dolbeault forms of type $j$ on~$X$ with values in $V$.
The $\dbar$ operator $\dbar \colon \Omega^{0,j}(X, V)\to \Omega^{0,j+1}(X,V)$ defines the Dolbeault complex of $V$:
\[
  \Omega^{0,\bu}(X, V) = \left(\Omega^{0,j}(X, V)[-j] , \; \dbar\right) .
\]
This is a free (smooth) resolution for the sheaf of holomorphic sections of $V$.

Suppose $X$ is a Calabi--Yau manifold with holomorphic volume form $\Omega$.
The divergence $\div(\mu)$ of a holomorphic vector field $\mu$ is defined by the formula
\[
\div (\mu) \wedge \Omega = L_\mu (\Omega)
\]
where, on the right hand side, we mean the Lie derivative of $\Omega$ with respect to $\mu$.

Let $\T_X$ denote the holomorphic tangent bundle and consider its Dolbeault complex $\Omega^{0,\bu}(X , \T_X)$ resolving the sheaf of holomorphic vector fields. 
The divergence operator extends to the Dolbeault complex to yield a map of cochain complexes 
\[
\div \colon \Omega^{0,\bu}(X , \T_X) \to \Omega^{0,\bu}(X) .
\]
The resulting complex of sheaves
\beqn\label{eqn:cplx1}
\begin{tikzcd}
\ul{0} & \ul{1} \\
\Omega^{0,\bu}(X , \T_X) \ar[r, "\div"] & \Omega^{0,\bu}(X) ,
\end{tikzcd}
\eeqn
resolves the sheaf of holomorphic divergence-free vector fields $\Vect_0 (X)$.
The anti-holomorphic Dolbeault degrees and the $\dbar$ operator are left implicit. 

There is a direct way to extend the Lie bracket of vector fields to the complex \eqref{eqn:cplx1}. 
Denote by $\mu$ an element of $\Omega^{0,\bu}(X , \T_X)$ and $\nu$ an element of $\Omega^{0,\bu}(X)$ (for simplicity in notation, we will not expand the anti-holomorphic dependence). 
The Lie bracket defined by the formulas
\begin{align*}
[\mu, \mu'] & = L_\mu \mu' \\
[\mu, \nu] & = L_\mu \nu 
\end{align*}
is compatible with $\div$ and endows \eqref{eqn:cplx1} with the structure of a sheaf of dg Lie algebras.
We will refer to this sheaf by the symbol $\cL_0(X)$, or just $\cL_0$ if $X$ is understood. 

The sheaf $\cL_0$ has the structure of a {\em local} dg Lie algebra~\cite[\S 3.1.3]{CG2}.
This means that, as a graded sheaf, $\cL_0$ is the smooth sections of a graded vector bundle, and its differential and Lie bracket are given by differential and bidifferential operators respectively.

\parsec[sec:Linfty]

Recall that an $L_\infty$ algebra is a $\ZZ$-graded vector space $\cL$ together with the datum of a square-zero, degree $+1$ derivation $\delta_\cL$ of the free commutative graded algebra $\Sym\left(\cL^\vee [-1] \right)$. 
The Chevalley--Eilenberg cochain complex is 
\[
\left(\Sym\left(\cL^\vee [-1] \right), \delta_\cL\right) .
\]
The Taylor components of $\delta_\cL$ define higher brackets $\{[-]_k\}_{k=1,2,\ldots}$ where $[-]_k \colon \cL^{\otimes k} \to \cL[2-k]$. 
The condition that the differential $\delta_\cL$ is square-zero is equivalent to the higher Jacobi relations.

An $L_\infty$ morphism $\Phi: \cL \rightsquigarrow \cL'$ is the same datum as a map of commutative dg algebras 
\deq{
  \Phi^*: \clie^\bu(\cL') \to \clie^\bu(\cL)
}
between their respective Lie algebra cochains. It follows from this that \emph{any} automorphism $\Phi$ of the free commutative algebra on $\cL^\vee[-1]$ defines a new model of the $L_\infty$ algebra $\cL$, for which the Chevalley--Eilenberg differential is obtained by conjugating $\delta_\cL$ by~$\Phi$, and where $\Phi$ itself defines the $L_\infty$ isomorphism.

$\cL_0$ is the sheaf~\eqref{eqn:cplx1} resolving divergence-free vector fields equipped with the dg Lie algebra structure constructed in the previous section.
We consider the following automorphism of~$\Sym(\cL_0^\vee[-1])$, defined by its action on generators:
\deq[eq:newbase]{
    \Psi_\infty: \nu \mapsto 1 - e^{-\nu}, \quad \mu \mapsto e^{-\nu} \mu.
}
This map defines a new model of the $L_\infty$ algebra with underlying graded vector space the same as \eqref{eqn:cplx1}, which we will call $\cL_\infty$.\footnote{We are being slightly abusive and using the symbols $\nu,\mu$ dually as coordinates, or operators, on the graded linear space $\cL[1]$.}
The formulas for the automorphism above clearly arise from maps of vector bundles and hence endow $\cL_\infty$ with the structure of a local $L_\infty$ algebra, meaning all operations are given by polydifferential operators.  

The notation refers to the fact that this new model has nonvanishing $L_\infty$ brackets of every order. 
It is this new model that we will use to define the eleven-dimensional theory of twisted supergravity.


We can describe the $L_\infty$ structure on our new model $\cL_\infty$ explicitly.
Recall that we have two types of elements: $\mu \in \PV^{1,\bu}$ and $\nu \in \PV^{0,\bu}[-1]$. (Here, and in what follows, we will use the symbol $\PV^{i,\bu}$ for the Dolbeault resolution of \emph{holomorphic polyvector fields;} by definition, this is the complex $\Omega^{0,\bu}(X,\wedge^i \T_X)$.)
The first few nonzero brackets are
\begin{align*}
[\mu]_1 & = \dbar \mu + \div \mu \\
[\mu_1,\mu_2]_2 & = \div (\mu_1 \wedge \mu_2) \\
[\nu, \mu_1,\mu_2]_3 & = \div(\nu \mu_1 \wedge \mu_2) 
\end{align*}
For $k \geq 2$ the general formula for the $k$-ary brackets are 
\begin{align*}
[\nu_1, \ldots, \nu_{k-2}, \mu_1,\mu_2]_{k} & = \div(\nu_1 \cdots \nu_k \mu_1 \wedge \mu_2) \\
[\nu_1,\ldots, \nu_{k-3}, \mu_1,\mu_2,\gamma]_k & = \nu_1 \cdots \nu_{k-3} (\mu \wedge \mu') \vee \del \gamma .\\
[\nu_1,\ldots,\nu_{k-2}, \mu, \gamma]_k & = \nu_1 \cdots \nu_{k-2} \mu \vee \del \gamma .
\end{align*}



\subsection{Theories of BF type}

\parsec
Suppose that $\cL$ is an $L_\infty$ algebra with $L_\infty$ operations $\{[-]^\cL_k\}_{k=1,2,\ldots}$ and that $(\cA, \d_\cA)$ is a commutative dg algebra. 
The graded vector space $\cL \otimes \cA$ is equipped with the natural structure of an $L_\infty$ algebra with operations $\{[-]_k\}_{k=1,2,\ldots}$ defined by
\begin{align*}
[x \otimes a]_1 & = [x]^\cL_1 \otimes a + (-1)^{|x|} x \otimes \d_\cA a \\
[x_1 \otimes a_1, \ldots , x_k \otimes a_k]_k & = [x_1,\ldots,x_k]^\cL_k \otimes (a_1 \cdots a_k), \qquad k \geq 2 .
\end{align*}

We apply this construction, taking $\cL$ to be the sheaf resolving divergence-free holomorphic vector fields on a Calabi--Yau manifold $X$ equipped with either the strict dg Lie algebra structure $\cL_0(X)$ or its non-strict $L_\infty$ structure $\cL_\infty (X)$. 
The algebra $\cA$ will be the smooth de Rham complex $(\Omega^\bu(S) , \d_S)$ where $S$ is a smooth manifold. 

We thus obtain the structure of an dg Lie algebra on $\cL_0(X) \otimes \Omega^{\bu}(S)$ or an $L_\infty$ algebra $\cL_\infty(X) \otimes \Omega^\bu(S)$.
These define equivalent local $L_\infty$ algebras on the product manifold~$X \times S$. 

\parsec[s:bf]

Associated to any local $L_\infty$ algebra is a classical field theory in the BV formalism.
Let $\cL$ be a local $L_\infty$ algebra on some manifold $M$, it is the sheaf of sections of some graded vector bundle $L$. 
For a section $A \in \cL$, introduce the `higher curvature map' defined by the formula
\[
\mathsf{F}_A = [A]_1 + \frac12 [A,A]_2 + \frac{1}{3!} [A,A,A]_3 + \cdots .
\]

The fields of the associated BV theory are pairs
\[
  (A, B) \in \cL[1] \oplus \cL^{!}[-2] .
\]
Here $\cL^!$ denotes the sheaf of sections of the bundle $L^* \otimes {\rm Dens}$, where ${\rm Dens}$ is the bundle of densities. 
The shifted symplectic BV pairing is the obvious integration pairing between $A$ and $B$. 

The action functional reads $S_{\rm BF} = \int_M B \, \mathsf{F}_{A}$ which leads to the equations of motion $\mathsf{F}_{A} = 0$ and $\mathsf{D}_A B= 0$ where $\mathsf{D}_A$ is the higher covariant derivative along $A$. 
We refer to this as the ``BF theory'' associated to $\cL$.

We thus obtain a theory in the BV formalism on the product manifold $X \times S$ associated to both local $L_\infty$ algebras $\cL_0(X) \otimes \Omega^{\bu}(S)$ and $\cL_\infty(X) \otimes \Omega^\bu(S)$.

\parsec
For concreteness, we spell out the fields of the theories we have constructed on $X \times S$.
In both cases, the space of fields equipped with the linear BRST operator is
\begin{equation}
  \label{eq:sympfields} 
  \begin{tikzcd}[row sep = 1 ex]
    -n & -n + 1 & -1 & 0 \\ \hline
    \Omega^{0}(X;S) \ar[r, "\del"] & \Omega^{1}(X;S) & 
     \PV^{1}(X; S) \ar[r, "\div"] & \PV^{0}(X; S).
\end{tikzcd}
\end{equation}
We denote the fields $(\beta,\gamma,\mu,\nu)$ respectively.
We are using the shorthand notation
\begin{align*}
\Omega^{i}(X;S) & = \Omega^{i , \bu;\bu}(X;S) \\
 & = \oplus_{j,k} \PV^{i,j}(X) \otimes \Omega^k(S) [-j-k] .
\end{align*}
which is equipped with the $\dbar + \d_S$ operator and similarly for $\PV^{i}(X;S)$. 

The natural pairing between $\PV^i(X;S)$ and~$\Omega^i(X;S)$ is of degree $-\dim_\C(X) -\dim_\R(S)$. 
As such, the $\Z$-grading indicated in~\eqref{eq:sympfields} equips the sheaf of fields with a degree $(-1)$ pairing, provided that we choose the shift to be given by
\deq{
  n = \dim_\C(X) + \dim_\R(S) - 1.
}
The pairing is defined by the formula 
\[
\int^\Omega_{X \times S} \mu \vee \gamma + \int^\Omega_{X \times S} \nu \beta 
\]
where $\int^\Omega_{X \times S} \alpha = \int_{X \times S} \alpha \wedge \Omega$. 

We have constructed two equivalent descriptions of the BF theory which share the linear BRST complex \eqref{eq:sympfields}.
Explicitly, the action functional for BF theory associated to the local dg Lie algebra $\cL_0(X) \otimes \Omega^{\bu}(S)$ is
\deq{
  S_{BF,0} =  \int^\Omega \bigg[\beta \wedge (\dbar + \d_S) \nu +  \gamma \wedge (\dbar + \d_S) \mu +  \beta \wedge \partial_\Omega \mu +  \frac{1}{2} [\mu,\mu] \vee \gamma +  [\mu,\nu] \beta \bigg] .
}
As in the Lie algebra structure of this strict model, notice that the Schouten--Nijenhuis bracket appears explicitly. 

The action functional of BF theory associated to $\cL_\infty(X) \otimes \Omega^{\bu}(S)$ is non polynomial. 
In fact, it is related to the BCOV action functional via dimensional reduction (see \S \ref{sec:dimred}).
Explicitly, this action functional is
\deq{
  S_{BF,\infty} =  \int^\Omega \bigg[ \beta \wedge (\dbar + \d_S) \nu +  \gamma \wedge (\dbar + \d_S) \mu +   \beta \wedge \partial_\Omega \mu + \frac12 \frac{1}{1-\nu} \mu^2 \vee \del \gamma \bigg] .
}

We demonstrated above that the two local $L_\infty$ algebras on which these BF theories are based are equivalent. As such, the BF theories are also equivalent; the map~\eqref{eq:newbase} extends uniquely to an automorphism of BV theories.
Explicitly, the automorphism is
\begin{equation}\label{eqn:auto1}
  \mu \mapsto e^{-\nu} \mu, \qquad \nu \mapsto 1-e^{-\nu} , \qquad
  \beta \mapsto (\beta - \mu \vee \gamma) e^{\nu},\qquad \gamma \mapsto e^{\nu} \gamma .
\end{equation}

\parsec[]

In what follows, we specialize to the case that $X$ is a Calabi--Yau five-fold and that $S$ is a one-dimensional smooth orientable manifold. 
In this case, with $n = 5 + 1 - 1 = 5$ the theories described in this section are $\ZZ$-graded in the BV formalism.
Momentarily, we consider a new term in the action which will break this grading; as such, this integer shift will not play an essential role.

\subsection{A deformation of BF theory} 

Let $X$ be a Calabi--Yau five-fold and $S$ be a smooth oriented one-dimensional real manifold. 
We will break the $\ZZ$-grading present in BF theory discussed in the previous section to a $\ZZ/2$ grading. 
For reference, this means that the linear cochain complex of fields of the model now takes the following form. 

\begin{equation}
  \label{eq:sympfields} 
  \begin{tikzcd}[row sep = 1 ex]
    {\rm odd} & {\rm even} & {\rm odd} & {\rm even} \\ \hline
    \Omega^{0}(X;S)_\beta \ar[r, "\del"] & \Omega^{1}(X;S)_\gamma & 
     \PV^{1}(X; S)_\mu \ar[r, "\div"] & \PV^{0}(X; S)_\nu.
\end{tikzcd}
\end{equation}

\parsec

To define our classical field theory on $X \times S$, we consider  a deformation of BF theory $S_{BF}$ (this refers to either the presentation as $S_{BF,0}$ or $S_{BF,\infty}$). 
Such deformations are governed by the classical master equation: the parameterized family of actions 
\beqn\label{eqn:defaction}
S_{BF} + g J
\eeqn
defines a consistent theory in the BV formalism if and only if
\deq{
  \{S_{BF} + g J, S_{BF} + g J \} = 0.
}
Since this must hold for all $g$, and since the undeformed action $S$ is already a solution to the classical master equation, this reduces to the pair of conditions
\deq[eq:2cond]{
  \{S_{BF},J\} =  \{J,J\} = 0.
}

The form of $J$ depends on which presentation we use for BF theory.
To begin, we will use the presentation of BF theory $S_{BF, \infty}$ which uses the the non-strict $L_\infty$ structure on divergence-free holomorphic vector fields.
The deformation $J$ does not make reference to the Calabi--Yau structure explicitly, but it does involve the holomorphic de Rham operator $\del$ on $X$. 

The main result of this section is the following. 

\begin{thm}
\label{thm:dfn}
Let $X$ be a Calabi--Yau five-fold and $S$ a smooth one-dimensional manifold, and consider the BV theory $(\cE, S_{BF,\infty})$ on $X \times S$ defined above. The local functional 
  \deq{
    J = \frac16 \gamma \wedge \del \gamma \wedge \del \gamma ,
  }
  where $\gamma \in \Omega^{1,\bu}(X;S)$, defines a deformation of~$(\cE,S_{BF,\infty})$ as a $\Z/2$-graded BV theory.
\end{thm}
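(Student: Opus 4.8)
The plan is to verify the two conditions in~\eqref{eq:2cond} directly, namely $\{S_{BF,\infty}, J\} = 0$ and $\{J, J\} = 0$, where $J = \frac16 \gamma \wedge \del \gamma \wedge \del \gamma$. The bracket $\{-,-\}$ is the BV antibracket coming from the degree $(-1)$ pairing $\int^\Omega$ between $(\mu,\nu)$ and $(\gamma,\beta)$. Since $J$ depends only on $\gamma$, computing $\{J,J\}$ requires pairing $\partial_\gamma J$ with itself through the component of the pairing dual to $\gamma$, which is the $\mu$--$\gamma$ pairing; but $\partial_\gamma J$ is valued in the $\gamma$-direction, so to form $\{J,J\}$ one contracts two copies of $\frac{\delta J}{\delta \gamma} \sim \del\gamma \wedge \del\gamma$ along the $\mu$-slot, which vanishes identically since there is no $\mu$-dependence in $J$ to contract against. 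Concretely, $\{J,J\} = 0$ because $J$ is a function of $\gamma$ alone and the antibracket pairs $\gamma$ only with $\mu$: the self-bracket is $\int^\Omega \frac{\delta J}{\delta \mu} \vee \frac{\delta J}{\delta \gamma} \pm (\mu \leftrightarrow \gamma\text{-type terms}) = 0$ since $\frac{\delta J}{\delta \mu} = 0$ and $\frac{\delta J}{\delta \beta} = 0 = \frac{\delta J}{\delta \nu}$. So the second condition is immediate.

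The substance is the first condition, $\{S_{BF,\infty}, J\} = 0$. Writing $S_{BF,\infty} = S_{\text{free}} + S_{\text{int}}$ with $S_{\text{free}} = \int^\Omega [\beta(\dbar+\d_S)\nu + \gamma(\dbar+\d_S)\mu + \beta \del_\Omega \mu]$ and $S_{\text{int}} = \int^\Omega \frac12 \frac{1}{1-\nu}\mu^2 \vee \del\gamma$, I would split $\{S_{BF,\infty},J\} = \{S_{\text{free}},J\} + \{S_{\text{int}},J\}$. For $\{S_{\text{free}},J\}$: the only term of $S_{\text{free}}$ whose variation in $\mu$ or $\beta$ is nonzero and pairs against $\frac{\delta J}{\delta\gamma} \sim \del\gamma\wedge\del\gamma$ is $\gamma(\dbar+\d_S)\mu$ and $\beta\del_\Omega\mu$. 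Pairing $\frac{\delta S_{\text{free}}}{\delta\mu} = (\dbar+\d_S)\gamma + \del_\Omega\beta$ against $\frac{\delta J}{\delta\gamma}$ gives, up to signs and integration by parts, a term $\int^\Omega (\dbar+\d_S)\gamma \wedge \del\gamma\wedge\del\gamma$ plus a $\beta$-term. The key point is that $\int^\Omega \dbar\gamma\wedge\del\gamma\wedge\del\gamma$ is a total derivative: one integrates $\dbar$ by parts (using that $\dbar\del = -\del\dbar$ and that $\dbar\Omega = 0$, $\del\Omega=0$ for the holomorphic volume form) to see the whole expression is $\dbar$-exact, hence the integral vanishes; similarly for the $\d_S$ piece using $\d_S$-exactness. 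The $\beta$-contribution, involving $\del_\Omega\beta \wedge \del\gamma\wedge\del\gamma$, must also be checked to vanish after integration by parts, again using compatibility of $\del_\Omega$ with $\del$ and the Calabi--Yau condition.

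For $\{S_{\text{int}},J\}$: here $S_{\text{int}}$ contributes $\frac{\delta S_{\text{int}}}{\delta\mu}$ (paired against $\frac{\delta J}{\delta\gamma}$ in the $\mu$--$\gamma$ slot) and $\frac{\delta S_{\text{int}}}{\delta\gamma}$ (paired against $\frac{\delta J}{\delta\mu} = 0$, so this drops). Thus $\{S_{\text{int}},J\} = \pm\int^\Omega \left(\frac{1}{1-\nu}\mu \vee \del\gamma\right) \vee \left(\del\gamma\wedge\del\gamma\right)$, with appropriate contraction pattern. I would argue this vanishes by a symmetry/type-counting argument: the expression involves $\del\gamma$ appearing enough times that the contraction of the polyvector $\mu$ (a $(1,\bu)$ polyvector) with three copies of the holomorphic one-form-valued field $\del\gamma$ has the wrong form-type, or else the relevant wedge is identically zero because $\del\gamma \wedge \del\gamma \wedge \del\gamma$ lands in a space where the remaining contraction with a single holomorphic vector field kills it by antisymmetry. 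Working out precisely which contraction structure the antibracket produces — tracking how $\vee$ (contraction) and $\wedge$ interleave, and the signs from the $\Z/2$-grading — is the main obstacle, and the one place where the specific coefficient $\frac16$ and the form $\del\gamma\wedge\del\gamma$ (as opposed to, say, $\gamma\wedge\del\gamma$) matter. I expect the upshot is that $\{S_{\text{int}},J\}$ vanishes pointwise, not just after integration, so that the only real analytic input is the integration-by-parts argument for the $\dbar$ and $\del_\Omega$ terms in $\{S_{\text{free}},J\}$. Finally I would remark that, via the automorphism~\eqref{eqn:auto1}, the same deformation can be transported to the strict presentation $S_{BF,0}$, giving the statement there as well, though the form of $J$ changes.
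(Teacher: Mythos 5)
Your skeleton matches the paper's proof: verify the two conditions $\{J,J\}=0$ and $\{S_{BF,\infty},J\}=0$, note that the first is immediate because $J$ depends only on $\gamma$, and dispose of the contributions of the quadratic terms (the $(\dbar+\d_S)\gamma$ and $\del\beta$ pieces) as total derivatives, exactly as the paper does for $\{\beta\wedge\div\mu, J\}=\tfrac12\del\beta\wedge\del\gamma\wedge\del\gamma$. The genuine gap is at the one step that carries all the content: the vanishing of the interaction contribution, which after unwinding the antibracket is (up to the irrelevant $\tfrac{1}{1-\nu}$ factor) the local functional $\tfrac12\int(\mu\vee\del\gamma)\wedge\del\gamma\wedge\del\gamma$. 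You offer two candidate mechanisms and explicitly defer the verification as ``the main obstacle,'' so the classical master equation is not actually checked. Moreover your first mechanism is not correct: $\del\gamma$ has holomorphic degree two (not one), $\mu\vee\del\gamma$ is a $(1,\bu)$-form, and the whole integrand is a perfectly well-defined $(5,\bu)$-form density on the five-fold, so no Dolbeault/de Rham type mismatch forces it to vanish. The paper's proof establishes the vanishing by a finer, pointwise representation-theoretic argument: the term is an $\lie{sl}(5)$-equivariant expression cubic in the two-form $\del\gamma$ and linear in the vector field $\mu$, and one checks that neither $\Sym^3(\wedge^2\CC^5)$ nor $\wedge^3(\wedge^2\CC^5)$ (both are needed, because the $\ZZ/2$-graded superfield $\gamma$ has components of either parity) contains the irreducible representation required to contract against a vector field to produce a scalar density.

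Your second mechanism is on the right track and could be completed: in the purely even case, $\del\gamma\wedge\del\gamma\wedge\del\gamma$ has holomorphic degree six on a five-fold and hence vanishes, and since $\mu\vee(-)$ is a derivation, $0=\mu\vee(\del\gamma\wedge\del\gamma\wedge\del\gamma)=3\,(\mu\vee\del\gamma)\wedge\del\gamma\wedge\del\gamma$. But this manipulation uses (super)commutativity of the three factors, and precisely the parity/sign bookkeeping you flag as the obstacle is what makes the naive argument insufficient for the $\ZZ/2$-graded fields; this is exactly why the paper checks both the symmetric and the exterior cube of the two-form representation. As written, your proposal asserts the key identity as an expectation rather than proving it, so it does not yet constitute a proof of the theorem.
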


\parsec[]

Before proceeding to the proof, we remark on grading issues. 
In the original $\Z$-grading on the BF theory given in \eqref{eq:sympfields} with $n=5$, the component 
\[
\gamma^{1,i;j} \in \Omega^{1,i}(X) \otimes \Omega^j(S) 
\]
sits in degree $-4+i+j$. 
Thus, we see that in the original $\Z$-grading on BF theory one has
  \deq{
    \deg(J) = 6.
  }
Thus $S_{BF} + J$ is not of homogenous $\ZZ$ grading (although it is even).

This is completely reasonable from the point of view of twisting supersymmetry in eleven dimensions. 
Indeed, the $R$-symmetry group is trivial, and there is not a way to regrade the fields of the twisted theory using twisting data~\label{CosHol,ESW}. 
Nevertheless, if we break to the obvious $\ZZ/2$ grading, the functional $S_{BF} + g J$ defines an even action functional.
Unless otherwise stated, we will work with this $\ZZ/2$ grading for the remainder of this section.

\parsec[]
We proceed to show that $S_{BF,\infty} + g J$ solves the classical master equation.
For notational simplicity we will omit the integral symbol $\int^\Omega$.

\begin{proof}

It is immediate from the form of the BV bracket that $\{J,J\} = 0$, since $J$ depends only on the $\gamma$ field. 
It remains to check that $\{S_{BF,\infty},J\} = 0$. 
For the quadratic term in the BF action, we note that 
  \deq{
    \{\beta \wedge \div\mu, J\} = \frac12 \del\beta \wedge \del \gamma \wedge \del \gamma = 0,
  }
  because total derivatives are equivalent to zero as local functionals. 
  
The contribution from the remaining BF action takes the form
\[
    \left\{ \frac12 \frac{1}{1-\nu} \del\gamma \vee \mu^2, \frac16 \gamma \wedge \del\gamma \wedge \del \gamma \right\} =\frac12 (\mu \vee \del \gamma) \wedge \del \gamma \wedge \del \gamma .
\]
This expression is zero for symmetry reasons. 
Recall that $\del\gamma$ is a two-form, and that the expression must be a totally symmetric local functional which is cubic in this two-form. We can ask whether such a  contraction exists just at the level of $\lie{sl}(5)$ representation theory. Let $\ydiagram{1}$ denote the fundamental representation of~$\lie{sl}(5)$, which we identify with constant one-forms. Since the term must be a scalar, the contraction $(\partial\gamma^3$ must sit in the fundamental representation again, since it is dual to a vector field. Computing the decomposition of the tensor cube of the two-form, we find
\deq{
  \Sym^3 \left( \ydiagram{1,1} \right) \cong \ydiagram{3,3} \oplus \ydiagram{2,2,1,1}\,.
}
(In fact, the absence  of the relevant irreducible representation does not  even depend on the parity of the  field $\gamma$, since 
  \deq{
    \wedge^3\left( \ydiagram{1,1} \right) \cong \ydiagram{3,1,1,1}  \oplus \ydiagram{2,2,2}\, ;
  }
the  fundamental representation has symmetry type $\ydiagram{2,1}$.) 
\end{proof}

\parsec[s:coupling]

We make note of the dependence on the coupling constant $g$ in the definition of the deformed action $S_{BF,\infty} + g J$. 

When $g = 0$ we recover BF theory for the $L_\infty$ algebra $\cL_\infty(\CC^5) \otimes \Omega^\bu(\RR)$. 
For any $g \ne 0$ the theories are essentially equivalent in perturbation theory. 
Indeed, if $g \ne 0$ we can make the following field redefinition 
\[
\gamma \mapsto \sqrt{g} \gamma, \quad \beta \mapsto \sqrt{g} \beta 
\]
to write the action as 
\[
\frac{1}{\sqrt{g}} \left(S_{BF,\infty} + J \right)  .
\]

In perturbation theory, this has the affect of modifying the quantization parameter $\hbar$ to $\hbar / \sqrt{g}$.
Thus, after modifying $\hbar$ and making the above field redefinition, the perturbative expansion of any theory is equivalent to the one with $g = 1$. 

\parsec[s:altdfn]

We remark on an alternative, equivalent, description of the deformed theory which involves the strict dg Lie algebra structure on divergence-free holomorphic vector fields.

We can replace $S_{BF,\infty}$ by $S_{BF,0}$ via applying the field automorphism \eqref{eqn:auto1}.
Doing this we see that $J$ becomes 
\[
\til{J} = \frac16 e^\nu \gamma \wedge \del (e^\nu \gamma) \wedge \del(e^\nu \gamma) .
\]
Since this automorphism preserves the odd BV bracket, the actions $S_{BF,\infty} + g J$ and $S_{BF, 0} + g \til{J}$ are both solutions to the classical master equation, and are equivalent as~$\ZZ/2$ graded BV theories.

\subsection{Equations of motion of the component fields} \label{s:components}

Soon, we will provide a series of justifications for the assertion that the deformed theory $S_{BF, \infty} + g J$ is the minimal twist of eleven-dimensional supergravity on flat space $X \times S = \CC^5 \times \RR$ where $\CC^5$ is equipped with its flat Calabi--Yau form. 
For the moment, we briefly read off the equations of motion of the general theory on $X \times S$.
Let $\Omega$ denote the Calabi--Yau form on $X$. 

We consider the action $S_{BF, \infty} + gJ$.
The equation of motion obtained by varying $\beta$ is especially simple---in fact linear---since $\beta$ only appears in the action via a quadratic term. 
It is
\beqn\label{eqn:eombeta}
\dbar \nu + \d_S \nu + \div \mu = 0 .
\eeqn
Varying $\gamma$ we obtain the equation of motion
\beqn\label{eqn:eomgamma}
\dbar \mu + \d_S \mu + \frac12 \frac{1}{1-\nu} \div (\mu^2) + \frac12 (\del \gamma \wedge \del \gamma) \vee (g \Omega^{-1}) = 0 .
\eeqn
The last term represents the contraction of an element of $\Omega^{4,\bu}(X;S)$ with the nonvanishing section $\Omega^{-1} \in \PV^{5,\bu}(X;S)$ to yield an element of $\PV^{1,\bu}(X;S)$. 
If we vary the $\mu$ we obtain 
\beqn\label{eqn:eommu}
(\dbar + \d_S) \gamma + \del \beta + \frac{1}{1-\nu} (\mu \vee \del \gamma) = 0 .
\eeqn
Finally, if we vary $\nu$ we obtain
\beqn\label{eqn:eomnu}
(\dbar + \d_S) \beta + \frac12 \frac{1}{(1-\nu)^2} \mu^2 \vee \del \gamma = 0 .
\eeqn

The equation of motion must hold for any inhomogenous superfields.
We can get a better sense of the equations if we expand in components of these fields. 
The component fields of the eleven-dimensional theory on $X \times S$ have the following form: 
\begin{itemize}
\item $\mu = \sum_{i,j} \mu^{i;j}$ is a superfield where
\[
\mu^{i;j} \in \PV^{1,i}(X) \otimes \Omega^j(\RR) ,\quad i=0,\ldots, 5, \quad j=0,1.
\]
The component $\mu^{i;j}$ has parity $i+j+1 \pmod 2$. 
\item $\nu = \sum_{i,j} \nu^{i;j}$ is a superfield where
\[
\nu^{i;j} \in \PV^{0,i}(X, \T_X) \otimes \Omega^j(\RR) ,\quad i=0,\ldots, 5, \quad j=0,1.
\]
The component $\nu^{i;j}$ has parity $i+j \pmod 2$. 
\item 
$\gamma = \sum_{i,j} \gamma^{i;j}$ is a superfield where
\[
\gamma^{i;j} \in \Omega^{1,i}(X) \otimes \Omega^j(\RR) ,\quad i=0,\ldots, 5, \quad j=0,1.
\]
The component $\gamma^{i;j}$ has parity $i+j\pmod 2$. 
\item 
\item $\beta = \sum_{i,j} \beta^{i;j}$ is a superfield where
\[
\beta^{i;j} \in \Omega^{0,i}(X) \otimes \Omega^j(\RR) ,\quad i=0,\ldots, 5, \quad j=0,1.
\]
The component $\beta^{i;j}$ has parity $i+j+1 \pmod 2$. 
\end{itemize}

We look closely at the geometric meaning of \eqref{eqn:eombeta}. 
Let's make the simplifying assumption that all components of $\mu$ are divergence-free, and further that all fields are locally constant along $S$: that is, $\div \mu = 0$ and $\d_S \mu = \d_S \gamma = 0$.
Then $\nu = 0$ is a solution to \eqref{eqn:eombeta} and we can assume that all fields are functions, or zero-forms, along $S$. 
Then, there is a component of \eqref{eqn:eomgamma} which can be written as 
\beqn\label{eqn:eomgamma1}
\dbar \mu^{1;0} + \frac12 [\mu^{1;0},\mu^{1;0}] + \left(\frac12 \del \gamma^{1;0} \wedge \del \gamma^{1;0} + \del \gamma^{2;0} \wedge \del \gamma^{0;0}\right) \vee (g \Omega^{-1}) = 0 
\eeqn
where now $[-,-]$ stands for the Schouten bracket.

To further simplify \eqref{eqn:eomgamma1}, we can look for a solutions where  $\gamma^{1;0}$, the $(0,1)$ Dolbeault part of $\gamma$, is zero. 
Then, up to the term involving 
\[
\alpha \define \del \gamma^{0;0},
\]
we find precisely the integrability equation for the complex structure determined by Beltrami differential $\mu^{1;0} \in \PV^{1,1} \otimes \Omega^0$. 
If $\dbar \alpha = 0$, the holomorphic two-form $\alpha \in \Omega^{2,hol}(X)$ defines a map of sheaves
\[
\Omega^{2,hol}_X \xto{\wedge \alpha} \Omega^{4,hol}_X \cong_\Omega \cT^{hol}_X
\]
where $\cT^{hol}_X$ denotes the sheaf of holomorphic vector fields and the last isomorphism uses the Calabi--Yau form $\Omega$ on $X$.  
The image of $\Omega^{2,hol}_X$ defines a subsheaf $\cF_{\alpha} \subset \cT^{hol}_X$. 
Since $\del \alpha = 0$, this subsheaf is automatically integrable and hence determines a foliation. 

Summarizing, see that there is a field configuration where the Beltrami diffrential $\xi = \mu^{1;0} \in \Omega^{0,1}(X, \T_X)$ satisfies the modified integrability condition
\[
\dbar \xi + \frac12  [\xi , \xi] = \alpha \vee \rho
\]
for some $\rho \in \PV^{2,2} (X)$. 
In other words, $\xi$ defines an integrable complex structure deformation along the leaf space associated to the foliation $\cF_\alpha$. 
We leave a more complete exploration of the moduli space of solutions of the equations of motion for future work. 

In~\cite{SWspinor}, the second two authors showed that the free limit of the minimal twist of eleven-dimensional supergravity agrees with the free limit of the eleven-dimensional theory that we have introduced here. 
Given this result, we can recognize many fields in the twisted theory as components of the physical fields of supergravity which remain after we twist. 

\begin{itemize}
\item 
The components 
\begin{align*}
\mu^{1;0} & = \mu^j_i(z,\zbar,t) \d \zbar_j \partial_{z_i} \\
\mu^{0;1} & = \mu^t_i (z,\zbar,t) \d t \partial_{z_i}
\end{align*}
of $\mu$ comprise components of the metric which remain after the twist. 
The components 
\[
\mu^{0;0} = \mu_i (z,\zbar,t) \partial_{z_i} 
\]
comprise the ghosts for infinitesimal (holomorphic) diffeomorphisms. 
\item 
The three-form fields
\begin{multline}
\beta^{3;0} = \beta^{ijk} (z,\zbar,t) \d \zbar_i \d \zbar_j \d \zbar_k , \quad \beta^{2;1} = \beta^{ij}_t (z,\zbar,t) \d \zbar_i \d \zbar_j \d t \\
\gamma^{2;0} = \gamma^{ijk} (z,\zbar,t) \d z_i \d \zbar_j \d \zbar_k , \quad \gamma^{1;1} = \gamma^{ij}_t (z,\zbar,t) \d z_i \d \zbar_j \d t .
\end{multline} 
comprise components of the supergravity $C$-field which remain after the twist. 
The two-form fields $\beta^{2;0}, \beta^{1;1}, \gamma^{1;0}, \gamma^{0;1}$, the one-form fields $\beta^{1;0}, \beta^{0;1}$, and the zero-form field $\beta^{0;0}$ is what remains of the ghost system (ghosts, ghosts for ghosts, etc.) for the supergravity $C$-field. 
\end{itemize}

\subsection{Local character}\label{sec:locchar}

We consider the eleven-dimensional theory on the manifold $\CC^5 \times \RR$, where $\CC^5$ is equipped with its standard Calabi--Yau structure. 
On this background, the theory is manifestly $SU(5)$ invariant. 
In this section, we compute the corresponding character of the local operators at the origin. 

The local character is only sensitive to the free limit of the theory.
Furthermore, the linear BRST operator is an $SU(5)$-invariant deformation of the $(\dbar + \d_{\RR})$ operator. 
Therefore, to compute the character it suffices to compute the $SU(5)$-equivariant character of the $\dbar$ cohomology. 

The solutions to the $(\dbar + \d_{\RR})$-equations of motion simply say that all fields are holomorphic along $\CC^5$ and constant along $\RR$. 
Thus, the solutions can be identified with 
\begin{align*}
\mu^{i}\partial_{z_i} & \in \Vect(\CC^5) \cong \cO(\C^5)\partial_{z_i},\quad 
\nu \in \cO (\C^5) \\
\beta & \in \cO (\C^5), \quad \gamma^{i} \d z_i \in \Omega^{1}(\CC^5) \cong \cO (\C^5)\d z_i 
\end{align*}
where $z_i$ is a holomorphic coordinate on $\CC^5$. 

Corresponding to each of the above, we have a tower of linear local operators labeled by $(m_j) = (m_1, m_2, m_3, m_4, m_5)\in \Z^5_{\geq 0}$; these are given by
\begin{align*}
 \boldsymbol{\mu}^{i}_{(m_j)} &: \mu^{i}\mapsto \partial_{z_1}^{m_1}\partial_{z_2}^{m_2}\partial_{z_3}^{m_3}\partial_{z_4}^{m_4}\partial_{z_5}^{m_5}\mu^{i} (0) \\
\boldsymbol{\nu}_{(m_j)} &: \nu\mapsto \partial_{z_1}^{m_1}\partial_{z_2}^{m_2}\partial_{z_3}^{m_3}\partial_{z_4}^{m_4}\partial_{z_5}^{m_5}\nu (0) \\
\boldsymbol{\gamma}^{i}_{(m_j)} &: \gamma^{i}\mapsto \partial_{z_1}^{m_1}\partial_{z_2}^{m_2}\partial_{z_3}^{m_3}\partial_{z_4}^{m_4}\partial_{z_5}^{m_5}\gamma^{i} (0) \\
 \boldsymbol{\beta}_{(m_j)} &: \beta\mapsto \partial_{z_1}^{m_1}\partial_{z_2}^{m_2}\partial_{z_3}^{m_3}\partial_{z_4}^{m_4}\partial_{z_5}^{m_5}\beta (0)
\end{align*}

It is easiest to label the Cartan subgroup of $SU(5)$ by fugacities $q_1,\ldots, q_5$ subject to the constraint that $\prod_{i=1}^5 q_i = 1$. 
We first compute the single particle index.
This is the $SU(5)$ character of the space of linear local operators.

\begin{lem}
The single particle index is 
\[
i(q_1,\ldots,q_5) = \frac{\sum_{i=1}^5 q_i}{\prod_{i=1}^5 (1-q_i)} + \frac{\sum_{i=1}^5 q_i^{-1}}{\prod_{i=1}^5 (1-q_i^{-1})}
\]
where the fugacities satisfy the constraint $\prod_{i=1} q_i = 1$. 
\end{lem}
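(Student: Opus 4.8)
The plan is to compute the $SU(5)$-equivariant character of the space of linear local operators directly from the field content listed in \eqref{eq:sympfields}, using the fact that each field contributes a tower of operators indexed by $(m_j)\in\Z^5_{\geq 0}$ obtained by taking holomorphic derivatives at the origin. The key observation, already flagged in the text, is that the pair $(\nu,\beta)$ consists of operators of equal $SU(5)$ weight and opposite parity, so their contributions cancel identically and only the $\mu$ and $\gamma$ towers survive. Thus I only need to identify the $SU(5)$ representations carried by $\mu$ and $\gamma$.

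First I would fix notation: let the holomorphic coordinates $z_1,\dots,z_5$ transform in the fundamental $V=\CC^5$ of $SU(5)$, so that $\partial_{z_i}$ span the dual $V^\vee$ and $\d z_i$ span $V^\vee$ as well (with $V\cong\wedge^4 V^\vee$ via the Calabi--Yau form, but for the character it is cleanest to just track $V$ and $V^\vee$). The space of holomorphic functions on $\CC^5$ is $\Sym(V^\vee)$, whose $SU(5)$ character is $\prod_{i=1}^5 (1-q_i)^{-1}$ in the fugacities $q_i$ dual to the $z_i$, with $\prod q_i=1$. The field $\mu = \mu^i\partial_{z_i}$ then contributes $\Sym(V^\vee)\otimes V$, with character $\big(\sum_i q_i^{-1}\big)/\prod_i(1-q_i)$ — wait, I must be careful with which variable is "$q_i$": choosing $q_i$ to be the weight of $z_i$, a derivative $\partial_{z_i}^{m_i}$ at the origin picks out the degree-$m_i$ part, contributing $q_i^{m_i}$; so $\Sym(V^\vee)$ has character $\prod(1-q_i)^{-1}$, $\mu$ transforms with an extra $\partial_{z_i}\sim q_i^{-1}$ giving numerator $\sum_i q_i^{-1}$... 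I would double-check the duality so that the two summands in the stated formula come out with $\sum q_i$ over $\prod(1-q_i)$ and $\sum q_i^{-1}$ over $\prod(1-q_i^{-1})$ respectively; the symmetric appearance of the two terms is forced by the fact that $\gamma\sim\d z_i$ sits in the representation dual to $\mu\sim\partial_{z_i}$, up to the parity sign, and after using $\prod q_i = 1$ one of $V,V^\vee$ can be rewritten to flip all the $q_i\mapsto q_i^{-1}$.

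Concretely the steps are: (i) record that $\nu$ and $\beta$ cancel; (ii) compute the $\mu$-tower character as $\operatorname{ch}\big(\Sym(V^\vee)\big)\cdot\operatorname{ch}(W_\mu)$ where $W_\mu$ is the $SU(5)$ rep of the zero-derivative operators built from $\mu$ (a single copy of the appropriate fundamental-type representation, contributing $\sum_i q_i$ or $\sum_i q_i^{-1}$); (iii) compute the $\gamma$-tower character the same way with the dual representation and opposite overall behavior; (iv) add, and observe the result is exactly $i(q_1,\dots,q_5)$ as stated. The parity bookkeeping from \S\ref{s:components} (the component $\mu^{i;j}$ has parity $i+j+1$, etc.) only enters to confirm the relative signs: the surviving $\mu$ and $\gamma$ contributions come in with the same overall sign, so they add rather than cancel, while $\nu,\beta$ come in with opposite signs.

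The main obstacle — really the only subtle point — is getting the identification of $SU(5)$ representations and the direction of the duality exactly right, so that the Calabi--Yau isomorphisms $\Omega^{1,hol}\cong\Omega^{4,hol}$ and $\wedge^5 V^\vee\cong\CC$ are used consistently and the symmetric pair of terms $\sum q_i/\prod(1-q_i)$ and $\sum q_i^{-1}/\prod(1-q_i^{-1})$ emerges rather than, say, two copies of the same term. Everything else is a routine plethystic/geometric-series manipulation: the single-particle index $i$ is literally the character of the graded vector space of linear local operators, and the computation reduces to multiplying out $\operatorname{ch}(\Sym V^\vee)$ against the small finite-dimensional factors and summing the two surviving towers.
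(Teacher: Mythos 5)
Your overall route is the same as the paper's: list the towers of holomorphic-derivative operators for each field, cancel $\boldsymbol{\nu}$ against $\boldsymbol{\beta}$, and sum geometric series weighted by the $SU(5)$ fugacities. The genuine gap is in your step (iv), where you assert that the parity bookkeeping makes the $\mu$- and $\gamma$-towers ``come in with the same overall sign, so they add.'' They do not: by the parity assignments in \S\ref{s:components}, the lowest component of $\mu$ is odd while that of $\gamma$ is even, so in the supertrace the two towers enter with \emph{opposite} signs, exactly as $\nu$ and $\beta$ do; they fail to cancel only because they carry different $SU(5)$ weights (mutually dual fundamental-type representations), not because their signs agree. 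If you literally add the two tower characters with the same sign you obtain $\bigl(\sum_i q_i + \sum_i q_i^{-1}\bigr)/\prod_i(1-q_i)$, which is \emph{anti}-invariant under $q_i\mapsto q_i^{-1}$ and is not the stated index.

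The plus sign between the two displayed terms of the lemma is an artifact of a rewriting your plan never performs: with five fugacities obeying $\prod_i q_i=1$ one has $\prod_{i=1}^5(1-q_i^{-1}) = -\prod_{i=1}^5(1-q_i)$, so the correctly signed difference $\frac{\sum_i q_i}{\prod_i(1-q_i)} - \frac{\sum_i q_i^{-1}}{\prod_i(1-q_i)}$ equals $\frac{\sum_i q_i}{\prod_i(1-q_i)} + \frac{\sum_i q_i^{-1}}{\prod_i(1-q_i^{-1})}$, which is how the paper passes from \eqref{singleparticleindex} to the lemma. By contrast, your worry about which of $V$, $V^\vee$ the towers transform in is harmless: the final answer is invariant under $q_i\mapsto q_i^{-1}$, so the overall dualization convention drops out; what matters is that $\mu\sim\partial_{z_i}$ and $\gamma\sim\d z_i$ sit in mutually dual representations (which you do say) \emph{and} that they contribute with opposite signs, which is the piece to fix.
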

\begin{proof}
The linear local operators $ \boldsymbol{\nu}_{(m_j)}$ and $\boldsymbol{\beta}_{(m_j)}$ are of the same $q$-weight but opposite parity.
Thus, they do not contribute to the single particle index.

The $q$-weight of the odd local operator $\boldsymbol{\mu}_{(m_j)}^i$ is 
\[
q_1^{m_1+1} \cdots q_i^{m_i} \cdots q_5^{m_5+1} .
\]
The $q$-weight of the even local operator $\boldsymbol{\gamma}_{(m_j)}^i$ is 
\[
q_1^{m_1} \cdots q_i^{m_i + 1} \cdots q_5^{m_5} .
\]

Thus we find that the single particle index is given by the infinite series
\beqn\label{infseriesindex}
\sum_{i=1}^5\left ( \sum_{(m_i)\in \Z^5_{\geq 0}} q_1^{m_1} \cdots q_i^{m_i + 1} \cdots q_5^{m_5} - \sum_{(m_i)\in \Z^5_{\geq 0}} q_1^{m_1+1} \cdots q_i^{m_i} \cdots q_5^{m_5+1} \right)
\eeqn

which sums to the expression
\beqn\label{singleparticleindex}
- \frac{\sum_{i=1}^5 q_1 \cdots \Hat{q_i} \cdots q_5}{\prod_{i=1}^5 (1-q_i)} + \frac{\sum_{i=1}^5 q_i}{\prod_{i=1}^5 (1-q_i)} .
\eeqn

This simplifies to the stated expression.
\end{proof}

This single particle index for our space of local operators agrees with the one computed in \cite{NekrasovInstanton}. 
To obtain the full index of local operators we apply the plethystic exponential ${\rm PE}[f(x)] = \exp\left(\sum_n \frac1n f(x^n)\right)$. 

\begin{prop}\label{prop:locchar}
The character of local operators of the eleven-dimensional theory on $\CC^5 \times \RR$ is 
\[
\prod_{i=1}^{5} \prod_{(m_i)\in \Z^5_{\geq 0}} \frac{1-q_1^{m_1+1}\cdots q_i^{m_i}\cdots q_5^{m_5+1}}{1-q_1^{m_1}\cdots q_i^{m_i+1}\cdots q_5^{m_5}}
\]
\end{prop}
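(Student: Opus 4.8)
The plan is to obtain the full character of local operators as the plethystic exponential of the single-particle index $i(q_1,\ldots,q_5)$ computed in the previous lemma. Recall that for a free (here, free-limit) BV theory, the space of local operators at a point is the (graded-symmetric) Fock space built on the space of linear local operators, so its graded character is ${\rm PE}$ applied to the single-particle index; this is the standard relation between the "single-particle" and "multi-particle" indices. Since the local character only sees the free limit, and the linear BRST operator is an $SU(5)$-invariant deformation of $\dbar + \d_{\RR}$, it suffices to take $\dbar$-cohomology, which is exactly what produced the towers $\boldsymbol{\mu}^i_{(m_j)}$, $\boldsymbol{\nu}_{(m_j)}$, $\boldsymbol{\gamma}^i_{(m_j)}$, $\boldsymbol{\beta}_{(m_j)}$ enumerated above.

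First I would write the single-particle index in the factored form coming directly from the infinite series \eqref{infseriesindex}, before collapsing it to \eqref{singleparticleindex}: namely
\deq{
  i(q_1,\ldots,q_5) = \sum_{i=1}^5 \left( \frac{q_i}{\prod_{k=1}^5(1-q_k)} - \frac{\prod_{k \ne i} q_k}{\prod_{k=1}^5(1-q_k)} \right)
  = \sum_{i=1}^5 \sum_{(m_j) \in \Z^5_{\geq 0}} \left( q^{(m_j) + e_i} - q^{(m_j) + \sum_{k \ne i} e_k} \right),
}
where $q^{(m_j)} = \prod_k q_k^{m_k}$ and $e_i$ is the $i$-th standard basis vector. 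The positive terms (from the $\boldsymbol{\gamma}$ towers, bosonic) and negative terms (from the $\boldsymbol{\mu}$ towers, fermionic) are each already in the form of a sum of distinct monomials with coefficient $\pm 1$; note the $\boldsymbol{\nu}/\boldsymbol{\beta}$ cancellation has already removed one boson/fermion pair per multi-index.

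Next I would apply ${\rm PE}[f] = \exp\left(\sum_{n\geq 1} \frac1n f(q_1^n,\ldots,q_5^n)\right)$ termwise. For a single bosonic monomial $m$ one has ${\rm PE}[m] = (1-m)^{-1}$, and for a single fermionic monomial $m$ (appearing with a minus sign in the index) one has ${\rm PE}[-m] = (1-m)$; since ${\rm PE}$ of a sum is the product of the ${\rm PE}$'s, the full character becomes
\deq{
  \prod_{i=1}^5 \prod_{(m_j)\in\Z^5_{\geq 0}} \frac{1 - q_1^{m_1+1}\cdots q_i^{m_i}\cdots q_5^{m_5+1}}{1 - q_1^{m_1}\cdots q_i^{m_i+1}\cdots q_5^{m_5}},
}
which is the stated expression. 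The denominator monomials are the $\boldsymbol{\gamma}^i$-weights $q^{(m_j)+e_i}$ and the numerator monomials are the $\boldsymbol{\mu}^i$-weights $q^{(m_j)+\sum_{k\ne i}e_k}$, matching the $q$-weights recorded in the lemma.

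The only real subtlety — and the step I would be most careful about — is the interchange of the (infinite) product over multi-indices with the plethystic exponential, i.e.\ justifying that ${\rm PE}$ of the infinite sum equals the infinite product of the individual ${\rm PE}$'s. This is legitimate as an identity of formal power series in a suitable completion (e.g.\ grading by total degree $\sum m_k$ together with an auxiliary grading that makes each graded piece finite-dimensional, using the constraint $\prod q_i = 1$ to see that only finitely many monomials contribute in each degree), so the manipulation is formal rather than analytic; I would state this completion explicitly and note that each graded component of the Fock space is finite-dimensional, so the character is well-defined and the rearrangement is valid. No cancellation between numerator and denominator factors occurs because the $\boldsymbol{\mu}$ and $\boldsymbol{\gamma}$ weight sets are disjoint (the exponent vectors have different residues mod the all-ones vector), so the product is already in lowest terms.
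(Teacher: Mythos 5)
Your argument is correct and is essentially the paper's own proof: the paper likewise obtains the character by applying the plethystic exponential termwise to the infinite series form \eqref{infseriesindex} of the single-particle index, with the bosonic $\boldsymbol{\gamma}$-monomials producing the denominator factors and the fermionic $\boldsymbol{\mu}$-monomials the numerator factors (your added care about the formal completion justifying the interchange of ${\rm PE}$ with the infinite sum is a reasonable supplement). One parenthetical aside is wrong, though inessential: under the constraint $\prod_{i=1}^5 q_i = 1$ the $\boldsymbol{\mu}$- and $\boldsymbol{\gamma}$-weight sets are \emph{not} disjoint --- for instance $\boldsymbol{\mu}^1_{(0,0,0,0,0)}$ and $\boldsymbol{\gamma}^2_{(0,0,1,1,1)}$ both have weight $q_2q_3q_4q_5$ --- so individual numerator and denominator factors can coincide, but this does not affect the validity of the product formula itself.
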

\begin{proof}
Recall that the plethystic exponential takes sums to products and monomials to geometric series. Apply this to the infinite series \eqref{infseriesindex}.
\end{proof}


\subsection{One-loop quantization}

In \cite{GRWthf} an existence result for one-loop quantizations of mixed topological-holomorphic theories was established. 
We apply this to the eleven-dimensional model at hand. 

The eleven-dimensional theory is a mixed topological-holomorphic theory.
On flat space $\CC^5_z \times \RR_t$, this means that the theory is translation invariant and that the following act homotopically trivially:
\begin{itemize}
\item the vector fields $\del_{\zbar_1}, \ldots, \del_{\zbar_{5}}$ corresponding to infinitesimal anti-holomorphic translations,
\item the vector field $\partial_t$ corresponding to infinitesimal translations in the $\RR_t$ direction. 
\end{itemize}

Recall that the action functional of the eleven-dimensional theory is $S_{BF, \infty} + c J$. 
Since the cubic and higher interactions only involve holomorphic derivatives, we obtain the following directly from the main result of \cite{GRWthf}. 

\begin{thm}
There exists a gauge fixing condition for the eleven-dimensional theory on $\CC^5 \times \RR$ which renders its one-loop quantization finite and anomaly-free. 
\end{thm}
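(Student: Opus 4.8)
The plan is to invoke the machinery of \cite{GRWthf} essentially verbatim, after verifying that the eleven-dimensional theory $(\cE, S_{BF,\infty} + c J)$ falls into the class of mixed holomorphic-topological theories to which that result applies. The first step is to identify the free BV theory underlying our model as a holomorphic-topological theory on $\CC^5 \times \RR$ in the sense of \emph{loc.\ cit.}: the linear BRST complex \eqref{eq:sympfields} is built from the $(\dbar + \d_S)$ operator on tensors on $\CC^5 \times \RR$, the propagator (for a suitable gauge fixing) is the product of the Dolbeault propagator on $\CC^5$ with the de Rham propagator on $\RR$, and the antibracket is the standard one. Concretely one chooses the gauge-fixing operator to be $(\dbar^* + \d_S^*)$ with respect to the flat Hermitian metric on $\CC^5$ and the flat metric on $\RR$; this is the choice for which the analytic input of \cite{GRWthf} is formulated.

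Next I would check the hypothesis on the interaction vertices. The main theorem of \cite{GRWthf} requires that all interactions be \emph{holomorphic}, i.e.\ that they involve only holomorphic derivatives $\del_{z_i}$ (and no antiholomorphic derivatives $\del_{\zbar_i}$ and no $\del_t$ beyond what is packaged into the de Rham differential). Inspecting $S_{BF,\infty} + c J$: the cubic-and-higher part of $S_{BF,\infty}$ is $\tfrac12 \tfrac{1}{1-\nu}\,\mu^2 \vee \del\gamma$, which manifestly involves only the holomorphic de Rham operator $\del$ on $X$ and the contraction $\vee$ (an algebraic, zeroth-order operation), while $J = \tfrac16 \gamma \wedge \del\gamma \wedge \del\gamma$ likewise involves only $\del$. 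One should also note that the field content is finite-rank as a bundle on $\CC^5 \times \RR$ (it is $\PV^{0,\bu} \oplus \PV^{1,\bu} \oplus \Omega^{1,\bu}\oplus \Omega^{0,\bu}$ tensored with $\Omega^\bu(\RR)$, a bundle of finite total rank once we remember $\dim X = 5$), so that the relevant Feynman diagrams at one loop have finitely many internal lines of each type and the power-counting lemmas of \cite{GRWthf} apply. Granting these, the cited theorem produces a one-loop effective action satisfying the quantum master equation modulo a local anomaly, and asserts the anomaly vanishes for purely holomorphic theories of this type because the relevant anomaly cocycle must be a local functional of the prescribed weight under the holomorphic rescaling symmetry and no such functional exists.

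The step I expect to be the genuine obstacle — or at least the one requiring care rather than citation — is the interplay between the $\ZZ/2$-grading and the hypotheses of \cite{GRWthf}, which are presumably stated for $\ZZ$-graded BV theories. The deformation $J$ has $\ZZ$-degree $6$ (as remarked after Theorem \ref{thm:dfn}), so $S_{BF,\infty} + cJ$ is only $\ZZ/2$-graded, and one must check that the existence-and-finiteness argument of \cite{GRWthf} is insensitive to this: the analytic estimates controlling the Feynman integrals depend only on the differential operators appearing (the holomorphic derivatives and the contraction maps), not on the cohomological degree assignments, and the anomaly-vanishing argument can be run for the $\ZZ$-graded part $S_{BF,\infty}$ alone and then extended, since adding the holomorphic term $cJ$ cannot introduce new antiholomorphic or $t$-dependence into any one-loop diagram. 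I would therefore organize the proof as: (i) recall the hypotheses of \cite{GRWthf}; (ii) verify the gauge-fixing and the holomorphy of all vertices; (iii) remark that the $\ZZ/2$-grading does not affect the relevant estimates; (iv) conclude by direct appeal to the cited theorem that a finite, anomaly-free one-loop quantization exists.
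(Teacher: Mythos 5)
Your proposal takes essentially the same route as the paper: the paper's proof simply notes that the theory is a mixed holomorphic-topological theory on $\CC^5 \times \RR$ (translation invariant, with the anti-holomorphic translations $\del_{\zbar_i}$ and $\partial_t$ acting homotopically trivially) and that all cubic and higher vertices of $S_{BF,\infty} + c J$ involve only holomorphic derivatives, then concludes directly from the main result of \cite{GRWthf}. The additional care you give to the choice of gauge-fixing operator and to the $\ZZ/2$-grading issue goes beyond what the paper spells out, but the argument is the same.
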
 

When $g=0$, this result is actually exact,
since there are no Feynman diagrams present past one-loop order in this case. 
When $g \ne 0$, on the other hand, this result does not immediately imply the existence of a gauge-invariant perturbative quantization to higher orders in $\hbar$. 
The presence of the functional $J = \frac16 \int \gamma \del \gamma \del \gamma$ allows one to construct Feynman graphs at arbitrary loop order.

In \cite{CostelloM5}, Costello argues that, upon performing the $\Omega$-background, the theory localizes to a five-dimensional theory on $\CC^2 \times \RR$. 
Via a cohomological argument, it is shown that this effective five-dimensional theory exhibits an essentially unique quantization in perturbation theory. 
We will return to the existence and uniqueness of a higher order quantization of the eleven-dimensional theory in future work.

\section{Infinite-dimensional symmetry in flat backgrounds}
\label{s:E(5,10)}

\subsection{Global symmetry algebra}
\label{sec:global}

In any field theory, the cohomology classes of states of odd ghost number have the structure of a Lie algebra. 
More generally, after shifting the cohomological degree by one, the full cohomology of states with respect to the linear BRST operator is naturally a graded Lie algebra. 
If we forget the grading to a $\ZZ/2$ grading, then this global symmetry algebra has the structure of a super Lie algebra. 

In general, taking cohomology loses information. 
If the dg Lie (or $L_\infty$) algebra we start with is not formal, then there exist higher-order operations on the linearized BRST cohomology. 
We will refer to this $L_\infty$ algebra as the global symmetry algebra of the theory.

Before taking cohomology with respect to the linear BRST operator, we described the super $L_\infty$ structure on the parity shift of the eleven-dimensional fields in the previous section. 
This is encoded by the full BV action of the eleven-dimensional theory.
The cubic component of the full BV action induces the super Lie algebra structure present in the linearized BRST cohomology. 

Our main result is to relate the global symmetry algebra of the minimal twist of eleven-dimensional supergravity on $\CC^5 \times \RR$ to a certain infinite-dimensional exceptional super Lie algebra studied by Kac \cite{KacBible,KacE510} called $E(5,10)$.
We recall the definition below. 

\begin{thm}\label{thm:global}
Let $\Pi\cE(\CC^5 \times \RR)$ be the parity shift of the fields of eleven-dimensional supergravity on $\CC^5 \times \RR$ and denote by $\delta^{(1)}$ the linearized BRST operator. 
\begin{enumerate}
\item 
As a super Lie algebra, the $\delta^{(1)}$-cohomology of $\Pi\cE(\CC^5 \times \RR)$ is isomorphic to the trivial one-dimensional central extension of the super Lie algebra $E(5,10)$.
\item 
The global symmetry algebra is equivalent, as a super $L_\infty$ algebra, to the non-trivial central extension of $E(5,10)$ determined by the even cocycle defined in~\eqref{eqn:cocycle}. 
\end{enumerate}
\end{thm}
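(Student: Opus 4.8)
The plan is to establish the two claims in sequence, beginning with the identification of the underlying graded vector space and Lie bracket, then upgrading to the $L_\infty$ refinement.

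\emph{Step 1: Compute the $\delta^{(1)}$-cohomology.} First I would take the linear BRST complex \eqref{eq:sympfields} on $\CC^5 \times \RR$ and compute its cohomology with respect to $\dbar + \d_S + \del + \div$ (the full linearized differential, including the $\del$ term coming from the $\beta \del \mu$ vertex and the deformation $J$ — though note that $J$ contributes nothing linear, so the relevant differential is that of $S_{BF,\infty}$). Since we are on flat space, the $\dbar$-Dolbeault and $\d_S$-de Rham cohomologies collapse onto holomorphic functions constant along $\RR$, exactly as in the single-particle index computation in \S\ref{sec:locchar}. So the cohomology is supported in Dolbeault/de Rham degree zero, and I would identify it with the complex $\cO(\CC^5) \to \Omega^1_{\mathrm{hol}}(\CC^5)$ paired with $\Vect(\CC^5) \xrightarrow{\div} \cO(\CC^5)$, i.e.\ the polynomial (or power-series, depending on the class of local operators being used) sections. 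The key move is to recognize this four-term complex, after taking cohomology of $\div$ and of $\del$, as precisely the $\ZZ$-graded pieces appearing in the consistent grading of $E(5,10)$: degree $-1$ is $\Omega^{2,\mathrm{cl}}$, degree $0$ is divergence-free vector fields, and so on. I would recall Kac's model of $E(5,10)$ as $\Vect_0(\CC^5) \oplus \Omega^{2,\mathrm{cl}}(\CC^5)$ with bracket combining the Lie derivative action and the wedge-then-contract-with-volume-form map, and check that the cubic part of $S_{BF,\infty} + gJ$ induces exactly this bracket on cohomology — the $\tfrac12[\mu,\mu]\vee\gamma$ term gives the Lie-derivative action on forms, the $[\mu,\nu]\beta$ term the action on functions, and the deformation $J = \tfrac16 \gamma\del\gamma\del\gamma$ supplies the $\Omega^{-1}\vee(\del\gamma\wedge\del\gamma)$ piece that closes $\Omega^{2,\mathrm{cl}}$ back into vector fields. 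The one-dimensional central extension is trivial here because the obvious central element (the constant function in the $\beta$/$\nu$ sector) is a $\delta^{(1)}$-cocycle that is not a coboundary but is central for the induced bracket.

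\emph{Step 2: Upgrade to $L_\infty$.} For part (2) I would argue that the BV complex is \emph{not} formal: there is a nontrivial $[-]_3$ (and possibly higher) surviving on cohomology. The cleanest route is a homotopy transfer argument — choose a contraction of \eqref{eq:sympfields} onto its cohomology (using the standard $\dbar^*$, $\d_S^*$-type homotopies on flat space, which exist because we are working with polynomial/formal power series coefficients) and transfer the $L_\infty$ structure coming from $S_{BF,\infty}+gJ$ through it. The transferred $[-]_3$ is computed by the usual sum over trivalent trees with one internal edge carrying the homotopy; I would isolate the piece that cannot be gauged away and show it is the even cocycle displayed in~\eqref{eqn:cocycle}, following Cederwall's suggestion in~\cite{martinSL5}. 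I expect this cocycle to pair two closed two-forms and a vector field (or two vector fields and a form), built from the propagator contracting an intermediate $\beta$ or $\nu$; the nontriviality — i.e.\ that it is a genuine $L_\infty$ central extension and not equivalent to the trivial one from part (1) — would follow from a representation-theoretic obstruction argument in the spirit of the $\lie{sl}(5)$ plethysm computation in the proof of Theorem~\ref{thm:dfn}: the relevant $\Sym$ or $\wedge$ decomposition either does or does not contain the trivial/adjoint summand needed for a coboundary.

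\emph{Main obstacle.} The hard part will be Step 2: pinning down \emph{exactly} which higher brackets survive on cohomology and proving the resulting $L_\infty$ central extension is the specific one in~\eqref{eqn:cocycle} rather than a gauge-equivalent or vanishing one. In principle the homotopy transfer generates brackets of all orders (the $S_{BF,\infty}$ action is already non-polynomial in $\nu$), so I would need either a degree/weight argument bounding which ones can be nonzero on the small complex, or an explicit identification of the minimal model with Kac's presentation of $E(5,10)$ together with Cederwall's extension — and then a direct comparison of structure constants. A secondary subtlety is bookkeeping the $\ZZ/2$ versus residual $\ZZ$ grading (the action is not homogeneous, as noted after Theorem~\ref{thm:dfn}), so the claim that the transferred structure is a \emph{super} $L_\infty$ algebra needs the parity assignments from \S\ref{s:components} to be tracked carefully through the transfer.
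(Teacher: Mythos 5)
Your overall strategy coincides with the paper's proof: part (1) is done exactly as you describe (the Dolbeault/de Rham Poincar\'e lemmas reduce the fields to the two-step complexes $\div\colon\Vect(\CC^5)\to\cO(\CC^5)$ and $\del\colon\cO(\CC^5)\to\Omega^{1}(\CC^5)$, whose cohomology is $\Vect_0(\CC^5)\oplus\Pi\bigl(\Omega^{1}/\del\cO\bigr)\oplus\CC$, with $[\gamma]\mapsto\del\gamma$ identifying the odd part with Kac's closed two-forms, the cubic term of $S_{BF,\infty}$ giving the Lie-derivative action, $J$ giving the odd--odd bracket, and the constant $\beta$ providing the central $b$), and part (2) is likewise established by homotopy transfer.

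The one step you leave open --- your ``main obstacle'' of controlling which transferred brackets survive --- is closed in the paper not by a weight bound or a comparison of structure constants, but by a short structural observation wired into the choice of contraction: the homotopy $K$ is nonzero only on $\nu$- and $\gamma$-type fields (with $\div K\nu=\nu$ and $\del K\gamma=\gamma$, so it outputs only $\mu$- and $\beta$-type fields), the cohomology representatives fed into the trees contain no $\nu$'s, and no interaction vertex of $S_{BF,\infty}+J$ involves $\beta$ at all. Hence the sum over transfer trees collapses to a single diagram (inputs $i(\mu)$ and $i([\gamma])$ at one vertex, an internal $K$-edge, then $i(\mu')$ and the projection $q$, plus the $\mu\leftrightarrow\mu'$ flip), producing exactly one new operation: the $3$-ary bracket $[\mu,\mu',[\gamma]]_3=\langle\mu\wedge\mu',\del\gamma\rangle|_{z=0}$ valued in the central $\CC_b$. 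This also resolves your hedge about the shape of the cocycle: it takes two divergence-free vector fields and one closed two-form, not two forms and a vector field. Finally, the paper does not need your proposed representation-theoretic nontriviality argument at this stage --- it simply exhibits the transferred structure as the extension by the cocycle \eqref{eqn:cocycle} --- and the parity bookkeeping you worry about is handled by totalizing the $\ZZ\times\ZZ/2$ bigrading on Lie algebra cochains, as in the remark immediately following the paper's proof.
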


This result implies that the action functional $S_{BF, \infty} + J$ of the eleven-dimensional theory is invariant for the infinite-dimensional Lie algebra $E(5,10)$. 




\subsection{Linearized BRST cohomology} 

We compute the linearized BRST cohomology of eleven-dimensional supergravity.
Then we will describe the induced structure of a super Lie algebra present in the parity shift of the cohomology, proving part (1) of Theorem~\ref{thm:global}.

\parsec[]

First we recall the definition of the exceptional simple super Lie algebra $E(5,10)$. 
Recall that $\Vect_0 (\CC^5)$ is the Lie algebra of divergence-free holomorphic vector fields on $\CC^5$.
Let $\Omega^{2}_{cl} (\CC^5)$ be the module of holomorphic $2$-forms that are closed for the holomorphic de Rham operator $\del$.

The even part of the super Lie algebra $E(5,10)$ is the Lie algebra
\[
E(5,10)_+ = \Vect_0(\CC^5)
\]
of divergence-free vector fields on $\CC^5$,
whose elements we continue to denote by $\mu$. 
The odd piece is the module 
\[
E(5,10)_- = \Omega^{2}_{cl} (\CC^5),
\]
whose elements we denote by $\alpha$. 
Besides the natural module structure, there is odd bracket $ E(5,10)_-\otimes E(5,10)_\to E(5,10)_+$
The bracket uses the isomorphism $\Omega^{-1} \vee (-) \colon \Omega^{4} \cong \Vect (\CC^5)$ induced by the standard Calabi--Yau form $\d^5 z$, and is defined by
\beqn\label{eqn:e510}
[\alpha, \alpha'] = \Omega^{-1} \vee (\alpha \wedge \alpha') .
\eeqn
Since both $\alpha, \alpha'$ are closed two-forms,  the resulting vector field on the right hand side is divergence free. 
In coordinates, if $f^{ij} \d z_i \wedge \d z_j$ and $g^{kl} \d z_k \wedge \d z_l$ are two closed two-forms, their bracket is the vector field $\ep_{ijklm} f^{ij}g^{kl} \partial_{z_m}$. 

To be precise, Kac studied a more algebraic version of the algebra we have just introduced, where holomorphic functions are replaced by holomorphic polynomials.
As such, the simple super Lie algebra that appears in the classification in~\cite{KacBible} is a dense sub Lie algebra of what we call $E(5,10)$, consisting of those vector fields and two-forms that have polynomial coefficients.

\parsec[]

If $\cE$ is the space of fields of any theory in the BV or BRST formalism, the shift $\cL = \cE[-1]$ has the structure of a Lie, possibly $L_\infty$ algebra. 
In the $\ZZ/2$ graded world, the parity shifted object $\cL = \Pi \cE$ has the structure of a super $L_\infty$ algebra. 

In this section, we use the description of the eleven-dimensional theory as the deformation of the BF action $S_{BF,\infty}$ by the functional $J$ of Theorem \ref{thm:dfn}. 
We set the coupling $g = 1$. For any other nonzero value of $g$, we will obtain an isomorphic super $L_\infty$ algebra as explained above.
We would also obtain equivalent  results if we used the other model of the eleven-dimensional theory explained in~\S\ref{s:altdfn}. 

The full differential on the cochain complex of observables of the theory is given by the BV bracket with the BV action. 
For us, this~is 
\[
\delta = \{S_{BF,\infty} + J, -\} .
\]
The linear BRST operator (dual to the differential on the cochain complex of fields) comes only from the quadratic summands in $S_{BF,\infty}$, and is of the form
\beqn\label{eqn:linearBRST}
\delta^{(1)} = \dbar + \d_{\RR} + \div |_{\mu \to \nu} + \del |_{\beta \to \gamma} .
\eeqn

To compute the cohomology with respect to $\delta^{(1)}$ we can use a spectral sequence, first taking the cohomology with respect to $\dbar + \d_{\RR}$ and then with respect to $\div$. 
By the $\dbar$ and de Rham Poincar\'e lemmas, the cohomology of the space of fields of the eleven-dimensional theory on $\CC^5 \times \RR$ with respect $\dbar + \d_{\RR}$ results in the cochain complex
\begin{equation}
  \label{eq:lin1} 
  \begin{tikzcd}[row sep = 1 ex]
    - & + \\ \hline
    \Vect(\CC^5) \ar[r, "\div"] & \cO(\CC^5) \\ 
     \cO(\CC^5) \ar[r, "\del"] & \Omega^{1}(\CC^5).
\end{tikzcd}
\end{equation}
Recall that $\Vect(\CC^5), \cO(\CC^5)$, and $\Omega^1(\CC^5)$ denote the space of holomorphic vector fields, functions, and one-forms, respectively.

The cohomology with respect to the remaining linearized BRST operator consists of the space of triples $(\mu, [\gamma], b)$ where:
\begin{itemize}
\item $\mu$ is a divergence-free holomorphic vector field on $\CC^5$, which is constant along $\RR$
\[
\mu = \mu \otimes 1 \in \Pi \Vect_0(\CC^5) \otimes \Omega^0(\RR) .
\]
Note that $\mu$ is a ghost in the $\ZZ/2$ graded theory. 
\item $[\gamma]$ is an equivalence class of a holomorphic one-form modulo exact holomorphic one-forms along $\CC^5$, which are also constant along $\RR$
\[
[\gamma] = [\gamma] \otimes 1 \in \left(\Omega^{1}(\CC^5) / \d \cO(\CC^5) \right) \otimes \Omega^0(\RR) .
\]
\item A constant function $b \in \Pi \CC$ on $\CC^5 \times \RR$.
This is a $\beta$-type field in the eleven-dimensional theory, any constant function is closed for the de Rham differential. 
This element is also a ghost in the $\ZZ/2$-graded theory. 
\end{itemize}

\parsec[]

After parity shifting, we've identified the solutions to the linear equations of motion with triples
\[
(\mu, [\gamma], b) \in \Vect_0(\CC^5) \oplus \Pi \Omega^{1}(\CC^5) / \del \cO(\CC^5) \oplus \CC .
\]
The bracket induced by the cubic component of $S_{BF, \infty}$ in the classical BV action is the usual bracket on divergence-free vector fields together with the module structure on holomorphic one-forms by Lie derivative.
Notice that the Lie derivative commutes with the $\del$ operator, so this action descends to equivalence classes as above. 
The elements $b$ are central. 

The final term in the BV action $J = \frac16\int \gamma \wedge \del \gamma \wedge \del \gamma$ induces the following Lie bracket on the solutions to the linearized equations of motion
\beqn\label{eqn:eqb}
\big[[\gamma], [\gamma'] \big] = \Omega^{-1} \vee (\del \gamma \wedge \del \gamma') \in \Vect_0(\CC^5) .
\eeqn
where $\Omega^{-1}$ denotes the section of $\PV^{5,hol}(\CC^5)$ which is inverse to the Calabi--Yau form $\Omega$ on $\CC^5$. 
Notice that this bracket is well-defined as it does not depend on the particular equivalence classes and that the resulting vector field is automatically divergence-free.

\parsec[]

Having described the linearized BRST cohomology as a super vector space, we turn to the proof of Theorem \ref{thm:global}.

\begin{proof}[Proof of Theorem \ref{thm:global}]
For the first part, we write down an explicit map between the cohomology computed above and the algebra $E(5,10)$. 

The relationship of the $\mu$-elements in $E(5,10)$ and the eleven-dimensional theory is apparent.

Next, we need to relate the equivalence classes $[\gamma]$ with the closed two-forms $\alpha$ in $E(5,10)$. 
On flat space, any closed differential form is exact (this is a holomorphic version of the Poincar\'e lemma). 
In other words, there is an isomorphism
\[
\del \colon \Omega^1 (\CC^5) / \d \cO(\CC^5) \xto{\cong} \Omega^{2}_{cl}(\CC^5)
\]
induced by the holomorphic de Rham differential.
This gives the relationship between the equivalence class $[\gamma]$ in the eleven-dimensional theory and a closed two-form in $E(5,10)$ by $\alpha = \del \gamma$. 
It is clear from Equations \eqref{eqn:e510} and \eqref{eqn:eqb} that this assignment intertwines the Lie brackets in $E(5,10)$ and the twist of eleven-dimensional supergravity. 
This completes the proof of part (1).

For part (2), we first produce the following homotopy data:
\begin{equation}
\begin{tikzcd}
\arrow[loop left]{l}{K}(\Pi \cE , \delta^{(1)})\arrow[r, shift left, "q"] &(E(5,10) \oplus \CC_b \, , \, 0)\arrow[l, shift left, "i"] \: ,
\end{tikzcd}
\end{equation}

\begin{itemize}
\item On the $\nu$'s we take $K$ to be any operator $K \colon \cO \to \Vect$ such that $\div K \nu = \nu$. 
On the $\gamma$'s we take $K$ to be any operator $K \colon \Omega^1 \to \Omega^0$ such that $\del K(\gamma) = \gamma$. 
Also, introduce the auxiliary operator $\til{K} \colon \Omega^2_{cl} \to \Omega^1$ which satisfies the homotopy relation
\beqn\label{eqn:htpy1}
\til{K} \del \gamma + \del K \gamma = \gamma . 
\eeqn
The precise form of each of these operators will not be needed.
The existence of such operators is guaranteed by the holomorphic Poincar\'e lemma.
The operator $K$ annihilates fields $\beta$ and $\mu$. 
\item 
The map $q$ is described as follows. 
First $q(\mu) = \mu - K \div (\mu)$.
Notice that $q(\mu)$ is automatically divergence-free.
Next, $q(\gamma) = [\gamma]$, the equivalence class in $\Omega^1 / \d$. 
If $\beta$ is a holomorphic function, then $q(\beta) = \beta (z=0)$.
\item 
The map $i$ embeds $\mu$ and $b$ in the obvious way.
On the equivalence class $[\gamma] \in \Omega^1 / \d$ we define $i([\gamma]) = \gamma - \til{K} \del \gamma$. 
Notice that this is independent of the choice of representative $\gamma$. 
\end{itemize}

It is straightforward to check that this comprises well-defined homotopy data, the only nontrivial thing to check is the relation $\id - i \circ q = \delta^{(1)} K - K \delta^{(1)}$. 
Plugging in the field $\gamma$ we see that we must check that
\[
\gamma - \til{K} \del \gamma = \del K \gamma 
\]
which is precisely \eqref{eqn:htpy1}. 

Given this homotopy data, we can compute the homotopy transferred $L_\infty$ structure on the linearized BRST cohomology. 
Since $\nu$ does not survive to cohomology and the fact that there are no nontrivial Lie brackets involving the field $\beta$, this transferred structure is easy to compute. 

There is a single diagram which contributes to the transferred structure, it is given by
\begin{equation}
\begin{tikzpicture}
\begin{feynman}
\vertex(a) at (-1,1) {$i(\mu)$};
\vertex(b) at (-1,0) {$i([\gamma])$};
\vertex(c) at (-1,-1) {$i(\mu')$};
\vertex(d) at (0,0.5);
\vertex(e) at (1,0);
\vertex(f) at (2,0) {$q$};
\diagram* {(a)--(d), (b)--(d), (d)--[edge label = $K$](e), (c)--(e), (f)--(e)};
\end{feynman}
\end{tikzpicture}
\end{equation}
together with a similar diagram with the $\mu$ and $\mu'$ flipped. 

This diagram leads to a new $3$-ary bracket on $E(5,10) \oplus \CC_b$
\[
\big[\mu,\mu',[\gamma]\big]_3 = \varphi(\mu,\mu',[\gamma])
\]
where $\varphi \in \clie^\text{even} (E(5,10))$ is the even Lie algebra cocycle defined by the formula
\beqn
\begin{array}{rclr}
\varphi \colon E(5,10) \times E(5,10) \times E(5,10) & \to & \CC_b \\
\varphi(\mu,\mu',\alpha) & = & \<\mu \wedge \mu', \alpha\>|_{z=0} .
\label{eqn:cocycle}
\end{array}
\eeqn
Since $b$ is central, this cocycle defines a central extension of $E(5,10)$.
\end{proof}

\parsec[]
We briefly remark on Lie algebra cohomology for super Lie algebras.
The Lie algebra cohomology $\clie^{\bu,\bu}(\cL)$ of any super Lie algebra $\cL$ is graded by $\ZZ \times \ZZ/2$. 
The first grading is by the symmetric degree in the Chevalley--Eilenberg complex.
The second grading is the internal parity of the super Lie algebra $\cL$. 
The Chevalley--Eilenberg differential is degree $(1,+)$. 

The cocycle $\varphi$ has homogenous bigrading $(3,-)$.
In the above discussion we forgot the bigrading to a totalized $\ZZ/2$ grading where 
\begin{align*}
\clie^\text{even} (\cL) & = \clie^{2\bu , +} (\cL) \oplus \clie^{2\bu+1, -}(\cL) \\
\clie^\text{odd} (\cL) & = \clie^{2\bu , -} (\cL) \oplus \clie^{2\bu+1, +}(\cL) .
\end{align*}
With this totalization, $\varphi$ is an even cocycle and hence determines a super $L_\infty$ central extension by the one-dimensional even vector space $\CC$. 




\section{Residual supersymmetry} 
\label{sec:susy}

In this section we consider the minimal twist of eleven-dimensional supersymmetry explicitly. 
We compute the residual supersymmetry algebra given by taking the cohomology of the eleven-dimensional supersymmetry algebra with respect to the minimal twisting supercharge. 
In order for this to map to the gauge symmetries of the eleven-dimensional theory, it is necessary to consider an extension of the eleven-dimensional supersymmetry algebra corresponding to the M2 brane.
We will see how this extension is compatible, upon twisting by the minimal supercharge, with the central extension of $E(5,10)$ we found as the global symmetry algebra in the previous section.

\subsection{Supersymmetry in eleven dimensions}
\label{sec:11dsusy}

The (complexified) eleven-dimensional supertranslation algebra is a complex super Lie algebra of the form
\[
  \ft_{11d} = V \oplus \Pi S
\]
where $S$ is the (unique) spin representation and $V \cong \CC^{11}$ the complex vector representation, of~$\lie{so}(11, \CC)$. 
The bracket is the unique surjective $\lie{so}(11,\CC)$-equivariant map from the symmetric square of~$S$ to~$V$;
this decomposes into three irreducibles, 
\beqn\label{eqn:decomp}
  \Sym^2(S) \cong V \oplus \wedge^2 V \oplus \wedge^5 V.
\eeqn
Denote by $\Gamma_{\wedge^1}, \Gamma_{\wedge^2}, \Gamma_{\wedge^5}$ the projections onto each of the summands above. 
The bracket in $\ft_{11d}$ is defined using the first projection
\[
[\psi, \psi'] = \Gamma_{\wedge^1} (\psi, \psi') .
\]
The super Poincar\'{e} algebra is
\[
  \lie{siso}_{11d} = \lie{so}(11 , \CC) \ltimes \ft_{11d} .
\]
The $R$-symmetry is trivial in eleven-dimensional supersymmetry. 

\subsection{Extensions of the supersymmetry algebra} 
\label{sec:m2brane}

Extensions of the supersymmetry algebra correspond to the existence of extended objects, such as branes, in the supergravity theory.
In eleven-dimensional supersymmetry, there are two such extensions corresponding to the M2 brane and the M5 brane.
We begin by describing a less standard dg Lie algebra model for the M2 brane algebra.
In the next section we will explain the relationship to other descriptions in terms of $L_\infty$ algebras \cite{Basu_2005,Bagger_2007,FSS}. 

Our model for the M2 brane algebra is a dg Lie algebra extension of the super Poincar\'e algebra $\lie{siso}_{11d}$.
 
Introduce the cochain complex $\Omega^{\bu}(\RR^{11})$ of (complex valued) differential forms on $\RR^{11}$ equipped with the de Rham differential $\d$.
The M2 brane algebra arises as an extension of $\lie{siso}_{11d}$ by the cochain complex $\Omega^\bu(\RR^{11})[2]$ and is defined by a cocycle
\[
    c_{M2} \in \clie^{2,+} \left(\lie{siso}_{11d} \; ; \; \Omega^\bu (\RR^{11})[2]\right) .
\]
The formula is
  \[c_{M2} (\psi, \psi') = \Gamma_{\wedge^2}(\psi, \psi') \in \Omega^2(\RR^{11})\]
  where $\Gamma_{\wedge^2}$ is the projection onto $\wedge^2 V$, thought of as the space of constant coefficient two-forms, as in the decomposition \eqref{eqn:decomp}.

\begin{dfn}
The algebra $\m2$ is the $\ZZ \times \ZZ/2$-graded dg Lie algebra defined by the extension of $\lie{siso}_{11d}$ by the cocycle $c_{M2}$.  
\end{dfn}
  
Here, we are using a bigrading by $\ZZ \times \ZZ/2$. 
The super Poincar\'e algebra is concentrated in zero integer grading and carries its natural $\ZZ/2$ grading as a super Lie algebra.
The complex $\Omega^{\bu}(\RR^{11})[2]$ is concentrated in integer degrees $[-2,9]$ and has even parity.
The bracket in $\m2$ is bidegree $(0,+)$ and the differential is bidegree $(1,+)$.

\subsection{The minimal twist}
\label{sec:mintwist}

Fix a supercharge $Q \in S$ satisfying $Q^2 = 0$ that is in the lowest stratum of the nilpotence variety.
Such a supercharge has a six-dimensional image in the space of (complexified) translations on $\RR^{11}$ and defines the minimal twist of eleven-dimensional supersymmetry \cite{SWspinor}. 
We characterize the cohomology of the algebra $\m2$ with respect to this supercharge. 

$Q$ defines a maximal isotropic subspace $L \subset V$. 
In turn, we will decompose the super Poincar\'e algebra into $\lie{sl}(L) = \lie{sl}(5)$ representations.
First, the defining and spinor representations decompose as
\deq{
  V = L \oplus L^\vee \oplus \CC_t, \qquad S = \wedge^\bu L.
}
In the expression for $S$, we are omitting factors of $\det(L)^{\frac12}$ for simplicity. 
Also, $\lie{so}(11, \CC)$ decomposes as
\[
\lie{sl}(L) \oplus \wedge^2 L \oplus \wedge^2 L^\vee \oplus L \oplus L^\vee \oplus \C .
\]
Furthermore, the spin representation can be identified with
\[
S = \wedge^\bu (L) = \CC \oplus L \oplus \wedge^2 L \oplus \wedge^3 L \oplus \wedge^4 L \oplus \wedge^5 L .
\]
The element $Q$ lives in the first summand.
Let 
\[
{\rm Stab}(Q) =  \lie{sl}(L) \oplus \wedge^2 L^\vee \oplus L^\vee \subset \lie{so}(11,\CC)
\]
 be the stabilizer of $Q$. 
This is a parabolic subalgebra whose Levi factor is $\lie{sl}(5)$.

\subsection{$Q$-cohomology of $\m2$}
\label{sec:m2branetwist}

Any element $Q \in S$ satisfying $Q^2 = 0$ determines a deformation of the dg Lie algebra $\m2$.
To deform $\d$ by $Q$ we must break the $\ZZ \times \ZZ/2$ bigrading.
The supercharge $Q$ is odd and of cohomological degree zero.
Recall, the original differential on $\m2$ is the de Rham differential $\d$ which just acts on the summand $\Omega^\bu(\R^{11})[2]$ and is even of cohomological degree $+1$.
Thus, only the totalized $\ZZ/2$ grading makes the differential $\d + [Q,-]$ homogenous. 

\begin{dfn}
The $Q$-twist $\m2^Q$ of $\m2$ is the super dg Lie algebra whose differential is $\d + [Q,-]$.
The bracket is unchanged.
\end{dfn}

Let $Q$ be a minimal supercharge satisfying $Q^2 = 0$. We first determine $H^\bullet(\m2^q)$ as a super vector space. 

\begin{lem}
As a $\ZZ/2$ graded space, the cohomology of the $Q$-twist $\m2^Q$ is
\beqn\label{eqn:susycoh}
L \oplus {\rm Stab}(Q) \oplus \Pi \left(\wedge^2 L^\vee\right) \oplus \CC
\eeqn
whose elements we denote by $(v, m, \psi, c)$.
\end{lem}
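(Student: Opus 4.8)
The plan is a long-exact-sequence argument built on the extension presenting $\m2$. On $\m2^Q$ the differential is $\d + [Q,-]$, where $\d$ is the de Rham differential on the abelian ideal $\Omega^\bu(\RR^{11})[2]$ and, on a spinor $\psi \in S$, the bracket with $Q$ splits into a supertranslation-valued part $\Gamma_{\wedge^1}(Q,\psi) \in V$ and a cocycle part $c_{M2}(Q,\psi) = \Gamma_{\wedge^2}(Q,\psi)$, a constant-coefficient two-form sitting inside the ideal. Since $Q \in S$ acts trivially on the ideal (this follows from the cocycle identity defining $c_{M2}$, once one notes that translations act on constant forms by zero), the ideal $(\Omega^\bu(\RR^{11})[2],\d)$ is an honest subcomplex of $\m2^Q$, and the quotient complex is $\lie{siso}_{11d}$ with the differential $[Q,-]$ (the ideal-valued component having been projected away). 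First I would dispose of the ideal: by the Poincar\'e lemma $H^\bu(\Omega^\bu(\RR^{11}),\d)\cong\CC$, so it contributes a single even class, the constant function $c$.

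Next I would compute the cohomology of the quotient $(\lie{siso}_{11d},[Q,-])$. Writing $\lie{siso}_{11d} = \lie{so}(11,\CC)\oplus V\oplus\Pi S$, the differential organizes this into the complex $\lie{so}(11,\CC)\to\Pi S\to V$, with first map $\phi\mapsto\phi\cdot Q$ and second map $\psi\mapsto\Gamma_{\wedge^1}(Q,\psi)$; the composite vanishes precisely because $Q^2=0$. Its cohomology is
\[
\ker(\lie{so}(11,\CC)\to\Pi S)\ \oplus\ \frac{\ker(\Pi S\to V)}{\operatorname{im}(\lie{so}(11,\CC)\to\Pi S)}\ \oplus\ \frac{V}{\operatorname{im}(\Pi S\to V)} .
\]
The first summand is the stabilizer ${\rm Stab}(Q)=\lie{sl}(L)\oplus\wedge^2 L^\vee\oplus L^\vee$; the third is $V$ modulo the six-dimensional image of $Q$, which is $L^\vee\oplus\CC_t$, giving $L$. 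For the middle term I would feed in the explicit $\lie{sl}(5)$-decompositions from \S\ref{sec:mintwist}, namely $S=\wedge^\bu L$ and $\lie{so}(11,\CC)=\lie{sl}(L)\oplus\wedge^2 L\oplus\wedge^2 L^\vee\oplus L\oplus L^\vee\oplus\C$, compute the Clifford action of $Q\in\wedge^0 L$ on each graded piece, and identify $\ker(\Pi S\to V)/\operatorname{im}(\lie{so}(11,\CC)\to\Pi S)$ with $\Pi(\wedge^2 L^\vee)$. An Euler-characteristic tally, $\dim\lie{so}(11,\CC)-\dim S+\dim V=55-32+11=34=(24+10+5)+5-10$, is consistent with this answer.

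Finally I would assemble the two computations through the long exact sequence. The connecting map $H^\bu(\lie{siso}_{11d},[Q,-])\to H^{\bu+1}(\Omega^\bu(\RR^{11})[2],\d)$ is induced by the two-form-valued piece $\psi\mapsto c_{M2}(Q,\psi)=\Gamma_{\wedge^2}(Q,\psi)$ of the differential; its values are constant-coefficient two-forms, which are $\d$-exact (e.g.\ $\d z_i\wedge\d z_j=\d(z_i\,\d z_j)$), so the connecting map is zero. Hence the long exact sequence degenerates to a direct sum and
\[
H^\bu(\m2^Q)\cong L\oplus{\rm Stab}(Q)\oplus\Pi\left(\wedge^2 L^\vee\right)\oplus\CC ,
\]
as a $\ZZ/2$-graded vector space, the four summands being exactly the classes $v$, $m$, $\psi$, $c$ named in the statement.

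I expect the main obstacle to be the middle-cohomology computation in the second step: making the two $\lie{sl}(5)$-equivariant maps $\phi\mapsto\phi\cdot Q$ and $\psi\mapsto\Gamma_{\wedge^1}(Q,\psi)$ fully explicit on every summand of $\wedge^\bu L$, so as to confirm simultaneously that ${\rm Stab}(Q)$ and $L$ survive with the stated multiplicities and that no spurious classes remain in $\Pi S$ beyond $\wedge^2 L^\vee$. Everything else --- the Poincar\'e lemma, exactness of constant forms, and the bookkeeping of the long exact sequence --- is routine.
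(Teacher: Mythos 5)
Your proposal is correct, and its computational core is the same as the paper's: the paper (quoting the result of~\cite{SWspinor}) also reduces to the three-term complex $\lie{so}(11,\CC)\to\Pi S\to V$ with differentials $\phi\mapsto\phi\cdot Q$ and $\psi\mapsto\Gamma_{\wedge^1}(Q,\psi)$, and records exactly the facts you defer to the ``main obstacle'' step --- namely that $[Q,-]$ is nonzero on $S$ only on $\wedge^4L\oplus\wedge^5L$ (killing the antiholomorphic and time translations, so $V$ contributes $L$), and that the image of $\lie{so}(11,\CC)$ in $S$ is $\wedge^0L\oplus\wedge^1L\oplus\wedge^2L$ (so the kernel is ${\rm Stab}(Q)$ and the odd cohomology is $\wedge^3L\cong\wedge^2L^\vee$); so that step is routine Clifford/weight bookkeeping rather than a gap, and your Euler-characteristic tally already matches. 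Where you genuinely differ is in the handling of the central extension: the paper treats it implicitly (the $\CC$ is the constant function in $\Omega^\bu(\RR^{11})$, and the cocycle-valued part of the differential is only dealt with later, in the proof of Lemma~\ref{lem:gmodel}, via the homotopy $H$), whereas you make it explicit through the short exact sequence of complexes, the Poincar\'e lemma for the ideal, and the vanishing of the connecting map because $\Gamma_{\wedge^2}(Q,\psi)$ is a constant-coefficient, hence exact, two-form. That packaging is a small but genuine improvement in transparency: it explains in one stroke both where the $\CC$ summand comes from and why the ideal-valued component of $\d+[Q,-]$ does not disturb the additive answer. One minor correction: the triviality of the $Q$-action on the ideal is not a consequence of the cocycle identity for $c_{M2}$; it is part of the definition of the $\lie{siso}_{11d}$-module structure on $\Omega^\bu(\RR^{11})[2]$ (vector fields act by Lie derivative, odd elements act by zero), which is what makes the ideal a subcomplex in the first place.
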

\begin{proof}
The cohomology of the non-centrally extended algebra was computed in \cite{SWspinor}, we briefly recall the result. 
The element $Q$ only acts nontrivially on the summands $\wedge^4 L$ and $\wedge^5 L$ in $S$. 
The image of $\wedge^4 L \cong L^\vee$ trivializes the antiholomorphic translations while the image of $\wedge^5 L$ trivializes the time translation.
So, of the translations, only the holomorphic ones, which live in $L$, survive.
The map 
\[
[Q,-] \colon \lie{so}(11,\CC) \to S 
\] 
is the projection onto $\wedge^0 L \oplus \wedge^1 L \oplus \wedge^2 L$. 
The kernel of $[Q,-]$ is the stabilizer~${\rm Stab}(Q)$.

In summary, the space of odd translations which survive cohomology is $\wedge^3 L \cong \wedge^2 L^\vee$; two such elements bracket to a holomorphic translation by taking the wedge product to get an element of $\wedge^4 L^\vee \cong L$.
This completes the calculation of the cohomology. 
\end{proof}

The main result of this subsection is the following:
\begin{prop}\label{prop:susycoh}
The cohomology of the $Q$-twist $H^\bu(\m2^Q)$ has the following structures:
\begin{enumerate}
\item As a super Lie algebra, $H^\bu(\m2^Q)$ is the natural extension of ${\rm Stab}(Q)$ together with the bracket
\beqn\label{eqn:susy2bra}
[\psi, \psi']_2 = \psi \wedge \psi' \in \wedge^4 L^\vee \cong L_v \\
\eeqn
\item 
$\m2^Q$ is not formal as a super dg Lie algebra.
As a super $L_\infty$ algebra, the $Q$-twist is equivalent to \eqref{eqn:susycoh} with $2$-brackets described in (1) where we additionally introduce the $3$-ary bracket 
\beqn\label{eqn:susy3bra}
[v, v', \psi]_3 = 4 \<v \wedge v', \psi\> \in \CC_b .
\eeqn
\end{enumerate}
\end{prop}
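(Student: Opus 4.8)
The plan is to realize $H^\bu(\m2^Q)$ as a minimal model of $\m2^Q$ by homotopy transfer and then read off the transferred $L_\infty$ brackets: part~(1) is the $2$-bracket, and part~(2) is the first higher bracket together with an obstruction argument for non-formality.

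First I would fix cocycle representatives. Writing $D = \d + [Q,-]$ for the differential, the Lemma identifies the cohomology as $L \oplus {\rm Stab}(Q)\oplus\Pi(\wedge^2 L^\vee)\oplus\CC$; the translations $v\in L$, the stabilizer elements $m$, and the constant function are $D$-closed on the nose, while for $\psi\in\wedge^3 L\cong\wedge^2 L^\vee$ one has $\Gamma_{\wedge^1}(Q,\psi)=0$ on weight grounds, so $D\psi = c_{M2}(Q,\psi) = \Gamma_{\wedge^2}(Q,\psi)$ is a nonzero constant two-form. Since $\Omega^\bu(\RR^{11})$ is acyclic in positive degree and $\lie{siso}_{11d}$ acts on it through $\lie{so}(11,\CC)\ltimes V$ (so $[Q,-]$ kills forms), I would choose a one-form $\eta(\psi)$, linear in $\psi$, with $\d\eta(\psi)=c_{M2}(Q,\psi)$ and take $i(\psi) := \psi-\eta(\psi)$, which is then a cocycle. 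The induced bracket on cohomology is $p\circ[\,i(-),i(-)\,]$; every ``tail'' term, as well as the cocycle contribution $c_{M2}(\psi,\psi')$, lands in $\Omega^{\geq 1}(\RR^{11})$ and is annihilated by $p$, so what survives is exactly the restriction of the $\m2$-bracket: the (partly trivial) action of ${\rm Stab}(Q)$ on $L$ and on $\wedge^2 L^\vee$, together with the term $\Gamma_{\wedge^1}(\psi,\psi')\in V$. A weight count puts $\Gamma_{\wedge^1}(\psi,\psi')$ in the summand $L$, and $\lie{sl}(5)$-equivariance forces the map $\wedge^2 L^\vee\otimes\wedge^2 L^\vee\to\wedge^4 L^\vee\cong L$ to be a multiple of the wedge product; normalizing gives \eqref{eqn:susy2bra}. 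Centrality of $\CC_b$ and the vanishing of the remaining brackets are immediate, establishing part~(1).

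For part~(2) I would extend the above to honest homotopy data $(i,p,h)$, with $h$ the Poincar\'e contracting homotopy on $\Omega^\bu(\RR^{11})$ together with an inverse to $\Gamma_{\wedge^1}(Q,-)$ along the $\lie{so}$- and $V$-directions, and compute the transferred structure. The only non-strictness is the tail $\eta(\psi)$, and the crucial point is that under the metric identification $\wedge^2 V\cong\wedge^2 V^*$ the constant two-form $c_{M2}(Q,\psi)\in\wedge^2 L^\vee$ is of \emph{holomorphic} type, so $\eta(\psi)$ may be taken linear in the holomorphic coordinates. The sole tree feeding $\ell_3(v,v',\psi)$ then brackets $i(v')$ with $i(\psi)$ to get the constant one-form $L_{v'}\eta(\psi)$, applies $h$ to land on a linear function, brackets with $i(v)$ to get a constant function, and projects; adding the tree with $v\leftrightarrow v'$ exchanged (the tree that first brackets $i(v)$ with $i(v')$ vanishes, since translations commute) and carrying the several factors of two — from the antisymmetrization of $\psi$, from the two trees, and from the normalization of $\<-\wedge-,-\>$ — yields $[v,v',\psi]_3 = 4\<v\wedge v',\psi\>\in\CC_b$. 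All higher brackets then vanish because the tails cannot interact: $\Omega^\bu(\RR^{11})$ is an abelian ideal on which $S$ acts trivially, so every nonzero tree carries at most one $\psi$-leaf whose single unit of holomorphic weight is consumed by precisely two translations. Non-formality I would then deduce from the fact that $\ell_3$ cannot be gauged away: any $L_\infty$-automorphism of the minimal model modifies $\ell_3$ by a Chevalley--Eilenberg coboundary $d_{CE}(g)$, but since both $[v,v']_2$ and $[v,\psi]_2$ vanish and no $2$-bracket has image in the central line $\CC_b$, every such coboundary has trivial $(v,v',\psi)$-component in the $\CC_b$-direction, whereas $\ell_3(v,v',\psi)=4\<v\wedge v',\psi\>\neq 0$.

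The hard part will be the bookkeeping in part~(2): pinning down the metric identification so that $c_{M2}(Q,\psi)$ comes out holomorphic (equivalently, getting the $\lie{sl}(5)$-weight exactly right), and assembling the factors of two into the coefficient $4$; a secondary point is to argue the vanishing of $\ell_{\geq 4}$ cleanly rather than tree-by-tree, for which the abelian-ideal observation above is the efficient route. A useful sanity check is to push $\m2^Q$ forward into the global symmetry algebra of the eleven-dimensional theory: under that map $\psi\mapsto[\gamma_\psi]$ with $\del\gamma_\psi$ the constant two-form attached to $\psi$, and the cocycle $\varphi$ of Theorem~\ref{thm:global} restricts to $\varphi(v,v',[\gamma_\psi]) = \<v\wedge v',\del\gamma_\psi\>\big|_{z=0}$, which reproduces $\ell_3$ with the same normalization.
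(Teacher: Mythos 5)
Your proposal is correct in substance and follows the route that the paper explicitly mentions but then declines to take: homotopy transfer performed directly on $\m2^Q$. The paper instead first builds a small intermediate strict dg Lie model $\fg = H^\bu(\m2^Q)\oplus\bigl(L^\vee\xto{\id}\Pi L^\vee\bigr)$ with the extra brackets $[v,\lambda]\in\CC_b$ and $[v,\psi]\in\Pi L^\vee$, proves an $L_\infty$ quasi-isomorphism $\fg\rightsquigarrow\m2^Q$ using the same radial primitive you call $\eta(\psi)$ (there written $H\Gamma_{\wedge^2}(Q,\psi)$, with a quadratic correction on $v\otimes\psi$), and only then transfers from $\fg$. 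In $\fg$ the acyclic piece is two-dimensional over each weight, so there is visibly a single contributing tree (bracket $v$ with $\psi$ into $\Pi L^\vee$, apply $K$, bracket with $v'$, project), and the vanishing of all $\ell_{\geq 4}$ is automatic. Your direct route buys you one fewer lemma to verify, at the price of having to control all trees inside the infinite-dimensional complex $\m2^Q$; the tree you isolate for $\ell_3$, the two orderings of $v,v'$, and the identification of the answer with the pairing $\<v\wedge v',\psi\>$ all match the paper's computation, and your closing sanity check against the cocycle $\varphi$ of Theorem~\ref{thm:global} is exactly the consistency the paper relies on in the subsequent corollary.

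The one genuine soft spot is your vanishing argument for the higher brackets. You claim that ``every nonzero tree carries at most one $\psi$-leaf'' because the tails cannot interact; but two $\psi$-leaves interact not through their tails, rather through the cocycle itself: $[i(\psi),i(\psi')]$ contains $c_{M2}(\psi,\psi')=\Gamma_{\wedge^2}(\psi,\psi')$, which for $\psi,\psi'\in\wedge^3L$ is generically a \emph{nonzero} constant-coefficient exact two-form (a weight count shows it lies in $L\wedge\CC_t$, i.e.\ it is proportional to $(\psi\wedge\psi')^\flat\wedge\d t$). The homotopy $h$ turns this into a linear one-form, which can then be fed further translations, so trees such as $q\bigl[\,v',\,h\bigl[v,\,h[i(\psi),i(\psi')]\bigr]\bigr]$ are not excluded by your weight bookkeeping and are candidate contributions to $\ell_4(\psi,\psi',v,v')\in\CC_b$. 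They do in fact vanish — with the radial homotopy the intermediate terms involve the $t$-direction ($h(\d z_m\wedge\d t)\propto z_m\,\d t - t\,\d z_m$, then $L_{\partial_{z_k}}$ produces a multiple of $\d t$, then $h(\d t)\propto t$ is annihilated by a further holomorphic translation), and the balanced tree dies because $[v,v']=0$ — but this requires an argument about the type of $\Gamma_{\wedge^2}(\psi,\psi')$ and the choice of homotopy that your write-up does not supply. A secondary, minor point: in the non-formality argument you should note why precomposing with the linear automorphism $\phi_1$ does not spoil the statement that no Chevalley--Eilenberg coboundary has a $\CC_b$-valued $(v,v',\psi)$-component; since $\phi_1$ preserves the Lie structure and $\CC_b$ is the center modulo the derived ideal considerations, this is easily patched, but as written it is asserted rather than shown.
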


It will be useful to list the formulas for the brackets in terms of coordinates. 
Let $\{z_i\}$ denote a basis for $L$, which we will also think of as a linear coordinate on $\CC^5$. 
Let $\{\partial_{z_i}\}$ be a dual basis for $L^\vee$, which we will also think of as translation invariant vector fields.
The $2$-ary bracket above is 
\[
[z_i \wedge z_j, z_k \wedge z_l]_2 = \ep_{ijklm} \partial_{z_m} 
\]
and the $3$-ary bracket is
\[
[\partial_{z_i}, \partial_{z_j}, z_{k} \wedge z_{\ell}]_3 = 4 (\delta^i_k \delta^j_\ell - \delta^i_\ell \delta^j_k) .
\] 

\parsec[]
 
One way to prove the proposition above is to use homotopy transfer directly to $\m2^Q$, just as we did in the proof of Theorem \ref{thm:global} to deduce the form of the $3$-ary bracket. 
Instead, we will use the following minimal model for $\m2^Q$ to prove Proposition~\ref{prop:susycoh}.
This minimal model also has the advantage of being more directly related to the eleven-dimensional supergravity theory.

\begin{lem}
\label{lem:gmodel}
Let $\fg$ denote the following $\ZZ/2$ graded dg Lie algebra which as a cochain complex is
\[
H^\bu(\m2^Q) \oplus (L^\vee \xto{\id} \Pi L^\vee)  .
\]
Denote the elements of the second summand by $(\lambda, \til\lambda)$. 
The Lie structure extends the one on $H^\bu(\m2^Q)$ described in (1) of Proposition \ref{prop:susycoh} together with the brackets
\begin{align*}
[v,\lambda] & = \<v, \lambda\> \in \CC_b \\ 
[v,\psi] & = \<v, \psi\> \in \Pi L^\vee_{\Tilde{\lambda}}.
\end{align*}

There is an $L_\infty$ map 
\[
\fg \rightsquigarrow \m2^Q
\] 
which is a quasi-isomorphism of cochain complexes.  
\end{lem}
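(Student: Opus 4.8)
The plan is to build the $L_\infty$ morphism $\fg \rightsquigarrow \m2^Q$ essentially by hand, order by order, exploiting that $\m2^Q$ is an \emph{honestly strict} dg Lie algebra whose ``acyclic directions'' are controlled by the de Rham complex of $\RR^{11}$, so that obstructions can be killed with the Poincaré lemma. Before that one must check the standing claim that $\fg$ really is a $\ZZ/2$-graded dg Lie algebra: the Leibniz rule for the differential (the identity $L^\vee \to \Pi L^\vee$, zero elsewhere) is immediate once one observes that it forces $[v,\til\lambda]=0$, and the Jacobi identity for the mixed triples then reduces to $\mathrm{Stab}(Q)$-equivariance of the contraction and wedge maps that define the brackets, together with the centrality of $\CC_b$ and the fact that $[v,\psi]$ lands in the (abelian) ideal spanned by the $\lambda,\til\lambda,b$ directions. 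This is the routine-but-necessary opening move.

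Next one writes out the structure of $\m2^Q$ explicitly: as a complex it is $\lie{siso}_{11d}^Q \oplus \Omega^\bu(\RR^{11})[2]$, with $\Omega^\bu(\RR^{11})[2]$ an abelian dg ideal carrying the de Rham differential, with the twisted differential acting on $\lie{siso}_{11d}$ by $[Q,-]$, and with one extra ``off-diagonal'' term coming from the $M2$-cocycle, namely a spinor $\psi'$ acquires a $\d\psi'$-component $c_{M2}(Q,\psi') = \Gamma_{\wedge^2}(Q,\psi') \in \Omega^2(\RR^{11})$. Using the Poincaré lemma on $\RR^{11}$ one picks cohomology representatives: $v\mapsto v\in V$, $m\mapsto m\in\mathrm{Stab}(Q)\subset\lie{so}(11,\CC)$, the central generator to the constant functions, and $\psi\mapsto\psi+\eta_\psi$ where $\eta_\psi$ is an $\lie{sl}(5)$-equivariantly chosen linear one-form with $\d\eta_\psi=-\Gamma_{\wedge^2}(Q,\psi)$; the acyclic summand of $\fg$ is sent into an acyclic de Rham subcomplex, $\lambda\mapsto\ell_\lambda$ the linear function pinned down by $\partial_v\ell_\lambda=\langle v,\lambda\rangle$, and $\til\lambda=d\lambda\mapsto\d\ell_\lambda$. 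Call the resulting linear map $\Phi_1$; it is a cochain map and, by construction, a quasi-isomorphism.

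The heart of the argument is that $\Phi_1$ lifts to an $L_\infty$ morphism with only one correction $\Phi_2$, terminating at order three. One checks that $\Phi_1$ respects \emph{all} brackets of $\fg$ strictly except $[\psi,\psi']$: for $[v,\lambda]$, $[v,\psi]$, $[v,\til\lambda]$, the $\mathrm{Stab}(Q)$-bracket and the $\mathrm{Stab}(Q)$-actions this is an on-the-nose identity once the normalization of $c_{M2}$ and the equivariant choice of $\eta_\psi$ are fixed compatibly --- indeed $[v,\psi]=\langle v,\psi\rangle_{\til\lambda}$ is \emph{engineered} to reproduce the image $\mathcal{L}_v(\psi+\eta_\psi)$, and $[v,\lambda]=\langle v,\lambda\rangle_b$ to reproduce $\mathcal{L}_v\ell_\lambda$. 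The single discrepancy, at $(\psi,\psi')$, is the constant two-form $\Gamma_{\wedge^2}(\psi,\psi')$, which is de Rham exact, so setting $\Phi_2(\psi,\psi')=-\zeta(\psi,\psi')$ for an $\lie{sl}(5)$-equivariant linear one-form primitive $\zeta$ solves the order-two morphism equation. One then verifies $\Phi_3=0$ solves the order-three equations: the obstruction $[\Phi_2,\Phi_1]-\Phi_2\circ[-,-]_\fg$ is nonzero only on the input $(m,\psi,\psi')$, where the two contributions both equal $\pm\, m\cdot\zeta(\psi,\psi')$ and cancel by equivariance of $\zeta$, whereas on inputs containing a holomorphic translation $v$ it vanishes because $\Gamma_{\wedge^2}(\psi,\psi')$ and $\zeta(\psi,\psi')$ are of ``anti-holomorphic/time'' type and hence are annihilated by $\iota_v$. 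All higher morphism equations then hold trivially: $\Omega^\bu(\RR^{11})$ is an abelian ideal so $[\Phi_2,\Phi_2]=0$, and $\fg$ has no brackets of arity $\geq 3$; thus $\Phi=(\Phi_1,\Phi_2,0,0,\dots)\colon\fg\rightsquigarrow\m2^Q$ is the desired $L_\infty$ quasi-isomorphism.

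The step I expect to be the main obstacle is precisely the vanishing of the order-three obstruction, together with the compatible normalization bookkeeping it rests on. This is where the particular brackets of $\fg$ do their work: they are exactly what let $\Phi_1$ be strict on everything but $[\psi,\psi']$, and the delicate point is to choose the $M2$-cocycle normalization, the primitives $\eta_\psi$, and the primitive $\zeta(\psi,\psi')$ $\lie{sl}(5)$-equivariantly and with matching constants, so that no obstruction survives into the \emph{non-exact} classes $\CC_b$ and $\wedge^2 L^\vee$; the same computation read backwards is what produces the ternary bracket of Proposition~\ref{prop:susycoh} after transferring the strict structure of $\fg$ onto $H^\bu(\fg)$. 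Keeping the Koszul signs straight through the $L_\infty$-morphism relations is the only genuinely fiddly part.
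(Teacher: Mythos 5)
Your overall strategy --- embed $\fg$ into $\m2^Q$ by choosing primitives with the Poincar\'e lemma, accept a single quadratic correction, and check the higher morphism relations die because $\Omega^\bu(\RR^{11})$ is an abelian ideal --- is the same as the paper's. But you have located the failure of strictness at the wrong bracket, and the step you flag as the ``main obstacle'' is aimed at a discrepancy that does not exist. On the surviving odd subspace ($\wedge^3 L \cong \wedge^2 L^\vee$) the $M2$-cocycle pairing vanishes identically: for $\psi,\psi'$ in this summand, $\Gamma_{\wedge^2}(\psi,\psi')=0$, as one sees either by a degree count in the Clifford module $\wedge^\bu L$ or by noting that $\Sym^2(\wedge^3 L)$ (with its $\det$-twist) contains no copy of the relevant summands of $\wedge^2 V$ other than possibly $L\wedge\CC_t$, which the Clifford computation kills. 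Hence $[\til\psi,\til\psi']$ agrees with $[\psi,\psi']$ on the nose and your correction $\Phi_2(\psi,\psi')=-\zeta(\psi,\psi')$ is a primitive of zero; the entire order-three analysis at $(m,\psi,\psi')$ is then vacuous.

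The genuine discrepancy sits at the mixed bracket $[v,\psi]$, and it cannot be removed by a clever choice of the linear primitive $\eta_\psi$, contrary to your ``engineered to reproduce $\mathcal{L}_v(\psi+\eta_\psi)$'' claim. By Cartan's formula, $L_v\eta_\psi=\iota_v \d\eta_\psi+\d(\iota_v\eta_\psi)=-\iota_v\Gamma_{\wedge^2}(Q,\psi)+\d(\iota_v\eta_\psi)$, and the second term can only vanish for all $v$ if $\eta_\psi$ is a constant one-form, which contradicts $\d\eta_\psi=-\Gamma_{\wedge^2}(Q,\psi)\neq 0$. (Varying the primitive only shifts this leftover by the symmetric part of the linear coefficient matrix, so with the normalization of $[v,\psi]=\langle v,\psi\rangle$ stated in the lemma --- the one needed for the $3$-ary bracket of Proposition~\ref{prop:susycoh} --- strictness at $(v,\psi)$ is impossible.) This exact leftover $\d(\iota_v\eta_\psi)$ is precisely what the paper's quadratic component is introduced to cancel: there one takes $\til\psi=\psi-H\Gamma_{\wedge^2}(Q,\psi)$ with $H$ the radial homotopy and sets $\Phi^{(2)}(v,\psi)=\langle v, H\Gamma_{\wedge^2}(Q,\psi)\rangle\in L_\lambda$, so that $\delta\Phi^{(2)}(v,\psi)$ absorbs the exact term in $[v,\til\psi]$. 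Once the correction is placed at $(v,\psi)$ rather than $(\psi,\psi')$, the higher relations are immediate (the images lie in the abelian de Rham ideal), and no delicate equivariant normalization of a primitive $\zeta(\psi,\psi')$ is needed. So the gap is not the sign bookkeeping you anticipate, but the misplacement of the quadratic Taylor coefficient of the $L_\infty$ map.
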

\begin{proof}
We embed $\fg$ into $\m2^Q$ in the following way: ${\rm Stab}(Q)$ and $L$ sit inside in the evident way.
The central element maps to $c \mapsto - 1 \in \Omega^0(\RR^{11})$.
The summand $L_\lambda$ is mapped to the linear functions in $\Omega^0(\RR^{11})$ and $\Pi L_{\Tilde{\lambda}}$ is sent to the constant coefficient one-forms in $\Pi \Omega^1(\RR^{11})$. 
It remains to define where $\psi \in \wedge^2 L$ is mapped.

Notice that, at least naively, $\psi \in \wedge^2 L$ is not $Q$-closed due to the presence of the central extension. 
To embed $\wedge^2 L$ we introduce the operator
\[
H \colon \Omega^2 (\RR^{11}) \to \Omega^1(\RR^{11}),
\]
which sends a two-form $\alpha$ to the one-form $H \alpha$ defined by the formula $(H \alpha) (x) = \int_0^x \alpha$
where we integrate over a straight line path from $0$ to $x$.

Notice that if $\alpha$ is $\d$-closed then $\d (H \alpha) = \alpha$. 
It follows that any element $\psi \in \wedge^2 L \subset S$ can be lifted to a closed element at the cochain level in $\m2^Q$ by the formula
\[
\Tilde{\psi} = \psi - H \Gamma_{\wedge^2} (Q, \psi) \in \Pi S \oplus \Pi \Omega^1 .
\]
Thus, sending $\psi \mapsto \Tilde{\psi}$ defines a cochain map $\fg \to \m2^Q$. 

The Lie bracket $[\Tilde{\psi}, \Tilde{\psi}']$ agrees with $[\psi, \psi']$. 
On the other hand, in $\m2^Q$ there is the Lie bracket 
\[
[v,\Tilde{\psi}] = - L_v (H \Gamma_{\wedge^2} (Q, \psi)) = -\<v, \Gamma_{\wedge^2}(Q, \psi)\> - \d \<v, H \Gamma_{\wedge^2}(Q, \psi)\> .
\]
The first term agrees with the bracket $[v, \psi]_{\fg}$ in $\fg$. 
The other term is exact in $\m2^Q$ and can hence be corrected by the following bilinear  
\[
v \otimes \psi \mapsto \<v, H \Gamma_{\wedge^2} (Q,\psi) \> \in L_\lambda .
\] 
Together with the cochain map described above, this bilinear term prescribes the desired $L_\infty$ map. 

\end{proof}

\parsec[]
We now proceed to the proof of proposition \ref{prop:susycoh}.

\begin{proof}[Proof of Proposition \ref{prop:susycoh}]
Using the model $\fg$, the first part of Proposition \ref{prop:susycoh} follows immediately. 
We deduce the second part using homotopy transfer. 

Recall that we described the cohomology of $\m2^Q$ in \eqref{eqn:susycoh}.
Let $\delta$ denote the differential on $\fg$ which simply maps $\Pi L$ to $L$ by the identity map. 
We produce the homotopy data
\begin{equation}
\begin{tikzcd}
\arrow[loop left]{l}{K}(\fg , \delta)\arrow[r, shift left, "q"] &(H^\bu(\m2^Q) \, , \, 0)\arrow[l, shift left, "i"] \: ,
\end{tikzcd}
\end{equation}
as follows.
\begin{itemize}[leftmargin=\parindent]
\item The operator $K$ annihilates $H^\bu(\m2^Q)$ and is the identity map~$K \colon \Pi L_{\til \lambda} \to L_\lambda$. 
\item The map $q$ is the identity on $H^\bu(\m2^Q)$ and annihilates the summand~$L \to \Pi L$. 
\item The map $i$ embeds $H^\bu(\m2^Q)$ in the obvious way. 
\end{itemize}

It is immediate to verify this data prescribes valid homotopy data.
There is only a single term in the $L_\infty$ structure generated by homotopy transfer. 
It is determined by the following tree diagram
\begin{equation}
\begin{tikzpicture}
\begin{feynman}
\vertex(a) at (-1,1) {$i(v)$};
\vertex(b) at (-1,0) {$i(\psi)$};
\vertex(c) at (-1,-1) {$i(v)$};
\vertex(d) at (0,0.5);
\vertex(e) at (1,0);
\vertex(f) at (2,0) {$q$};
\diagram* {(a)--(d), (b)--(d), (d)--[edge label = $K$](e), (c)--(e), (f)--(e)};
\end{feynman}
\end{tikzpicture}
\end{equation}
together with a similar diagram with the $v$ and $v'$ reversed. 
It is an immediate calculation to show that these trees recover the formula in (2) of Proposition \ref{prop:susycoh}.
\end{proof}

\subsection{Embedding supersymmetry into the eleven-dimensional theory} \label{s:residual}

Consider now the super $L_\infty$ algebra $\cL$ underlying the eleven-dimensional theory on $\CC^5 \times \RR$. 

\begin{prop}
Endow the cohomology of $\m2^Q$ with the $L_\infty$ structure of Proposition \ref{prop:susycoh} and let $\cL(\CC^5 \times \RR)$ be the super $L_\infty$ algebra underlying eleven-dimensional supergravity on $\CC^5 \times \RR$. 
There is a map of super $L_\infty$ algebras 
\[
H^\bu(\m2^Q) \rightsquigarrow \cL (\CC^5 \times \RR)
\]
In particular, the $Q$-twisted algebra $\m2^Q$ is a symmetry of eleven-dimensional theory on $\CC^5 \times \RR$. 
\end{prop}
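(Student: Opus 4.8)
The plan is to produce the $L_\infty$ morphism $H^\bullet(\m2^Q) \rightsquigarrow \cL(\CC^5 \times \RR)$ by specifying where each summand of the cohomology \eqref{eqn:susycoh} lands among the solutions to the linearized equations of motion and then checking compatibility with the brackets. Recall from \S\ref{s:residual} (and the computation of linearized BRST cohomology in \S\ref{s:E(5,10)}) that the $\delta^{(1)}$-cohomology of $\cL(\CC^5\times\RR)$ is the central extension of $E(5,10)$, realized by triples $(\mu,[\gamma],b)$ with $\mu \in \Vect_0(\CC^5)$, $[\gamma] \in \Omega^1(\CC^5)/\del\cO(\CC^5)$, and $b \in \CC$. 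First I would define the map on cohomology: the summand $L \subset H^\bullet(\m2^Q)$ maps to the constant (holomorphic, translation-invariant) vector fields $\partial_{z_i} \in \Vect_0(\CC^5)$; the Levi factor $\lie{sl}(5) \subset {\rm Stab}(Q)$ acts by linear vector fields $z_i \partial_{z_j}$ (traceless, hence divergence-free); the remaining parabolic directions $\wedge^2 L^\vee \oplus L^\vee$ in ${\rm Stab}(Q)$ map to the quadratic and (appropriately normalized) higher divergence-free vector fields implementing the residual rotations/special conformal-like transformations; the odd summand $\Pi(\wedge^2 L^\vee)$ maps to (classes of) holomorphic one-forms $[\gamma]$ — concretely $z_k\wedge z_\ell \mapsto [\del(z_k z_\ell \cdots)]$ or more precisely the one-forms whose $\del$ gives the constant-coefficient two-forms $\d z_k \wedge \d z_\ell$ — and the central $\CC$ maps to the central $\CC_b$.

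The second step is to verify that this assignment intertwines the $2$-ary brackets. For the $\lie{sl}(5)$-module structure this is automatic from equivariance. The key nontrivial check is the bracket \eqref{eqn:susy2bra}: two odd elements $\psi,\psi' \in \wedge^2 L^\vee$ bracket in $H^\bullet(\m2^Q)$ to $\psi\wedge\psi' \in \wedge^4 L^\vee \cong L$, i.e.\ to a constant vector field. On the supergravity side, the bracket of the corresponding one-form classes is precisely \eqref{eqn:eqb}, $[[\gamma],[\gamma']] = \Omega^{-1}\vee(\del\gamma\wedge\del\gamma')$, which in coordinates reads $\ep_{ijklm}f^{ij}g^{kl}\partial_{z_m}$ — exactly matching $[z_i\wedge z_j, z_k\wedge z_l]_2 = \ep_{ijklm}\partial_{z_m}$ after identifying $\wedge^2 L^\vee$ with constant-coefficient closed two-forms via $\del$. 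This is really just the observation, already recorded in the proof of Theorem~\ref{thm:global}, that $E(5,10)$ sits inside the global symmetry algebra; here one restricts that embedding to the finite-dimensional subalgebra $H^\bullet(\m2^Q)$, so most of the work is done. One must also check the vanishing brackets: $L$ is abelian and central-free against itself (constant vector fields commute), $b$ is central on both sides, and the brackets of ${\rm Stab}(Q)$ with the odd part are the evident module actions.

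The third and most delicate step is to match the $3$-ary bracket \eqref{eqn:susy3bra}, $[v,v',\psi]_3 = 4\langle v\wedge v',\psi\rangle \in \CC_b$. An $L_\infty$ morphism has Taylor components $\Phi_k$, and the constraint relating $\Phi_1$ (the cochain map above) and $\Phi_2$ says that $\Phi_1$ applied to $[v,v',\psi]_3$ must agree, up to $\delta^{(1)}$-exact terms and terms built from $\Phi_2$ and the $2$-brackets, with the transferred ternary operation on the supergravity side. Concretely: $\Phi_1(v),\Phi_1(v')$ are constant vector fields and $\Phi_1(\psi)$ is a one-form class; the relevant supergravity vertex contributing a $\CC_b$ output from two $\mu$'s and a $\gamma$ is exactly the one that produced the cocycle $\varphi(\mu,\mu',\alpha) = \langle\mu\wedge\mu',\alpha\rangle|_{z=0}$ in \eqref{eqn:cocycle} — so I expect $[v,v',\psi]_3$ on the supergravity side to evaluate to $\varphi$ of the images, and the factor of $4$ to emerge from combinatorial/normalization bookkeeping (the two diagram orderings of $v,v'$, plus the normalization of the identification $\Pi(\wedge^2 L^\vee) \hookrightarrow \Omega^1/\del\cO$ chosen in step one — indeed the coordinate formula $[\partial_{z_i},\partial_{z_j},z_k\wedge z_\ell]_3 = 4(\delta^i_k\delta^j_\ell - \delta^i_\ell\delta^j_k)$ has precisely the antisymmetrization structure one gets from such a diagram). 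The main obstacle is keeping the higher $L_\infty$ morphism components $\Phi_k$ for $k\geq 2$ honest: one should either exhibit them explicitly (likely they truncate, since the source is small and the only nonlinearity is the single ternary bracket) or invoke the homotopy-transfer construction of Lemma~\ref{lem:gmodel} — transferring the $L_\infty$ map $\fg \rightsquigarrow \m2^Q$ through the already-established embedding of the relevant pieces into $\cL(\CC^5\times\RR)$ — to get the morphism and all its components for free. I would carry out the latter: compose the quasi-isomorphism of Lemma~\ref{lem:gmodel} with an explicit strict-ish map $\fg \to \cL(\CC^5\times\RR)$ (sending $\lambda,\til\lambda$ to the linear functions and constant one-forms that trivialize them in the supergravity complex, just as in the proof of that lemma), and transfer along the contraction onto $H^\bullet(\m2^Q)$; the resulting transferred morphism is the desired one, and its $3$-ary component reproduces \eqref{eqn:susy3bra} by the same tree-diagram computation used in Proposition~\ref{prop:susycoh}.
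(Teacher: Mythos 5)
Your overall shape is close to the paper's (the linear assignment of $L$, $\lie{sl}(5)$, the odd part and $\CC_b$, and the matching of the odd--odd bracket through $J$ are exactly the paper's $\Phi^{(1)}$), but there are two genuine gaps. The central one: the target is the full field complex $\cL(\CC^5\times\RR)$, not its linearized cohomology, and at the field level the bracket of a constant vector field with your one-form representative does \emph{not} vanish: the cubic part of $\tfrac12\int\tfrac{1}{1-\nu}\mu^2\vee\del\gamma$ gives $[\partial_{z_i},\,z_j\d z_k-z_k\d z_j]=\delta^i_j\d z_k-\delta^i_k\d z_j$, whereas the corresponding bracket in $H^\bu(\m2^Q)$ is zero. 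Your step two treats these checks as ``evident module actions,'' which is only true after passing to classes $[\gamma]\in\Omega^1/\del\cO$ --- i.e.\ you are really re-deriving the map into $\Hat{E(5,10)}$, not into $\cL$. To land in $\cL$ one must introduce a quadratic component of the morphism valued in $\beta$-fields, as in \eqref{eqn:phi2}, whose image under $\del|_{\beta\to\gamma}$ trivializes this discrepancy, and then verify the cubic relation \eqref{eqn:cubicrln}, where the two $\Phi^{(2)}$-terms cancel by symmetry and the quartic vertex of $\tfrac12\int\tfrac{1}{1-\nu}\mu^2\vee\del\gamma$ reproduces \eqref{eqn:susy3bra} with its factor of $4$. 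Your proposed shortcut --- a ``strict-ish'' map $\fg\to\cL$ sending $\lambda,\til\lambda$ to linear functions and constant one-forms, then homotopy transfer --- fails at exactly this point: linear functions are $\beta$-type fields, and in the presentation $S_{BF,\infty}+J$ the field $\beta$ enters only quadratically, so every bracket of $\cL$ with a $\beta$-input vanishes; hence $[v,\lambda]_\fg=\langle v,\lambda\rangle\, b$ cannot be preserved strictly (its image is the non-exact constant function $1$). You would need bilinear corrections and the cubic check anyway, i.e.\ the same work the paper does, so nothing is obtained ``for free''; moreover your step three matches against the transferred cocycle \eqref{eqn:cocycle} on $E(5,10)$ rather than against the intrinsic ternary bracket of $\cL$ (plus $\Phi^{(2)}$-terms), which is what the $L_\infty$-morphism relation actually requires, and the decisive normalization is left as ``expected bookkeeping.''

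Secondly, your assignment of the non-Levi part of ${\rm Stab}(Q)$ is incorrect. The nilpotent directions $\wedge^2 L^\vee\oplus L^\vee$ are rotations in $\lie{so}(11)$ mixing holomorphic and antiholomorphic coordinates; they act homotopically trivially in the twist and cannot be realized $\lie{sl}(5)$-equivariantly by holomorphic divergence-free vector fields (for instance $\wedge^2 L^\vee$ does not occur in $\Sym^2 L^\vee\otimes L$, so no quadratic vector fields can carry them), and they are certainly not ``special conformal-like'' transformations --- those only appear later in the superconformal embeddings of \S\ref{sec:ads}. The correct assignment, as in the paper, is to send $\wedge^2 L^\vee_1\oplus L^\vee\subset{\rm Stab}(Q)$ to zero; with your assignment the cochain-map property already fails, since such representatives are not $\delta^{(1)}$-closed or do not exist.
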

\begin{proof}
Recall that the cohomology of $\m2^Q$ takes the following form
\beqn 
\begin{tikzcd}
\ul{even} & \ul{odd} & \ul{even} \\
 L^\vee & (\wedge^2 L^\vee)_2 & L \\
(\wedge^2 L^\vee)_1 & & \\
\lie{sl}(5) && \CC_b 
\end{tikzcd}
\eeqn
The lefthand column is ${\rm Stab}(Q)$. 
The subscripts are used to distinguish between the two copies of $\wedge^2 L^\vee$.

The $L_\infty$ map from the dg Lie model $\fg$ to the fields of the twisted eleven-dimensional supergravity theory has a linear piece $\Phi^{(1)}$ and a quadratic piece $\Phi^{(2)}$.
Define the linear map $\Phi^{(1)} \colon \fg \to \cL$ as follows:
\begin{align*}
 L^\vee & \mapsto 0 \\
 \wedge^2 L_1^\vee  & \mapsto 0 \\
z_i \wedge z_j \in \wedge^2 L^\vee_2 & \mapsto \frac12 (z_i \d z_j - z_j \d z_i) \in \Omega^{1,0} (\CC^5) \hotimes \Omega^0 (\RR) \\
A_{ij} \in \lie{sl}(5) & \mapsto \sum_{ij} A_{ij} z_i \partial_{z_j} \in \PV^{1,0}(\CC^5) \hotimes \Omega^0(\RR) \\ \partial_{z_j} \in L & \mapsto
\partial_{z_i} \in \PV^{1,0} (\CC^5) \hotimes \Omega^0 (\RR^5) \\ 
1 \in \CC_b & \mapsto 1 \in \Omega^{0,0}(\CC^5) \hotimes \Omega^0 (\RR) .
\end{align*}

It is immediate to check that this is a map of cochain complexes, since all elements in the image of this map lie in the kernel of the linearized BRST operator~\eqref{eqn:linearBRST}. 

This map also preserves the bracket between odd elements in $\wedge^2 L_2^\vee$. 
In the cohomology of $\m2^Q$ we have the bracket
\[
[z_i\wedge z_j , z_k \wedge z_l] = \ep_{ijklm} \partial_{z_m}
\]
which is precisely the bracket induced by the cubic term in the action $J = \frac16 \in \gamma \del \gamma \del \gamma$. 

This map does not preserve all of the brackets, however. 
Indeed, in the eleven-dimensional theory $\cL(\CC^5 \times \RR)$ there is the bracket 
\[
\left[\partial_{z_i}, z_j \d z_k - z_k \d z_j\right] = \delta^i_j \d z_k - \delta^i_k \d z_j 
\]
arising from the cubic term in $\frac12 \int \frac{1}{1-\nu} \mu^2 \del \gamma$. 
To remedy the failure for $\Phi^{(1)}$ to preserve the brackets, we introduce the odd bilinear map $\Phi^{(2)} \colon \fg \otimes \fg \to \Pi \cL$ defined by 
\beqn\label{eqn:phi2}
\Phi^{(2)} \left(\partial_{z_i} , z_j \wedge z_k\right) = \frac12 (\delta^i_j z_k - \delta^i_k z_j) .
\eeqn
Notice that the field on the right hand side is of type $\beta$. 

The bilinear map $\Phi^{(2)}$ provides a homotopy trivialization for the failure for $\Phi^{(1)}$ to preserve the $2$-ary bracket: 
\[
[\Phi^{(1)} (\partial_{z_i}) , \Phi^{(1)}(z_j \wedge z_k)] = \del \Phi^{(2)}\left(\partial_{z_i} , z_j \wedge z_k\right).
\]
The lefthand side is $\frac12 (\delta_j^i \d z_k - \delta_k^i \d z_j)$ which is precisely the de Rham differential applied to \eqref{eqn:phi2}.

To define an $L_\infty$ morphism $\Phi^{(1)} + \Phi^{(2)}$ must satisfy additional higher relations. 
There is a single nontrivial cubic relation to verify:
\begin{multline} \label{eqn:cubicrln}
\Phi^{(1)}\left[\partial_{z_i}, \partial_{z_j}, z_k \wedge z_l\right]_3 = [\Phi^{(1)}(\partial_{z_i}), \Phi^{(1)}(\partial_{z_i}), \Phi^{(1)}(z_k \wedge z_l)]_3 \\ + [\del_{z_i}, \Phi^{(2)}(\partial_{z_j}, z_k \wedge z_l)] + [\del_{z_j}, \Phi^{(2)}(\del_{z_i}, z_k \wedge z_l)]
\end{multline}
where $[-]_3$ on the left hand side is the $3$-ary bracket defined in Proposition \ref{prop:susycoh} and $[-]_3$ on the right hand side is the $3$-ary bracket defined by the quartic part of the action $\frac12 \int \frac{1}{1-\nu} \mu^2 \vee \del \gamma$. 
The two terms in the second line of \eqref{eqn:cubicrln} cancel for symmetry reasons and the quartic term in the BV action induces precisely the correct $3$-ary bracket. 

\end{proof}

\parsec[] From the previous proposition we can readily compare the super $L_\infty$ algebra $H^\bu(\m2^Q)$ with the global symmetry algebra of our theory.
\begin{cor}
There is a map of super $L_\infty$ algebras 
\[
H^\bu(\m2^Q)\to \Hat{E(5,10)},
\]
where $\Hat{E(5,10)}$ is a central extension of~$E(5,10)$ by the cocycle~\eqref{eqn:cocycle}.
\end{cor}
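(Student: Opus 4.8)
The plan is to obtain the map as a composite of $L_\infty$ morphisms. By the homotopy transfer theorem applied to the homotopy data $(i,q,K)$ constructed in the proof of Theorem~\ref{thm:global}, the projection $q$ promotes to an $L_\infty$ quasi-isomorphism from $\cL(\CC^5\times\RR)$ onto its linearized BRST cohomology equipped with the transferred $L_\infty$ structure, which Theorem~\ref{thm:global} identifies with the central extension $\Hat{E(5,10)}$ of $E(5,10)$ by the cocycle~\eqref{eqn:cocycle}. Precomposing this with the $L_\infty$ map $H^\bu(\m2^Q)\rightsquigarrow\cL(\CC^5\times\RR)$ established in the preceding Proposition, and using that a composite of $L_\infty$ morphisms is again an $L_\infty$ morphism, produces the desired map $H^\bu(\m2^Q)\rightsquigarrow\Hat{E(5,10)}$. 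What remains is to check that this composite is nonzero and to identify it concretely.

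First I would track the linear term. The linear map $\Phi^{(1)}$ of the preceding Proposition sends $\lie{sl}(5)$ to linear (hence divergence-free) vector fields, $L$ to constant vector fields, $(\wedge^2 L^\vee)_2\ni z_i\wedge z_j$ to the holomorphic one-form $\tfrac12(z_i\,\d z_j-z_j\,\d z_i)$, the central generator to the constant function $1$, and annihilates $L^\vee$ and $(\wedge^2 L^\vee)_1$. All of these lie in $\ker\delta^{(1)}$, so composing with $q$ merely records cohomology classes: the vector fields are unchanged (they are already divergence-free), the one-form maps to its class in $\Omega^1(\CC^5)/\d\cO(\CC^5)$, which under the isomorphism $\del$ of Theorem~\ref{thm:global} corresponds to the constant closed two-form $\d z_i\wedge\d z_j\in E(5,10)_-$, and the constant function maps to its value at the origin in $\CC_b$. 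Hence the composite is injective on the summands $\lie{sl}(5)$, $L$, $(\wedge^2 L^\vee)_2$, $\CC_b$ and zero on $L^\vee$ and $(\wedge^2 L^\vee)_1$. I would also note that the quadratic term $\Phi^{(2)}$ takes values in $\beta$-type fields that are linear functions on $\CC^5$, which vanish at the origin, so $q\circ\Phi^{(2)}=0$ and no spurious components appear at low arity.

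Next I would match the brackets. The $2$-bracket $[z_i\wedge z_j,z_k\wedge z_l]_2=\ep_{ijklm}\partial_{z_m}$ of Proposition~\ref{prop:susycoh} is precisely $\Omega^{-1}\vee(\d z_i\wedge\d z_j\wedge\d z_k\wedge\d z_l)$, the $E(5,10)$ bracket~\eqref{eqn:e510} on the corresponding constant two-forms, while the actions of $\lie{sl}(5)$ and of $L$ on $(\wedge^2 L^\vee)_2$ go over to the module structure of $\Vect_0(\CC^5)$ on closed two-forms by Lie derivative. The $3$-bracket $[\partial_{z_i},\partial_{z_j},z_k\wedge z_l]_3=4(\delta^i_k\delta^j_l-\delta^i_l\delta^j_k)$ matches $\varphi(\partial_{z_i},\partial_{z_j},\d z_k\wedge\d z_l)=\langle\partial_{z_i}\wedge\partial_{z_j},\,\d z_k\wedge\d z_l\rangle|_{z=0}$ up to the overall factor $4$, which is absorbed by rescaling the central line $\CC_b$.

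The main obstacle, and really the only delicate point, is the bookkeeping of the composition: one must confirm that the higher-arity Taylor components of the transferred morphism $q$ do not combine with $\Phi^{(1)}$ and $\Phi^{(2)}$ to alter the low-arity brackets above, and keep the normalization of the central parameter $b$ straight against the factors of $4$ appearing in Proposition~\ref{prop:susycoh} and in the cocycle~\eqref{eqn:cocycle}. As an alternative that sidesteps this, one can verify directly that the explicit linear-plus-quadratic assignment described above, with $\CC_b$ rescaled appropriately, already satisfies the defining relations of an $L_\infty$ morphism into $\Hat{E(5,10)}$; the only nontrivial relation is the cubic one, and it follows from the identity~\eqref{eqn:cubicrln} together with the bracket match just recorded.
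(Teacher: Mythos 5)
Your proposal is correct and follows essentially the same route as the paper: the paper also obtains the map by taking the $L_\infty$ morphism of the preceding proposition into the fields and descending it to linearized BRST cohomology (identified with $\Hat{E(5,10)}$ by Theorem~\ref{thm:global}), writes down exactly the explicit assignments you list (with $z_i\wedge z_j\mapsto \d z_i\wedge\d z_j$, $\lie{sl}(5)$ and $L$ to linear and constant divergence-free vector fields, the other summands to zero), and notes that the two transferred $3$-ary brackets are compatible. Your extra bookkeeping — that $q\circ\Phi^{(2)}=0$ since it lands in linear $\beta$-fields, and that the normalization of the central line absorbs the factor of $4$ — only makes explicit what the paper's brief compatibility remark leaves implicit.
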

\begin{proof}
Because this map preserves differentials, it descends to a map in cohomology. 
We have already computed the cohomology of $\cL$ on $\CC^5 \times \RR$; it is the trivial one-dimensional central extension of $E (5,10)$. 
The Lie algebra structure present in the cohomology of $\m2^Q$ is described in part (1) of Proposition~\ref{prop:susycoh}. 
The map
\[
L \oplus {\rm Stab}(Q) \oplus \Pi \left(\wedge^3 L \right) \oplus \CC_b \to E (5,10) \oplus \CC_{b'}
\]
is defined by very similar formulas as above
\begin{align*}
 L^\vee_1 & \mapsto 0 \\
 \wedge^2 L^\vee_1 & \mapsto 0 \\
z_i \wedge z_j \in \wedge^2 L_2 & \mapsto \d z_i \wedge \d z_j \in \Omega^{2}_{cl} (\CC^5) \\
A_{ij} \in \lie{sl}(5) & \mapsto \sum_{ij} A_{ij} z_i \partial_{z_j} \in \Vect_0(\CC^5) \\ \partial_{z_i} \in L & \mapsto
\partial_{z_i} \in \Vect_0(\CC^5) \\
b \in \CC_b & \mapsto b \in \CC_{b'} .
\end{align*}

The relationship between the transferred $L_\infty$ structures can be described as follows. 
Recall that the linear BRST cohomology of the parity shift of the fields of the eleven-dimensional theory is equivalent to the super $L_\infty$ 
algebra $\Hat{E(5,10)}$.
Also, we described the $L_\infty$ structure present in the cohomology of $\m2^Q$ in part (2) of Proposition~\ref{prop:susycoh}. 
Each of these $L_\infty$ structure involved introducing a single new $3$-ary bracket, which are easily seen to be compatible. 
\end{proof}
\parsec[]

In this short section we compare to another description of the M2 brane algebra given as a one-dimensional $L_\infty$ central extension of the super Poincar\'e algebra.
Such central extensions were studied in~\cite{BHsusyII, SSS, FSS}, following~\cite{CDF}. 
In these references, the algebra $\m2$ is defined as an $L_\infty$  
central extension of $\lie{siso}_{11d}$. 

Recall that given two spinors $\psi, \psi' \in S$ we can form the constant coefficient two-form $\Gamma_{\wedge^2} (\psi, \psi')$. 
Using this two-form we can define the following four-linear expression
\[
\mu_2 (\psi, \psi',v,v') = \<v \wedge v', \Gamma(\psi, \psi')\> .
\]
This expression is symmetric on the spinors and antisymmetric on the vectors, therefore it defines an element in $\clie^4(\lie{siso}_{11d})$. 
This expression defines a nontrivial class in $H^4(\lie{siso}_{11d})$ so defines a one dimensional central extension of $\lie{siso}_{11d}$ as a Lie 3-algebra. 
Instead of working with a one-dimensional central extension by $\CC[2]$, we work with a central extension by the resolution $\Omega^\bullet(\RR^{11})[2]$ determined by a cocycle $c_{M2}$, see \S \ref{sec:m2brane}. 
There is a quasi-isomorphism $\clie^\bu(\lie{siso}_{11d})\to \clie^\bu(\lie{siso}_{11d}, \Omega^\bullet (\R^{11}))$ induced by the embedding of constant functions into the full de Rham complex.
The cocycles $\mu_2$ and $c_{M2}$ are cohomologous via a two-step zig-zag in the double complex $\clie^\bu(\lie{siso}_{11d}, \Omega^\bullet (\R^{11}))$. 

\section{The non-minimal twist}\label{s:nonmin}

We have provided numerous consistency checks that the eleven-dimensional theory defined on a manifold with $SU(5)$ holonomy is a twist of supergravity. 
We have referred to this theory as ``minimal,'' since it renders the minimal number of translations homotopically trivial, or (slightly improperly) as ``holomorphic.''
In this section we characterize the unique further twist of eleven-dimensional supergravity on flat space, as seen through the lens of the holomorphic theory.  
This further twist is invariant for the group $G_2 \times SU(2)$ and is fully topological along seven directions, as opposed to just a single direction as in the minimal twist. This is easiest to see by decomposing the eleven-dimensional spinor as a representation of $\Spin(4) \times \Spin(7)$; from this perspective, a square-zero element is a rank-one element in the tensor product of a chiral spin representation of~$\Spin(4)$ and the spin representation of~$\Spin(7)$. Elements of the latter fall into two distinct orbits under the $\Spin(7)$ action, the minimal orbit---``Cartan pure spinors''---and the generic orbit~\cite{Igusa}. The stabilizer of an element of the generic orbit is~$G_2$, almost by definition.

We will show that the non-minimal twist is equivalent to an interacting theory on $\CC^2 \times \RR^7$ that we call ``Poisson'' Chern--Simons theory, using a direct description of the further twist together with an indirect cohomological argument. This completes the confirmation of a conjecture in the literature~(\cite{CostelloM5}; see also~\cite{RY}); the result was checked at the level of the free theory in~\cite{EagerHahner} by computing the nonminimal twist of the eleven-dimensional multiplet directly at the component-field level.

In the BV formalism, the theory is $\ZZ/2$ graded, with fields given by
\[
A \in \Pi \Omega^{0,\bu}(\CC^2) \hotimes \Omega^\bu(\RR^7) ,
\]
where $\Pi$, as always, denotes parity shift.
The equations of motion are of the form
\[
\dbar A + \d_{\RR^7} A + \partial_{z_1} A \wedge \partial_{z_2} A = 0 .
\]
The action functional depends on the holomorphic symplectic structure on $\CC^2$ through the Poisson bracket on the algebra of holomorphic functions.
We give a precise definition below. 

The main result of this section is the following.

\begin{thm}
\label{thm:nonmin}
The non-minimal twist of the eleven-dimensional theory is equivalent to Poisson Chern--Simons theory on 
\[
\CC^2 \times \RR^7 .
\]
\end{thm}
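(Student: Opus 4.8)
The plan is to realize the non-minimal twist as a \emph{further} twist of the holomorphic theory of \S\ref{s:dfn} on $\CC^5\times\RR$, and then recognize the result. First I would pin down the relevant further-twisting supercharge inside the residual supersymmetry algebra $H^\bu(\m2^Q)$ of \S\ref{sec:susy}. Decomposing $SU(5)\to SU(2)\times SU(3)$ with $L=L_2\oplus L_3$, the odd part $\wedge^2 L^\vee\cong\wedge^3 L$ contains a unique $SU(2)\times SU(3)$-invariant line, namely $\wedge^3 L_3$; this is the element that leaves $\CC^2=L_2$ holomorphic and renders $L_3\oplus L_3^\vee\oplus\CC_t\cong\RR^7$ topological. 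Under the isomorphism $\wedge^3 L\cong\wedge^2 L^\vee$ it corresponds to $\d z_1\wedge\d z_2$ on $L_2$, so via the embedding $\Phi$ of \S\ref{s:residual} the further twist is exactly the background in which $\gamma^{1,0}$ takes the value $\gamma_0$ with $\del\gamma_0=\omega$, a constant holomorphic symplectic form on $\CC^2$. Equivalently, in the $\Spin(4)\times\Spin(7)$ description this is a generic-orbit element of the $\Spin(7)$ spinor, whose stabilizer is $G_2$; this is why the twist is $G_2\times SU(2)$-invariant even though only $SU(3)\times SU(2)$ is manifest in the holomorphic model.

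Next I would carry out the deformation directly. Substituting $\gamma\mapsto\gamma+\gamma_0$ in $S_{BF,\infty}+gJ$ produces new quadratic terms governed by $\del\gamma_0=\omega$, modifying the linear BRST operator $\dbar+\d_\RR+\div|_{\mu\to\nu}+\del|_{\beta\to\gamma}$. I would then compute the cohomology of the deformed free complex by a spectral sequence: the holomorphic Poincaré lemma on the $\CC^3$ factor, together with the new piece of the differential coming from $\omega$, collapses the Dolbeault complex of $\CC^3$ and $\Omega^\bu(\RR)$ onto the de Rham complex of $\RR^7$, while along $\CC^2$ the non-degeneracy of $\omega$ pairs up and cancels all component fields except for a single surviving tower. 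Careful bookkeeping of parities and of the $\mu$-, $\nu$-, $\gamma$-, $\beta$-components should show that exactly $A\in\Pi\,\Omega^{0,\bu}(\CC^2)\hotimes\Omega^\bu(\RR^7)$ survives, so the underlying free theory is the free limit of Poisson Chern--Simons theory.

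Then I would track the interactions. Integrating out the fields made contractible by the deformed differential --- after the twist the $\mu$ field and the $\CC^2$-directions of $\gamma$ become Lagrange multipliers or auxiliary fields --- the BF vertex $\tfrac12\frac{1}{1-\nu}\mu^2\vee\del\gamma$ evaluated on $\del(\gamma+\gamma_0)=\del\gamma+\omega$ reorganizes, with the $\CC^3$-part of the holomorphic volume form consumed by the reduction, into the cubic vertex $\tfrac16\int_{\CC^2\times\RR^7}\omega\wedge A\wedge\{A,A\}$, where $\{-,-\}$ is the holomorphic Poisson bracket on $\CC^2$ induced by $\omega$; the remaining pieces of $J$ and of the BF action vanish for degree reasons on the two-dimensional factor. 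This matches the definition of Poisson Chern--Simons theory given above.

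Finally, to make the equivalence precise, to avoid an exhaustive diagrammatic check, and in particular to see the enhancement to $G_2$-invariance, I would invoke an indirect cohomological argument. The translation-invariant, $SU(2)$-equivariant local deformations of free Poisson Chern--Simons theory on $\CC^2\times\RR^7$ compatible with the holomorphic/topological weights are classified by $H^1$ of the associated local deformation complex; a computation in the spirit of \S\ref{sec:locchar} and of \cite{GRWthf} should show this space is one-dimensional, spanned by the class of $A\wedge\{A,A\}$. Since the direct computation above produces a \emph{nonzero} element of this space --- nonzero already at the cubic vertex, and consistent with the index match of \S\ref{sec:locchar} --- the deformed theory must be Poisson Chern--Simons theory up to rescaling the coupling, which as in \S\ref{s:coupling} is immaterial in perturbation theory; together with the independent free-field check of \cite{EagerHahner}, this confirms the conjecture of \cite{CostelloM5,RY}. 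I expect the main obstacle to be the bookkeeping of the previous two steps: identifying precisely which component fields survive the deformed differential and verifying that integrating out the remainder generates the Poisson bracket vertex and nothing else, since a priori other $SU(3)\times SU(2)$-invariant interactions could arise --- it is exactly the cohomological uniqueness statement that rules these out and upgrades the manifest symmetry to $G_2\times SU(2)$.
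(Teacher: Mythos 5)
Your setup coincides with the paper's: the identification of the further-twisting supercharge as the unique $SU(2)\times SU(3)$-invariant line $\d z_1\wedge\d z_2$ in $\wedge^2 L^\vee$, its realization as the background value $\gamma_{nm}=\frac12(z_1\d z_2-z_2\d z_1)$ via the embedding of \S\ref{s:residual}, and the expansion of $S_{BF,\infty}+gJ$ around it are exactly how the paper begins, and your description of the surviving free fields is consistent with the quasi-isomorphism onto $\Pi\,\Omega^{0,\bu}(\CC^2)\hotimes\Omega^\bu(\RR^7)$ that the paper constructs. The genuine gap is in how you close the argument at the interacting level. Integrating out the fields made contractible by the deformed differential is a homotopy transfer, which generically produces vertices of all polynomial degrees, not just the cubic one; to identify the result you lean entirely on the claim that translation-invariant, $SU(2)$-equivariant deformations of the \emph{free} Poisson Chern--Simons theory form a one-dimensional space spanned by $A\{A,A\}$. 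That claim is unproven in your proposal ("should show"), it carries the whole weight of the theorem, and as stated it is too strong to be granted: the deformation complex of a free holomorphic-topological theory on $\CC^2\times\RR^7$ contains many closed local functionals (anything built from holomorphic derivatives of $A$ against the volume form is closed), and the constraint "compatible with the holomorphic/topological weights" is not made precise enough to cut this down to one dimension; in particular it does not obviously exclude deformations of higher polynomial degree, which is exactly what a transfer argument can produce.

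That this is not a hypothetical worry is visible in the paper's own proof: after restricting the eleven-dimensional action along an explicit \emph{non-linear} field identification \eqref{eqn:g2map} (in which $\gamma_z=0$ but $\mu_z$ depends nonlinearly on the Poisson Chern--Simons field), the action equals $S_{pCS}(\alpha)$ plus a residual term $\frac12\int\frac{1}{1-\til\alpha^3}\del^w\alpha^1(\til\alpha^2)^2\,\d^2 z$, whose non-exact part has polynomial degree $\geq 4$ --- precisely the kind of correction a cubic-only uniqueness statement cannot address. The paper disposes of it by a targeted cohomological vanishing: the leftover cocycle has weight $+2$ for the formal Poisson algebra $\CC[[z_1,z_2]]$, and Fuks' computation of Gelfand--Fuks cohomology shows there is no cohomology in that weight. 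This is a far weaker input than the blanket one-dimensionality you invoke. To repair your argument you would either have to actually prove a classification of the allowed deformations in all polynomial degrees (essentially a full Gelfand--Fuks computation), or do what the paper does: exhibit the explicit quasi-isomorphism, match the actions term by term, and only then appeal to a vanishing result in the single weight where a discrepancy survives.
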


From the point of view of the untwisted theory, the non-minimal twist is defined by working in a background where the fermionic ghost in the physical theory is equal to a supertranslation of the form
\[
Q + Q_{nm} 
\]
where $Q$ is the supertranslation which defines the minimal twist, see \S \ref{sec:mintwist}.
The minimal twist of supergravity is obtained by setting a fermionic ghost equal to $Q$. 

In the language of the minimal twist, the supercharge $Q_{nm}$ determines a square-zero element in the $Q$-cohomology of the original supersymmetry algebra (which we will denote by the same letter). 
The characterization of this cohomology in Proposition \ref{prop:susycoh} implies that $Q_{nm}$ is an element 
\[
Q_{nm} \in \wedge^2 \left(L^\vee\right)
\]
where $L \cong \CC^5$ is the defining $SU(5)$ representation. 
In other words, $Q$ is a translation invariant holomorphic two-form on $\CC^5$. 
The condition that $[Q_{nm}, Q_{nm}] = 0$ simply says that $Q_{nm}\wedge Q_{nm} = 0$ as a translation invariant four-form on $\CC^5$. 
By a linear change of coordinates, all such two-forms $Q$ are of the form $Q_{nm} = \d z_i \wedge \d z_j$ where $i,j=1,\ldots, 5$.

From hereon in this section we will rename coordinates by
\[
\CC^5 \times \RR = \CC^2_{z_i} \times \CC^3_{w_a} \times \RR
\]
which is most natural from the point of view of the non-minimal twist. 
We will fix the non-minimal supercharge 
\[
Q_{nm} = \d z_1 \wedge \d z_2 .
\]
Notice that this choice of supercharge breaks the holonomy of the eleven-dimensional theory from $SU(5)$ to $SU(2) \times SU(3)$. 

\subsection{Index matching}
\label{sec:indexcheck}

As a first consistency check, we can compare deformation invariants attached to the holomorphic twist and the nonminimal twist. We will find that the local character of the latter agrees with a specialization of the local character computed in~\S\ref{sec:locchar}

\begin{prop}
The  local character of the nonminimal twist of eleven-dimensional supergravity on flat space is given by
\[
\prod _{(n_1,n_2)\in \Z^2_{\geq 0}} \frac{1}{1-q^{-n_1+n_2}}.
\] 
This agrees with the specializaiton of the local character computed in proposition \ref{prop:locchar}.
\end{prop}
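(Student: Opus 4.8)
The plan is to compute the local character of the nonminimal twist directly from the description of the theory as Poisson Chern--Simons theory on $\CC^2 \times \RR^7$ given in Theorem~\ref{thm:nonmin}, and then to match it against the appropriate specialization of Proposition~\ref{prop:locchar}. Since the local character is insensitive to interactions, only the free BV complex matters: the field $A \in \Pi\,\Omega^{0,\bu}(\CC^2)\hotimes \Omega^\bu(\RR^7)$ has linearized equation of motion $(\dbar + \d_{\RR^7})A = 0$, so solutions localize onto holomorphic functions on $\CC^2$, constant along $\RR^7$. The space of linear local operators at the origin is therefore spanned by the derivatives $\partial_{z_1}^{n_1}\partial_{z_2}^{n_2}A(0)$ for $(n_1,n_2)\in\Z^2_{\geq 0}$, all of a single parity. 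Under the residual torus, with fugacity $q$ chosen so that $z_1$ has weight $q^{-1}$ and $z_2$ has weight $q$ (matching the weight assignment that survives the nonminimal twisting and the conventions of \S\ref{sec:indexcheck}), the operator $\partial_{z_1}^{n_1}\partial_{z_2}^{n_2}A(0)$ has weight $q^{-n_1+n_2}$, up to an overall normalization that I will fix by the requirement of $SU(2)$-invariance of the collection. The single-particle index is then $\sum_{(n_1,n_2)\in\Z^2_{\geq 0}} q^{-n_1+n_2}$, and applying the plethystic exponential gives $\prod_{(n_1,n_2)\in\Z^2_{\geq 0}} (1-q^{-n_1+n_2})^{-1}$, which is the claimed formula.

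For the matching statement, I would start from the formula in Proposition~\ref{prop:locchar},
\[
\prod_{i=1}^{5} \prod_{(m_i)\in \Z^5_{\geq 0}} \frac{1-q_1^{m_1+1}\cdots q_i^{m_i}\cdots q_5^{m_5+1}}{1-q_1^{m_1}\cdots q_i^{m_i+1}\cdots q_5^{m_5}},
\]
and perform the specialization of the $SU(5)$ fugacities dictated by the nonminimal twisting supercharge $Q_{nm} = \d z_1 \wedge \d z_2$. Concretely, the extra translations that become homotopically trivial after the further twist correspond to $\partial_{w_1},\partial_{w_2},\partial_{w_3}$ and (the remaining holomorphic and antiholomorphic pieces along $\CC^2$), so in the character this should manifest as sending the fugacities $q_3,q_4,q_5$ conjugate to the $\CC^3_w$ directions to $1$ (with the $SU(5)$ constraint $\prod q_i = 1$ forcing a compatible identification among $q_1,q_2$), leaving a single surviving fugacity $q$. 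The key computational point is that under this specialization most factors in the double product of Proposition~\ref{prop:locchar} become $1$ (numerator equals denominator) or cancel in conjugate pairs, exactly as the $\beta/\nu$ cancellation and the vanishing-of-time-direction phenomenon did in \S\ref{sec:locchar} and \S\ref{sec:m2branetwist}; the surviving factors reorganize into $\prod_{(n_1,n_2)} (1-q^{-n_1+n_2})^{-1}$.

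The main obstacle I anticipate is bookkeeping the specialization carefully: one must track which of the five index families $i=1,\dots,5$ and which monomial exponents survive once the $w$-fugacities are set to $1$, and confirm that the contributions which do \emph{not} obviously cancel really do assemble into the two-variable product over $\Z^2_{\geq 0}$ rather than, say, a product over $\Z^5_{\geq 0}$ with three trivial slots (which would overcount). The cleanest way around this is probably to redo the single-particle index computation of the nonminimal twist intrinsically (as in the first paragraph), obtain $\sum_{(n_1,n_2)} q^{-n_1+n_2}$ on the nose, and then separately check that the specialization of the \emph{single-particle} index \eqref{singleparticleindex} — rather than the full plethystic product — reduces to this same two-variable sum; since the plethystic exponential is injective on such series, equality of single-particle indices suffices. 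I would present it this way: compute the nonminimal single-particle index directly, specialize \eqref{singleparticleindex}, observe the two agree, and conclude by $\mathrm{PE}$.
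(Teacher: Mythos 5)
Your proposal is correct and is essentially the paper's own proof: the paper likewise computes the single-particle index directly from the free limit of the Poisson Chern--Simons description (linear operators $\partial^{n_1}\partial^{n_2}A(0)$ of weight $q^{-n_1+n_2}$), applies the plethystic exponential, and then checks agreement at the level of the single-particle index \eqref{singleparticleindex} rather than the full product. The only refinement worth making is in the specialization: rather than setting $q_3=q_4=q_5=1$ (which renders \eqref{singleparticleindex} an indeterminate $0/0$ and forces a limiting argument), the paper imposes only the constraints $q_1q_2=1$ and $q_3q_4q_5=1$ cutting out the Cartan of $SU(2)\times SU(3)$, under which \eqref{singleparticleindex} turns out to be independent of the remaining $SU(3)$ fugacities and collapses to $\frac{1}{(1-q)(1-q^{-1})}=\sum_{(n_1,n_2)\in\Z^2_{\geq 0}}q^{-n_1+n_2}$, exactly your two-variable geometric series.
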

\begin{proof}
The space of solutions to linearized equations of motion is parametrized by a holomorphic function $A$ on $\C^2_{w_j}$. The corresponding linear local operators are labeled by $(n_1,n_2)\in \Z^2_{\geq 0}$  and are given by 
\[
\boldsymbol{A}_{(n_1,n_2)} : A \mapsto \partial_{w_1}^{n_1}\partial_{w_2}^{n_2} A (0).
\]

The character of the linear span of these is given by the geometric series
\beqn\label{nonmin:singleparticle}
\sum _{(n_1,n_2)\in \Z^2_{\geq 0}} q^{-n_1+n_2}
\eeqn
with plethystic exponential given by 
\[
\prod _{(n_1,n_2)\in \Z^2_{\geq 0}} \frac{1}{1-q^{-n_1+n_2}}.
\]

For the last part, it suffices to observe the specialization at the level of single particle indices. A natural choice of fugacities for $SU(2)\times SU(3)$ is given in terms of the fugacities $q_i$ for $SU(5)$ chosen in~\S\ref{sec:locchar} by requiring the additional constraints \[q_1q_2 = 1, \ \ \ q_3q_4q_5=1.\]
After imposing the above constraints, the single particle index~\eqref{singleparticleindex} is
\[
i(q) = \frac{1}{(1-q)(1-q^{-1})}
\]
where $q = q_1=q_1^{-1}$. This is exactly the sum of the geometric series~\eqref{nonmin:singleparticle}.
\end{proof}


%

\subsection{The non-minimal global symmetry algebra}

We constructed an embedding of the $Q$-cohomology of the supersymmetry algebra into the fields of our eleven-dimensional theory on $\CC^5 \times \RR$. 
The further twist is obtained by working in a background where a certain field on $\CC^5 \times \RR$ takes nonzero value $Q_{nm}$. 
Explicitly, the element $Q_{nm} \in \wedge^2 L$ corresponds to the image under $\del$ of a $\gamma$-field of type $\Omega^{1,0}(\CC^5) \otimes \Omega^0(\RR)$. 
According to the embedding in \S \ref{s:residual} this is the $\gamma$-field 
\beqn\label{eqn:gammanm}
\gamma_{nm} = \frac12 (z_1 \d z_2 - z_2 \d z_1) \in \Omega^{1,0}(\CC^5) \otimes \Omega^0(\RR) 
\eeqn
Notice that $\del \gamma_{nm} = \d z_1 \wedge \d z_2$ as desired.

\parsec[sec:nmsymmetry]
Before proceeding to the proof of the theorem above, we perform a simple calculation of the global symmetry algebra present in the $Q_{nm}$-twisted theory. 

Recall that up to a copy of constant functions, the global symmetry algebra of the holomorphic twist of the eleven-dimensional theory is the super Lie algebra $E(5,10)$.
From this point of view, the global symmetry algebra of the $Q_{nm}$-twisted theory is given by deformation of this super Lie algebra by the Maurer--Cartan element 
\[
\d z_1 \wedge \d z_2 \in \Omega^{2}_{cl}(\CC^5) .
\]
We recall that the space of closed two-forms on $\CC^5$ is precisely the odd part of the super Lie algebra $E(5,10)$. 

We compute the cohomology of $E(5,10)$ with respect to the differential which is bracketing with this Maurer--Cartan element. 
Recall that we are using the holomorphic coordinates $(z_1,z_2,w_1,w_2,w_3)$ on $\CC^5$. 

There are the following brackets in the super Lie algebra $E(5,10)$ 
\begin{align*}
[f_l \partial_{z_l} , \d z_1 \wedge \d z_2] & = \del f_i \wedge \d z_j - \del f_j \wedge \d z_i \\
[g_a \partial_{w_a} , \d z_1 \wedge \d z_2] & = 0 \\
[h^{ab} \d w_a \wedge \d w_b , \d z_1 \wedge \d z_2 ] & = \ep_{abc} h^{ab} \partial_{w_c} .
\end{align*}
where $f_l \partial_{z_l}$, $g_a \partial_{w_a}$ are divergence-free vector fields on $\CC^5$ and $h^{ab} \d w_a \wedge \d w_b$ is a closed two-form. 

From these relations, we see that the following elements are in the kernel of $[\d z_1 \wedge \d z_2, -]$:
\begin{itemize}
\item $f(z_i, w_a)dz_1\wedge dz_2$ for $f$ a holomorphic function on $\CC^5$.
\item $f(z_i) \partial_{z_1} + g(z_i) \partial_{z_2}$ for holomorphic functions $f,g$ on $\CC_{z_1} \times \CC_{z_2}$ which satisfy 
\[
\del_{z_1} f + \del_{z_2} g = 0 .
\]
In other words, this is a divergence-free vector field on $\CC_{z_1} \times \CC_{z_2}$. 
\item $f_b(z_i, w_a) \partial_{w_b}$ for $f_b$ a holomorphic function on $\CC^5$ where $b=1,2,3$. 
\end{itemize}
It is immediate to check that these are the only nonzero elements in the kernel. 
Further, any element of the first type is clearly exact and any element of the last type is clearly exact by the closed two-form $\ep^{ijklm} f \d w_l \d w_m$. 

Thus, the cohomology is the (purely bosonic) Lie algebra of divergence-free vector fields on $\CC^2 = \CC_i \times \CC_j$
\[
H^\bu\big(E(5,10), [\d z_1 \wedge \d z_2, -] \big) \simeq \Vect_0(\CC^2) .
\]

We proved in Theorem \ref{thm:global} that the global symmetry algebra of the eleven-dimensional theory on $\CC^5 \times \RR$ is equivalent to a central extension $\Hat{E(5,10)}$ of the super Lie algebra~$E(5,10)$. 

The Lie algebra of divergence-free vector fields on $\CC^2$ also admits a central extension:
\beqn\label{eqn:centralvect}
0 \to \CC \to \cO (\CC^2) \to \Vect_0 (\CC^2) \to 0
\eeqn
where $\cO(\CC^2)$ is equipped with the Poisson bracket with respect to the symplectic form~$\d z_1 \wedge \d z_2$.
These two central extension are compatible. 


\begin{prop}
\label{prop:ham}
Let $\Hat{E(5,10)}$ be the central extension of $E(5,10)$ which is equivalent to the global symmetry algebra of the eleven-dimensional theory on $\CC^5 \times \RR$. 
Then there is an isomorphism of Lie algebras 
\[
H^\bu \big(\Hat{E(5,10)} , [\d z_1 \wedge \d z_2, -] \big) \simeq \cO(\CC^2) .
\]
\end{prop}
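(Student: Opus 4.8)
The plan is to realize $H^\bu(\Hat{E(5,10)}, [\d z_1 \wedge \d z_2, -])$ as the cohomology of $\Hat{E(5,10)}$ twisted by the Maurer--Cartan element $x_0 := \d z_1 \wedge \d z_2 \in \Omega^2_{cl}(\CC^5)$, to exhibit a small explicit sub-$L_\infty$-model for that twisted complex, and to identify that model with the Poisson algebra $\cO(\CC^2)$. First I would record the $L_\infty$ structure on $\Hat{E(5,10)}$ coming from Theorem~\ref{thm:global}(2): the underlying space is $E(5,10) \oplus \CC_b$ with $b$ central, $l_2$ is the bracket of $E(5,10)$, the only further bracket is $l_3(\mu,\mu',\alpha) = \langle \mu \wedge \mu', \alpha\rangle|_{z=0}\, b$, and $l_{\geq 4} = 0$. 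Since $x_0 \wedge x_0 = 0$ and $l_3(x_0,x_0,x_0) = 0$ (the ternary bracket needs two vector-field inputs), $x_0$ is Maurer--Cartan, and the twisted operations $l_k^{x_0} = \sum_{j \geq 0} \tfrac{1}{j!} l_{k+j}(x_0^{\wedge j}, -)$ are easy to read off: the differential is $l_1^{x_0} = [x_0, -]_2$ (the potential term $l_3(x_0, x_0, -)$ vanishes), the binary bracket is
\[
  l_2^{x_0}(\mu,\mu') = [\mu,\mu']_{E(5,10)} + l_3(x_0,\mu,\mu') = [\mu,\mu']_{E(5,10)} + \langle \mu \wedge \mu', \d z_1 \wedge \d z_2\rangle|_{z=0}\, b ,
\]
and $l_{\geq 3}^{x_0} = l_{\geq 3}$. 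The key structural observation is that the ternary bracket of $\Hat{E(5,10)}$, after twisting, reappears as the \emph{central term} of the binary bracket; this is the source of the nonsplit extension.

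Next I would produce a minimal model. Set $M := \Vect_0(\CC^2) \oplus \CC_b$, where $\Vect_0(\CC^2)$ sits inside $E(5,10)_+ = \Vect_0(\CC^5)$ as the divergence-free vector fields $f\,\partial_{z_1} + g\,\partial_{z_2}$ with $f, g$ holomorphic functions of $z_1, z_2$ only. I would check that $M$ is a sub-$L_\infty$-algebra of $\Hat{E(5,10)}^{x_0}$ with vanishing differential and vanishing higher brackets, hence a genuine Lie algebra: such a $\mu$ is the Hamiltonian vector field of a function for $\omega = \d z_1 \wedge \d z_2$, so $[x_0, \mu]_2 = L_\mu \omega = \d\,\iota_\mu \omega = 0$ and $b$ is closed; the Lie bracket of two such vector fields is again one, and the central correction lands in $\CC_b$; and $l_{\geq 3}^{x_0}|_M = l_{\geq 3}|_M = 0$ because $l_3$ requires a closed-two-form input that $M$ does not contain. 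Then I would check that the inclusion $M \hookrightarrow \Hat{E(5,10)}^{x_0}$ is a quasi-isomorphism: this is precisely the cohomology computation of~\S\ref{sec:nmsymmetry}, which identifies $H^\bu(E(5,10), [x_0, -])$ with $\Vect_0(\CC^2)$ and provides representatives lying in $M$; adjoining $\CC_b$ changes nothing, since $b$ is closed and never lies in the image of $[x_0, -]$ (that image consists of closed two-forms together with $\partial_w$-vector fields, both disjoint from $\Vect_0(\CC^2)$).

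Finally I would match $(M, l_2^{x_0})$ with $\cO(\CC^2)$ via $\Phi \colon \cO(\CC^2) \to M$, $F \mapsto (X_F, F(0)\,b)$, where $X_F$ is the Hamiltonian vector field of $F$ for $\omega = \d z_1 \wedge \d z_2$. This is a linear isomorphism, since $F \mapsto X_F$ has kernel the constant functions, which the summand $\CC_b$ restores, and every divergence-free vector field on $\CC^2$ is Hamiltonian. It is a morphism of Lie algebras: $[X_F, X_G] = X_{\{F, G\}}$ and $\langle X_F \wedge X_G, \d z_1 \wedge \d z_2\rangle = \{F, G\}$ up to an overall sign fixed by conventions, so $l_2^{x_0}(\Phi F, \Phi G) = (X_{\{F, G\}}, \{F, G\}(0)\, b) = \Phi(\{F, G\})$. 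Combining with the quasi-isomorphism of the previous step gives the stated isomorphism $H^\bu(\Hat{E(5,10)}, [\d z_1 \wedge \d z_2, -]) \simeq \cO(\CC^2)$; along the way $\Phi$ carries the central generator $b$ into $\CC \subset \cO(\CC^2)$, realizing the announced compatibility of the two central extensions~\eqref{eqn:centralvect}, and in fact showing that the global symmetry $L_\infty$ algebra of the $Q_{nm}$-twisted theory is quasi-isomorphic as an $L_\infty$ algebra to the strict Lie algebra $\cO(\CC^2)$.

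The step I expect to demand the most care is the bookkeeping of the twisted $L_\infty$ structure: one must see that the ternary bracket contributes nothing to the twisted differential but supplies precisely the pointwise-evaluation central term in the twisted binary bracket, and that no higher brackets survive on $M$. A secondary but essential point is that this central cocycle is nontrivial on $\Vect_0(\CC^2)$ — it yields $\cO(\CC^2)$ rather than the split extension $\Vect_0(\CC^2) \oplus \CC$ — which is exactly what evaluation at the origin guarantees.
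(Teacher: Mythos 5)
Your proposal is correct and follows essentially the same route as the paper: the identification of the cohomology with $\Vect_0(\CC^2)$ is the computation of \S\ref{sec:nmsymmetry}, and the crux in both arguments is that substituting $\alpha = \d z_1 \wedge \d z_2$ into the cocycle $\varphi$ of \eqref{eqn:cocycle} yields precisely the $2$-cocycle $(f_i \del_{z_i}, g_j \del_{z_j}) \mapsto (f_1 g_2 - f_2 g_1)(0)$ defining the extension \eqref{eqn:centralvect} of $\Vect_0(\CC^2)$ by $\CC$, i.e.\ $\cO(\CC^2)$ with its Poisson bracket. Your Maurer--Cartan twist bookkeeping, the strict submodel $\Vect_0(\CC^2)\oplus\CC_b$, and the Hamiltonian-vector-field isomorphism simply make explicit what the paper leaves implicit, namely how the ternary bracket descends to the central term of the induced binary bracket on cohomology.
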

\begin{proof}
The only thing to check is that, in cohomology, the cocycle defining the central extension of $E(5,10)$ is the cocycle exhibiting $\cO(\CC^2)$ as the central extension of divergence-free vector fields. 
Recall that the formula \eqref{eqn:cocycle} for the cocycle is 
\[
\varphi(\mu, \mu', \alpha) = \<\mu \wedge \mu', \alpha\>|_{z=0}.
\]

In cohomology, we obtain the cocycle for divergence-free vector fields by plugging in $\alpha = \d z_1 \wedge \d z_2$. 
This gives the cocycle on $\Vect_0(\CC^2)$ 
\[
(f_i \del_{z_i}, g_j \del_{z_j}) \mapsto (f_1 g_2 - f_2 g_1)(z_1=z_2=0) .
\]
This is the cocycle defining \eqref{eqn:centralvect}, as desired. 
\end{proof}.

This proposition implies that the global symmetry algebra of the non-minimal twist of eleven-dimensional supergravity is the Lie algebra $\cO(\CC^2)$. 
We will see that this is compatible with the calculation of the non-minimal twist of the full BV theory. 

\subsection{The non-minimal twist of the eleven-dimensional theory}

Now, we turn to deducing the action functional of the non-minimal twist and hence the proof of Theorem \ref{thm:nonmin}. 
We will show that the eleven-dimensional theory on $\CC^5 \times \RR$ placed in the background where the $(1,0)$ component of $\gamma$ takes the value $\gamma_{nm}$ \eqref{eqn:gammanm} is equivalent to a theory with a purely Chern--Simons-like action functional that we referred to in the introduction to this section. 

Poisson Chern--Simons theory is defined on any manifold of the form
\[
Z \times M
\]
where $Z$ is a hyper K\"ahler surface and $M$ is a smooth manifold of real dimension seven. 
The fundamental field of the theory is  
\[
\alpha \in \Pi \Omega^{0,\bu}(Z) \; \Hat{\otimes} \; \Omega^\bu(M)  .
\]
Just in our original eleven-dimensional theory, this theory is also only $\ZZ/2$ graded. 

The holomorphic symplectic form $\omega_Z^{2,0}$ on $Z$ induces a Poisson bracket define on all Dolbeault forms $\Omega^{0,\bu}(Z)$ which we denote by $\{-,-\}_{pb}$. 
In local Darboux coordinates $(z_1,z_2)$, this bracket reads
\[
\{\alpha^I (z,\zbar) \d \zbar_I , \alpha'^J (z,\zbar) \d \zbar_J\}_{pb} = (\partial_{z_1} \alpha^I \partial_{z_2} \alpha^J \pm \partial_{z_2} \alpha^I \partial_{z_1} \alpha^J) \d \zbar_I \wedge \d \zbar_J . 
\]
The action functional of Poisson Chern--Simons theory is 
\beqn\label{eqn:pcsaction}
    \frac12 \int_{Z \times M} (\alpha \wedge \d\alpha) \wedge \omega^{2,0}_Z  + \frac16 \int_{Z\times M} \alpha \wedge \{\alpha, \alpha\}_{pb} \wedge \omega^{2,0}_Z
\eeqn
where $\{-,-\}$ is the Poisson bracket induced from the symplectic form $\omega_Z$ on $Z$. 

For simplicity, we will work only on flat space $\CC^5 \times \RR = \CC^2_z \times (\CC^3_w \times \RR)$, where we view $Z = \CC^2_z$ as a hyper K\"ahler manifold with its standard holomorphic symplectic form $\omega^{2,0} = \d^2 z$.

We will decompose the fields according to these coordinates. 
\begin{proof}[Proof of Theorem \ref{thm:nonmin}]
Decompose the $\mu$-field as $\mu = \mu_z + \mu_w$ where
\begin{align*}
\mu_z  &\in \PV^{1,\bu}(\CC^2_z) \otimes \PV^{0,\bu}(\CC^3_w) \otimes \Omega^\bu(\RR) \\
\mu_w & \in \PV^{0,\bu}(\CC^2_z) \otimes \PV^{1,\bu}(\CC^3_w) \otimes \Omega^\bu(\RR)  .
\end{align*}
and similarly $\gamma = \gamma_z + \gamma_w$. 
We will also use the notation $\del^z$ for the holomorphic de Rham differential along $\CC_z^2$ and similarly $\del^w$ for the holomorphic de Rham differential along $\CC^3_w$. 

To twist, we expand near the background where the field $\gamma_z$ takes value $\gamma_{nm}$ as in \eqref{eqn:gammanm}. 
This will generate new kinetic and interacting terms. 

There are two types of interactions in the original theory.
The first is
\begin{equation}\label{eqn:int1}
  \frac12 \int_{\CC^2 \times \CC^3 \times \RR} \frac{1}{1-\nu} \left(\del \gamma \vee \mu^2 \right) \wedge (\d^2 z \wedge \d^3 w)
\end{equation}
and the second is
\begin{equation} \label{eqn:int2}
  \frac16\int_{\CC^2 \times \CC^3 \times \RR} \gamma \partial \gamma \partial \gamma .
\end{equation}

Expanding \eqref{eqn:int1} around the background where $\gamma$ takes value $\gamma_{nm}$, we obtain,
\begin{multline}
 \int \frac{1}{1-\nu} \left(\frac12 \del^w \gamma_w \vee \mu_w^2  + \del^z \gamma_w \vee \mu_w \mu_z + \del^w \gamma_z \vee \mu_w\mu_z + \frac12 \del^z \gamma_z \vee \mu_z^2 \right) \wedge (\d^2 z \wedge \d^3 w) 
 \\
  + \frac{1}{2} \int \frac{1}{1-\nu} \left(\d^2 z \vee \mu_z^2 \right) \wedge (\d^2 z \wedge \d^3 w) .
  \label{eqn:delta1}
\end{multline}

We similarly expand (\ref{eqn:int2}),
\beqn
\frac16 \int \left(\gamma_w \partial^z \gamma_w \partial^z \gamma_w +\gamma_w \partial^w \gamma_w \partial^z \gamma_z +  \gamma_w \partial^w \gamma_z \partial^w \gamma_z \right) + \frac{1}{2} \int \left(\gamma_w \partial^w \gamma_w \right) \wedge \d^2 z
\label{eqn:delta2}
\eeqn

The new terms in the non-minimally twisted linearized BRST differential arise from the quadratic terms in the action in Equations \eqref{eqn:delta1} and \eqref{eqn:delta2}:
\begin{equation}\label{eqn:newterms}
  \frac{1}{2} \int (\d^2 z \vee \mu_z^2) \wedge (\d^2 z \wedge \d^3 w) + \frac{1}{2} \int \left(\gamma_w \wedge \partial^w \gamma_w \right) \wedge \d^2 z .
\end{equation}
The non-minimally twisted linear BRST complex thus takes the form
\beqn\label{eqn:twisteddiagram}
  \begin{tikzcd}
  & \PV^{1,\bu}_Z \hotimes \PV^{0,\bu}_W \ar[dr, "\div^z"] \ar[dashed, rounded corners, to path={ -- ([yshift=-2ex]\tikztostart.west) |- ([xshift=-1.5ex]\tikztotarget.west) -- (\tikztotarget)}, dddddr]\\
  & & \PV^{0,\bu}_Z \hotimes \PV^{0,\bu}_W \\
 & \PV^{0,\bu}_Z \hotimes \PV^{1,\bu}_W \ar[ur, "\div^w"'] & \\
\;_{\cong}  & & \Omega^{0,\bu}_Z \hotimes \Omega^{1,\bu}_W \ar[ul, dashed, bend left = 10, "\Omega^{-1}_W \partial^w"]\\
 & \Omega^{0,\bu}_Z \hotimes \Omega^{0,\bu}_W \ar[ur, "\partial^w"] \ar[dr,"\partial^z"'] \\
  & & \Omega^{1,\bu}_Z \hotimes \Omega^{0,\bu}_W
  \end{tikzcd}
\eeqn
Here, we write $Z = \CC^2_z$ and $X = \CC^3_w$ for notational simplicity. 

Here, the dashed arrow along the outside of the diagram corresponds to the BV antibracket with the first term in (\ref{eqn:newterms}).
It is given by the isomorphism 
\[
\Omega^{1,\bu}_Z \hotimes \Omega^{0,\bu}_W \xto{\omega^{2,0}_Z \otimes \id} \PV^{1,\bu}_Z \hotimes \PV^{0,\bu}_W
\]
induced holomorphic symplectic form on $Z$. 
The other dashed arrow corresponds to the BV antibracket with the second term in (\ref{eqn:newterms}).
It is given by the composition
\[
\Omega^{0,\bu}_Z \hotimes \Omega^{1,\bu}_W \xto{\id \otimes \del^w} \Omega^{0,\bu}_Z \hotimes \Omega^{2,\bu}_W \xto{\id \otimes \Omega_W} \PV^{0,\bu}_Z \hotimes \PV^{1,\bu}_W
\]
given by applying the holomorphic de Rham operator along $X$ followed by contracting with the inverse holomorphic volume form along $X$. 

We replace this linear BRST complex, up to quasi-isomorphism, with a smaller BRST complex. 
Consider the complex
\beqn\label{eqn:zw}
\Omega^{0,\bu}_Z \hotimes \Omega^{\bu,\bu}_W \hotimes \Omega^\bu_L = \oplus_{k =0}^3 \Omega^{0,\bu}_Z \hotimes \Omega^{k,\bu}_W \hotimes \Omega^\bu_L 
\eeqn
which is equipped with the differential $\dbar^z + \dbar^w + \del^w + \d_{\RR}$. 
Write $\alpha = \alpha^0 + \cdots + \alpha^3$ for a field in this complex, using the decomposition on the right hand side. 
The full Poisson Chern--Simons action $S_{pCS}$ equips this complex with the structure of a dg Lie algebra. 

Define the following non-linear map of BRST complexes from \eqref{eqn:zw} to the twisted theory \eqref{eqn:twisteddiagram}.
It is defined by the equations
\begin{multline}
\mu_z = (1-\til \alpha^3) (\del_{z_1} \wedge \del_{z_2}) \vee \del^z \alpha^0 , \quad \mu_w = (\del_{w_1} \wedge \del_{w_2} \wedge \del_{w_3}) \vee \alpha^2, \quad \nu = \til{\alpha}^3 \\
\beta = \alpha^0 , \quad \gamma_w = \alpha^1 , \quad \gamma_z = 0 .
\label{eqn:g2map}
\end{multline}
In the above equation we have introduced the notation $\til{\alpha}^3 = \Omega_X^{-1} \vee \alpha^3$. 
Notice the only non-linearity appears in the definition of $\mu_z$. 

The restriction of the kinetic terms $\int \gamma (\dbar + \d_{\RR}) \mu + \beta (\dbar + \d_{\RR}) \nu$ along \eqref{eqn:g2map} is
\beqn\label{eqn:kin1}
\int \sum_{k=0}^3 \alpha^k (\dbar + \d_{\RR}) \alpha^{3-k} 
\eeqn
The restriction of the kinetic term $\int \beta \div \mu$ along \eqref{eqn:g2map} is
\beqn\label{eqn:kin2}
\int \alpha^0 \del^w \alpha^2 - \int \alpha^0 \del^z \alpha^0 \del^z \alpha^3 . 
\eeqn
Finally, the restriction of the kinetic term $\frac12 \int \gamma \del^w \gamma$ in \eqref{eqn:delta2} along \eqref{eqn:g2map} is 
\beqn\label{eqn:kin3} 
\int \frac12 \alpha^1 \del^w \alpha^1 . 
\eeqn
Together, \eqref{eqn:kin1}--\eqref{eqn:kin3} give the kinetic term in Poisson Chern--Simons theory. 

The formulas above show that the linear terms in \eqref{eqn:g2map} define a map of linear BRST complexes.
Applying the apparent contracting homotopy, we see that this map is a quasi-isomorphism.
We will show that the full non-linear map intertwines the action functionals up to cohomologically exact terms, and hence defines an equivalence of BV theories.

We substitute the values for the fields in \eqref{eqn:g2map} into the original eleven-dimensional action. 
Notice that any terms involving $\gamma_z$ can be discarded. 
Restricting \eqref{eqn:delta1} along this map, we obtain the action functional
\beqn\label{eqn:rest1}
\frac12 \int \frac{1}{1-\til\alpha^3}  \del^w \alpha^1 (\til \alpha^2)^2 \d^2 z + \int \alpha^2 \del^z \alpha^0 \del^z \alpha^1
+ \frac12 \int (1-\alpha^3) \del^z \alpha^0 \del^z \alpha^0 .
\eeqn
Here, $\til \alpha^2$ denotes the element of $\Omega^{0,\bu}_Z \hotimes \PV^{1,\bu}_W \hotimes \Omega^\bu_L$ corresponding to $\alpha^2$ determined by the Calabi--Yau form $\d^3 w$. 
Notice that the very last term is equivalent to the functional $-\frac12\int \alpha^3 \del^z \alpha^0 \del^z \alpha^0$ since the quadratic part is a total derivative. 

There is only one cubic term left in \eqref{eqn:delta2} when we substitute the fields according to \eqref{eqn:g2map}.
It is
\beqn
\frac16 \int \alpha^1 \del^z \alpha^1 \del^z \alpha^1. 
\eeqn

Combining all of these terms, we see that the total action restricted along the map \eqref{eqn:g2map} is 
\beqn\label{eqn:pcs1}
S_{pCS} (\alpha) + \int \frac12 \frac{1}{1-\til\alpha^3} \del^w \alpha^1 (\til \alpha^2)^2 \d^2 z
\eeqn
where $S_{pCS}$ is the Poisson Chern--Simons action in \eqref{eqn:pcsaction}.

We will show that the term not appearing in $S_{pCS}(\alpha)$ is cohomologically trivial. 
Consider the odd local functional
\beqn\label{eqn:triv1}
\frac16 \int \frac{1}{1 - \til \alpha^3} \alpha^2 (\til \alpha^2)^2 .
\eeqn
Applying the linearized BRST operator (in the non-minimal twist) this becomes 
\[
\frac12 \int \frac{1}{1 - \til \alpha^3} \del^w \alpha^1 (\til \alpha^2)^2 + \frac16 \int \frac{1}{1 - \til \alpha^3} \del^w (\alpha^2) \alpha^2 (\til \alpha^2)^2 .
\]
The first term in this expression agrees the term in \eqref{eqn:pcs1} which is not in $S_{pCS}(\alpha)$.

The latter term $(1 - \til \alpha^3)^{-1} \del^w (\alpha^2) \alpha^2 (\til \alpha^2)^2$ is of polynomial degree $\geq 4$. 
Since this term has trivial self BV bracket, it determines a cocycle for the dg Lie algebra underlying Poisson Chern--Simons theory. 
Since it is manifestly translation invariant, it arises via descent from a cocycle in the dg Lie algebra of $\infty$-jets of fields at $0 \in\CC^2 \times \RR^7$.
This dg Lie algebra is quasi-isomorphic to the Lie algebra $\CC[[z_1,z_2]]$ equipped with the holomorphic Poisson bracket.
(This is the formal power series version of the Lie algebra from Proposition~\ref{prop:ham}.)

There is a weight grading on this Lie algebra, given by declaring $|z_1^{n+1} z_2^{m+1}| = n+m$; in turn, this induces a grading on the Gelfand--Fuks Lie algebra cohomology.  
The weight of the Gelfand--Fuks class corresponding to our deformation is $+2$, since no derivatives of $z_1$ or $z_2$ appear. 
Results from~\cite{Fuks} show that there is no cohomology in this weight. 
As such, the cocycle must define a trivial deformation of the dg Lie algebra up to equivalence.
This completes the proof that the non-minimal twist is equivalent to Poisson Chern--Simons theory. 
\end{proof}

\def\im{{\rm i}}

\section{Dimensional reduction and ten-dimensional supergravity}
\label{sec:dimred}

In this section we demonstrate that our proposal for the action of minimally twisted eleven-dimensional supergravity agrees with conjectural descriptions of twisted type IIA and type I supergravities due to Costello and Li. 

The original motivation for M-theory was as the strong coupling limit for type IIA string theory.
Roughly, the radius of the M-theory circle plays the role of this coupling constant. 
Additionally, at low energies M-theory is expected to be approximated by eleven-dimensional supergravity in the same way that the low energy limit of type IIA/IIB string theory is type IIA/IIB supergravity. 
Combining these two pictures, various checks have been made that the dimensional reduction of eleven-dimensional supergravity along the M-theory circle is type IIA supergravity. 

Motivated by the topological string, Costello and Li have laid out a series of conjectures for twists of type IIA/IIB supergravity \cite{CLsugra} and type I supergravity \cite{CLtypeI}. 
Their description was inspired by the model of the open and closed $B$-model topological string on a Calabi--Yau manifold. 
The open sector is holomorphic Chern--Simons theory \cite{WittenOpen} and the closed sector is called Kodaira--Spencer theory \cite{BCOV}. 
There are a few different versions of Kodaira--Spencer theory, but the shared characteristic is that they are all `gravitational' in nature; they describe fluctuations of the Calabi--Yau structure. 
From this point of view, Kodaira--Spencer theory is at the heart of the formulation of the various flavors of twisted ten-dimensional supergravity.

We begin by introducing certain variants of Kodaira--Spencer theory which will feature in the descriptions of twists of type IIA and type I supergravity.

\subsection{Kodaira--Spencer theory}
\label{s:BCOV}

Let $X$ be a Calabi--Yau manifold; for now it can be of arbitrary complex dimension $d$. 
Define
\deq{
  \PV^{i,j}(X) = \Omega^{0,j}(X, \wedge^i \T_X).
}
We will consider the graded space $\PV^{\bu,\bu}(X) = \oplus_{i,j} \PV^{i,j}(X)[-i-j]$ where the piece of type $(i,j)$ sits in degree $i+j$. 

For each fixed $i$, while we let $j$ vary, the $\dbar$ operator defines a cochain complex $\PV^{i,\bu}(X) = (\oplus_j\PV^{i,j}(X) [-j], \dbar)$ which provides a resolution for the sheaf of holomorphic polyvector fields of type $i$. 
The divergence operator extends to an operator of the form
\[
\div \colon \PV^{i,\bu}(X) \to \PV^{i-1,\bu}(X) .
\]

Motivated by the states of the topological $B$-model, one defines the fields of Kodaira--Spencer gravity on $X$ to be the cochain complex
\beqn\label{eqn:ks1}
\left(\PV^{\bu,\bu} (X)[[u]] [2] \, , \, \dbar + u \div\right) .
\eeqn 
Here, $u$ is a parameter of cohomological degree $+2$, which turns $\delta_{KS}^{(1)} = \dbar + u \div$ into an operator of homogenous degree $+1$. 
We also have performed an overall cohomological shift by $2$ so that $u^k \PV^{i,j}$ sits in degree $i+j+2k-2$. 
More precisely, this is a model for the $S^1$-equivariant cohomology of the states of the $B$-model on a closed disk. 
We refer to \cite{CLtypeI, CLsugra} for detailed justification for this ansatz. 

\parsec[s:poisson]
The original action for Kodaira--Spencer theory posited by \cite{BCOV} has a nonlocal kinetic term. In the BV formalism, this is codified by stipulating that the BV pairing is a degenerate odd Poisson tensor rather than an odd symplectic form. 
The Poisson kernel is given by the expression 
\[
(\div\otimes 1)\delta_{\Delta \subset X\times X} \in \left[\PV^{\bu,\bu}(X)\right]^{\hotimes 2} ,
\]
see \cite[{\S 1.4}]{CLbcov1}. 
Here, we view the $\delta$-distribution as a polyvector field using the Calabi--Yau form. 
Notice that the shifted Poisson tensor does not involve the parameter $u$ at all. 
For this reason, only the duals of a small number of fields pair nontrivially under the resulting odd BV bracket.

\parsec[s:ksaction] 

There is a natural local interaction which equips the complex \eqref{eqn:ks1} with the structure of $\Z/2$ graded Poisson BV theory. Explicitly, it is given by 
\beqn
I_{BCOV}(\Sigma) = {\rm Tr}_X \, \langle \exp \Sigma\rangle_0 = \sum_{n\geq 0} {\rm Tr}_X \, \langle\Sigma^{\otimes n}\rangle_0
\eeqn
where ${\rm Tr}_X \, \Phi = \int_X (\Phi \vee \Omega) \wedge \Omega$ and where $\langle - \rangle_0$ denotes the genus zero Gromov-Witten invariant with marked points
\beqn
\langle u^{k_1}\mu_1 \otimes \cdots \otimes u^{k_m}\mu_m\rangle_0 := \left (\int _{\overline {\cM}_{0,m}} \psi_1^{k_1}\cdots \psi_m^{k_m}\right ) \mu_1\cdots \mu_m = \binom{m-3}{ k_1,\cdots, k_m}  \mu_1\cdots \mu_m.
\eeqn

This interaction is extremely natural from the point of view of string field theory. Indeed, the B-model localizes to the space of constant maps into $X$, which factors as a product of $\overline{\cM}_{0,m}\times X$. This is in keeping with finding an interaction that factors as an integral over $X$ times an integral over $\overline{\cM}_{0,m}$. 

In \cite{BCOV} the authors show that the above interaction satisfies the classical master equation. Moreover, they show that the $L_\infty$ structure determined by the above action is equivalent to a natural dgla structure on the complex of fields with Lie bracket given by the Schouten bracket. Explicitly, the equivalence is given by the transcendental automorphism 
\[
\Sigma \mapsto [u(\exp (\Sigma/u)-1)]_+
\]
where $[-]_+$ denotes projection onto positive powers of $u$.

\parsec[s:minimalks]

We pointed out in \S\ref{s:poisson} that the majority of fields pair to zero under the Poisson tensor. Physically these correspond to closed string fields that do not propogate. In the supergravity approximation, the fields that survive are those closed string fields that propogate. In terms of our description of closed string field theory in terms of Kodaira--Spencer theory, this motivates us to consider the smallest cochain complex containing those fields thathave nonzero pairing under the Poisson tensor. This is referred to as minimal Kodaira--Spencer theory.

The fields of minimal Kodaira--Spencer theory are given by the subcomplex of \label{eqn:ks1}
\beqn
\left (\bigoplus_{i+j\leq d -1}u^i\PV^{j,\bu}(X)[2], \dbar + u \div\right).
\eeqn
We observe that the original odd Poisson tensor lives in this subcomplex. 
There is a natural action functional given by restricting $I_{BCOV}$ to this space.

\subsection{The $SU(4)$ twist of type IIA supergravity}\label{sec:SU(4)twist}

We recall the description of the $SU(4)$ twist of type IIA supergravity conjectured in \cite{CLsugra}. 
In principal, there is also a minimal, $SU(5)$ invariant, twist of type IIA supergravity but so far no description, even conjecturally, exists.
We turn to this in \S \ref{s:su5IIA}. 

Let $X$ be a Calabi--Yau manifold of complex dimension four. 
The $\ZZ/2$ graded complex of fields of minimal Kodaira--Spencer theory on $X$ takes the form
\beqn
\begin{tikzcd}
- & + & - & + \\ \hline
                        &                          &                                     & {\PV^{0,\bu}}    \\
                        &                          & {\PV^{1,\bu}} \arrow[r, "u\div"]    & {u\PV^{0,\bu}}   \\
                        & {\PV^{2,\bu}} \arrow[r, "u\div"]  & {u\PV^{1,\bu}} \arrow[r, "u\div"]   & {u^2\PV^{0,\bu}} \\
{\PV^{3,\bu}} \arrow[r, "u\div"] & {u\PV^{2,\bu}} \arrow[r, "u\div"] & {u^2\PV^{1,\bu}} \arrow[r, "u\div"] & {u^3\PV^{0,\bu}}
\end{tikzcd}.
\eeqn
Denote this complex by $\cE_{mKS}(X)$. 
Here, $u^\ell\PV^{k,i}$ is placed in parity $k + i -1 \mod 2$. 
The classical BCOV action $I_{BCOV}$ follows from the general formula we gave above. 

With this in hand the conjecture of \cite{CLsugra} takes the following form.

\begin{conj}
The $SU(4)$-invariant twist of type IIA supergravity on $\RR^2\times \CC^4$ is the $\Z/2$-graded Poisson BV theory with fields 
\beqn\label{eqn:IIAfields}
\alpha = \sum_n \alpha_n u^n \in \cE_{mKS}(\CC^4) \otimes \Omega^{\bu}(\RR^2).
\eeqn
The classical interaction takes the form \[I_{IIA} = \int_{\CC^4 \times \RR^2} \alpha_0^3 + \cdots\]
\end{conj}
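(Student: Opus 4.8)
Since the assertion is a conjecture of Costello--Li, a literal proof is not available: twisted type IIA supergravity has no definition independent of such proposals. What I would aim to prove is the statement \emph{relative to} the main conjecture of the introduction---that $(\cE, S_{BF,\infty}+gJ)$ on $\CC^5 \times \RR$ is the minimal twist of eleven-dimensional supergravity---together with the physical input that Kaluza--Klein reduction along a circle commutes with twisting. Granting these, the plan is to reduce the eleven-dimensional model along the M-theory circle and identify the result, as a $\ZZ/2$-graded Poisson BV theory, with minimal Kodaira--Spencer theory on $\CC^4$ coupled to $\Omega^\bullet(\RR^2)$. Concretely: write $\CC^5 \times \RR = \CC^4_w \times \CC_z \times \RR_t$ and reduce along the real line $\{0\} \times \RR \times \{0\} \subset \CC^4 \times \CC \times \RR$ (equivalently, along a circle in the last $\CC$), landing on $\CC^4_w \times \RR \times \RR_t = \CC^4 \times \RR^2$. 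Because the minimal supercharge $Q$ of \S\ref{sec:mintwist} acts trivially on translations along this direction---it lies in a direction not in the image of $Q$---twisting and reducing commute, so the reduced theory is the $SU(4)$ twist of type IIA supergravity; the content of the proof is then to compute this reduction and match it to $\cE_{mKS}(\CC^4) \otimes \Omega^\bullet(\RR^2)$ with $I_{BCOV}$.

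\textbf{Reduction of the linear theory.} The key input is a K\"unneth-type lemma: the reduction of the Dolbeault complex of holomorphic polyvector fields along $\CC_z$---that is, translation-invariant sections along the reduced line---produces, up to quasi-isomorphism, the $u$-graded complex underlying (minimal) Kodaira--Spencer theory, with $u$ placed in cohomological degree $+2$; this is the standard mechanism relating circle reduction of holomorphic theories to BCOV-type theories (cf.\ \cite{CLbcov1, CostelloHol}). Since $\T_{\CC_z}$ has rank one, $\PV^{i,\bullet}(\CC_z)$ is supported in $i \in \{0,1\}$, and the single nontrivial divergence $\div \colon \PV^{1,\bullet}(\CC_z) \to \PV^{0,\bullet}(\CC_z)$ becomes, under reduction, the operator raising $u$-degree. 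Splitting polyvectors on $\CC^5$ as $\wedge^\bullet \T_{\CC^4} \otimes \wedge^\bullet \T_{\CC_z}$ and reassembling with the surviving $\RR_t$ direction, the linear BV complex of the eleven-dimensional theory goes over, after passing to the subcomplex that survives the reduction, to $\cE_{mKS}(\CC^4) \otimes \Omega^\bullet(\RR^2)$ with differential $\dbar + u\,\div$; this reproduces the field content \eqref{eqn:IIAfields} of the conjecture, and the degenerate Poisson tensor of \S\ref{s:poisson} is precisely what remains of the nondegenerate BV pairing upstairs once one restricts to the minimal fields.

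\textbf{Reduction of the interactions.} One then tracks the two vertices of $S_{BF,\infty}+gJ$. It is cleanest to pass first to the strict model $S_{BF,0}$ via \eqref{eq:newbase}, since that matches the eleven-dimensional side with the dgla model of minimal Kodaira--Spencer whose bracket is the Schouten bracket (\S\ref{s:ksaction}). In that presentation, the Chern--Simons-type vertex $J$ reduces to a functional of the form $\int_{\CC^4 \times \RR^2} \alpha_0^3 + (\text{$u$-dependent terms})$, which reproduces the leading term $I_{IIA} = \int \alpha_0^3 + \cdots$ of the conjecture, while the remaining vertex $\tfrac12 \int^\Omega \frac{1}{1-\nu}\, \mu^2 \vee \partial\gamma$ supplies the higher, $u$-dependent terms of $I_{BCOV}$ restricted to the minimal subcomplex. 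To see that the reduced action agrees with $I_{BCOV}$ on the nose rather than only up to $L_\infty$-equivalence, I would compare the transcendental change of variables $\Sigma \mapsto [u(\exp(\Sigma/u)-1)]_+$ of \S\ref{s:ksaction} with the automorphism $\Psi_\infty$ of \eqref{eq:newbase} and check that the two are intertwined by the reduction. As independent corroboration one matches the residual $SO(2) \times SU(4)$ symmetry and the reduced supersymmetry algebra on both sides, which follows from \S\ref{sec:susy} by further reducing the $Q$-cohomology computed there along the circle.

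\textbf{Main obstacle.} The principal difficulty is conceptual: the conjecture only acquires content once ``twisted type IIA supergravity'' is pinned down, and the honest reading makes the statement a theorem precisely when that notion is \emph{defined} as the circle reduction of twisted eleven-dimensional supergravity---whereupon it inherits the contingency of the main conjecture. Beyond that, the genuine technical hurdle is the reduction of the \emph{interacting} theory: one must show that fibre integration of $S_{BF,\infty}+gJ$ along the circle lands exactly on $I_{BCOV}$ restricted to the minimal subcomplex (after the appropriate transcendental automorphism), which requires controlling the non-polynomial vertices, verifying that the descended antibracket is the degenerate Poisson tensor of Kodaira--Spencer theory, and ruling out stray contributions from the reduction---none of which is automatic. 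This is the computation carried out in \S\ref{s:su4red}.
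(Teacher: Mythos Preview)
You correctly recognize that this is a conjecture without an intrinsic proof, and that the only sensible thing to argue is the \emph{consistency check}: the dimensional reduction of the eleven-dimensional model along a holomorphic circle lands on the Costello--Li description. That is exactly what the paper does in \S\ref{s:su4red} (Proposition~\ref{prop:dimred}). But your outline of that computation diverges from the paper's in a way that matters.

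The paper does \emph{not} attempt to produce the $u$-graded minimal Kodaira--Spencer complex with its degenerate Poisson tensor directly from the reduction. Instead it first replaces the conjectural IIA model by an equivalent \emph{potential theory} (\S\ref{sec:IIApot}): the fields $\Pi$ and $\sigma$ are traded for primitives $\gamma,\theta$ satisfying $\Pi\vee\Omega=\partial\gamma$ and $\sigma\vee\Omega=\partial\theta$, together with a new ghost $\beta$. This potential theory has a \emph{nondegenerate} BV pairing, so it can be compared directly to the reduction of a genuine BV theory---no descent from symplectic to Poisson is needed. The reduction itself is then entirely concrete: one splits $\mu_{11d}=\mu^0+\theta\,\partial_{z_5}$ and $\gamma_{11d}=\gamma^0+\eta\,\d z_5$, sets $y=0$ and $\d\zbar_5\mapsto\d x$, and reads off the action. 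The match with $\til I_{IIA}$ requires a final nonlinear field redefinition $\til\theta=\frac{1}{1-\nu}\theta$, $\til\eta=(1-\nu)\eta$, $\til\beta=\beta+\frac{1}{1-\nu}\eta\theta$; this, and not the automorphism $\Psi_\infty$ you propose, is the change of variables that does the work.

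Your route---passing to the strict model $S_{BF,0}$ and trying to intertwine $\Psi_\infty$ with the transcendental BCOV automorphism---is conceptually appealing but would have to confront the symplectic-to-Poisson issue head on, and your claim that ``the degenerate Poisson tensor is precisely what remains of the nondegenerate BV pairing upstairs once one restricts to the minimal fields'' is exactly the nontrivial step that the paper sidesteps by introducing potentials. You also have the bookkeeping of the interactions backwards: upon reduction, $J$ produces the single term $\int\eta\,\partial\gamma\,\partial\gamma$ (the third term of $\til I_{IIA}$), while the BF vertex $\frac12\int\frac{1}{1-\nu}\mu^2\vee\partial\gamma$ supplies the other two; neither yields ``$\int\alpha_0^3$'' in any literal sense.
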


We will need a more detailed description of the classical action. 
For the moment, let us introduce some notations for the fields of this IIA model. As always, we leave the internal Dolbeault degree implicit:
\begin{multline}
\eta \in \PV^{0,\bu}(\CC^4) \otimes \Omega^\bu (\RR^2), \quad \mu + u \nu \in \PV^{1, \bu}(\CC^4) \otimes \Omega^\bu (\RR^2) \oplus u \PV^{0,\bu} (\CC^4) \otimes \Omega^\bu (\RR^2) \\
\Pi \in \PV^{3,\bu}(\CC^4) \otimes \Omega^\bu(\RR^2), \quad \sigma \in \PV^{3,\bu}(\CC^4) \otimes \Omega^\bu (\RR^2) .
\end{multline}
We will not need an explicit notation for the remaining descendant fields. 

With this notation in hand, we have the more precise form of the action appearing in the conjecture:
\beqn\label{eqn:IIAaction}
I_{IIA} = \frac12 {\rm Tr}_{\CC^4 \times \RR^2} \frac{1}{1-\nu} \mu^2 \wedge \Pi + {\rm Tr}_{\CC^4 \times \RR^2} \frac{1}{1-\nu} \eta \wedge \mu \wedge \sigma + \frac12 {\rm Tr}_{\CC^4 \times \RR^2} \frac{1}{1-\nu} \eta \wedge \Pi^2 + \cdots 
\eeqn
where the $\cdots$ denotes terms involving higher-order descendants. 

\subsection{Reduction to IIA supergravity}
\label{s:su4red}

We now turn back to our eleven-dimensional theory. 
The first goal is to compare the dimensional reduction of our eleven-dimensional theory on $\CC^5 \times \RR$
with the $SU(4)$ invariant twist of type IIA on $\R^{2}\times \CC^4$. 
Doing so will require a slight modification to the description of the $SU(4)$ twist of IIA supergravity recollected in \S \ref{sec:SU(4)twist}. 

\parsec[sec:IIApot]

Recall that in the physical theory, the components of the $C$-field in eleven dimensions that are not supported along the M-theory circle become the components of the Ramond--Ramond 2-form of type IIA. However, as noted in \cite{CLsugra} components of Ramond--Ramond fields do not appear as fields in Kodaira--Spencer theory; rather it is components of their field strengths that appear. 
We recalled in \S \ref{s:components} that components of the $C$-field become components of $\gamma_{11d}$ in $\cE$.
This suggests that we must modify our description of the twist of type IIA to include potentials for certain fields.

The fundamental fields of the $SU(4)$ twist of IIA supergravity were given in \eqref{eqn:IIAfields}. 
We modify the space of fields by introducing potentials for both the $\Pi$ and $\sigma$ fields. 
First, we introduce a field $\gamma \in \Omega^{1,\bu}(\CC^4) \otimes \Omega^\bu(\RR^2)$ (not to be confused, yet, with the $\gamma$ field in our eleven-dimensional theory) which satisfies $\Pi \vee \Omega = \del \gamma$ where $\Omega$ is the Calabi--Yau form on $\CC^4$. 
This condition does not uniquely fix $\gamma$. 
There is a new linear gauge symmetry determined by $\gamma \to \gamma + \div \beta$ where $\beta$ is a ghost that we must also introduce. 
Similarly, we introduce a field $\theta \in \Omega^{0,\bu}(\CC^4) \otimes \Omega^\bu(\RR^2)$ which satisfies $\sigma \vee \Omega = \del \theta$, there is no extra gauge symmetry present in this condition.\footnote{Using the Calabi--Yau form we have normalized the potential fields $\gamma, \beta,\theta$ to be written as differential forms instead of polyvector fields.}

In diagrammatic detail, the potential theory we are considering has underlying cochain complex of fields
\beqn\label{eqn:IIApot}
\begin{tikzcd}
- & + \\ \hline
& {\PV^{0,\bu} (\CC^4) \otimes \Omega^\bu (\RR^2) }_\eta  \\
{\PV^{1,\bu} (\CC^4) \otimes \Omega^\bu (\RR^2)}_\mu \arrow[r, "u\div"] & u{\PV^{0,\bu} (\CC^4) \otimes \Omega^\bu (\RR^2)}_\nu \\
u^{-1}{\Omega^{0,\bu} (\CC^4) \otimes \Omega^\bu (\RR^2)}_\beta \arrow[r, "u\del"] & {\Omega^{1,\bu} (\CC^4) \otimes \Omega^\bu (\RR^2)}_\gamma  \\
{\Omega^{0,\bu} (\CC^4) \otimes \Omega^\bu (\RR^2)}_\theta &
\end{tikzcd}
\eeqn.

The original space of fields of the twist of IIA supergravity on $\CC^4 \times \RR^2$ was equipped with an odd Poisson bivector which was degenerate.
In other words, it did not define a theory in the conventional BV formalism. 
One of the key features of this new complex of fields, after we have taken these potentials, is that it is equipped with an odd nondegenerate pairing, thus equipping it with the structure of a theory in the conventional BV formalism. 

The pairing is $\Res_u \frac{\d u}{u} \int^\Omega_{\CC^4 \times \RR^2} \alpha \vee \alpha'$ where $\alpha, \alpha'$ are two general fields in this potential theory on $\CC^4 \times \RR^2$. 
Explicitly, in the description of the fields in \eqref{eqn:IIApot} the pairing is 
\[
\int^\Omega_{\CC^4 \times \RR^2} \eta \theta + \int^\Omega_{\CC^4 \times \RR^2} \mu \vee \gamma + \int^\Omega_{\CC^4 \times \RR^2} \nu \beta .
\]
This pairing is compatible with the odd Poisson bracket present in the original theory on $\CC^4 \times \RR^2$.

The type IIA action completely determines the action of this theory with potentials. 
One simply takes the \eqref{eqn:IIAaction} and replaces all appearances of $\Pi$ with $\div \gamma$ and all appearances of $\sigma$ with $\div \theta$. 
This yields the interaction of the potential theory
\beqn\label{eqn:IIAactionpot}
\til I_{IIA} = \frac12 \int^\Omega_{\CC^4 \times \RR^2} \frac{1}{1-\nu} \mu^2 \vee \del \gamma + \int^\Omega_{\CC^4 \times \RR^2} \frac{1}{1-\nu} (\eta \wedge \mu) \vee \del \theta + \frac12 \int_{\CC^4 \times \RR^2} \frac{1}{1-\nu} \eta \wedge \del \gamma \wedge \del \gamma 
\eeqn
Notice that the terms involving higher descendants vanishes since these fields are set to zero in the potential theory.


\parsec[-]

We turn to the proof of the main result of this section that the dimensional reduction of our eleven-dimensional theory agrees with the twist of IIA supergravity just introduced. 

We recall the notion of dimensional along a holomorphic direction following \cite{ESW}. 
Suppose that $V_\RR$ is a real vector space and denote by $V$ its complexification. 
We consider a field theory defined on $M \times V$, which is holomorphic along $V$ (in particular, this means that the theory is translation invariant along $V$).  
We consider the dimensional reduction along the projection 
\beqn\label{eqn:dimred}
M \times V \to M \times V_\RR
\eeqn
induced by ${\rm Re} \colon V \to V_\RR$.
Most relevant for us is the case when $V = \CC$ and $M$ is $\CC^4 \times \RR$, but the explicit form of the theory along $M$ is not important at the moment.

For illustrative purposes, let us first assume that $M$ is a point and that the space of fields is of the form $\Omega^{0,\bu}(V) \otimes W$ for $W$ some graded vector space. 
As properly formulated in \cite{ESW}, it is shown that the dimensional reduction along $V \to V_\RR$ is equivalent to the theory whose fields are $\Omega^\bu(V_\RR) \otimes W$. 
In other words, the dimensional reduction of the holomorphic theory on $V$ is a topological theory on $V_\RR$. 

If we put $M$ back in, the result is similar. 
Suppose the original theory is of the form $\cE(M) \otimes \Omega^{0,\bu}(V) \otimes W$.
Then, the dimensional reduction along \eqref{eqn:dimred} is the theory whose space of fields is $\cE(M) \otimes \Omega^\bu(V_\RR) \otimes W$.

An explicit model for this reduction can be described as follows. 
Suppose $V \cong \CC^n$ and place the theory on $(\CC^\times)^{\times n} \subset \CC^n$. 
The dimensional reduction along $\CC^n \to \RR^n$ agrees with the compactification of the theory along $S^1 \times \cdots \times S^1$ where one throws away all nonzero winding modes around each circle.

\begin{prop}\label{prop:dimred}
The $SU(4)$ invariant twist of type IIA on $\CC^4 \times \RR^2$ is the dimensional reduction of the eleven-dimensional theory along  
\[
\CC^4 \times \CC \times \RR_t \to \CC^4 \times \RR_x \times \RR_t \cong \CC^4 \times \RR^2 .
\]
\end{prop}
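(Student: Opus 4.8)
The plan is to perform the dimensional reduction explicitly at the level of fields and action, and then match the result term-by-term with the potential theory $\tilde I_{IIA}$ introduced above. First I would set up the reduction along the middle factor $\CC_{z_5} \to \RR_x$ in $\CC^4 \times \CC_{z_5} \times \RR_t$. Following the recipe recalled just above, the reduction replaces $\Omega^{0,\bu}(\CC_{z_5})$ by $\Omega^\bu(\RR_x)$; concretely, a field $\phi(z_5,\bar z_5,\ldots)$ that is holomorphic along $\CC_{z_5}$ descends to its restriction to the zero mode plus a one-form piece $\d x$ capturing $\partial_{z_5}\phi$. The key bookkeeping step is to track what happens to each of the four eleven-dimensional superfields $\mu \in \PV^1(\CC^5;S)$, $\nu \in \PV^0(\CC^5;S)$, $\gamma \in \Omega^1(\CC^5;S)$, $\beta \in \Omega^0(\CC^5;S)$ under the decomposition $\CC^5 = \CC^4_w \times \CC_{z_5}$. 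Decompose polyvector fields and forms along this splitting: $\PV^1(\CC^5) = \PV^1(\CC^4)\otimes \PV^0(\CC_{z_5}) \oplus \PV^0(\CC^4)\otimes \PV^1(\CC_{z_5})$, and similarly for $\Omega^1$. The component of $\mu$ with a leg along $\partial_{z_5}$ becomes (after reduction and after absorbing $\partial_{z_5}$) a polyvector field on $\CC^4$ of one lower holomorphic type — this is the origin of the $u\PV^{0,\bu}(\CC^4)$ piece, i.e. the field I will call $\nu$ in the IIA notation — while the $\CC^4$-legged component becomes $\mu_{IIA}$. Likewise $\beta$, having no legs, contributes the potential $\beta_{IIA}$; $\gamma$ splits into $\gamma_{IIA}\in \Omega^1(\CC^4)$ and a $\CC_{z_5}$-legged piece that becomes $\theta\in\Omega^0(\CC^4)$; and $\nu$ (the eleven-dimensional one) splits into $\eta$ and a piece that pairs trivially. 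I would organize this as an explicit isomorphism of graded vector bundles intertwining the linear BRST complex \eqref{eq:sympfields} of the reduced theory with the complex \eqref{eqn:IIApot}, checking that $\dbar + \d_S + \div + \del$ reduces to $\dbar + \d_{\RR^2} + u\,\div + u\,\del$ — the powers of $u$ being introduced precisely to record the $\CC_{z_5}$-leg-counting grading.

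Next I would verify that the BV pairing matches. The eleven-dimensional pairing is $\int^\Omega_{\CC^5\times S}(\mu\vee\gamma + \nu\beta)$; integrating out the $\CC_{z_5}$ directions (equivalently, taking the residue $\Res_u \frac{\d u}{u}$ that extracts the zero mode against the volume form $\d z_5$) produces exactly $\int^\Omega_{\CC^4\times\RR^2}(\eta\theta + \mu\vee\gamma + \nu\beta)$, which is the stated pairing of the potential theory. This is essentially a change-of-variables computation and should be routine once the field dictionary is fixed.

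Then comes the matching of interactions. I would substitute the field dictionary into $S_{BF,\infty} + gJ$. The term $\frac12\int^\Omega \frac{1}{1-\nu}\mu^2\vee\del\gamma$ splits according to how the two $\mu$'s and the $\gamma$ distribute their $\CC_{z_5}$-legs; the pieces with no $z_5$-leg on any factor vanish after reduction (the $\d z_5$ in $\Omega$ has nothing to contract), and the surviving combinations reproduce exactly $\frac12\int^\Omega\frac{1}{1-\nu}\mu^2\vee\del\gamma$ and $\int^\Omega\frac{1}{1-\nu}(\eta\wedge\mu)\vee\del\theta$ of \eqref{eqn:IIAactionpot}. The cubic term $J = \frac16\int\gamma\,\del\gamma\,\del\gamma$ reduces: the only surviving contribution has exactly one $\theta$-type factor (carrying the $z_5$-leg) and two $\gamma_{IIA}$-type factors, giving $\frac12\int\frac{1}{1-\nu}\eta\wedge\del\gamma\wedge\del\gamma$ after also noting how the $\frac{1}{1-\nu}$ prefactor is generated — here I would use the alternative strict presentation of \S\ref{s:altdfn} or carefully expand, since the $(1-\nu)^{-1}$ in the IIA action must be produced from the eleven-dimensional side. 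Finally I note that terms involving higher $u$-descendants are absent because the corresponding eleven-dimensional components have no room for more than one $\CC_{z_5}$-leg on $\CC^5 = \CC^4\times\CC$.

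The main obstacle I anticipate is the correct handling of the $\frac{1}{1-\nu}$ prefactors and of signs/parities in the reduction of the $L_\infty$ (as opposed to strict dg Lie) model: the eleven-dimensional $\nu$ superfield and the IIA $\nu = u$-component are related but not identical under the reduction, and getting the generating function $\frac{1}{1-\nu}$ to match on the nose requires carefully composing the reduction with the automorphism \eqref{eqn:auto1}, or equivalently doing the reduction in the strict model $S_{BF,0}+g\tilde J$ of \S\ref{s:altdfn} and then transferring. A secondary subtlety is to confirm that the $u$-grading on the reduced theory — which is a bookkeeping device on the eleven-dimensional side recording $z_5$-leg number — genuinely coincides with the $S^1$-equivariant/$B$-model $u$-grading built into $\cE_{mKS}(\CC^4)$; this is morally the statement that reduction along a holomorphic line is the same as the topological-$\RR$ direction appearing in minimal Kodaira--Spencer theory, and should follow from the explicit model for holomorphic reduction recalled before Proposition~\ref{prop:dimred} together with the definition of the minimal BCOV complex.
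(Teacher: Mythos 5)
Your overall skeleton (split $\CC^5=\CC^4_w\times\CC_{z_5}$, reduce each superfield, match the pairing, then match the action against $\til I_{IIA}$) is the same as the paper's, but two things in the execution would fail as written. First, the field dictionary is garbled: you send the $\partial_{z_5}$-component of $\mu_{11d}$ to the IIA field $\nu$, the eleven-dimensional $\nu$ to $\eta$, and the $\d z_5$-component of $\gamma_{11d}$ to $\theta$. This is inconsistent with both parity and the BV pairing: the $\partial_{z_5}$-part of $\mu_{11d}$ is odd while the IIA $\nu$ is even, the $\d z_5$-part of $\gamma_{11d}$ is even while $\theta$ is odd, and under your assignment the reduced pairing would match your ``$\nu$'' with your ``$\theta$'' and your ``$\eta$'' with $\beta$, whereas in the potential IIA theory $\nu$ pairs with $\beta$ and $\eta$ with $\theta$. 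The dictionary forced by these constraints (and used in the paper) is: $\nu_{11d}\mapsto\nu$, $\beta_{11d}\mapsto\beta$, the $\CC^4$-legged parts of $\mu_{11d},\gamma_{11d}$ map to $\mu,\gamma$, the $\partial_{z_5}$-component of $\mu_{11d}$ maps to $\theta$, and the $\d z_5$-component of $\gamma_{11d}$ maps to $\eta$. Also, your claim that the no-$z_5$-leg pieces of the BF interaction vanish is not right; the term $\int\frac{1}{1-\nu}\mu^2\vee\del\gamma$ with all legs along $\CC^4$ survives and is one of the terms you need.

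Second, and more importantly, the matching of interactions is not just a matter of producing the $\frac{1}{1-\nu}$ prefactor by choosing the strict presentation or composing with the automorphism \eqref{eqn:auto1}. With the correct dictionary the naive reduction gives \eqref{eqn:bfred} and \eqref{eqn:jred}, i.e. $\int\frac{1}{1-\nu}\mu^2\vee\del\gamma+\int\frac{1}{1-\nu}(\theta\wedge\mu)\vee\del\eta+\int\eta\wedge\del\gamma\wedge\del\gamma$, which differs from \eqref{eqn:IIAactionpot} in two ways: the second term has the form $(\theta\mu)\vee\del\eta$ rather than $(\eta\mu)\vee\del\theta$, and the $\eta\,\del\gamma\,\del\gamma$ term carries no $\frac{1}{1-\nu}$. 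The missing idea is the nonlinear, pairing-preserving field redefinition $\til\theta=\frac{1}{1-\nu}\theta$, $\til\eta=(1-\nu)\eta$, $\til\beta=\beta+\frac{1}{1-\nu}\eta\wedge\theta$, under which the \emph{kinetic} term $\int\beta\,\div\mu$ produces new interaction terms; after integrating by parts and using $\div(\til\theta\mu)=\til\theta\,\div\mu+\del(\til\theta)\vee\mu$, these exactly convert the mismatched terms into those of \eqref{eqn:IIAactionpot}. Your proposed fix --- reducing in the strict model $S_{BF,0}+g\til J$ or conjugating by \eqref{eqn:auto1} --- does not address this, because that automorphism only relates the two eleven-dimensional presentations and rescales $\mu,\nu,\beta,\gamma$ uniformly by powers of $e^{\pm\nu}$; the needed redefinition mixes $\beta$ with $\eta\wedge\theta$ and treats $\theta$ and $\eta$ with opposite powers of $(1-\nu)$, and the mismatch would persist in either presentation. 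Without identifying this redefinition (or an equivalent one) the term-by-term match you assert in the third paragraph does not go through.
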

\begin{proof}
Let us denote the holomorphic coordinate we are reducing along by $z_5 = x + \im y$. 
We first read off the dimensional reduction of each component field of the eleven-dimensional theory. 
Per the above discussion, this is obtained by taking all fields to be independent of $y$ and replacing $\d \zbar_5$ by $\d x$. 
To not confuse the notations of fields in ten and eleven dimensions, we use the notation $\alpha_{11d}$ to denote an eleven-dimensional field.

The reductions of the eleven-dimensional fields $\nu_{11d}, \beta_{11d}$ are easy to describe. 
Recall that 
\[
\nu_{11d} \in \PV^{0,\bu}(\CC^5) \otimes \Omega^\bu(\RR) .
\]
The reduction of this field is a ten-dimensional $\nu$ field
\[
\nu (z_i,x,t) = \nu_{11d} (z_i, x, y=0, t) |_{\d \zbar_5 = \d x}  .
\]
Similarly, the reduction of $\beta_{11d}$ is a ten-dimensional $\beta$ field
\[
\beta (z_i,x,t) = \beta_{11d} (z_i, x, y=0, t) |_{\d \zbar_5 = \d x}  .
\]

The reduction of the eleven-dimensional fields $\mu_{11d}$ and $\gamma_{11d}$ require a bit of massaging. 
We break the $SU(5)$ symmetry to $SU(4)$ to write
\[
\mu_{11d} = \mu^0_{11d} + \theta_{11d} \partial_{z_5} 
\]
where
\begin{align*}
\mu^0_{11d} & \in \PV^{1,\bu}(\CC^4) \otimes \Omega^{0,\bu}(\CC_{z_5}) \otimes \Omega^\bu(\RR_t) \\
\theta_{11d} & \in \Omega^{0,\bu}(\CC^4) \otimes \Omega^{0,\bu}(\CC_{z_5}) \otimes \Omega^\bu(\RR_t) .
\end{align*}
The dimensional reduction of $\mu^0_{11d}$ is a ten-dimensional $\mu$ field
\[
\mu(z_i,x,t) = \mu_{11d}^0 (z_i, x,y=0,t)|_{\d \zbar_5 = \d x} .
\]
The dimensional reduction of $\theta_{11d}$ is a $\theta$ field
\[
\theta(z_i,x,t) = \theta_{11d} (z_i, x,y=0,t)|_{\d \zbar_5 = \d x} .
\]

Finally, write the eleven-dimensional field $\gamma_{11d}$ as
\[
\gamma_{11d} = \gamma_{11d}^0 + \eta_{11d} \d z_5
\]
where
\begin{align*}
\gamma^0_{11d} & \in \Omega^{1,\bu}(\CC^4) \otimes \Omega^{0,\bu}(\CC_{z_5}) \otimes \Omega^\bu(\RR_t) \\
\eta_{11d} & \in \PV^{0,\bu}(\CC^4) \otimes \Omega^{0,\bu}(\CC_{z_5}) \otimes \Omega^\bu(\RR_t) .
\end{align*}
The dimensional reduction of $\gamma^0_{11d}$ is a ten-dimensional $\gamma$ field
\[
\gamma(z_i,x,t) = \gamma_{11d}^0 (z_i, x,y=0,t)|_{\d \zbar_5 = \d x} .
\]
The dimensional reduction of $\eta_{11d}$ is an $\eta$ field
\[
\eta(z_i,x,t) = \eta_{11d} (z_i, x,y=0,t)|_{\d \zbar_5 = \d x} .
\]

Next, we read off the dimensional reduction of the eleven-dimensional action. 
Let us first focus on the term present in BF theory which is
$\int^\Omega \frac{1}{1-\nu_{11d}} \mu_{11d}^2 \vee \del \gamma_{11d}$.
Upon reduction, this becomes 
\beqn\label{eqn:bfred}
\int^{\Omega_{\CC^4}}_{\CC^4 \times \RR^2} \frac{1}{1-\nu} \mu^2 \vee \del \gamma + \int^{\Omega_{\CC^4}}_{\CC^4 \times \RR^2} \frac{1}{1-\nu} (\theta \wedge \mu) \vee  \del \eta 
\eeqn

Next, consider the cubic term in the eleven-dimensional action $J = \frac16 \int \gamma_{11d} \wedge \del \gamma_{11d} \wedge \del \gamma_{11d}$. 
Upon reduction, this becomes 
\beqn\label{eqn:jred}
\int_{\CC^4 \times \RR^2} \eta \wedge \del \gamma \wedge \del \gamma .
\eeqn

The sum of the action functionals \eqref{eqn:bfred} and \eqref{eqn:jred} does not precisely agree with the IIA action $\til I_{IIA}$. 
To relate the two actions we must make the following field redefinition:
\[
\til \theta = \frac{1}{1-\nu} \theta, \quad \til \eta = (1- \nu) \eta, \quad \til \beta = \beta + \frac{1}{1-\nu} \eta \wedge \theta .
\]
Notice that this change of coordinates is compatible with the odd symplectic pairing on the fields. 
Under this field redefinition the total dimensionally reduced action can be written as
\begin{multline}
\int^{\Omega_{\CC^4}}_{\CC^4 \times \RR^2} \frac{1}{1-\nu} \mu^2 \vee \del \gamma + \int^{\Omega_{\CC^4}}_{\CC^4 \times \RR^2} \frac{1}{1-\nu} \til\eta \wedge \del \gamma \wedge \del \gamma + \int^{\Omega_{\CC^4}}_{\CC^4 \times \RR^2} (\til\theta \wedge \mu) \vee  \del \left(\frac{1}{1-\nu} \til\eta\right) \\ + \int_{\CC^4 \times \RR^2}^{\Omega_{\CC^4}} \frac{1}{1-\nu} (\til \eta \wedge \til \theta) \div \mu  .
\end{multline}
The first line comes from plugging in the new fields into the interactions \eqref{eqn:bfred} and \eqref{eqn:jred}.
The second line comes from plugging in the new fields into the kinetic term $\int \beta \div \mu$, which because of the non-linear change of coordinates now contributes to the interaction. 
We observe that the first two terms agree with the first and third terms in \eqref{eqn:IIAactionpot}. 

After integrating by parts, the remaining terms can be written as 
\[
- \int^{\Omega_{\CC^4}}_{\CC^4 \times \RR^2} \left(\frac{1}{1-\nu} \til\eta\right) \div (\til\theta \mu) + \int_{\CC^4 \times \RR^2}^{\Omega_{\CC^4}} \left(\frac{1}{1-\nu} \til \eta\right) \til \theta \div \mu .
\]
Applying the identity $\div (\til \theta \mu) = \til \theta \div \mu + \del (\til \theta) \vee \mu$, we see that this agrees exactly with the second term in \eqref{eqn:IIAactionpot}.
%
%
\end{proof}

\subsection{The twist of type I supergravity}

We now turn to a different type of redution of the eleven-dimensional theory, this time involving type I supergravity. 
We begin by briefly recalling the description of type I supergravity following \cite{CLtypeI} which was motivated by the unoriented $B$-model. 
In \cite{SWspinor}, the second two authors verified the conjectural description of the space of fields recalled below using the pure spinor formalism. 
Unlike type IIA supergravity, there only exists an $SU(5)$ invariant twist of type I supergravity and it is holomorphic in the maximal number of dimensions.

Concretely, the space of fields of the $SU(5)$ twist of type I supergravity is a subspace of minimal Kodaira--Spencer theory on $\CC^5$. 
The $\ZZ/2$ graded space of fields equipped with its linear BRST operator is 
\beqn\label{eqn:IIApot}
\begin{tikzcd}
- & + & - & +  \\[-1.7em] \hline
{\PV^{1,\bu}}(\CC^5) \arrow[r, "u\div"]    & {u\PV^{0,\bu}}(\CC^5) \\
{\PV^{3,\bu}} (\CC^5)\arrow[r, "u\div"] & {u\PV^{2,\bu}} (\CC^5)\arrow[r, "u\div"] & {u^2\PV^{1,\bu}}(\CC^5) \arrow[r, "u\div"] & {u^3\PV^{0,\bu}}(\CC^5)
\end{tikzcd}.
\eeqn

Let us give a description of the classical action. 
Introduce notations for the fields of this type I model:
\beqn\label{eqn:Ifields}
\mu + u \nu \in \PV^{1, \bu}(\CC^5) \oplus u \PV^{0,\bu} (\CC^5), \quad \sigma \in \PV^{3,\bu}(\CC^5) .
\eeqn
We will not need an explicit notation for the remaining descendant fields. 

\begin{conj}
The twist of type I supergravity on $\CC^5$ is the $\Z/2$-graded theory with fields $\mu+u\nu, \sigma$ as above and with classical action
\beqn\label{eqn:typeIaction}
I_{{\rm type\, I}} = {\rm Tr}_{\CC^5} \frac{1}{1-\nu} \mu^2 \vee \sigma + \cdots
\eeqn
where the $\cdots$ stands for terms involving the higher descendant fields. 
\end{conj}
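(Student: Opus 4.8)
Since this is a conjecture, the aim is to provide evidence at the same level as for type IIA: namely, to exhibit the stated type I theory as a reduction of the eleven-dimensional theory of Theorem~\ref{thm:dfn}, so that the conjecture becomes a consequence of our main conjecture. In parallel with Proposition~\ref{prop:dimred}, the plan is to reduce the eleven-dimensional theory on $\CC^5 \times [0,1]$ \emph{along the interval}, $\CC^5 \times [0,1] \to \CC^5$, rather than along a circle. The reduced fields are the cohomology of $(\beta, \gamma, \mu, \nu)$ valued in $\cL_\infty(\CC^5) \otimes \Omega^\bu([0,1])$, subject to boundary conditions at the two endpoints. At the level of the free theory, the resulting field content has already been matched to the conjectural description~\eqref{eqn:IIApot} in~\cite{SWspinor}, so the real content is the identification of the interactions.

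The first step is to pin down the boundary conditions. These must be elliptic and Lagrangian for the BV pairing, so that the reduction is a well-defined $\ZZ/2$-graded BV theory, and they should implement the orientifold projection relating type IIA to type I. The natural candidate is read off from the $\ZZ/2$ action on the minimal twist of type IIA of~\S\ref{s:su5IIA}, together with the realization of the interval as $S^1/(\ZZ/2)$ and the identification of the $S^1$-reduction with that minimal twist; concretely, at each endpoint half of the component fields are assigned Dirichlet and half Neumann conditions. One then checks that the relevant piece of the de Rham complex of the interval with these conditions has one-dimensional cohomology, so that, after passing to potentials in the $\gamma$-sector as in~\S\ref{sec:IIApot} (so that a degenerate Poisson structure is avoided), tensoring with $\cL_\infty(\CC^5)$ reproduces the diagram~\eqref{eqn:IIApot} with surviving fields $\mu + u\nu$ and $\sigma$ as in~\eqref{eqn:Ifields}, the parameter $u$ bookkeeping the descendant tower exactly as in minimal Kodaira--Spencer theory.

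The second step is to restrict the eleven-dimensional action $S_{BF,\infty} + g J$ along this reduction. The interaction $\frac12 \int^\Omega \frac{1}{1-\nu} \mu^2 \vee \del\gamma$ reduces, after field redefinitions absorbing powers of $1-\nu$ of the kind used in the proof of Proposition~\ref{prop:dimred}, to the leading term ${\rm Tr}_{\CC^5} \frac{1}{1-\nu} \mu^2 \vee \sigma$ of~\eqref{eqn:typeIaction}; the remaining kinetic pieces (notably $\int^\Omega \beta \div\mu$, which after a nonlinear redefinition contributes to the interaction) together with $J$ produce the higher-descendant couplings hidden in the ``$\cdots$''. These are then matched against the restriction of $I_{BCOV}$ to the type I subcomplex, or --- following the argument in the proof of Theorem~\ref{thm:nonmin} --- any discrepancy is shown to be a trivial deformation of the relevant dg Lie algebra by a Gelfand--Fuks cohomology vanishing statement.

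The principal obstacle is the first step. Unlike the clean circle reduction of Proposition~\ref{prop:dimred}, the interval reduction forces a choice of boundary conditions, and one must verify both that they are elliptic and Lagrangian and that the $\ZZ/2$ orientifold action on the (itself conjectural) minimal twist of type IIA is the correct one --- i.e.\ that its fixed-point theory has precisely the field content verified in~\cite{SWspinor}, and that taking fixed points is compatible with the passage to potentials and with the extraction of interactions. A secondary difficulty, as throughout the paper, is controlling the infinite tower of higher-descendant interactions packaged in the ``$\cdots$''.
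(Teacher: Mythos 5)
This statement is a conjecture (recalled from Costello--Li), so the paper offers no proof of it, only evidence, namely the slab compactification of \S\ref{s:Ired}: the eleven-dimensional theory on $\CC^5\times[0,1]$ with the boundary condition $\gamma=\beta=0$ at both endpoints reduces, via the derived intersection of the two Lagrangians, to the potential formulation \eqref{eqn:Iactionpot} of the type I action, supplemented by the $\ZZ/2$-orbifold observation of \S\ref{s:orbifold}. Your proposed strategy --- interval reduction with Dirichlet conditions on the $\beta,\gamma$ sector dictated by the orientifold $\ZZ/2$, the one-dimensional interval cohomology producing the shifted potential fields, and matching the surviving interaction after passing to potentials --- is essentially the paper's route, the only cosmetic difference being that the paper lands directly on the potential theory and so never needs the Gelfand--Fuks vanishing argument you hold in reserve.
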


\parsec[s:typeIpot]

Like in the type IIA discussion, there is a slight modification of the type I model above which is most directly related to eleven-dimensional supergravity. 

This modification involves replacing the field $\sigma$ above by a potential $\til \gamma \in \Omega^{1,\bu}(\CC^5)$ which satisfies $\Omega \vee \sigma = \del \til \gamma$. 
This condition does not fix $\til \gamma$ uniquely, there is a gauge symmetry of the form $\til \gamma \to \til \gamma + \del \til \beta$. 

In detail, this potential theory we are considering has underlying cochain complex of fields
\begin{equation}
  \label{eq:Ipot} 
  \begin{tikzcd}[row sep = 1 ex]
    - & + & -\\ \hline
    \PV^{1,\bu}(\CC^5)_\mu \ar[r, "\div"] & \PV^{0,\bu}(\CC^5)_\nu  \\
         & \Omega^{0,\bu}(\CC^5)_{\til\beta} \ar[r, "\del"] & \Omega^{1,\bu}(\CC^5)_{\til\gamma} .
\end{tikzcd}
\end{equation} 
This space of fields is equipped with an odd nondegenerate pairing.
Like the eleven-dimensional theory, it is a classical BV theory in the $\ZZ/2$-graded sense. 

The type I action \eqref{eqn:typeIaction} completely determines the action of this theory with potentials. 
One simply takes the action and replaces all appearances of $\sigma$ with $\Omega^{-1} \vee \del \til\gamma$. 
This yields the interaction of the potential theory
\beqn\label{eqn:Iactionpot}
\til I_{\text{type I}} = \frac12 \int^\Omega_{\CC^5} \frac{1}{1-\nu} \mu^2 \vee \del \til\gamma .
\eeqn
Notice that the terms involving higher descendants vanishes since these fields are set to zero in the potential theory.

\subsection{Slab compactification}\label{s:Ired}

We consider placing twisted eleven-dimensional supergravity on the manifold $\CC^5 \times [0,1]$. 
In order to do this, we must choose appropriate boundary conditions at $t=0$ and $t=1$.
Our eleven-dimensional theory on such manifolds fits nicely into the formalism of \cite{BY,Eugene} in that it is topological in the direction transverse to the boundary.  

The phase space of the theory at $t=0$ or $t=1$ is 
\begin{equation}
  \label{eq:lin1} 
  \begin{tikzcd}[row sep = 1 ex]
    - & + \\ \hline
    \PV^{1,\bu}(\CC^5)_\mu \ar[r, "\div"] & \PV^{0,\bu}(\CC^5)_\nu \\ 
     \Omega^{0,\bu}(\CC^5)_\beta \ar[r, "\del"] & \Omega^{1,\bu}(\CC^5)_\gamma.
\end{tikzcd}
\end{equation}
The wedge an integrate pairing between the top and bottom lines induces an {\em even} symplectic structure on the phase space. 
Denote this phase space by $\cE_{\del}$ for the moment.

The phase space is equipped with the restriction of the linear BRST operator of the full eleven-dimensional theory. 
There is also a non linear BRST operator, just like in the bulk theory.
The BV action induces a $L_\infty$ structure on the parity shift $\Pi\cE_{\del}$ whose cohomology is still a trivial central extension of $E(5,10)$. 

A boundary condition is given by a Lagrangian subspace of $\cE_{\del}$ with respect to this even symplectic structure. 
To make sense of the theory on $\CC^5 \times [0,1]$ we must make the choice of two separate boundary conditions 
\[
\cM_{t=0} , \cM_{t=1} \subset \cE_{\del} .
\]
Moreover, these boundary conditions carry non linear BRST operators endowing their parity shifts $\Pi \cM_{t=0} , \Pi\cM_{t=1}$ with the structures of $L_\infty$ algebras. 
These $L_\infty$ structures must be compatible with the one on the phase space.
In fact, in our context these boundary conditions are abstractly isomorphic. 
We will explain the explicit boundary conditions momentarily. 

An important thing to note is that the fields of the theory compactified on the slab is computed by the {\em derived} intersection of the two Lagrangians:
\[
\cM_{t=0} \overset{\LL}{\underset{\cE_{\del}}{\times}}\cM_{t=1}.
\]
To compute this derived intersection we must suitably resolve the boundary conditions.

\parsec[s:boundary]
At $t=0$, the boundary condition of the eleven-dimensional theory is determined by declaring 
\[
\cM_{t=0}: \quad \gamma|_{t=0} = \beta|_{t=0} = 0 .
\]
We will place the theory on $\CC^5 \times [0,1]$ by imposing the same boundary condition at $t=1$:
\[
\cM_{t=1}: \quad \gamma|_{t=1} = \beta|_{t=1} = 0 .
\]

\begin{prop}
With these boundary conditions for the classical eleven-dimensional theory on $\CC^5 \times [0,1]$, the dimensional reduction along 
\[
\CC^5 \times [0,1] \to \CC^5
\]
is equivalent to the twist of type I supergravity on $\CC^5$. 
\end{prop}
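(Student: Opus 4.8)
The plan is to use the interface description of the slab compactification recalled at the start of \S\ref{s:Ired}, together with the observation that the chosen boundary condition is not merely Lagrangian but is also an $L_\infty$-subalgebra. By that description, the theory on $\CC^5\times[0,1]$ with the boundary conditions of \S\ref{s:boundary} has underlying $\ZZ/2$-graded BV complex the derived intersection $\cM_{t=0}\times^{\LL}_{\cE_\del}\cM_{t=1}$, with $L_\infty$-structure transferred from the one induced on the phase space $\cE_\del$ by $S_{BF,\infty}+gJ$. Both boundary conditions are the single subspace $\cM=\{\beta=\gamma=0\}$. First I would check that $\cM$ is Lagrangian for the even symplectic pairing on $\cE_\del$ --- it is half-dimensional and isotropic, since that pairing couples $\mu$ only to $\gamma$ and $\nu$ only to $\beta$, with complementary Lagrangian $\{\mu=\nu=0\}\cong\cM^\vee$. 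Next I would check that $\cM$ is an $L_\infty$-subalgebra of $\Pi\cE_\del$: every cubic-and-higher vertex of $S_{BF,\infty}+gJ$ contains at least one factor of $\gamma$ --- the vertices are $\tfrac12\nu^k\mu^2\vee\del\gamma$ and $\tfrac16 g\,\gamma^3$ --- so any bracket all of whose inputs lie in $\{\mu,\nu\}$ is dual to a vertex with its remaining slot filled by $\gamma$, hence has output dual to $\gamma$, i.e.\ in the $\mu$-slot; together with the quadratic brackets $[\mu]_1=\dbar\mu+\div\mu$ and $[\nu]_1=\dbar\nu$ this gives $[\cM,\dots,\cM]\subseteq\cM$.

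Given these two facts, the derived self-intersection of the Lagrangian subalgebra $\cM$ inside the even-symplectic $\cE_\del$ carries a natural odd (i.e.\ $(-1)$-shifted) symplectic structure and is modeled by the $(-1)$-shifted cotangent $T^*[-1]\cM$ --- precisely the BF theory of \S\ref{s:bf} associated to the local $L_\infty$-algebra $\cM$. I would then read off this $L_\infty$-algebra from the action: its underlying complex is $\PV^{1,\bu}(\CC^5)_\mu\xto{\div}\PV^{0,\bu}(\CC^5)_\nu$, and the antifields of the BF-doubling are $\til\gamma\in\Omega^{1,\bu}(\CC^5)$, $\til\beta\in\Omega^{0,\bu}(\CC^5)$, parity-reversed by the shift so as to match the diagram \eqref{eq:Ipot}; the internal differential on the second line is $\del$, the transpose of $\div$ under the $\Omega$-pairing. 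The BF interaction pairing $\til\gamma$ against the higher curvature of $\mu$ is $\tfrac12\int^\Omega_{\CC^5}\tfrac{1}{1-\nu}\mu^2\vee\del\til\gamma$, which is exactly $\til I_{\text{type I}}$ of \eqref{eqn:Iactionpot}. Thus the slab theory is the type I potential theory of \S\ref{s:typeIpot}, which by the conjecture of \cite{CLtypeI} --- verified at the level of the free theory in \cite{SWspinor} --- is the twist of type I supergravity on $\CC^5$.

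As a concrete cross-check, and a second route to the same conclusion, I would realize the derived intersection explicitly by mode-expanding along $[0,1]$: the fields $\mu,\nu$, unconstrained at the endpoints, contribute the ordinary cohomology $H^\bu([0,1])\cong\CC$, concentrated in form-degree zero, hence survive as $t$-independent fields; the fields $\beta,\gamma$, whose tangential part vanishes at $t=0,1$, contribute the relative cohomology $H^\bu([0,1],\partial[0,1])\cong\CC$, concentrated in form-degree one, hence survive as $\til\beta\,\d t$ and $\til\gamma\,\d t$ with $\til\beta,\til\gamma$ $t$-independent. Substituting these modes into $S_{BF,\infty}+gJ$ and integrating over $[0,1]$: the cross kinetic term $\int^\Omega_{\CC^5\times[0,1]}\beta\wedge\div\mu$ becomes $\int^\Omega_{\CC^5}\mu\vee\del\til\beta$ after integrating by parts along $\CC^5$ (the internal differential $\del\colon\til\beta\to\til\gamma$, the identity $\int^\Omega\til\beta\,\div\mu=\pm\int^\Omega\mu\vee\del\til\beta$ expressing $\del$ as the transpose of $\div$); the interaction $\tfrac12\int^\Omega\tfrac{1}{1-\nu}\mu^2\vee\del\gamma$ becomes $\tfrac12\int^\Omega_{\CC^5}\tfrac{1}{1-\nu}\mu^2\vee\del\til\gamma=\til I_{\text{type I}}$; the Chern--Simons term $\tfrac16 g\int\gamma\wedge\del\gamma\wedge\del\gamma$ vanishes because all three factors carry $\d t$ and $(\d t)^{\wedge 3}=0$; and the $\dbar+\d_{[0,1]}$ kinetic terms collapse to the $\dbar$-differential. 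No field redefinition is needed, in contrast with the type IIA reduction of \S\ref{s:su4red}.

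The hard part will be justifying the structural claim underlying the first route --- that the derived self-intersection of a Lagrangian subalgebra is modeled by its $(-1)$-shifted cotangent with exactly the naive transferred structure --- or, equivalently in the second route, that the zero-mode reduction picks up no corrections from the massive modes along $[0,1]$. I expect this to follow by homotopy transfer along the (relative) de Rham complex of $[0,1]$: a correction would require a tree with an internal edge carrying the interval propagator, hence at least two vertices each feeding a $\d t$-decorated leg through it; but $\d t$ is carried only by the $\gamma$-modes, and a careful bound on the $[0,1]$-form-degree that can flow through any tree built from the vertices $\tfrac12\nu^k\mu^2\vee\del\gamma$ and $\tfrac16 g\,\gamma^3$ should show that no such tree can contribute to the reduced interactions. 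The remaining work --- the sign and integration-by-parts bookkeeping in the transpose identity, and matching parities under the $[-1]$ shift with \eqref{eq:Ipot} --- is routine.
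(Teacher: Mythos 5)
Your proposal is correct, and your first route is in essence the paper's own argument: both identify the slab theory with the derived intersection $\cM_{t=0} \times^{\LL}_{\cE_{\del}} \cM_{t=1}$ of the two boundary Lagrangians and recognize the answer as the type I potential theory of \S\ref{s:typeIpot}, with interaction \eqref{eqn:Iactionpot}. The difference lies in how the derived intersection is computed. You invoke the general principle that the derived self-intersection of a Lagrangian $L_\infty$-subalgebra is modeled by its $(-1)$-shifted cotangent, and you correctly flag as ``the hard part'' the claim that the transferred structure receives no corrections (equivalently, in your second route, that no massive interval modes contribute). The paper dissolves exactly this issue by a fibrant replacement: it enlarges $\cM_{t=0}$ to the quasi-isomorphic complex $\til\cM_{t=0}$, obtained by adjoining the fields $\til\beta, \til\gamma$ with identity maps from $\beta,\gamma$, so that the Lagrangian inclusion factors as $\cM_{t=0} \hookrightarrow \til\cM_{t=0} \twoheadrightarrow \cE_{\del}$; the derived intersection is then the \emph{strict} intersection $\til\cM_{t=0} \times_{\cE_{\del}} \cM_{t=1}$, which is read off directly as the complex \eqref{eq:Ipot} with its odd pairing and with $L_\infty$ structure induced by the bulk BV action --- no homotopy-transfer trees or interval propagators need to be estimated. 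Your second route (zero modes on $[0,1]$, relative cohomology for $\beta,\gamma$, and $(\d t)^{\wedge 3}=0$ killing $J$) is a sensible cross-check not present in the paper, but as you yourself note it does not by itself close the argument without controlling corrections from the massive modes, which is precisely what the paper's resolution supplies; if you want a self-contained proof along your lines, the cleanest fix is to adopt that explicit replacement rather than the degree-counting bound you sketch.
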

\begin{proof}
Notice that both $\cM_{t=0}$ and $\cM_{t=0}$ are abstractly isomorphic to the complex resolving divergence-free vector fields
\begin{equation}
  \label{eq:lin2} 
  \begin{tikzcd}[row sep = 1 ex]
    - & + \\ \hline
    \PV^{1,\bu}(\CC^5)_\mu \ar[r, "\div"] & \PV^{0,\bu}(\CC^5)_\nu  .
\end{tikzcd}
\end{equation}

To compute the derived intersection between the two Lagrangians at $t=0$ and $t=1$ we replace the Lagrangian morphism $\cM_{t=0} \hookrightarrow \cE_{\del}$. 
Consider the cochain complex $\til \cM_{t=0}$ defined by
\begin{equation}
  \label{eq:lin3} 
  \begin{tikzcd}[row sep = 1 ex]
    - & + & - \\ \hline
    \PV^{1,\bu}(\CC^5)_\mu \ar[r, "\div"] & \PV^{0,\bu}(\CC^5)_\nu \\ 
     \Omega^{0,\bu}(\CC^5)_\beta \ar[dr,dotted,"\id"]\ar[r, "\del"] & \Omega^{1,\bu}(\CC^5)_\gamma \ar[dr,dotted,"\id"] \\
     & \Omega^{0,\bu}(\CC^5)_{\til\beta} \ar[r, "\del"] & \Omega^{1,\bu}(\CC^5)_{\til\gamma} .
\end{tikzcd}
\end{equation}
Notice that as a graded vector space, this complex is of the form $\cE_{\del} \oplus (\Omega^{0,\bu} \oplus \Pi \Omega^{1,\bu})$. 
The $L_\infty$ structure on $\Pi \til \cM_{t=0}$ extends the one on $\cE_{\del}$ coming from the bulk BV action. 
Notice that the obvious embedding $\cM_{t=0} \hookrightarrow \til \cM_{t=0}$ is a quasi-isomorphism.

The projection map $\til \cM_{t=0} \twoheadrightarrow \cE_{\del}$ factors the original Lagrangian inclusion as
\[
\cM_{t=0} \hookrightarrow \til \cM_{t=0} \twoheadrightarrow \cE_{\del} .
\]
To compute the derived intersection of $\cM_{t=0}$ and $\cM_{t=1}$ we can compute the ordinary intersection of $\til \cM_{t=0}$ and $\cM_{t=1}$. 

Let $\mu_{t=1}$ and $\nu_{t=1}$ denote the fields present in the other boundary condition $\cM_{t=1}$. 
The intersection $\til \cM_{t=0} \times_{\cE_{\del}} \cM_{t=1}$ is computed by setting the fields $\beta, \gamma$ to zero and $\mu=\mu_{t=1}$, $\nu = \nu_{t=1}$. 
Thus, we are left with
\begin{equation}
  \label{eq:lin3} 
  \begin{tikzcd}[row sep = 1 ex]
    - & + & -\\ \hline
    \PV^{1,\bu}(\CC^5)_\mu \ar[r, "\div"] & \PV^{0,\bu}(\CC^5)_\nu  \\
         & \Omega^{0,\bu}(\CC^5)_{\til\beta} \ar[r, "\del"] & \Omega^{1,\bu}(\CC^5)_{\til\gamma} 
\end{tikzcd}
\end{equation}
This is precisely the underlying cochain complex of fields for the type I model with potentials. 
The odd nondegenerate pairing on this complex agrees with the one on this particular potential theory for the twist of type I supergravity. 
The $L_\infty$ structure on the parity shift of this complex is compatible with the one induced from the BV action in \eqref{eqn:Iactionpot}.
\end{proof}

\subsection{The $SU(5)$ twist of type IIA supergravity}
\label{s:su5IIA}


Thus, given that our eleven-dimensional theory correctly describes the $SU(5)$-invariant twist of supergravity on $\CC^5 \times \RR$, to obtain the $SU(5)$ twist of type IIA supergravity we should reduce along the topological $\RR$ direction. 
This results in a $SU(5)$ invariant, holomorphic, theory on $\CC^5$. 

Let us briefly spell out the fields present in this dimensional reduction. 
The reduction is obtained by replacing $\Omega^\bu(\RR)$ with its translation invariant subalgebra $\CC[\ep] = \CC[\d t]$. 
Here, $\ep$ is an odd parameter playing the role of the translation invariant one-form $\d t \in \Omega^1(\RR)$. 
Equivalently, we are compactifying the theory along 
\[
\CC^5 \times S^1 \to \CC^5 .
\]

The field $\mu_{11d}$ is replaced by the field 
\[
\mu + \ep \mu' \in \Pi \PV^{1,\bu}(\CC^5) [\ep] .
\]
Notice that the lowest component of $\mu$ is odd (just like $\mu_{11d})$, but the lowest component of $\mu'$ is now even. 
Completely similarly, the remaining fields reduce as $\nu + \ep \nu'$, $\gamma + \ep \gamma'$, and $\beta + \ep \beta'$. 

In summary, the linear complex of fields of the dimensionally reduced theory on $\CC^5$ is
\begin{equation}
  \label{eqn:IIAsu5} 
  \begin{tikzcd}[row sep = 1 ex]
    {\rm odd} & {\rm even} & {\rm even} & {\rm odd} \\ \hline
    \PV^{1,\bu}(\CC^5)_{\mu} \ar[r, "\del"] & \PV^{0,\bu}(\CC^5)_\nu & \\ 
     & \ep\Omega^{0,\bu}(\CC^5)_{\beta'} \ar[r, "\div"] & \ep\Omega^{1,\bu}(\CC^5)_{\gamma'} . \\
     &  \ep \PV^{1,\bu}(\CC^5)_{\mu'} \ar[r, "\div"] & \ep \PV^{0,\bu}(\CC^5)_{\nu'} \\
     & & \Omega^{0,\bu}(\CC^5)_\beta \ar[r, "\div"] & \Omega^{1,\bu}(\CC^5)_\gamma.
\end{tikzcd}
\end{equation}

We can compute the dimensional reduction of the eleven-dimensional action $S_{BF,\infty} + J$ in a similar way to how we have done in the past few sections. 
We arrive at the action functional described below. 

\begin{conj}
\label{conj:IIAsu5}
The $SU(5)$ twist of type IIA supergravity on $\CC^5$ is equivalent to the theory whose linear BRST complex of fields is displayed in \eqref{eqn:IIAsu5}. 
The full action functional is 
\begin{multline}
\label{eqn:su5action}
\int^\Omega_{\CC^5}\bigg(\beta' \wedge \dbar \nu + \beta \wedge \dbar \nu' + \gamma' \wedge \dbar \mu + \gamma \wedge \dbar \mu' +  \beta' \wedge \div \mu + \beta \wedge \div \mu' \bigg) \\
+ \int^\Omega_{\CC^5} \bigg( \frac12 \frac{1}{1-\nu} \mu^2 \vee \del \gamma' +  \frac{1}{1-\nu} (\mu \wedge \mu') \vee \del \gamma' + \frac12 \frac{\nu'}{(1-\nu)^2} \mu^2 \vee \del \gamma \bigg) \\
+ \frac12 \int_{\CC^5} \gamma' \wedge \del \gamma \wedge \del \gamma .
\end{multline} 
\end{conj}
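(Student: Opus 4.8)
The plan is to obtain Conjecture~\ref{conj:IIAsu5} as a direct consequence of the dimensional reduction formalism recalled in~\S\ref{s:su4red}, applied to the topological circle $\CC^5 \times S^1 \to \CC^5$ rather than a holomorphic circle. Concretely, I would first spell out the reduction of the linear data: replacing $\Omega^\bu(\RR)$ by its translation-invariant subalgebra $\CC[\ep]$ (with $\ep$ the odd generator corresponding to $\d t$), each eleven-dimensional superfield $\alpha_{11d}$ decomposes as $\alpha + \ep\alpha'$, and the operator $\dbar + \d_S$ becomes simply $\dbar$ since $\d_S$ acts trivially on constants. Tracking parities through the identification $\ep \leftrightarrow \d t$ produces exactly the $\ZZ/2$-graded complex~\eqref{eqn:IIAsu5}; checking that the differentials $\del$, $\div$ land in the displayed spots and that the indicated parities are correct is routine bookkeeping. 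The even BV pairing on the reduced theory is the one induced from $\int^\Omega_{X\times S}$ by integrating out the $S^1$-direction, i.e. $\Res_\ep$ of the eleven-dimensional pairing, and one checks it matches the wedge-and-integrate pairing pairing $\mu$ with $\gamma'$, $\mu'$ with $\gamma$, $\nu$ with $\beta'$, and $\nu'$ with $\beta$.

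Next I would substitute $\alpha_{11d} = \alpha + \ep\alpha'$ into the eleven-dimensional action $S_{BF,\infty} + J$ and collect the coefficient of $\ep$ (equivalently, $\int_{S^1}\d t$ of each term). For the quadratic part $\int^\Omega[\beta(\dbar+\d_S)\nu + \gamma(\dbar+\d_S)\mu + \beta\,\partial_\Omega\mu]$ this produces the six kinetic terms in the first line of~\eqref{eqn:su5action}, with the $\d_S$-contributions dropping out because all fields are constant along $S^1$. For the non-polynomial interaction $\tfrac12\frac{1}{1-\nu}\mu^2\vee\del\gamma$, one expands $\frac{1}{1-(\nu+\ep\nu')} = \frac{1}{1-\nu} + \ep\frac{\nu'}{(1-\nu)^2}$ and extracts the $\ep$-linear part of $\frac{1}{1-(\nu+\ep\nu')}(\mu+\ep\mu')^2 \vee \del(\gamma+\ep\gamma')$; this yields precisely the three terms in the second line of~\eqref{eqn:su5action}. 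Finally, the cubic term $J = \tfrac16\gamma\wedge\del\gamma\wedge\del\gamma$ reduces, upon extracting the $\ep$-linear part of $\tfrac16(\gamma+\ep\gamma')\wedge\del(\gamma+\ep\gamma')\wedge\del(\gamma+\ep\gamma')$, to $\tfrac12\gamma'\wedge\del\gamma\wedge\del\gamma$, the last line of~\eqref{eqn:su5action}. Since the reduction procedure takes a classical BV theory to a classical BV theory, the resulting action automatically satisfies the classical master equation, so no additional consistency check is needed.

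The final step is to argue that this reduced theory deserves to be called the $SU(5)$ twist of type IIA supergravity. Here I would invoke the working hypothesis (justified throughout the paper, in particular by the index match of~\S\ref{sec:locchar} and the residual-supersymmetry computation of~\S\ref{sec:susy}) that the eleven-dimensional theory on $\CC^5\times\RR$ is the minimal twist of eleven-dimensional supergravity, together with the standard physical expectation that reduction of M-theory on the M-theory circle is type IIA string theory and hence that reduction of eleven-dimensional supergravity on that circle is type IIA supergravity. Twisting commutes with dimensional reduction along a circle transverse to the twisting supercharge, so reducing along the topological $\RR$ (equivalently $S^1$) direction—which does not carry the twisting supercharge—gives the $SU(5)$-invariant (minimal, holomorphic) twist of type IIA. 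One can further cross-check consistency with the $SU(4)$ twist: performing an additional holomorphic reduction along one $\CC$-factor should recover the description in Proposition~\ref{prop:dimred}, and indeed comparing~\eqref{eqn:su5action} with~\eqref{eqn:IIAactionpot} after such a reduction gives a match.

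The main obstacle is not any single calculation but the status of the statement itself: unlike the type I and $SU(4)$ type IIA results, there is no independent conjectural description of the $SU(5)$ twist of type IIA in the literature against which to compare, so the ``proof'' is necessarily of the form \emph{this is the theory obtained by reduction, and we define/identify it as the $SU(5)$ twist}. Accordingly the real content is (i) getting the $\ep$-expansion bookkeeping exactly right—particularly the parity assignments in~\eqref{eqn:IIAsu5} and the coefficient $\tfrac12$ versus $\tfrac16$ in the reduced cubic term, which come from the combinatorics of extracting the $\ep$-linear part of a cube—and (ii) the consistency check against the iterated reduction to the $SU(4)$ twist, which requires massaging the action by the same kind of nonlinear field redefinition used in the proof of Proposition~\ref{prop:dimred}. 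I expect (ii) to be the more delicate of the two, since the field redefinitions $\til\theta = \frac{1}{1-\nu}\theta$, $\til\eta = (1-\nu)\eta$, $\til\beta = \beta + \frac{1}{1-\nu}\eta\wedge\theta$ there have no reason to be visible before the second reduction is performed.
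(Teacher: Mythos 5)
Your proposal follows exactly the paper's own derivation: the paper obtains \eqref{eqn:su5action} by the same replacement of $\Omega^\bu(\RR)$ by its invariant subalgebra $\CC[\ep]$, the same superfield decomposition $\alpha_{11d} = \alpha + \ep\,\alpha'$, and the same extraction of the $\ep$-linear part of $S_{BF,\infty}+J$, with the identification as the $SU(5)$ twist of type IIA likewise resting on the main conjecture (and cross-checks such as the further reduction to the $SU(4)$ twist) rather than on an independent target---which is precisely why the statement is only a conjecture. The one bookkeeping caveat: the direct $\ep$-expansion of $\tfrac12\tfrac{1}{1-\nu_{11d}}\mu_{11d}^2\vee\del\gamma_{11d}$ produces the cross term $\tfrac{1}{1-\nu}(\mu\wedge\mu')\vee\del\gamma$ rather than $(\mu\wedge\mu')\vee\del\gamma'$, so your claimed exact match with the second line of \eqref{eqn:su5action} holds only up to what appears to be a stray prime in the paper's display (note that \eqref{eqn:IIAsu5} also carries misprinted $\del$/$\div$ labels), and your computation, not the printed formula, is the one to trust.
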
 

The first two lines in \eqref{eqn:su5action} arise from the reduction of the BF action $S_{BF,\infty}$. 
The final line arises from the reduction of $J = \frac16 \int \gamma_{11d} \del \gamma_{11d} \del \gamma_{11d}$. 

%
%
%
%

\parsec[]\label{s:orbifold}

The slab compactification of the previous section was one way to implement the $S^1 / \ZZ/2$ reduction of the eleven-dimensional theory. 
We offer another point of view of this $S^1 / \ZZ/2$ reduction. 

First off, there is the following $\ZZ/2$ action on the eleven-dimensional theory on $\CC^5 \times S^1$ before compactifying. 
We obtain it by the following tensor product of $\ZZ/2$ actions. 
First, $\ZZ/2$ acts on $\Omega^\bu(S^1)$ by orientation reversing diffeomorphisms. 
Second, we declare that the eigenvalue of the $\ZZ/2$ action on $\PV^{k,\bu}(\CC^5)$, for $k=0,1$ is $+1$ and the eigenvalue of the $\ZZ/2$ action on $\Omega^{k,\bu}(\CC^5)$ for $k=0,1$ is $-1$. 
This determines a $\ZZ/2$ action on the full space of fields of the eleven-dimensional theory. 

Upon $S^1$ compactification the $\ZZ/2$ action is easy to describe: $\mu,\nu$ both have eigenvalue $+1$, $\mu',\nu'$ both have eigenvalue $-1$, $\gamma,\beta$ both have eigenvalue $-1$, and $\gamma',\beta'$ both have eigenvalue $+1$. 
In particular, we see that the $\ZZ/2$ fixed points simply pick out the $\mu, \nu, \gamma', \beta'$ fields; this comprises the first two lines of \eqref{eqn:IIAsu5}. 

The fields match precisely with the fields in the twist of type I supergravity that we recalled in \S \ref{s:typeIpot} (Under the relabeling $\gamma' \leftrightarrow \til \gamma, \beta' \leftrightarrow \til \beta$). 
Furthermore, restricting the action in the above conjecture agrees precisely with the action of this twisted type I model. 

\subsection{Compactification along a CY3}\label{s:CY3}

In the first section we saw that the eleven-dimensional theory can be defined on any manifold that is a product of a Calabi--Yau five-fold with a smooth oriented one-manifold. 
In this section, we investigate an important compactification of the eleven-dimensional theory which involves the Calabi--Yau manifold $X \times \CC^2$ where $X$ is a simply connected compact Calabi--Yau three-fold.

The compactification of the theory along the three-fold $X$ 
\[
X \times \CC^2 \times \RR \to \CC^2 \times \RR
\]
yields an effective five-dimensional theory which is holomorphic along $\CC^2$ and topological along $\RR$. 
Upon compactification, we will find a match with a description of the twist of five-dimensional minimally supersymmetric supergravity. 

\begin{prop}
\label{prop:5dsugra}
The compactification of the eleven-dimensional theory along a Calabi--Yau three-fold $X$ is equivalent to the twist of five-dimensional $\cN=1$ supergravity with $h^{1,1}(X)-1$ vector multiplets and $h^{1,2}(X) + 1$ hypermultiplets. 
\end{prop}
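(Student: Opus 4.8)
The plan is to carry out the compactification along $X$ explicitly and then recognize the output as the minimal twist of five-dimensional $\cN=1$ supergravity on $\CC^2 \times \RR$. First I would apply the compactification procedure used in the previous subsections: since the eleven-dimensional theory is holomorphic along $X$ and $X$ is compact, reducing along $X$ amounts to replacing the Dolbeault complexes $\PV^{i,\bu}(X)$ and $\Omega^{i,\bu}(X)$ by their $\dbar$-cohomologies $H^\bu(X,\wedge^i \T_X)$ and $H^\bu(X,\Omega^i_X)$, tensored with the corresponding fields on $\CC^2 \times \RR$; the K\"unneth decomposition for $X \times \CC^2$ then reorganizes the eleven-dimensional fields $(\mu,\nu,\gamma,\beta)$ into finitely many five-dimensional fields indexed by these cohomology groups. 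Next I would use the Calabi--Yau form $\Omega_X$ to identify $\wedge^i \T_X \cong \Omega^{3-i}_X$, so that $H^q(X,\wedge^i\T_X)\cong H^{3-i,q}(X)$, and plug in the Hodge diamond of a simply connected compact Calabi--Yau threefold: the only nonzero groups are the one-dimensional ``universal'' ones ($H^{0,0}$, $H^{3,0}$, $H^{0,3}$, $H^{3,3}$ and their partners under $\Omega_X$), together with $h^{1,1}(X)$ copies from $H^{1,1}\cong H^{2,2}$ and $h^{1,2}(X)=h^{2,1}(X)$ copies from $H^{2,1}\cong H^{1,2}$. Simple connectedness is exactly what kills $H^{1,0},H^{0,1},H^{2,0},H^{0,2}$, so no spurious vectors or hypermultiplets appear. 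The bookkeeping then produces three sectors: a universal sector built from the $\CC^2$-directed components of $(\mu,\nu,\gamma,\beta)$ and the one-dimensional Dolbeault groups, an $H^{1,1}$-sector of $h^{1,1}(X)$ fields of one type, and an $H^{2,1}$-sector of $h^{1,2}(X)$ fields of another type.

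Second, I would repackage these five-dimensional fields into $\cN=1$ multiplets. The claim to verify is that the universal sector assembles into the twist of the five-dimensional gravity multiplet together with the universal hypermultiplet; that the $H^{1,1}$-sector assembles into $h^{1,1}(X)-1$ twisted vector multiplets, the ``$-1$'' arising because one combination of K\"ahler classes---dually, the $\div$-descendant controlled by the $\nu,\beta$ system together with the overall volume class---is absorbed into the gravity sector, exactly as the overall volume modulus and graviphoton are in M-theory on a Calabi--Yau threefold; and that the $H^{2,1}$-sector assembles into $h^{1,2}(X)$ twisted hypermultiplets, which together with the universal one give the stated $h^{1,2}(X)+1$. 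I would then check that the reduced interactions match the expected five-dimensional couplings: the cubic term $\tfrac16 \int \gamma\,\del\gamma\,\del\gamma$ should descend to the cubic vector-multiplet coupling weighted by the triple intersection numbers $\int_X \omega_a\wedge\omega_b\wedge\omega_c$ (the five-dimensional prepotential / Chern--Simons-type coupling), while $\tfrac12\int \tfrac{1}{1-\nu}\mu^2\vee\del\gamma$ should descend to the vector--hyper and gravity--matter couplings weighted by the remaining Dolbeault pairings on $X$, all of which follows from the explicit reduction formulas together with Poincar\'e duality on $X$.

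Finally, I would compare with an independent description of the minimal twist of five-dimensional $\cN=1$ supergravity on $\CC^2\times\RR$---constructed, as in \S\ref{sec:susy}, from the $Q$-cohomology of the five-dimensional supertranslation algebra for a minimal supercharge, with the twisted vector and hypermultiplets taken from the known twists of five-dimensional super Yang--Mills and its matter---and observe that the field content, $\ZZ/2$-gradings, and interactions agree. The main obstacle I expect is the precise accounting of the ``$-1$'' and ``$+1$'' shifts: cleanly disentangling which combinations of the one-dimensional Dolbeault classes and $\CC^2$-directed fields lie in the gravity multiplet versus the universal hypermultiplet, and showing carefully that exactly one K\"ahler direction is consumed by the divergence complex. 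A secondary difficulty is that, in the absence of a fully rigorous pre-existing construction of twisted five-dimensional $\cN=1$ supergravity, this last step is really a matching against the expected multiplet content rather than against a theorem, so the conclusion is best stated in that spirit.
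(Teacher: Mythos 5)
Your proposal follows essentially the same route as the paper: decompose the eleven-dimensional fields through the Dolbeault cohomology of the threefold, put the K\"ahler-class directions into the gravity multiplet (the paper pins the ``$-1$'' down via the primitive decomposition, sending $\omega^2$ and $\omega$ to $A_{grav}$ and $B_{grav}$ and leaving the $h^{1,1}(X)-1$ primitive classes as vectors), take the hypers from $H^{2,1}$ plus the universal one from the $\bar{\Omega}_X$ direction, and match the reduced action against the conjectural description of twisted 5d $\cN=1$ supergravity (Conjecture \ref{conj:5dsugra}), exactly in the qualified spirit you state at the end. The only differences are cosmetic: the paper carries out the repackaging via explicit potentials and field redefinitions ($\del\alpha = \mu\vee\Omega_{\CC^2}$, $\eta = (\d^2z)^{-1}\vee\del\til\gamma$, etc.) and, from the cubic term $J$, records just the $B_{grav}$ Chern--Simons coupling appearing in its conjectured five-dimensional action.
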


\parsec[s:5dsugra]

We give a conjectural capitulation of the twist of five-dimensional $\cN=1$ supergravity. 
Before twisting, a general five-dimensional $\cN=1$ supergravity contains a gravity multiplet coupled to some number of vector and hypermultiplets. 
The twist of the vector and hypermultiplet has been computed in \cite{ESW}, and we recall it below. 
The twist of the gravity multiplet is less clear. 
A thorough computation of the twist has yet to appear, though some checks have been established by Elliott and the last author in \cite{EWpoisson}. 
We give a description of the twist now, but leave a detailed computation from first principles to future work. 

The gravity multiplet, see \cite{CCDF} for instance, consists of a graviton $e$, a gravitino $\psi$, and a one-form gauge field $\cA_{grav}$.
After twisting, the graviton and components of the gravitino decompose into two Dolbeault-de Rham valued fields 
\[
\alpha, \eta \in \Pi \Omega^{0,\bu}(\CC^2) \otimes \Omega^\bu(\RR) ,
\]
whose lowest components both carry odd parity. 
The one-form gauge field $\cA_{grav}$ and the remaining components of the gravitino decompose into two more Dolbeault-de Rham valued fields
\[
A_{grav} , B_{grav} \in \Pi \Omega^{0,\bu}(\CC^2) \otimes \Omega^\bu(\RR) ,
\]
whose lowest components also both carry odd parity. 

\begin{conj}
\label{conj:5dsugra}
The twist of five-dimensional supergravity (with nonzero Chern--Simons term) with vector multiplets valued in a Lie algebra $\fg$ and hypermultiplets valued in a representation $V$ has BV fields
\begin{itemize}
\item $\alpha, A_{grav} \in \Pi \Omega^{0,\bu}(\CC^2) \otimes \Omega^\bu(\RR)$ with conjugate BV fields $\eta, B_{grav}$,
\item $A \in \Pi \Omega^{0,\bu}(\CC^2) \otimes \Omega^\bu(\RR) \otimes \fg$ with conjugate BV field $B$,
\item $\chi \in \Omega^{0,\bu}(\CC^2) \otimes \Omega^\bu(\RR) \otimes V$ with conjugate BV field $\psi$.
\end{itemize}

The action is
\begin{multline}
\label{eqn:5daction} 
\int^\Omega_{\CC^2 \times \RR} \left(\eta \dbar \alpha + B_{grav} \dbar A_{grav} + B \dbar A + \psi \dbar \chi \right) \\
  + \int^\Omega_{\CC^2 \times \RR} \left( \frac12\eta \{\alpha, \alpha\} +  B_{grav} \{\alpha, A_{grav}\}+ B \{\alpha, A\} +  \psi \{\alpha, \chi \}\right) \\ 
+ \frac16 \int_{\CC^2 \times \RR} B_{grav} \del B_{grav} \del B_{grav} .
\end{multline}
\end{conj}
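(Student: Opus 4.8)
The plan is to compute the compactification of the eleven-dimensional theory along $X$ explicitly as a $\ZZ/2$-graded BV theory, to read off how its fields and action are organized by the Hodge theory of the Calabi--Yau three-fold, and to match the result against the conjectural twist of five-dimensional $\cN=1$ supergravity recorded in Conjecture~\ref{conj:5dsugra} together with the (rigorously computed) twists of the five-dimensional vector and hypermultiplets of~\cite{ESW}. Following~\cite{ESW}, compactifying a holomorphic--topological theory along $X$ amounts to replacing each Dolbeault complex appearing in the space of fields by its $\dbar_X$-cohomology; since $X$ is compact K\"ahler this is a finite-dimensional reduction governed by the Hodge numbers $h^{p,q}(X)$, and --- a point I would use throughout --- the $\ZZ/2$-graded pairing $\int^\Omega$ descends to the sum of the Serre-duality pairing along $X$ and the wedge-and-integrate pairing along $\CC^2\times\RR$, so the reduced theory is again an honest BV theory and the field--antifield pairs are pinned down by Serre duality.

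First I would decompose the four superfields $(\beta,\gamma,\mu,\nu)$ along the product $X\times\CC^2$ and pass to $\dbar_X$-cohomology, using the Calabi--Yau identifications $\wedge^k\T_X\cong\Omega^{3-k}_X$ and the vanishing $H^0(X,\T_X)=H^3(X,\T_X)=0$ (valid since $X$ is simply connected). The surviving fields organize into three families. The $H^{1,1}(X)$-graded family, built from the $\Omega^1_X$-legs of $\gamma$ (in $H^1(X,\Omega^1_X)$) paired under Serre duality with the $\T_X$-legs of $\mu$ (in $H^2(X,\T_X)$), assembles into $h^{1,1}(X)$ copies of the BV content $\Pi\Omega^{0,\bu}(\CC^2)\otimes\Omega^\bu(\RR)$ together with its conjugate, i.e.\ $h^{1,1}(X)$ twisted abelian five-dimensional vector multiplets. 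The $H^{1,2}(X)$-graded family, built from the $\T_X$-legs of $\mu$ (in $H^1(X,\T_X)$) paired with the $\Omega^1_X$-legs of $\gamma$ (in $H^2(X,\Omega^1_X)$), assembles into $h^{1,2}(X)$ twisted hypermultiplets. The remaining ``universal'' family is graded by $H^{0,\bu}(X)=\CC\oplus\CC$ --- spanned by the class of $1$ and the class of the holomorphic volume form --- and receives contributions from the $\cO_X$-components of all four superfields; I would identify it with the twisted gravity-multiplet fields $\alpha,\eta,A_{grav},B_{grav}$ of Conjecture~\ref{conj:5dsugra} together with exactly one further hypermultiplet. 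One linear combination of the $h^{1,1}(X)$ vectors of the first family is absorbed into $A_{grav}$ --- the twisted shadow of the statement that, in M-theory on $X$, one period of the $C$-field over $H^2(X)$ is the graviphoton --- leaving $h^{1,1}(X)-1$ vector multiplets, while the additional hypermultiplet carries the Calabi--Yau volume modulus and the period of the $C$-field over $H^3(X)$. Along the way the $X$-reduction initially produces BF-type complexes on $\CC^2$ from the divergence operator; re-presenting these via Hamiltonian potentials for the holomorphic symplectic form $\partial_{z_1}\wedge\partial_{z_2}$, exactly as in the reductions to type IIA and type I supergravity above, puts them in the $\Omega^{0,\bu}(\CC^2)$-valued form of Conjecture~\ref{conj:5dsugra}, with $\fg=\CC^{h^{1,1}(X)-1}$ abelian and $V=\CC^{h^{1,2}(X)+1}$, and the two matter families coincide with the twists of~\cite{ESW}.

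Next I would reduce the action $S_{BF,\infty}+J$. The quadratic part of $S_{BF,\infty}$ produces, after the Serre pairings along $X$ identify conjugate fields, the kinetic terms $\int^\Omega_{\CC^2\times\RR}\left(\eta\dbar\alpha+B_{grav}\dbar A_{grav}+B\dbar A+\psi\dbar\chi\right)$ of~\eqref{eqn:5daction}. The non-polynomial BF interaction $\frac12\frac{1}{1-\nu}\mu^2\vee\del\gamma$ reduces --- carrying out the wedge products along $X$ and resumming the geometric series --- to the terms $\frac12\eta\{\alpha,\alpha\}+B_{grav}\{\alpha,A_{grav}\}+B\{\alpha,A\}+\psi\{\alpha,\chi\}$, where $\{-,-\}$ is the holomorphic Poisson bracket on $\CC^2$ induced by the restriction of $\Omega^{-1}$, namely $\partial_{z_1}\wedge\partial_{z_2}$, and $\alpha$ is the $H^{0,0}(X)$-component of $\mu$ acting as a holomorphic symplectic vector field on $\CC^2$. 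Finally $J=\frac16\int\gamma\wedge\del\gamma\wedge\del\gamma$ reduces, on saturating the $X$-integral against the holomorphic volume form, to a cubic Chern--Simons-type interaction on $\CC^2\times\RR$ governed by the triple intersection form of $X$ on $H^{1,1}(X)$ --- the twisted avatar of the M-theory prepotential --- which I would match with the Chern--Simons term $\frac16\int B_{grav}\,\del B_{grav}\,\del B_{grav}$ of~\eqref{eqn:5daction}, checking that any remaining cubic couplings either occur in~\eqref{eqn:5daction} or are $Q$-exactly removable. Granting the main conjecture identifying our eleven-dimensional theory with the minimal twist of eleven-dimensional supergravity, these matches exhibit the compactification as the twist of five-dimensional $\cN=1$ supergravity with the stated matter content.

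The step I expect to be the main obstacle is the gravity sector. The twist of the five-dimensional $\cN=1$ gravity multiplet has not been derived from first principles --- only the description of Conjecture~\ref{conj:5dsugra} and the partial check of~\cite{EWpoisson} are available --- so the identification of the universal family with the actual twisted gravity multiplet is conditional on that conjecture, whereas the vector and hypermultiplet families are unconditional via~\cite{ESW}. A secondary difficulty is the bookkeeping that isolates precisely one extra hypermultiplet and one graviphoton inside the overlap of the universal and $H^{1,1}(X)$ families, and the care needed both to track which Dolbeault-cohomology classes are fields versus antifields and to resum the non-polynomial interaction $\frac{1}{1-\nu}$ compatibly with the $\ZZ/2$-graded symplectic structure.
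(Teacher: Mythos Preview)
The statement you were asked to prove is a \emph{conjecture}, and the paper does not prove it. Immediately before stating Conjecture~\ref{conj:5dsugra} the authors write: ``A thorough computation of the twist has yet to appear \ldots\ We give a description of the twist now, but leave a detailed computation from first principles to future work.'' There is therefore no proof in the paper to compare against.

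What you have actually outlined is a proof strategy for the adjacent Proposition~\ref{prop:5dsugra}, which \emph{assumes} Conjecture~\ref{conj:5dsugra} as the target and shows that the compactification of the eleven-dimensional theory along $X$ lands on it. On that reading, your approach is essentially the same as the paper's: decompose the eleven-dimensional fields along the Hodge cohomology of $X$ (the paper does this explicitly using bases $\{e^i\}$ for $H^1(X,\T_X)$, $\{f^a\}$ for $H^2(X,\Omega^2_X)_\perp$, and the K\"ahler class $\omega$ to isolate $A_{grav},B_{grav}$), introduce Hamiltonian potentials $\alpha,\chi$ for the $\CC^2$-legs of $\mu$ and pass to field strengths $\eta,\psi$ for the $\CC^2$-legs of $\gamma$, then plug into $S_{BF,\infty}+J$ and read off the Poisson-bracket couplings and the cubic $B_{grav}$ term. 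Your extra commentary about the triple intersection form and the graviphoton is correct physics color but not used in the paper's argument.

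The genuine gap is the one you yourself flag: your argument, even if carried out flawlessly, establishes only that the compactified eleven-dimensional theory matches the \emph{conjectural} description in Conjecture~\ref{conj:5dsugra}. To prove the Conjecture itself one must either compute the twist of the physical five-dimensional gravity multiplet directly, or assume both the main conjecture of the paper and the compatibility of twisting with Calabi--Yau compactification. Neither ingredient is available, so no proof of the stated Conjecture is possible by this route --- and the paper does not claim one.
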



\parsec[-]

With this description of the twist of five-dimensional supergravity, we turn to the proof of Proposition \ref{prop:5dsugra}.

First, we set up some notation. 
Let $\Omega_X$ be the holomorphic volume form on $X$. 
To define the eleven-dimensional theory on $X \times \CC^2 \times \RR$ we use the Calabi--Yau form $\Omega_X \wedge \d z_1 \wedge \d z_2$, where $\{z_i\}$ is a holomorphic coordinate on $\CC^2$. 
Let $\omega \in \Omega^{1,1}(X)$ be a fixed K\"ahler form on $X$.
For any $k$, let $H^k(X, \Omega^k_X)_\perp$ denote the cohomology of the primitive elements. 

\begin{proof}
Consider the eleven-dimensional field $\nu_{11d}$. 
Under the equivalence
\begin{align*}
\PV^{0,\bu}(X \times \CC^2) \otimes \Omega^\bu(\RR) & \simeq H^{\bu}(X, \cO) \otimes \PV^{0,\bu}(\CC^2) \otimes \Omega^\bu(\RR) \\ & = \PV^{0,\bu}(\CC^2) \otimes \Omega^\bu(\RR) \oplus \Pi \Bar{\Omega}_X \PV^{0,\bu}(\CC^2) \otimes \Omega^\bu(\RR) 
\end{align*}
the $\nu_{11d}$ field decomposes as 
\[ 
\nu_{11d} = \nu + \Bar{\Omega}_X \til \nu .
\]
Here $\Bar{\Omega}_X$ is the complex conjugate to the holomorphic volume form on $X$. 
Notice that the zero form component of $\til \nu$ is a field with even parity. 

Next, consider the eleven-dimensional field $\mu_{11d}$. 
Under the equivalence 
\begin{align*}
\Pi \PV^{1,\bu}(X \times \CC^2) \otimes \Omega^\bu(\RR) & \simeq \Pi H^{\bu}(X, \cO) \otimes \PV^{1,\bu}(\CC^2) \otimes \Omega^\bu(\RR) \\ & \oplus \Pi H^\bu(X, \T_X) \otimes \PV^{0,\bu}(\CC^2) \otimes \Omega^\bu(\RR) \\ & = \Pi \PV^{1,\bu}(\CC^2) \otimes \Omega^\bu(\RR) \oplus \Bar{\Omega}_X \PV^{1,\bu}(\CC^2) \otimes \Omega^\bu(\RR) \\ & \oplus H^{1}(X, \T_X) \otimes \PV^{0,\bu}(\CC^2) \otimes \Omega^\bu(\RR) \oplus \Pi H^{2}(X, \T_X) \otimes \PV^{0,\bu}(\CC^2) \otimes \Omega^\bu(\RR)
\end{align*}
the field $\mu_{11d}$ decomposes as 
\begin{align*}
\mu_{11d} & = \mu + \Bar{\Omega}_X \til \mu \\ 
& + e^i \chi_i + f^a A_a +  (\Omega_X^{-1} \vee \omega^2)A_{grav} .  
\end{align*} 
Here, $\{e^i\}_{i=1,\ldots, h^{2,1}}$ is a basis for $H^{1}(X, \T_X)$ and $\{f^a\}_{a=1,\ldots, h^{1,1}-1}$ is a basis for 
\[
H^2 (X, \Omega^2_X)_\perp \subset H^2(X, \Omega^2_X) \cong H^2(X, \T_X) . 
\]
Notice that the zero form part of $\til{\mu}$ is an even field, the zero form part of $\chi_i$ is an even field, the zero form part of $A_a$ is an odd field, and the zero form part of $\mu_\omega$ is an odd field. 

The decomposition for the eleven-dimensional fields $\gamma_{11d}$ and $\beta_{11d}$ is similar. 
We record it here:
\begin{align*}
\beta_{11d} & = \beta + \Bar{\Omega}_X \til \beta \\ 
\gamma_{11d} & = \gamma + \Bar{\Omega}_X \til \gamma + e_i \psi^i + f_a B^a + \omega \wedge B_{grav}  . 
\end{align*}
Here, $\{e_i\}_{i=1,\ldots,h^{2,1}}$ is a basis for $H^2 (X, \Omega^1_X)$ dual to the basis $\{e^i\}$ under the Serre pairing.
Also, $\{f_a\}_{a=1,\ldots,h^{1,1}-1}$ is a basis for $H^{1}(X, \Omega^1_X)_\perp$ dual to the basis $\{f^a\}$. 

To compare most directly to the description of the twist of five-dimensional $\cN=1$ supergravity we modestly modify the fields. 
Let $\del$ be the holomorphic de Rham differential along $\CC^2$. 
First, we introduce a potential for the fields $\mu$ and $\til \mu$. 
Let 
\[
\alpha, \chi \in \Omega^{0,\bu}(\CC^2) \otimes \Omega^\bu(\RR)
\]
be differential forms satisfying $\del \alpha = \mu \vee \Omega_{\CC^2}$ and $\del \chi = \til \mu \vee \Omega_{\CC^2}$. 
The fields $\nu, \til \nu$ are set to zero. 
Dually, we replace the fields $\gamma, \til \gamma$ their `field strengths', suitably renormalized with respect to the volume form
\[
\eta = (\d^2 z)^{-1} \vee \del \til \gamma , \quad \psi = (\d^2 z)^{-1} \vee \del \gamma \in \Omega^{0,\bu}(\CC^2) \otimes \Omega^\bu(\RR) .
\]
The roles of $\beta, \til \beta$ were as gauge symmetries implementing $\gamma \to \gamma + \del \beta$ and $\til \gamma \to \til \gamma + \del \til \beta$. 
Since we are replacing $\gamma, \til \gamma$ by their images under the operator $\del$, these gauge symmetries are set to zero. 

In summary, we are left with the following fields 
\[
\begin{array}{cccccccccc}
\alpha,A_{grav} & \in & \Pi \Omega^{0,\bu}(\CC^2) \otimes \Omega^\bu(\RR), & \eta, B_{grav} & \in & \Pi \Omega^{0,\bu}(\CC^2) \otimes \Omega^\bu(\RR) \\
\chi, \chi_i & \in & \Omega^{0,\bu}(\CC^2) \otimes \Omega^\bu(\RR),  & \psi, \psi^i & \in & \Omega^{0,\bu}(\CC^2) \otimes \Omega^\bu(\RR), & i=1,\ldots, h^{2,1}  \\
A_a & \in & \Pi \Omega^{0,\bu}(\CC^2) \otimes \Omega^\bu(\RR), & B^a & \in & \Pi \Omega^{0,\bu}(\CC^2) \otimes \Omega^\bu(\RR) , & a = 1, \ldots, h^{1,1}-1.
\end{array}
\]

Let us plug these fields in to the eleven-dimensional action.
First, consider the BF term $\frac12 \int^{\Omega} \frac{1}{1-\nu_{11d}} \mu_{11d}^2 \gamma_{11d}$. 
With the field redefinitions above, this decomposes as
\begin{multline}\label{eqn:5dsugra1}
 \int_{\CC^2 \times \RR}^\Omega \left( \frac12 \del \alpha \wedge \del \alpha \wedge \eta + \del A_{grav} \wedge \del A_{grav} \wedge B_{grav} \right) \\
 + \int_{\CC^2 \times \RR}^\Omega \left( \del \alpha \wedge \del \chi \wedge \psi + \del \alpha \wedge \del \chi_i \wedge \psi^i + \del \alpha \wedge \del A_a \wedge B^a \right) .
\end{multline}
This term agrees with the second line in the five-dimensional action \eqref{eqn:5daction}. 

Finally, consider the term in the eleven-dimensional action $J(\gamma_{11d}) = \frac16 \int \gamma_{11d} \wedge \del \gamma_{11d} \wedge \del \gamma_{11d}$. 
This induces the five-dimensional Chern--Simons term 
\beqn\label{eqn:5dsugra2}
\frac16 \int_{\CC^2 \times \RR} B_{grav} \del B_{grav} \del B_{grav} .
\eeqn
This completes the proof. 
\end{proof}

\parsec[s:5dglobal]

In \S \ref{sec:global} we computed the global symmetry algebra of the eleven-dimensional theory on $\CC^5 \times \RR$ and found a close relationship to the exceptional super Lie algebra $E(5,10)$. 
In this section we deduce the form of the global symmetry algebra of the five-dimensional compactified theory on $\CC^2 \times \RR$. 

Consider the full de Rham cohomology of $X$ by
\[
H^\bu (X, \Omega^\bu) = \oplus_{i,j} H^i (X, \Omega^j_X)  .
\]
This is a graded commutative algebra using the wedge product of differential forms. 
Next, consider the space of holomorphic functions $\cO(\CC^2)$ on $\CC^2$. 
The Poisson bracket $\{-,-,\}$ associated to the standard holomorphic symplectic structure on $\CC^2$ endows $\cO(\CC^2)$ with the structure of a Lie algebra. 
In particular, we can tensor $\cO(\CC^2)$ with $H^\bu (X, \Omega^\bu)$ to obtain the structure of a graded Lie algebra on 
\[
H^\bu (X, \Omega^\bu) \otimes \cO(\CC^2) .
\]
Let $[\omega] \in H^1(X, \Omega^1_X)$ be the class of the K\"ahler form on $X$.

The global symmetry algebra of the compactified theory along the Calabi--Yau three-fold $X$ is equivalent to a deformation of this graded Lie algebra. 
The deformation introduces the following Lie bracket 
\[
\big[ [\omega] \otimes f, [\omega] \otimes g\big] = [\omega^2] \otimes \{f,g\} \in H^{2,2}(X, \Omega^2) \otimes \cO(\CC^2) . 
\]

\section{Twisted supergravity on AdS space}
\label{sec:ads}

So far, we have mostly given evidence for the eleven-dimensional theory as a twist of supergravity in a flat background. 
We now turn to twisted versions of AdS backgrounds of eleven-dimensional supergravity. 

In M-theory, AdS backgrounds arise from backreacting some number $N$ of branes. 
For M2 branes, the backreacted geometry is ${\rm AdS}_4 \times S^7$.
For the M5 branes, the backreacted geometry is ${\rm AdS}_7 \times S^4$. 

According to the AdS/CFT correspondence, supergravity on such backgrounds should be dual to the relevant worldvolume theory in the large-$N$ limit. 
In this section, we do not directly refer to the worldvolume theories on the holomorphic twists of the M2 and M5 branes.
Rather, we identify the fields sourcing the branes at the level of the twisted eleven-dimensional theory.
In turn, we give a proposal for the twisted AdS background. 
We will show that the twist of the superconformal algebra is a global symmetry of this twisted background. 

\subsection{Superconformal algebras}

The complex form of the algebra of isometries for supergravity in both the ${\rm AdS}_4$ and ${\rm AdS}_7$ backgrounds is $\lie{osp}(8|2)$ (though, their real forms differ). 
This agrees with the complex form of the 6d $\cN=(2,0)$ superconformal algebra and the 3d $\cN=8$ superconformal algebra. 
The bosonic part of this algebra is isomorphic to $\lie{so}(8) \oplus \lie{sp}(2) \cong \lie{so}(8) \oplus \lie{so}(5)$. 

The minimal supercharge $Q$ acting on eleven-dimensional supersymmetry algebra is an element of this superconformal algebra. 
Its $Q$-cohomology is isomorphic to $\lie{osp}(6|1)$. (Twisted superconformal symmetry in six dimensions is studied in detail by the second  two authors  in~\cite{SWsuco2}.)
This super Lie algebra will play the role of the isometries in the twisted AdS background. 

\subsection{The ${\rm AdS}_4 \times S^7$ background}

In this section we introduce the analog of the ${\rm AdS}_4 \times S^7$ background in our conjectural description of the minimal twist of eleven-dimensional supergravity. 

\parsec[]

Decompose the eleven-dimensional manifold $\CC^5 \times \RR$ as
\[
 \CC^4_w\times \CC_z \times \RR .
\]

Analogous to before, the ${\rm AdS}_4 \times S^7$ background arises from backreacting M2 branes. Consider a stack of $N$ M2 branes wrapping $\R\times \C_z$. A natural interaction to consider is 
\[
I_{M2}(\gamma) = N\int_{\C_z} \gamma + \cdots
\] 
which is nonzero only on the component of $\gamma$ in $\Omega^1(\R)\otimes \Omega^{1,1}(\C^5)$. Unlike the case of M5 branes, the coupling does not involve choosing a primitive for a field strength---it is an electric coupling.
We have only indicated the lowest order coupling, the $\cdots$ indicate higher-order couplings which will be higher order in the fields of the eleven-dimensional theory and explicitly involve the fields in the worldvolume theory. 

This coupling is justified by comparison with the physical theory and by dimensional reduction. 
Indeed, as discussed in~\S\ref{s:components}, the component of $\gamma$ which participates in the above coupling is a component of the $C$-field of eleven dimensional supergravity. Thus, the proposal mirrors electric couplings of M2 branes in the physical theory, which simply involves integrating the $C$-field over the worldvolume of the brane. 

Moreover, reducing on a circle transverse to the M2 brane yields the $SU(4)$ twist of type IIA supergravity on $\R^2\times \C_z\times \C^3$ with $N$ $D2$ branes wrapping $\R\times \C_z$. As is shown in \cite{CLsugra}, an electric coupling of D2 branes to the $SU(4)$ twist of type IIA supergravity is given by 
\[
I_{D2}(\gamma) = N \int_{\R\times\C_z} \gamma + \cdots
\] 
where $\gamma$ now denotes the 1-form field of the $SU(4)$ twist of type IIA supergravity. It is immediate that the pullback of $I_{M2}$ along the map in the proof of proposition \ref{prop:dimred} recovers $I_{D2}$. 

\parsec[sec:m2backreact]

The backreacted geometry will be given by a solution to the equations of motion upon deforming the eleven-dimensional action by the interaction $I_{M2}(\gamma)$. 
Varying the deformed action with respect to $\gamma$,
we obtain the equation of motion
\beqn\label{eqn:ads4eom1}
\dbar \mu + \frac12 [\mu, \mu] + \partial\gamma\partial\gamma = N \Omega^{-1} \delta_{w=0} .
\eeqn
Here $[-,-]$ is the Schouten bracket. 
Varying $\beta$, we obtain the equation of motion
\beqn\label{eqn:adseom2}
\div \mu = 0 .
\eeqn

\begin{lem}
Let
\[
 F_{M2} = \frac{6}{(2\pi i)^4} \frac{\sum_{a=1}^4 \wbar_a \d \wbar_1 \cdots \Hat{\d \wbar_a} \cdots \d \wbar_4}{\|w\|^{8}} \partial_z .
\]
Then the background where $\mu = N F_{M2}$ and $\gamma = 0$
satisfies the above equations of motion in the presence of a stack of $N$ M2 branes:
\begin{align*}
\dbar (N F_{M2}) + \frac12 [N F_{M2}, N F_{M2}] & = N \Omega^{-1} \delta_{w=0} \\
\div (N F_{M2}) & = 0  .
\end{align*}
Here we set all components of the field $\gamma$ equal to zero (as well as the fields $\nu,\beta$). 
\end{lem}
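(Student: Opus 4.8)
The plan is to reduce both displayed equations to one classical fact: the Bochner--Martinelli formula on $\CC^4_w$. First I would dispose of everything but the $\dbar\mu$ equation. In the proposed background $\gamma=\nu=\beta=0$ and $\mu=NF_{M2}$ has the special form $g\,\partial_z$ with $g$ a $(0,3)$-form on $\CC^4_w$ that is independent of the transverse coordinate $z$. Hence $\mu\wedge\mu=(g\wedge g)(\partial_z\wedge\partial_z)=0$, so $\div(\mu^2)$ and the $\mu^2\vee\del\gamma$ of \eqref{eqn:eomnu} vanish; the Schouten bracket $[\mu,\mu]$ vanishes as well, since every summand in the bracket of $\T_X$-valued Dolbeault forms contains either $[\partial_z,\partial_z]=0$ or a Lie derivative $L_{\partial_z}(-)$ of a form with no $z$-dependence. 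For the same reason $\div(NF_{M2})=N\,\partial_z g=0$, which is the second displayed equation, and the remaining equations of motion \eqref{eqn:eombeta}--\eqref{eqn:eomnu} hold automatically because all of their nonvanishing terms involve $\gamma$, $\nu$, $\beta$, or $\mu\wedge\mu$.

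It then remains to prove $\dbar(NF_{M2})=N\,\Omega^{-1}\delta_{w=0}$. Contracting against the inverse flat volume form $\Omega^{-1}=\partial_{w_1}\wedge\cdots\wedge\partial_{w_4}\wedge\partial_z$ that appears on the right-hand side, this is equivalent to the statement that the $(0,3)$-form
\[
g=\frac{6}{(2\pi i)^4}\,\frac{\sum_{a=1}^4\wbar_a\,\d\wbar_1\wedge\cdots\wedge\Hat{\d\wbar_a}\wedge\cdots\wedge\d\wbar_4}{\|w\|^{8}}
\]
satisfies $\dbar g=\delta_{w=0}$ as a current on $\CC^4_w$. The factor $\wbar_a$ makes the coefficient $O(\|w\|^{-7})$, hence locally integrable on $\CC^4_w\cong\RR^8$, so $g$ defines a current at all. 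I would then invoke the two standard ingredients. First, for $w\ne0$ a direct differentiation shows $\dbar g$ is proportional to $\big(\sum_a\partial_{\wbar_a}\partial_{w_a}\|w\|^{-6}\big)\,\d\wbar_1\wedge\cdots\wedge\d\wbar_4$, which vanishes because $\|w\|^{-6}$ is, up to a constant, the fundamental solution of the Laplacian on $\RR^8$ and so harmonic away from the origin. Second, pairing $\dbar g$ with a smooth compactly supported function and applying Stokes on the complement of a ball $B_\varepsilon$, the boundary contribution over $\partial B_\varepsilon\cong S^7$ converges as $\varepsilon\to0$ to the value of the test function at the origin, with the constant $6/(2\pi i)^4$ precisely the normalization that makes this limit exact. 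Together these give $\dbar g=\delta_{w=0}$; re-tensoring with $\partial_z$ and tracking degrees ($\Omega^{-1}\delta_{w=0}\in\PV^{1,4}$ is obtained by contracting the holomorphic part $\d w_1\wedge\cdots\wedge\d w_4$ of the delta current against $\partial_{w_1}\wedge\cdots\wedge\partial_{w_4}$) recovers the first displayed equation.

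The real obstacle is bookkeeping rather than analysis: one must confirm that the constant $6/(2\pi i)^4$ and the signs suppressed in the shorthand $\sum_a\wbar_a\,\d\wbar_1\wedge\cdots\wedge\Hat{\d\wbar_a}\wedge\cdots$ are exactly those making the residue equal $\delta_{w=0}$ with coefficient one, and that the orientation conventions for $\Omega^{-1}$, for $\delta_{w=0}$ as a current, and for the isomorphism $\PV^{4,\bu}\cong_\Omega\PV^{0,\bu}$ are all mutually consistent; no idea beyond the Dolbeault--Grothendieck lemma is needed. As an independent check one can reduce this background along a circle transverse to the branes and verify, using the reduction map of Proposition~\ref{prop:dimred}, that it reproduces the known D2-brane background of the $SU(4)$ twist of type IIA supergravity, which fixes the normalization a second way.
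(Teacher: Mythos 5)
Your proposal is correct and follows essentially the same route as the paper: kill the bracket, divergence, and all $\gamma$-dependent terms by type/$z$-independence reasons, and reduce the remaining equation to the statement that $F_{M2}$ is the Bochner--Martinelli kernel on $\CC^4_w$, i.e.\ $\dbar F_{M2}=\Omega^{-1}\delta_{w=0}$. The only difference is that the paper simply cites this kernel identity, whereas you rederive it via harmonicity of $\|w\|^{-6}$ on $\RR^8$ and a Stokes argument, which is the standard proof of that fact.
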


\begin{proof}
Upon specializing $\gamma = 0$, the last term in the first equation above vanishes. The equation $\dbar F_{M2} = \Omega^{-1} \delta_{w=0}$ characterizes the Bochner--Martinelli kernel representing the residue class on $\CC^4 \, \setminus \, 0$. 
It is clear that $\div F_{M2} = 0$ and 
\[
[F_{M2}, F_{M2}] = 0
\] 
by simple type reasons. 
\end{proof}

\parsec[]

To provide evidence for the claim that this is the twisted analog of the AdS geometry, we will show that the twist of the symmetries present in the physical theory are witnessed in the twisted theory in this background. 

We have recalled that the $Q$-cohomology of $\lie{osp}(8|2)$ is isomorphic to the super Lie algebra $\lie{osp}(6|1)$. 
We will define an embedding of $\lie{osp}(6|1)$ into the eleven-dimensional theory on $\CC^5 \times \RR \setminus \{w=0\}$ which corresponds to the twist of the 3d superconformal algebra.
We first focus on the case where the flux $N=0$, for which the embedding can be extended to all of $\CC^5 \times \RR$. 

\parsec[] 

The bosonic part of $\lie{osp}(6|1)$ is the direct sum Lie algebra $\lie{sl}(4) \oplus \lie{sl}(2)$. 
The Lie algebra $\lie{sl}(2)$ represents (holomorphic) conformal transformations in $\CC_z$, which are inherited from the natural M\"obius group action on~$P^1(\C)$; the vector fields representing these transformations are not all divergence-free, and as such must be slightly adjusted. 
The Lie algebra $\lie{sl}(4)$ represents rotations along the plane $\CC^4_w$.   

\begin{itemize}[leftmargin=\parindent]
\item The bosonic summand $\lie{sl}(2)$ is mapped to the vector fields:
\[
\frac{\del}{\del z} ,\quad z \frac{\del}{\del z} - \frac14 \sum_{a=1}^4 w_a \frac{\del}{\del w_a} , \quad z \left(z \frac{\del}{\del z} - \frac12 \sum_{a=1}^4 w_a \frac{\del}{\del w_a} \right) \in \PV^{1,0}(\CC^5) \otimes \Omega^0(\RR) .
\]
Notice that these vector fields are divergence-free and reduce to the usual holomorphic conformal transformations along $w=0$.
\item The bosonic summand $\lie{sl}(4)$ is mapped to four-dimensional rotations: 
\[
\sum_{a,b=1}^4 B_{ab} w_a \frac{\del}{\del w_b} \in \PV^{1,0}(\CC^5) \otimes \Omega^0(\RR) , \quad (B_{ab}) \in \lie{sl}(4) .
\]
\end{itemize}

The odd part of the algebra $\lie{osp}(6|1)$ is $\wedge^4 W \otimes R$ where $W$ is the fundamental $\lie{sl}(4)$ representation and $R$ is the fundamental $\lie{sl}(2)$ representation. 
It is natural to split $R = \CC_{+1} \oplus \CC_{-1}$, so that the odd part decomposes as
\[
(\wedge^2 \CC^4)_{+1} \oplus (\wedge^2 \CC^4)_{-1} .
\]

\begin{itemize}[leftmargin=\parindent]
\item 
The fermionic summand $(\wedge^2 \CC^4)_{+1}$ consists of the supertranslations. 
It is mapped to the fields: 
\[
\frac{1}{2} (w_a \d w_b - w_b \d w_a) \in \Omega^{1,0}(\CC^5) \otimes \Omega^0(\RR) , \quad a,b=1,2,3,4 .
\] 
\item The fermionic summand $(\wedge^2 \CC^4)_{-1}$ consists of the remaining superconformal transformations. 
It is mapped to the fields: 
\[
\frac{1}{2} z (w_a \d w_b - w_b \d w_a) \in \Omega^{1,0}(\CC^5) \otimes \Omega^0(\RR) , \quad a,b=1,2,3,4. 
\] 
\end{itemize}

\begin{lem}\label{lem:m2emb}
These assignments define an embedding of $\lie{osp}(6|1)$ into the linearized BRST cohomology of the fields of the eleven-dimensional theory on $\CC^5 \times \RR$. 
Equivalently, it defines an embedding
\[
i_{M2} \colon \lie{osp}(6|1) \hookrightarrow E(5,10) .
\]
\end{lem}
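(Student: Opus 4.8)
The plan is to check directly that the stated assignments define an injective map of $\ZZ/2$-graded Lie algebras, using the explicit model of the linearized BRST cohomology furnished by Theorem~\ref{thm:global}: up to the (here irrelevant) central summand $\CC_b$, it is $E(5,10)$, with even part $\Vect_0(\CC^5)$, odd part the closed holomorphic two-forms presented as classes $[\gamma]\in\Omega^1(\CC^5)/\del\cO(\CC^5)$ with $\alpha=\del\gamma$, and brackets given by the Lie-derivative action of $\Vect_0$ on forms together with the quadratic bracket $[[\gamma],[\gamma']]=\Omega^{-1}\vee(\del\gamma\wedge\del\gamma')$ of~\eqref{eqn:eqb} (the $E(5,10)$-side of this being~\eqref{eqn:e510}). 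The second, equivalent, formulation of the lemma then follows at once.

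First I would verify that each image is a $\delta^{(1)}$-cocycle in the sense of~\eqref{eqn:linearBRST}. The $\gamma$-type fields $\tfrac12(w_a\d w_b-w_b\d w_a)$ and $\tfrac12 z(w_a\d w_b-w_b\d w_a)$ are holomorphic one-forms constant along $\RR$, hence automatically closed; the $\mu$-type fields are holomorphic vector fields constant along $\RR$, so one only needs that they are divergence-free. For the $\lie{sl}(4)$ rotations this is the tracelessness of $B\in\lie{sl}(4)$. For the $\lie{sl}(2)$ generators, the $w$-dependent correction terms $-\tfrac14\sum_a w_a\del_{w_a}$ and $-\tfrac12 z\sum_a w_a\del_{w_a}$ are precisely what cancels $L_{z\del_z}\Omega=\Omega$ and $L_{z^2\del_z}\Omega=2z\,\Omega$ against the volume form $\Omega=\d z\wedge\d^4 w$; this is also the origin of the conformal weight $\tfrac14$ carried by the $w_a$, and makes it manifest that setting $w=0$ recovers the usual M\"obius vector fields on $\CC_z$.

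Next I would check the bracket relations, organized by type. The purely bosonic relations $[\lie{sl}(4),\lie{sl}(4)]$, $[\lie{sl}(2),\lie{sl}(2)]$ and $[\lie{sl}(2),\lie{sl}(4)]=0$ are routine computations with Lie brackets of vector fields; for $\lie{sl}(2)$ one checks the corrections cause no interference, which is immediate since $\sum_a w_a\del_{w_a}$ commutes with $\del_z$ and with $z\del_z$ and brackets trivially with $z\sum_a w_a\del_{w_a}$, so the relations reduce to those of the naive generators $\del_z,\ z\del_z,\ z^2\del_z$. The mixed bosonic--fermionic relations say that $\lie{sl}(4)$ acts on the closed two-forms $\d w_a\wedge\d w_b$ in the standard way by Lie derivative and that the $\lie{sl}(2)$-generators act on the $\gamma$-fields with conformal weights $\mp\tfrac12$; in particular $L_{\del_z}$ carries $\tfrac12 z(w_a\d w_b-w_b\d w_a)$ to $\tfrac12(w_a\d w_b-w_b\d w_a)$, realizing $(\wedge^2\CC^4)_{-1}\to(\wedge^2\CC^4)_{+1}$. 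The fermionic--fermionic relations are the new ingredient and use~\eqref{eqn:eqb}: for two supertranslations $\del\gamma=\d w_a\wedge\d w_b$ and $[\gamma,\gamma']=\Omega^{-1}\vee(\d w_a\wedge\d w_b\wedge\d w_c\wedge\d w_d)$ equals $\pm\del_z$ when $\{a,b,c,d\}=\{1,2,3,4\}$ and vanishes otherwise, matching the pairing $\wedge^2\CC^4\otimes\wedge^2\CC^4\to\wedge^4\CC^4$ tensored into the translation generator of $\lie{sl}(2)$; for a supertranslation bracketed with a superconformal generator one gets the $z\del_z$-contribution $\Omega^{-1}\vee(z\,\d w_a\wedge\d w_b\wedge\d w_c\wedge\d w_d)$ together with a $\sum_e w_e\del_{w_e}$-contribution coming from the $\d z$-part of $\del\gamma'$, and these reassemble precisely into the divergence-free Cartan generator $z\del_z-\tfrac14\sum_a w_a\del_{w_a}$ plus an $\lie{sl}(4)$-rotation; and two superconformal generators bracket in the same fashion to the special conformal generator $z^2\del_z-\tfrac12 z\sum_a w_a\del_{w_a}$. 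These reproduce the brackets of $\lie{osp}(6|1)$ up to an overall rescaling of generators.

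The main obstacle I anticipate is bookkeeping in this last step: arranging normalizations and signs so that the $w$-dependent corrections produced by $[\gamma,\gamma']$ come out with exactly the coefficients $-\tfrac14$ and $-\tfrac12$ demanded by the divergence-free condition, and confirming that $[\gamma,\gamma']$ never produces a vector field outside $\lie{sl}(2)\oplus\lie{sl}(4)$ --- equivalently, that the ``extra'' isotypic pieces of $\Sym^2$ of the odd part of $\lie{osp}(6|1)$, which lie outside the even subalgebra, do not appear. Once the relations are matched, injectivity is automatic: the images of the basis elements are visibly linearly independent vector fields and one-forms, and none of the one-forms lies in $\del\cO(\CC^5)$ since $\del\gamma\neq0$, so the map descends to an embedding on cohomology, which is the assertion.
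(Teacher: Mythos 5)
Your proposal is correct and follows essentially the same route as the paper: identify the linearized BRST cohomology with the (centrally extended) $E(5,10)$ via Theorem~\ref{thm:global}, send the odd generators to the closed two-forms obtained by applying $\del$ to the $\gamma$-type assignments, and check the $\lie{osp}(6|1)$ relations against the $E(5,10)$ brackets. The paper's own proof is in fact terser---it only records the images $\d w_a\wedge\d w_b$ and $z\,\d w_a\wedge\d w_b+\tfrac12\,\d z\wedge(w_a\d w_b-w_b\d w_a)$ and leaves the bracket and injectivity checks implicit---so your explicit verifications (divergence-freeness of the corrected $\lie{sl}(2)$ generators, closure of the odd-odd brackets inside $\lie{sl}(2)\oplus\lie{sl}(4)$) fill in exactly what the paper omits.
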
 
\begin{proof}
The second assertion follows from Theorem \ref{thm:global}, which shows that, as a super Lie algebra, the linearized BRST cohomology of the global symmetry algebra of the eleven-dimensional theory on $\CC^5 \times \RR$ is the trivial central extension of $E(5,10)$. 
Recall that the odd part of $E(5,10)$ is precisely the module of closed two-forms on $\CC^5$. 
To explicitly describe the embedding into $E(5,10)$ we simply apply the de Rham differential to the last two formulas above.
Recall, we are using the holomorphic coordinates $(z,w_1,\ldots,w_4)$ on $\CC^5$ where $z$ is the holomorphic coordinate along the M2 brane. 
\begin{itemize}[leftmargin=\parindent]
\item 
The fermionic summand $(\wedge^2 \CC^4)_{+1}$ embeds into closed two-forms as
\[
\d w_a \wedge \d w_b, \quad a,b=1,2,3,4. 
\] 
\item The fermionic summand $(\wedge^2 \CC^4)_{-1}$ embeds into closed two-forms as
\[
z \d w_a  \wedge \d w_b + \frac12 \d z \wedge (w_a \d w_b - w_b \d w_a) , \quad a,b=1,2,3,4. 
\] 
\end{itemize}
\end{proof}
\parsec[]

Next, we turn on $N \ne 0$ units of nontrivial flux. 
Since not all of the fields we wrote down above commute with the flux $N F_{M2}$, they are not compatible with the total differential $\delta^{(1)} + [N F_{M2}, -]$ acting on the fields supported on $\CC^5 \times \RR \setminus \{w=0\}$. 
Nevertheless, we have the following. 

\begin{prop}
\label{prop:brads4}
There exist $N$-dependent corrections to the fields defining the embedding of $\lie{osp}(6|1)$ summarized above which are closed for the modified BRST differential $\delta^{(1)} + [N F_{M2},-]$. 
Furthermore, these order $N$ corrections define an embedding of $\lie{osp}(6|1)$ inside the cohomology of the fields of eleven-dimensional theory on $\CC^5 \times \RR \setminus \CC \times \RR$ with respect to the differential $\delta^{(1)} + [N F_{M2},-]$.
\end{prop}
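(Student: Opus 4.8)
The plan is to produce the required corrections order by order in $N$ using the fact that the obstruction to correcting lies in a cohomology group that vanishes away from the locus $\{w=0\}$. Concretely, write each generator $X$ of $\lie{osp}(6|1)$ as a field $\phi_X^{(0)}$ on $\CC^5 \times \RR$ as in Lemma~\ref{lem:m2emb}; I would seek a corrected field $\phi_X = \phi_X^{(0)} + N \phi_X^{(1)} + N^2 \phi_X^{(2)} + \cdots$ on the complement $\CC^5 \times \RR \smallsetminus (\CC_z \times \RR)$ satisfying $(\delta^{(1)} + [N F_{M2},-])\phi_X = 0$. Since $\delta^{(1)} \phi_X^{(0)} = 0$, the order-$N$ equation reads $\delta^{(1)} \phi_X^{(1)} = -[F_{M2}, \phi_X^{(0)}]$. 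First I would check that the right-hand side is $\delta^{(1)}$-closed (it is, because $\delta^{(1)} F_{M2} = \Omega^{-1}\delta_{w=0}$ is supported on $\{w=0\}$, hence vanishes on the complement, and $\phi_X^{(0)}$ is $\delta^{(1)}$-closed). The obstruction to solving for $\phi_X^{(1)}$ then lives in the linearized BRST cohomology of the fields on $\CC^5 \times \RR \smallsetminus (\CC_z \times \RR)$; I would argue that in the relevant degree this cohomology is controlled by the Dolbeault cohomology of $\CC^4_w \smallsetminus 0$, which is concentrated in degrees $0$ and $3$, and that the class $[F_{M2}, \phi_X^{(0)}]$ sits in a slot where the obstruction vanishes so that a primitive $\phi_X^{(1)}$ exists; one then recurses. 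In fact, for many of the generators the bracket $[F_{M2}, \phi_X^{(0)}]$ will vanish identically for type reasons (e.g.\ $\mathrm{sl}(4)$ rotations preserving $F_{M2}$, or odd generators built from $\d w_a \wedge \d w_b$ whose Schouten bracket with $F_{M2}\,\partial_z$ is forced to zero), so only finitely many generators — essentially the $\mathrm{sl}(2)$ elements that dilate and invert $z$, and the $(\wedge^2\CC^4)_{-1}$ fermions that carry a factor of $z$ — require genuine corrections, and these terminate at finite order in $N$ because the $z$-dependence is at most linear.

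After the corrected fields $\phi_X$ are constructed, the second task is to verify they assemble into an $L_\infty$ (here: Lie, since the relevant structure in cohomology is strict) embedding, i.e.\ that the brackets of the full $L_\infty$ structure of the eleven-dimensional theory — with differential deformed by $[N F_{M2},-]$ — close on $\{\phi_X\}$ and reproduce the structure constants of $\lie{osp}(6|1)$. The strategy is the same homotopy-transfer / deformation argument used in \S\ref{s:residual} and in the proof of Lemma~\ref{lem:m2emb}: since the $N=0$ embedding is already a map of $L_\infty$ algebras into $(\cE, \delta^{(1)})$ by Lemma~\ref{lem:m2emb}, and the deformation of the differential is a Maurer--Cartan element, the corrected embedding is obtained by transporting the old one along the deformation; the brackets change only by terms proportional to $N$ that are themselves exact, and can be absorbed into higher components $\Phi^{(k)}$ of the morphism exactly as in the proof in \S\ref{s:residual}. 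Because $\lie{osp}(6|1)$ is a (finite-dimensional simple) super Lie algebra, there are no higher $L_\infty$ operations to match beyond the binary and ternary ones already present, so this is a finite check.

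The main obstacle I expect is the vanishing statement for the obstruction cohomology on the punctured space: one needs to identify precisely, degree by degree, the linearized BRST cohomology of $\Pi\cE$ on $\CC^5 \times \RR \smallsetminus (\CC_z \times \RR)$ and confirm that the classes $[F_{M2}, \phi_X^{(0)}]$ are exact there. This amounts to running the spectral sequence of \S\ref{s:E(5,10)} — first $\dbar + \d_\RR$, then $\div$ and $\del$ — but now with $\CC^5$ replaced by $\CC^4_w\smallsetminus 0$ times $\CC_z$, so that $H^\bullet_{\dbar}(\CC^4\smallsetminus 0)$ contributes an extra class in Dolbeault degree $3$ dual to the Bochner--Martinelli kernel $F_{M2}$ itself. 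The subtlety is bookkeeping: tracking which component of each generator receives a nonzero $[F_{M2},-]$ image, confirming it pairs trivially against the degree-$3$ class (so no residue obstruction arises), and checking the recursion terminates. Once this linear-algebra-with-sheaves input is pinned down, the rest is the by-now-standard homotopy transfer argument, and the proof concludes.
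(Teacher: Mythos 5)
Your plan is essentially the paper's argument: the paper packages your order-by-order correction as a spectral sequence for $\delta^{(1)}+[NF_{M2},-]$ on $(\CC^4_w\setminus 0)\times\CC_z\times\RR$, with the $\dbar$-cohomology of $\CC^4\setminus 0$ concentrated in degrees $0$ and $3$ playing exactly the role you assign it, and the crux you defer is settled there by the direct computation $\big[[F_{M2}],\, z(w_a\,\d w_b - w_b\,\d w_a)\big] = (w_1\cdots w_4)^{-1}(w_a\,\d w_b - w_b\,\d w_a) = 0$, which holds because multiplication by any $w_a$ annihilates the principal-parts class $(w_1\cdots w_4)^{-1}$ in $H^3(\CC^4\setminus 0,\cO)$ --- not because an obstruction group vanishes or a residue pairing is trivial. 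The only other small correction: the recursion terminates for Dolbeault-degree reasons (the operator $[F_{M2},-]$ raises antiholomorphic degree by three, so the spectral sequence collapses after the second page), not because the $z$-dependence of the generators is at most linear.
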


\begin{proof}
Let $\cL(\CC^5 \times \RR \setminus \{w=0\})$ denote the super $L_\infty$ algebra obtained by parity shifting the fields of the eleven-dimensional theory. 
We make the identification 
\[
(\CC^5 \times \RR) \setminus \{w=0\} \cong (\CC_w^4 \setminus 0) \times \CC_z \times \RR .
\]

Set $F = F_{M2}$ for notational convenience. Recall that we are viewing $F$ as an element of $\PV^{1,3}(\CC_w^4 \setminus 0) \otimes \Omega^{0,0}(\CC_z) \otimes \Omega^0(\RR)$. 
The operator $[F,-]$ acts on the fields according to two types of maps:
\begin{align*}
[F ,-] & \colon \PV^{i,\bu}(\CC^4_w \setminus 0) \otimes \PV^{j,\bu} (\CC_z) \otimes \Omega^\bu (\RR) \to \PV^{i,\bu+3}(\CC^4_w \setminus 0) \otimes \PV^{j,\bu} (\CC_z) \otimes \Omega^\bu (\RR) \\
[F,-] & \colon \Omega^{i,\bu}(\CC^4_w \setminus 0) \otimes \Omega^{j,\bu} (\CC_z) \otimes \Omega^\bu (\RR) \to \Omega^{i,\bu+3}(\CC^4_w \setminus 0) \otimes \Omega^{j,\bu} (\CC_z) \otimes \Omega^\bu (\RR).
\end{align*}


The first page of the spectral sequence is the cohomology with respect to the original linearized BRST differential $\delta^{(1)}$. 
Recall that the linearized BRST differential decomposes as
\[
\delta^{(1)} = \dbar + \d_{\RR} + \div |_{\mu \to \nu} + \del |_{\beta \to \gamma}  .
\]
To compute this page, we use an auxiliary spectral sequence which simply filters by the holomorphic form and polyvector field type. 
This first page of this auxiliary spectral sequence is simply given by the cohomology with respect to $\dbar + \d_{\RR}$. 
This cohomology is given by
\begin{equation}
  \label{eqn:ads4ss} 
  \begin{tikzcd}[row sep = 1 ex]
    + & - \\ \hline
H^\bu(\CC^4\setminus 0, \T) \otimes H^\bu(\CC, \cO) & H^\bu(\CC^4 \setminus 0, \cO) \otimes H^\bu(\CC, \cO) \\
H^\bu(\CC^4\setminus 0, \cO) \otimes H^\bu(\CC, \T) \\
H^\bu(\CC^4\setminus 0, \cO) \otimes H^\bu(\CC, \cO) & H^\bu(\CC^4\setminus 0, \cO) \otimes H^\bu(\CC, \Omega^1) \\ & H^\bu(\CC^4\setminus 0, \Omega^1) \otimes H^\bu(\CC, \cO)  
\end{tikzcd}
\end{equation}
where $\T$ denotes the holomorphic tangent sheaf, $\Omega^1$ denotes the sheaf of holomorphic one-forms, and $\cO$ is the sheaf of holomorphic functions.

The cohomology of $\CC$ is concentrated in degree zero and there is a dense embedding
\[
\CC[z] \hookrightarrow H^\bu(\CC, \cF) 
\]
for $\cF = \cO, \T$, or $\Omega^1$. 

For $\cF = \cO, \T$, or $\Omega^1$, the cohomology $H^\bu(\CC^4 \setminus 0, \cF)$ is concentrated in degrees $0$ and $3$. 
There are the following dense embeddings 
\begin{align*}
\CC[w_1,\ldots, w_4] & \hookrightarrow H^0(\CC^4 \setminus 0, \cO) \\ 
\CC[w_1,\ldots, w_4] \{\partial_{w_i}\} & \hookrightarrow H^0(\CC^4 \setminus 0, \T) \\
\CC[w_1,\ldots, w_4] \{\d w_i\} & \hookrightarrow H^0(\CC^4 \setminus 0, \Omega^1) 
\end{align*}
and
\begin{align*}
(w_1\cdots w_4)^{-1} \CC[w_1^{-1},\ldots, w_4^{-1}] & \hookrightarrow H^3(\CC^4 \setminus 0, \cO) \\ 
(w_1\cdots w_4)^{-1} \CC[w_1^{-1},\ldots, w_4^{-1}] \{\partial_{w_i}\} & \hookrightarrow H^3(\CC^4 \setminus 0, \T) \\
(w_1\cdots w_4)^{-1} \CC[w_1^{-1},\ldots, w_4^{-1}] \{\d w_i\} & \hookrightarrow H^3(\CC^4 \setminus 0, \Omega^1) .
\end{align*}

It follows that (up to completion) the cohomology 
\[
H^\bu(\cL(\CC^5 \times \RR \setminus \{w=0\}) , \dbar)
\]
is the direct sum of $H^\bu(\cL(\CC^5 \times \RR), \dbar)$ with 
\begin{equation}
  \label{eqn:ads4ss2} 
  \begin{tikzcd}[row sep = 1 ex]
    - & + \\ \hline
H^3(\CC^4\setminus 0, \cO)[z] \{\partial_{w_i}\}  \ar[r, dotted, "\div"] & H^3(\CC^4 \setminus 0, \cO) [z] \\
H^3(\CC^4\setminus 0, \cO) [z] \partial_z \ar[ur, dotted, "\div"'] \\
H^3(\CC^4\setminus 0, \cO) [z] \ar[r, dotted, "\del"] \ar[dr, dotted, "\del"'] & H^3(\CC^4\setminus 0, \cO)[z] \d z \\ & H^3(\CC^4\setminus 0, \Omega^1)[z] \{\d w_i\} .
\end{tikzcd}
\end{equation}
The remaining piece of the original BRST operator is drawn in dotted lines. 
The first page of the spectral sequence converging to the cohomology with respect to $\delta^{(1)} + [N F, -]$ is given by the cohomology of the global symmetry algebra on $\CC^5 \times \RR$, which we computed in \S \ref{sec:global}, plus the cohomology of the above complex with respect to the dotted-line operators. 
In this description, the image of the flux $F$ at this page in the spectral sequence corresponds to the class 
\[
[F] = (w_1 \cdots w_4)^{-1} \partial_z \in H^3(\CC^4\setminus 0, \cO) [z] \partial_z .
\]

The next page of the spectral sequence is given by computing the cohomology with respect to the operator $[N F,-]$. 
As observed above, this operator maps Dolbeault degree zero elements to Dolbeault degree three elements. 
For degree reasons, there are no further differentials and the spectral sequence collapses after the second page. 

The embedding of $\lie{osp}(6|1)$ we wrote down in lemma \ref{lem:m2emb} lands in the kernel of the original BRST operator $\delta^{(1)}$. 
To see that it this embedding can be lifted to the full cohomology we need to check that any element in the image of the original embedding is annihilated by $\big[ N [F] , - \big]$. 
This is a direct calculation. 
For instance, recall that an element in the image of the odd summand $(\wedge^2 \CC^2)_{-1}$ (which corresponds to a superconformal transformation) is of the form $z w_a \wedge \d w_b = z(w_a \d w_b - w_b \d w_a)$. 
We have
\[
\big[[F] , z(w_a \d w_b - w_b \d w_a) \big] = (w_1\cdots w_4)^{-1} (w_a \d w_b - w_b \d w_a) = 0
\]
since the class $(w_1\cdots w_4)^{-1}$ is in the kernel of the operator given by multiplication by $w_a$ for any $a = 1,\ldots 4$. 
\end{proof}

\subsection{The ${\rm AdS}_7 \times S^4$ background}

In this section we introduce the analog of the ${\rm AdS}_7 \times S^4$ background in our description of the minimal twist of eleven-dimensional supergravity. Decompose the eleven dimensional spacetime as $\C^3_z\times \C^2_w\times \R$.

\parsec[sec:m5coupling]

Analogous to the physical theory, the ${\rm AdS}_7 \times S^4$ background in the holomorphic twist will arise by backreacting M5 branes. To this effect, we begin by discussing how the eleven-dimensional theory couples to M5 branes. 
Consider a stack of $N$ M5 branes wrapping 
\[
\{w_1=w_2=t=0\} \subset \C^3_z\times \C^2_w\times \R.
\] 
It is natural to consider the nonlocal interaction 
\[
I_{M5} = N\int_{\C^3_z} \div^{-1}\mu \vee \Omega +\cdots 
\]
Note that this expression is only nonzero on the component of $\mu$ in $\PV^{1,3}$. 
We argue that this coupling is consistent with expectations from the physical theory and from dimensional reduction. 

The twisted field $\mu^{1,3}$ is a component of the Hodge star of the $G$-flux in the physical theory (\S\ref{s:components}). 
In the physical theory, M5 branes magnetically couple to the $C$-field; the coupling involves choosing a primitive for the Hodge star of the $G$-flux and integrating it over the M5 worldvolume. Our twist contains no fields corresponding to components of such a primitive; hence such a magnetic coupling is reflected in the appearance of $\div^{-1}$. 

\parsec[]

We obtain a deeper justification for this coupling through dimensional reduction to type IIA supergravity. 
Reducing on the circle along the directions the M5 branes wrap yields the $SU(4)$ invariant twist of type IIA supergravity on $\CC^4 \times \RR^2$ with $N$ $D4$ branes wrapping $\CC^2 \times \RR$. 

In \cite{CLsugra}, it is shown that the magnetic coupling of $D4$ branes to the $SU(4)$ twist of IIA is of the form
\[
N \int _{\C^2 \times \RR} \div^{-1} \mu \vee \Omega_{\C^4} + \cdots .
\]
Again, we have only explicitly indicated the first-order piece of the coupling. 

\parsec[s:m5backreact]

The backreacted geometry will be given by a solution to the equations of motion upon deforming the eleven-dimensional action by the interaction $I_{M5}(\mu)$. 

Varying the potential $\div^{-1} \mu$, we obtain the following equation of motion involving the field $\gamma$:
\beqn\label{eqn:m5eom1}
\dbar \del \gamma + \div \left(\frac{1}{1-\nu} \mu\right) \wedge \del \gamma = N \delta_{w_1=w_2=t=0} .
\eeqn
Notice that there is an extra derivative compared to the equation of motion arising from varying the field $\mu$. 
This equation only depends on $\gamma$ through its field strength $\del \gamma$. 

Varying $\gamma$ we obtain the equation of motion 
\beqn\label{eqn:m5eom2}
(\dbar + \d_\RR) \mu + \del \gamma \del \gamma = 0 .
\eeqn 
Again, this only depends on $\gamma$ through its field strength $\del \gamma$.

\begin{lem}
\label{lem:ads7flux}
Let
\[
F_{M5} = \frac{1}{(2\pi i)^3} \frac{\wbar_1 \d \wbar_2 \wedge \d t - \wbar_2 \d \wbar_1 \wedge \d t + t \d \wbar_1 \wedge \d \wbar_2}{(\|w\|^2 + t^2)^{5/2}} \wedge \d w_1 \wedge \d w_2
\]
Then, $\del\gamma = N F_{M5}$, $\mu = 0$, and $\nu = 0$ satisfies the equations of motion in the presence of a stack of $N$ M5 branes sourced by the term $N \delta_{w_1=w_2=t=0}$:
\begin{align*}
\dbar (NF_{M5}) + \d_{\RR} (NF_{M5}) & = N \delta_{w_1=w_2=t=0}  \\ 
(NF_{M5}) \wedge (NF_{M5}) & = 0 .
\end{align*}
Here, we set all components of the field $\mu$ equal to zero (as well as the fields $\nu,\beta$). 
\end{lem}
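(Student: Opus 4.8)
The plan is to follow the same pattern as the proof just given for the M2 flux: in the $N F_{M5}$-background, reduce the full system of equations of motion to a single identity of currents --- that $F_{M5}$ is the normalized fundamental solution of the mixed operator $\dbar + \d_{\RR}$ on the flat transverse slice $\CC^2_w \times \RR_t$ --- and then establish that identity by a Bochner--Martinelli-type computation.

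First I would dispose of the trivial directions. With $\mu = \nu = \beta = 0$, the equation obtained by varying $\beta$ (the condition $\div \mu = 0$) holds tautologically, as does \eqref{eqn:eomnu} and the equation obtained by varying $\mu$, since every term there carries a vanishing field. The non-linear term of \eqref{eqn:m5eom1} drops as well. Hence the only surviving content is
\[
(\dbar + \d_{\RR})(N F_{M5}) = N\, \delta_{w_1 = w_2 = t = 0}
\qquad\text{together with}\qquad
(N F_{M5}) \wedge (N F_{M5}) = 0,
\]
the second being \eqref{eqn:m5eom2} after setting $\mu = 0$. The wedge-square vanishes purely by type: $F_{M5}$ is proportional to $(\cdots) \wedge \d w_1 \wedge \d w_2$, so $F_{M5} \wedge F_{M5}$ contains $\d w_1 \wedge \d w_2 \wedge \d w_1 \wedge \d w_2 = 0$. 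The same bookkeeping shows $\del F_{M5} = 0$ (only $\del_w$ can act, and it produces a $\d w_i$ that is annihilated against $\d w_1 \wedge \d w_2$), so a potential $\gamma$ with $\del \gamma = N F_{M5}$ exists, which is what the equations actually constrain.

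It remains to prove the current identity. Because $F_{M5}$ does not depend on $z, \zbar$ and carries no $\d z$ or $\d\zbar$ legs, $\dbar$ acts on it only through $\dbar_w$, and both sides are pulled back along the projection $\CC^3_z \times \CC^2_w \times \RR \to \CC^2_w \times \RR_t$ --- the source current being the product of the fundamental class of $\CC^3_z$ with the point class at the origin of the transverse slice. One is thus reduced to a statement on $\CC^2_w \times \RR_t \cong \RR^5$: that $F_{M5}$ is $(\dbar_w + \d_t)$-closed on $\RR^5 \setminus \{0\}$, together with a computation of its period around a small linking four-sphere. For closedness I would compute directly, writing $\rho = \|w\|^2 + t^2$ and using $\dbar\rho = w_1\, \d\wbar_1 + w_2\, \d\wbar_2$ and $\d_t \rho = 2t\, \d t$; the numerator of $F_{M5}$ is arranged precisely so that the terms with $\rho^{-7/2}$ produced by differentiating the $\rho^{-5/2}$ prefactor cancel, leaving $(\dbar_w + \d_t) F_{M5} = 0$. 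Since $\{0\}$ has real codimension five, Stokes then forces $(\dbar_w + \d_t) F_{M5} = c\, \delta_0$ with $c$ constant, which one evaluates by integrating the contraction of $F_{M5}$ with the radial Euler field over $S^4_\epsilon$; under the complex-to-real coordinate dictionary this period is $\mathrm{vol}(S^4) = \tfrac{8\pi^2}{3}$ up to elementary factors, and the prefactor $(2\pi i)^{-3}$ is chosen exactly so that $c = 1$.

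I expect the only real effort to be in that last normalization step --- pure bookkeeping: re-expressing the $(2,\bu)$-plus-$\RR$ form $F_{M5}$ as a genuine real four-form on $\RR^5$, tracking the factors of $-2i$ from $\d w_j \wedge \d\wbar_j = -2i\, \d x_j \wedge \d y_j$ and the sign of the permutation carrying $\d w_1 \wedge \d\wbar_1 \wedge \d w_2 \wedge \d\wbar_2 \wedge \d t$ to $\d\wbar_1 \wedge \d\wbar_2 \wedge \d t \wedge \d w_1 \wedge \d w_2$, and matching the normalization convention adopted for $\delta_{w_1 = w_2 = t = 0}$. Conceptually there is nothing subtle: $F_{M5}$ is the Bochner--Martinelli-type kernel for $\dbar + \d_{\RR}$ on the five-dimensional transverse space, in exact analogy with the Bochner--Martinelli kernel on $\CC^4 \setminus \{0\}$ that appears in the M2 lemma, and the magnetic (rather than electric) character of the M5 coupling is why it is the flux $\del\gamma$, not $\gamma$ itself, that is sourced.
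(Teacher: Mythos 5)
Your route is the paper's route: the paper handles $(NF_{M5})\wedge(NF_{M5})=0$ by exactly the type argument you give, and for the first equation it simply asserts that $F_{M5}$ is the kernel representing the residue class of the linking four-sphere in $(\CC^2\times\RR)\setminus 0\simeq S^4\times\RR$, i.e.\ $\oint_{S^4}NF_{M5}=N$; your plan (dispose of the trivially satisfied equations in the background $\mu=\nu=\beta=0$, note $\del F_{M5}=0$ so a potential $\gamma$ exists, then prove the current identity by closedness away from the origin plus Stokes and a period normalization) is precisely that argument with the details the paper leaves implicit filled in.

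One concrete warning about the only nontrivial step. The cancellation you advertise does not occur for the numerator as printed. Writing $\rho=\|w\|^2+t^2$ and stripping the overall $\tfrac{1}{(2\pi i)^3}(\cdots)\wedge\d w_1\wedge\d w_2$, through which $\dbar+\d_{\RR}$ passes, a direct computation gives
\[
(\dbar+\d_{\RR})\Bigl(\rho^{-5/2}\bigl(\wbar_1\,\d\wbar_2\wedge\d t-\wbar_2\,\d\wbar_1\wedge\d t+t\,\d\wbar_1\wedge\d\wbar_2\bigr)\Bigr)
=\rho^{-7/2}\Bigl(\tfrac12\|w\|^2-2t^2\Bigr)\,\d\wbar_1\wedge\d\wbar_2\wedge\d t ,
\]
which is not zero on the complement of the origin. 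Matching coefficients in the general ansatz $\alpha(\wbar_1\,\d\wbar_2-\wbar_2\,\d\wbar_1)\wedge\d t+\beta\,t\,\d\wbar_1\wedge\d\wbar_2$ over $\rho^{5/2}$ shows that $(\dbar+\d_{\RR})$-closedness away from $0$ forces $\alpha=2\beta$, so the closed kernel is proportional to $2(\wbar_1\,\d\wbar_2-\wbar_2\,\d\wbar_1)\wedge\d t+t\,\d\wbar_1\wedge\d\wbar_2$ rather than the equal-coefficient expression displayed in the lemma. This is presumably an infelicity in the printed kernel rather than in the intended statement --- compare $F_{M2}$, where the alternating signs of the Bochner--Martinelli kernel are likewise suppressed --- but since your proof's substance is exactly this computation, you should either correct the numerator or you will find the claimed identity fails; with the corrected numerator the remainder of your argument (Stokes and homogeneity localizing the source at the origin, and the $S^4$ period fixing the normalization) goes through as you describe.
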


\begin{proof}
The first equation,
\[
\dbar F + \d_{\RR} F = N \delta_{w_1=w_2=t=0},
\]
characterizes the kernel representing $N$ times the residue class for a four-sphere in 
\[
(\CC^2 \times \RR) \setminus 0 \simeq S^4 \times \RR .
\] 
That is
\[
\oint_{S^4} N F = N 
\]
for any four-sphere centered at $0 \in \CC^2 \times \RR$.

The second equation $F \wedge F = 0$ follows by simple type reasons. 
\end{proof}

\parsec[s:m5embedding]

To provide evidence for the claim that this is the twisted analog of the AdS geometry, we will again show that the twist of the symmetries present in the physical theory on ${\rm AdS}_7 \times S^4$ appear in the twisted theory on this background. 

We have recalled that the $Q$-cohomology of $\lie{osp}(8|2)$ is isomorphic to the super Lie algebra $\lie{osp}(6|1)$. 
We will define an embedding of $\lie{osp}(6|1)$ into the eleven-dimensional theory on $\CC^5 \times \RR \setminus \{w_1=w_2=t=0\}$ which corresponds to the twist of the 6d superconformal algebra.

We first focus on the case where the flux $N=0$.
In this case the embedding can be extended to all of $\CC^5 \times \RR$.

Recall that we have chosen coordinates of the form
\[
\CC^5 \times \RR = \CC_z^3 \times \CC_w^2 \times \RR_t
\]
with $z_i, i=1,2,3$ and $w_a, a=1,2$.
The stack of M5 branes wrap $w_1=w_2=t=0$. 

The embedding of the bosonic piece of $\lie{osp}(6|1)$ can be described as follows. Recall that the bosonic part of $\lie{osp}(6|1)$ is the direct sum Lie algebra
\[
\lie{sl}(4) \oplus \lie{sl}(2) .
\]
which we write as $\lie{sl}(W) \oplus \lie{sl}(R)$ with $W,R$ complex four, two dimensional complex vector spaces. The roles of the $\lie{sl}(4)$ and $\lie{sl}(2)$ summands are interchanged compared to the case of the M2 brane. 
The Lie algebra $\lie{sl}(4)$ represents (holomorphic) conformal transformations along the plane $\CC^3_z$, again coming from the natural action on $P^3(\C)$.
Since not all such infinitesimal transformations are divergence-free, the precise vector fields must be adjusted.   
The Lie algebra $\lie{sl}(2)$ represents rotations in $\CC^2_w$; the vector fields representing these transformations are automatically divergence-free.
In more detail, the embedding of the bosonic piece can be given by the following explicit formulas. 

\begin{itemize}[leftmargin=\parindent]
\item
The bosonic abelian subalgebra $\CC^3 \subset \lie{sl}(4)$ of translations is mapped to the obvious vector fields 
\[
\frac{\del}{\del z_i} \in \PV^{1,0}(\CC^5) \otimes \Omega^0(\RR) , \quad i=1,2,3.
\]

\item
The bosonic subalgebra $\lie{sl}(3) \subset \lie{sl}(4)$ is mapped to the 
rotations
\[
A_{ij} z_i \frac{\del}{\del z_j} \in \PV^{1,0}(\CC^5)\otimes \Omega^0(\RR) , \quad (A_{ij}) \in \lie{sl}(3) .
\]

\item
The bosonic subalgebra $\CC \subset \lie{sl}(4)$ corresponding to rescaling $\C^3$ is mapped to the element
\[
\sum_{i=1}^3 z_i \frac{\del}{\del z_i} - \frac32 \sum_{a=1}^2 w_a \frac{\del}{\del w_a} \in \PV^{1,0}(\CC^5) \otimes \Omega^0(\RR)  .
\] 
Notice that this vector field are divergence-free and restricts to the ordinary dilation (Euler vector field) along $w=0$. 
\item 
The bosonic subalgebra of $\lie{sl}(4)$ describing special conformal transformations on $\CC^3$ is mapped to the elements 
\[
z_j \left(\sum_{i=1}^3 z_i \frac{\del}{\del z_i} - 2 \sum_{a=1}^2 w_a \frac{\del}{\del w_a} \right) \in \PV^{1,0}(\CC^5) \otimes \Omega^0(\RR) .
\] 
Notice that these vector fields are divergence-free and restrict to the ordinary special conformal transformations along $w=0$. 
\item 
The bosonic summand $\lie{sl}(2)$ ($R$-symmetry) is mapped to the triple
\[
w_1 \frac{\del}{\del w_2}, w_2 \frac{\del}{\del w_1}, \frac12 \left(w_1 \frac{\del}{\del w_1} - w_2 \frac{\del}{\del w_2}\right) \in \PV^{1,0}(\CC^5) \otimes \Omega^0(\RR) .
\]
\end{itemize}

The odd part of the algebra $\lie{osp}(6|1)$ is $\wedge^4 W \otimes R$ where $W$ is the fundamental $\lie{sl}(4)$ representation and $R$ is the fundamental $\lie{sl}(2)$ representation. 
It is natural to split $W = L \oplus \CC$ with $L = \CC^3$ the fundamental $\lie{sl}(3) \subset \lie{sl}(4)$ representation. 
Then the odd part decomposes as
\[
L \otimes R \oplus \wedge^2 L \otimes R \cong \CC^3 \otimes \CC^2 \oplus \wedge^2 \CC^3 \otimes \CC .
\]

\begin{itemize} 
\item The summand $L \otimes R$ consists of the remaining 6d supertranslations. 
It is mapped to the fields 
\[
z_i \d w_a \in \Omega^{1,0}(\CC^5) \otimes \Omega^0(\RR) ,\quad a=1,2, \quad i =1,2,3.
\] 
\item The summand $\wedge^2 L \otimes R$ consists of the remaining 6d superconformal transformations. 
It is mapped to the fields
\[
\frac12 w_a (z_i \d z_j - z_j \d z_i) \in \Omega^{1,0}(\CC^5)\otimes \Omega^0(\RR) , \quad a = 1,2, \quad k = 1,2,3. 
\]
\end{itemize}

\begin{lem}
These assignments define an embedding of $\lie{osp}(6|1)$ into the linearized BRST cohomology of the fields of the eleven-dimensional theory on $\CC^5 \times \RR$. 
Equivalently, it defines an embedding
\[
i_{M5} \colon \lie{osp}(6|1) \hookrightarrow E(5,10) .
\]
\end{lem}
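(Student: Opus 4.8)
The plan is to verify directly that the assignments listed above land in the kernel of the linearized BRST operator $\delta^{(1)} = \dbar + \d_{\RR} + \div|_{\mu\to\nu} + \del|_{\beta\to\gamma}$, and that the brackets in $\lie{osp}(6|1)$ are reproduced by the induced Lie structure on $H^\bu(\delta^{(1)})$ — equivalently, on $E(5,10)$, using Theorem~\ref{thm:global}. Since all fields used are polynomial in the coordinates, constant along $\RR$, and of pure Dolbeault type $(1,0)$ or $(0,0)$ with no antiholomorphic legs, annihilation by $\dbar + \d_{\RR}$ is automatic; the only content in the cocycle condition is that the $\PV^{1,0}$-valued (i.e.\ $\mu$-type) vector fields be divergence-free and that the $\Omega^{1,0}$-valued ($\gamma$-type) one-forms define classes modulo $\del\cO$, which is automatic. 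The divergence-free check is the parenthetical remark already made next to each bosonic formula (dilations and special conformal vector fields on $\CC^3_z$ have been corrected by the $w$-dependent terms precisely so that $\div = 0$), so this step is essentially bookkeeping.

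The substantive step is matching brackets. First I would pass to the $E(5,10)$ picture via $\alpha = \del\gamma$, so that the odd generators become the closed two-forms $\d w_a\wedge\d w_b$ (for $L\otimes R$, after noting $\del(z_i\,\d w_a) = \d z_i\wedge\d w_a$ — wait, one must be careful: the fields listed for $L\otimes R$ are $z_i\,\d w_a$, whose $\del$ is $\d z_i\wedge\d w_a$, not $\d w_a\wedge\d w_b$; these are the ``half'' of the odd part pairing with the $\wedge^2 L$ piece) and, for $\wedge^2 L\otimes R$, the closed two-forms $\del\big(\tfrac12 w_a(z_i\,\d z_j - z_j\,\d z_i)\big) = w_a\,\d z_i\wedge\d z_j + \tfrac12\,\d w_a\wedge(z_i\,\d z_j - z_j\,\d z_i)$. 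Then the bracket $[\alpha,\alpha'] = \Omega^{-1}\vee(\alpha\wedge\alpha')$ of Eq.~\eqref{eqn:e510} must be computed on these explicit closed two-forms and compared with the structure constants of $\lie{osp}(6|1)$: the odd-odd bracket should land in the bosonic conformal/$R$-symmetry generators, the bosonic action on the odd part should be the tensor-product $\lie{sl}(4)\oplus\lie{sl}(2)$-module structure, and the bosonic-bosonic brackets should close into $\lie{sl}(4)\oplus\lie{sl}(2)$. Because the $\lie{sl}(2)$ and $\lie{sl}(4)$ roles are interchanged relative to the M2 case (Lemma~\ref{lem:m2emb}), I expect the cleanest route is to observe that this embedding is \emph{formally identical} to the M2 embedding under the substitution $z\leftrightarrow w$, $\CC^4_w\to\CC^3_z$ together with a fixed $\CC$-direction, and hence to invoke Lemma~\ref{lem:m2emb}'s verification essentially verbatim; the $\Spin(7)$-versus-$\Spin(4)$ branching remarks in \S\ref{s:nonmin} and \S\ref{sec:ads} already tell us which conformal group acts where.

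Concretely, the key steps in order are: (1) confirm each $\mu$-type generator is divergence-free and each $\gamma$-type generator is $\del$-closed as a two-form — immediate; (2) confirm all generators are $(\dbar+\d_\RR)$-closed — immediate from their polynomial, constant-in-$t$, pure-type form; (3) compute the $\lie{sl}(4)$ subalgebra relations among the $z$-vector fields (translations, $\lie{sl}(3)$ rotations, dilation, special conformal), checking closure and that the non-divergence-free ``naive'' generators differ from the corrected ones by something central/exact; (4) compute the $\lie{sl}(2)_R$ relations among the $w$-vector fields; (5) compute the action of (3) and (4) on the odd generators $z_i\,\d w_a$ and $w_a(z_i\,\d z_j - z_j\,\d z_i)$, matching the $\wedge^4 W\otimes R$ module structure; (6) compute the odd-odd brackets via $\Omega^{-1}\vee(\del\gamma\wedge\del\gamma')$ and match to the remaining bosonic generators, using $\ep_{z_1 z_2 z_3 w_1 w_2}$. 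The main obstacle I anticipate is step~(6) combined with the bookkeeping in step~(3): getting the special conformal generators' normalizations and the $w$-dependent correction terms exactly right so that (a) they are genuinely divergence-free, (b) their mutual brackets close without spurious central terms, and (c) the odd-odd contraction against the holomorphic volume form produces precisely these corrected special-conformal vector fields rather than their naive uncorrected versions. This is the same kind of ``correction term'' subtlety that already had to be handled in Lemma~\ref{lem:m2emb}, so I expect it to be tractable but tedious; once it is checked there, the M5 case should follow by the coordinate relabeling noted above, and the final sentence (the equivalent reformulation as an embedding into $E(5,10)$) is then immediate from Theorem~\ref{thm:global}.
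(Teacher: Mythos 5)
Your main route is the same as the paper's, and in fact goes further: the paper's own proof consists of exactly your steps (1)--(2) plus the observation that, under the identification of Theorem \ref{thm:global}, the fermionic generators map to the closed two-forms $\del(z_i\,\d w_a)=\d z_i\wedge\d w_a$ and $\del\big(\tfrac12 w_a(z_i\,\d z_j-z_j\,\d z_i)\big)=w_a\,\d z_i\wedge\d z_j+\tfrac12\,\d w_a\wedge(z_i\,\d z_j-z_j\,\d z_i)$ --- precisely the formulas you derive --- with the bracket verification left as a direct computation. So your more detailed plan (divergence-freeness of the corrected conformal vector fields, $(\dbar+\d_\RR)$-closure, and the bracket checks in steps (3)--(6) using $[\alpha,\alpha']=\Omega^{-1}\vee(\alpha\wedge\alpha')$) is correct and, if anything, more complete than what is printed.

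The one point that would fail is the proposed shortcut of importing the verification from Lemma \ref{lem:m2emb} ``essentially verbatim'' by a relabeling $z\leftrightarrow w$. The two embeddings are not related by a coordinate substitution: the M2 case is built on the split $\CC^4_w\times\CC_z$ (brane wrapping $\CC_z$), with odd generators $\tfrac12(w_a\,\d w_b-w_b\,\d w_a)$ and $z$ times these, i.e.\ one-forms purely in the transverse directions, and odd part $\wedge^2\CC^4$ graded by the conformal $\lie{sl}(2)$ weight; the M5 case is built on the split $\CC^3_z\times\CC^2_w$, with odd generators $z_i\,\d w_a$ and $w_a(z_i\,\d z_j-z_j\,\d z_i)$ mixing brane and transverse coordinates, and odd part $L\otimes R\oplus\wedge^2 L\otimes R$ with $L=\CC^3$, $R=\CC^2$. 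A $4+1$ splitting of $\CC^5$ cannot be relabeled into a $3+2$ splitting, so the odd--odd contractions against $\Omega^{-1}$ (your step (6)) and the normalization of the corrected dilation and special-conformal vector fields must be computed directly for the M5 formulas rather than imported. Since you already lay out that direct computation as your primary plan, dropping the shortcut leaves a correct argument that agrees with (and fleshes out) the paper's proof.
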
 

\begin{proof}
To explicitly describe the embedding into $E(5,10)$ we simply apply the de Rham differential to the last two formulas above.
Recall, we are using the holomorphic coordinates $(z_1,z_2,z_3, w_1,w_2)$ on $\CC^5$ where $z_i$ are the holomorphic coordinates along the M5 brane. 
\begin{itemize}
\item 
The fermionic summand $L \otimes R$ embeds into closed two-forms as
\[
\d z_i \wedge \d w_a, \quad i=1,2,3, \quad a=1,2. 
\] 
\item 
The fermionic summand $\wedge^2 L \otimes R$ embeds into closed two-forms as
\[
w_a \d z_i  \wedge \d z_j + \frac12 \d w_a \wedge (z_i \d z_j - z_j \d z_i) , \quad i,j=1,2,3, \quad a=1,2. 
\] 
\end{itemize}
\end{proof}
\parsec[]

Next, we turn on $N \ne 0$  units of nontrivial flux. 
Since not all of the fields we wrote down above commute with the flux $N F$, they are not compatible with the total differential $\delta^{(1)} + [N F, -]$ acting on the fields supported on $\CC^5 \times \RR \setminus \{w_1=w_2=t=0\}$. 
Nevertheless, we have the following.

\begin{prop}
\label{prop:brads7}
There exist $N$-dependent corrections to the embedding $i_{M5}$ which are compatible with the modified BRST differential $\delta^{(1)} + [N F_{M5},-]$. 
Furthermore, these $N$-dependent corrections define an embedding of $\lie{osp}(6|1)$ inside the cohomology of the fields of eleven-dimensional theory on $\CC^5 \times \RR \setminus \CC \times \RR$ with respect to the differential $\delta^{(1)} + [N  F_{M5},-]$.
\end{prop}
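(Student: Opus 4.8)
The plan is to follow the proof of Proposition~\ref{prop:brads4} essentially line by line, substituting the M5-brane transverse geometry for the M2-brane one. Write $\cL = \cL(\CC^5 \times \RR \setminus \{w_1=w_2=t=0\})$ for the parity-shifted fields, and use the identification
\[
\CC^5 \times \RR \setminus \{w_1=w_2=t=0\} \;\cong\; \CC^3_z \times \big( (\CC^2_w \times \RR_t) \setminus 0 \big),
\]
so that the transverse factor is homotopy equivalent to $S^4$ rather than to $S^7$. The relevant deformed differential is $\delta^{(1)} + [NF_{M5},-]$, where (after choosing any $\gamma$ with $\del\gamma = N F_{M5}$ on the complement) $[F_{M5},-]$ denotes bracketing with the closed two-form $F_{M5}$ in the $E(5,10)$-module structure, extended $\cO(\RR_t)$-linearly along the topological direction. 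The first observation is the analogue of the degree count in the M2 case: since $F_{M5} \in \Omega^{2,\bu}(\CC^2_w) \hotimes \Omega^\bu(\RR)$ has total $(\dbar^w + \d_\RR)$-degree two, the operator $[NF_{M5},-]$ raises Dolbeault--de Rham degree by exactly two.

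I would then run the same nested spectral sequence. First compute $H^\bu(\cL, \delta^{(1)})$ by an auxiliary spectral sequence filtering by holomorphic form/polyvector type, whose first page is $(\dbar^z + \dbar^w + \d_\RR)$-cohomology. Along $\CC^3_z$ this is concentrated in degree zero with dense polynomial subspaces $\CC[z_1,z_2,z_3]$ (tensored with a basis of $\partial_{z_i}$ or $\d z_i$ as appropriate). Along the transverse factor a short Mayer--Vietoris computation, with the cover $\{t \neq 0\} \cup \{w \neq 0\}$, shows that the $(\dbar^w + \d_\RR)$-cohomology of $\Omega^{i,\bu}_{\CC^2_w} \hotimes \Omega^\bu_\RR$ splits into a ``trivial'' part in degree zero, reproducing the cohomology on all of $\CC^2_w \times \RR_t$, plus a ``residue'' part in degree two whose dense subspace is the local-cohomology module $(w_1 w_2)^{-1}\CC[w_1^{-1},w_2^{-1}]$ (tensored with $\partial_{w_a}$ or $\d w_a$). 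Taking the remaining $\div$- and $\del$-cohomology then gives, exactly as for \eqref{eqn:ads4ss2}, that $H^\bu(\cL,\delta^{(1)})$ is $\Hat{E(5,10)} \oplus \CC$ (the flux-free answer of \S\ref{sec:global}) together with an explicit residue summand built from $\CC[z] \otimes (w_1 w_2)^{-1}\CC[w_1^{-1},w_2^{-1}]$, in which the flux class $[F_{M5}]$ corresponds to the lowest generator $(w_1 w_2)^{-1}\,\d w_1 \wedge \d w_2$.

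Passing to the next page means taking cohomology with respect to $[NF_{M5},-]$. Because this operator raises Dolbeault--de Rham degree by exactly two and the residue summand is supported in a band on which the operator connects only adjacent strata, there is a single nonzero differential, and then no room for higher ones, so the spectral sequence degenerates and computes $H^\bu(\cL, \delta^{(1)} + [NF_{M5},-])$. Finally, to lift the embedding $i_{M5}\colon \lie{osp}(6|1) \hookrightarrow E(5,10)$ (the $N=0$ case above) to this cohomology I would check that each generator in its image is annihilated by $[N[F_{M5}],-]$. This is a direct calculation of exactly the flavor in the M2 proof: the closed two-forms $\d z_i \wedge \d w_a$ and $w_a(\d z_i \wedge \d z_j) + \frac12 \d w_a \wedge (z_i\,\d z_j - z_j\,\d z_i)$ bracket with $F_{M5}$ into expressions with a repeated holomorphic $\d w$-leg, or into $w_a \cdot [F_{M5}]$, which vanishes since multiplication by $w_a$ kills the generator $(w_1 w_2)^{-1}$ of the local-cohomology module; and the divergence-free vector fields along $\CC^3_z$ together with the $\lie{sl}(2)$ rotations of $\CC^2_w$ act on $F_{M5}$ by a Lie derivative that is cohomologically trivial, because $F_{M5}$ is manifestly $SU(3)\times SU(2)$-invariant and of the weight singled out by the dilation and special-conformal generators.

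The main obstacle will be the transverse-cohomology step and the bookkeeping that forces collapse of the second spectral sequence: the $S^4$ linking geometry makes the degree accounting somewhat more delicate than the $S^7$ case of Proposition~\ref{prop:brads4}, and one must in particular confirm that $[F_{M5}]$ is not itself in the image of $[NF_{M5},-]$, so that the chosen background is a genuine, nontrivial Maurer--Cartan element. The verification that the conformal and special-conformal generators commute with the flux is routine but requires the explicit homogeneity and invariance properties of the Bochner--Martinelli-type kernel $F_{M5}$ of Lemma~\ref{lem:ads7flux}.
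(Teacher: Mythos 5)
Your proposal is correct and follows essentially the same route as the paper's proof: the nested spectral sequence with the auxiliary filtration by holomorphic type, the transverse cohomology of $(\CC^2_w\times\RR)\setminus 0$ concentrated in degrees $0$ and $2$ with the dense local-cohomology piece $(w_1w_2)^{-1}\CC[w_1^{-1},w_2^{-1}]$, collapse after the second page for degree reasons, and the closure check in which multiplication by $w_a$ annihilates the residue class. The only cosmetic difference is that the paper notes $F_{M5}$ is the $\del$ of a $\gamma$-type field, so $[F_{M5}]$ is not literally a class on that page even though the operator $[[F_{M5}],-]$ still acts there, whereas you identify it with the lowest local-cohomology generator; this does not affect the argument.
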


\parsec[s:thfcohomology]

The proof of the above proposition follows from another indirect cohomological argument. 
Before getting to the proof, we introduce the relevant cohomology. 

The eleven-dimensional theory is built from fields which live in the tensor product of complexes 
\[
\Omega^{0,\bu}(\CC^5) \otimes \Omega^\bu(\RR).
\]
More precisely, this is where the  fields $\beta$ and~$\nu$ live. 
The $\mu$ and~$\gamma$ fields live in versions of this complex where we take Dolbeault forms with coefficients in the holomorphic tangent and cotangent bundles, respectively. 

Another way to think about this complex is to first consider the full de Rham complex $\Omega^\bu(\CC^5 \times \RR)$, which includes both holomorphic and anti-holomorphic forms in the $\CC^5$ direction. 
The dg algebra of all de Rham forms on $\CC^5 \times \RR$ has an ideal generated by the holomorphic one forms $\{\d z_i\}_{i=1,\ldots,5}$.
There is an isomorphism of dg algebras
\[
\Omega^{0,\bu}(\CC^5) \otimes \Omega^\bu(\RR) \cong \Omega^\bu(\CC^5 \times \RR) \, / \, (\d z_1,\ldots, \d z_5) .
\]
The advantage of this presentation is that we can define a complex associated to more general manifolds that are not products of a complex manifold with a smooth manifold.\footnote{More generally, we are describing the cohomology of a manifold equipped with a topological holomorphic foliation.}

For the M5 brane, it was convenient to rename the holomorphic coordinates on $\CC^5$ to $z_1,z_2,z_3,w_1,w_2$. 
At the twisted level, the geometry arising from backreacting M5 branes is based on the manifold 
\[
\CC^5 \times \RR \setminus \CC^3 \cong \CC_z^3 \times (\CC^2_w \times \RR \setminus 0) .
\]
The $\beta$ and~$\nu$ fields of the eleven-dimensional theory on this submanifold of $\CC^5 \times \RR$ live in the complex 
\[
\Omega^\bu\bigg(\CC^5 \times \RR \setminus \CC^3\bigg) \, / \, (\d z_1,\d z_2,\d z_3, \d w_1, \d w_2)  .
\]
The $\mu$ and~$\gamma$ fields live in similar complexes, where we introduce a (trivial) vector bundle on $\CC^5 \times \RR \setminus \CC^3$. 

Since the $\CC^3$ wraps $w_1=w_2=t=0$ we can apply a version of the K\"unneth formula to identify this complex with 
\[
\Omega^{0,\bu}(\CC^3_z) \otimes \bigg( \Omega^\bu\left(\CC^2_w \times \RR \setminus 0 \right) \, / \, (\d w_1, \d w_2) \bigg).
\]

The cohomology of the Dolbeault complex of $\CC^3_z$ is easy to compute. 
The cohomology of the bit in parentheses is concentrated in degrees zero and two. 
In degree zero, there is a dense embedding
\[
\CC[w_1,w_2] \hookrightarrow H^0 \bigg( \Omega^\bu\left(\CC^2_w \times \RR \setminus 0 \right) \, / \, (\d w_1, \d w_2) \bigg)
\]
In degree two, there is a dense embedding
\[
w_{1}^{-1} w_2^{-1} \CC[w_1,w_2] \hookrightarrow H^2 \bigg( \Omega^\bu\left(\CC^2_w \times \RR \setminus 0 \right) \, / \, (\d w_1, \d w_2) \bigg).
\]

It will be useful to explain this last embedding in more detail. 
Consider the homogenous element $w_1^{-1} w_2^{-1}$. 
This represents the class of the Dolbeault-de Rham two-form
\[
\frac{\wbar_1 \d \wbar_2 \wedge \d t - \wbar_2 \d \wbar_1 \wedge \d t + t \d \wbar_1 \wedge \d \wbar_2}{(\|w\|^2 + t^2)^{5/2}} .
\]
Notice that, if we wedge with the volume form $\d w_1 \d w_2$, this is the unit  flux ($N=1$) introduced in Lemma \ref{lem:ads7flux}. 
The homogenous element $w_1^{-n-1} w_2^{-m-1}$ represents the class of the holomorphic derivatives $\partial_{w_1}^n \partial_{w_2}^{m}$ applied to this two-form. 

Observe that, when restricted to $\CC^5 \times \RR \setminus \CC^3$, the holomorphic tangent bundle along $\CC^5$ is still trivializable. 

\parsec[]

Let's turn to the proof of Proposition~\ref{prop:brads7}.
We proceed completely analogously to the case of backreacted M2 branes as in the proof of Proposition \ref{prop:brads4}. 

\begin{proof}[Proof of Proposition \ref{prop:brads7}]
Let $\cL (\CC^5 \times \RR \setminus \{w_1=w_2=t=0\})$ denote the super $L_\infty$ algebra obtained by parity shifting the fields of the eleven-dimensional theory on $\CC^5 \times \RR \setminus \{w_1=w_2=t=0\}$. 

There is a spectral sequence which converges to the cohomology of the fields with respect to the deformed linear BRST differential $\delta^{(1)} + [N F_{M5},-]$ whose first page
is the cohomology with respect to the original linearized BRST differential $\delta^{(1)}$. 
Recall that the linearized BRST differential decomposes as
\[
\delta^{(1)} = \dbar + \d_{\RR} + \div |_{\mu \to \nu} + \del |_{\beta \to \gamma}  .
\]
To compute this page, we use an auxiliary spectral sequence which simply filters by the holomorphic form and polyvector field type. 
This first page of this auxiliary spectral sequence is simply given by the cohomology of the fields supported on 
\[
\CC^5 \times \RR \setminus \{w_1=w_2=t=0\} \cong \CC_z^3 \times (\CC^2_w \times \RR \setminus 0)
\]
with respect to $\dbar + \d_{\RR}$. 

To compute this cohomology we follow the discussion in \S \ref{s:thfcohomology}.
Just as in the case of the M2 brane, we see that the $\dbar + \d_{\RR}$ cohomology is (up to completions) is the direct sum of the cohomology on flat space $H^\bu(\cL(\CC^5 \times \RR), \dbar)$ with
\begin{equation}
  \label{eqn:ads7ss2} 
  \begin{tikzcd}[row sep = 1 ex]
    + & - \\ \hline
w_1^{-1} w_2^{-1} \CC[w_1^{-1}, w_2^{-1}][z_1,z_2,z_3] \{\partial_{w_i}\}  \ar[r, dotted, "\div"] & w_1^{-1} w_2^{-1} \CC[w_1^{-1}, w_2^{-1}] [z_1,z_2,z_3] \\
w_1^{-1} w_2^{-1} \CC[w_1^{-1}, w_2^{-1}] [z_1,z_2,z_3] \{\del_{z_i}\} \ar[ur, dotted, "\div"'] \\
w_1^{-1} w_2^{-1} \CC[w_1^{-1}, w_2^{-1}] [z_1,z_2,z_3] \ar[r, dotted, "\del"] \ar[dr, dotted, "\del"'] & w_1^{-1} w_2^{-1} \CC[w_1^{-1}, w_2^{-1}][z_1,z_2,z_3] \{\d z_i\} \\ & w_1^{-1} w_2^{-1} \CC[w_1^{-1}, w_2^{-1}][z_1,z_2,z_3] \{\d w_i\} .
\end{tikzcd}
\end{equation}

Recall that the flux $F$ was defined as the image under $\del$ of some $\gamma$-type field. 
Therefore, the class $[F]$ does not live inside this page of the spectral sequence, but the operator $[[F], -]$ does act on this page nevertheless. 
For instance, if $f^i(z,w) \d z_i$ is a one-form living in $H^0(\CC^5, \Omega^1) \otimes H^0(\RR)$, then
\[
[ [F] , f^i (z,w) \d z_i ] = \ep_{ijk} w_1^{-1} w_2^{-1} \partial_{z_j} f^i(z,w) \del_{z_k} 
\]
which is an element in 
\[
\CC[w_1^{-1}, w_2^{-1}][z_1,z_2,z_3] \{\del_{z_i}\} \subset H^0(\CC^3, \T) \otimes H^2 \big(\Omega^\bu(\CC^2 \times \RR \setminus 0) / (\d w_1 , \d w_2) \big) .
\]

The first page of the spectral sequence converging to the cohomology with respect to $\delta^{(1)} + [N F, -]$ is given by the cohomology of the global symmetry algebra on $\CC^5 \times \RR$, which we computed in \S \ref{sec:global}, plus the cohomology with respect to the dotted-line operators in~\eqref{eqn:ads7ss2}. 

The next page of the spectral sequence is given by computing the cohomology with respect to the operator $[N F,-]$. 
This operator maps Dolbeault-de Rham degree zero elements to Dolbeault-de Rham degree two elements. 
For degree reasons, there are no further differentials and the spectral sequence collapses after the second page. 

The embedding of $\lie{osp}(6|1)$ for $N=0$ lives in the kernel of the original BRST operator $\delta^{(1)}$. 
To see that it this embedding can be lifted to the full cohomology we need to check that any element in the image of the original embedding is annihilated by $\big[ N [F] , - \big]$. 
This is a direct calculation. 
For instance, recall that an element in the image of the odd summand $\wedge^2 L \otimes R = \wedge^2 \CC^3 \otimes \CC^2$ (which corresponds to a superconformal transformation) is of the form $w_a (z_i \d z_j - z_j \d z_i)$, $a=1,2, i,j=1,2,3$. 
We have
\[
\big[[F] , w_a (z_i \d z_j - z_j \d z_i)\big] = 2 \ep_{ijk} (w_1^{-1} w_2^{-1}) \cdot w_a \del_{z_k} = 0
\]
since the class $w_1^{-1} w_2^{-1}$ is in the kernel of the operator given by multiplication by $w_a$ for $a=1,2$.
Verifying that the remaining elements in the image of $i_{M5}$ are in the kernel of $\big[ [F], -\big]$ is similar.
This completes the proof.
\end{proof}

\printbibliography

\end{document}